\newenvironment{algoTwo}[1][]%
{\begin{algorithm}[#1]}%
	{\end{algorithm}}
\theoremstyle{definition}
\definecolor{mygreen}{rgb}{0,0.6,0}
\definecolor{mygray}{rgb}{0.5,0.5,0.5}
\definecolor{mymauve}{rgb}{0.58,0,0.82}
\ttfamily\color{mygreen}\bfseries,
\newcommand{\mc}[1]{\mathcal{#1}}
\newcommand{\ignore}[1]{}
\newcommand{\calX}{\mathcal X}
\newcommand{\F}{\mathbb F} 
\newcommand{\D}{\mathbf D} 
\newcommand{\batya}[1]{{\texttt{\color{blue} Batya: [{#1}]}}}
\newcommand*{\rom}[1]{\expandafter\@slowromancap\romannumeral #1@}
\newcommand{\RNum}[1]{\uppercase\expandafter{\romannumeral #1\relax}}
\newtheorem{defn}{Definition}[section]
\newcommand{\eat}[1]{}
\newtheorem{corollary}[defn]{Corollary}
\newtheorem{proposition}[defn]{Proposition}
\newcommand{\sel}[1]{{\sigma}}
\newtheorem{theorem}{Theorem}[section]
\newtheorem{proposition}{Proposition}[section]
\newtheorem{lemma}{Lemma}[section]
\newtheorem{corollary}{Corollary}[section]
\newtheorem{definition}{Definition}[section]
\newcommand{\cut}[1]{}
\newcommand{\defeq}{\stackrel{\text{def}}{=}}
\def\set#1{\mathord{\{#1\}}}
\def\abs#1{\mathord{\lvert#1\rvert}}
\def\eqdef{\mathrel{\stackrel{\textsf{\tiny def}}{=}}}
\def\D{\mathcal{D}}
\def\e#1{\emph{#1}}
\newenvironment{citedtheorem}[1]
{\begin{theorem}{\it\e{(#1)}}\,\,}
	{\end{theorem}}
\newenvironment{citedlemma}[1]
{\begin{lemma}{\it\e{(#1)}}\,\,}
	{\end{lemma}}
\newenvironment{citeddefinition}[1]
{\begin{definition}{\it\e{(#1)}}\,\,}
	{\end{definition}}
\newenvironment{repeatresult}[2]
{\vskip0.5em\par\textsc{#1} #2.}
{\vskip0.5em}
\newenvironment{reptheorem}[1]{\begin{repeatresult}{Theorem}{#1}}{\end{repeatresult}}
\newenvironment{replemma}[1]{\begin{repeatresult}{Lemma}{#1}}{\end{repeatresult}}
\newenvironment{repcorollary}[1]{\begin{repeatresult}{Corollary}{#1}}{\end{repeatresult}}
\def\appendix{\par
	\section*{APPENDIX}
	\setcounter{section}{0}
	\setcounter{subsection}{0}
	\def\thesection{\Alph{section}} }
\def\eqdef{\mathrel{\stackrel{\textsf{\tiny def}}{=}}}
\def\e#1{\emph{#1}}
\newcommand{\algname}[1]{{\sf #1}}
\def\myrulewidth{2.80in}
\def\therule{\rule{\myrulewidth}{0.2pt}}
\def\myrulewidthwide{6in}
\def\therulewide{\rule{\myrulewidthwide}{0.2pt}}
\newenvironment{insidecode}[3]
{
	\begin{tabular}{p{\myrulewidth}}
		\multicolumn{1}{c}{\rule{0mm}{3mm}{\bf #3} $\algname{#1}(\mbox{#2})$\vspace{-0.6em}}\\
		\therule\vskip-0.8em\therule
		\vspace{-1em}
		\begin{algorithmic}[1]}
		{\end{algorithmic}
		\vskip-0.4em\therule
\end{tabular}}
\newenvironment{insidecodewide}[3]
{
	\begin{tabular}{p{\myrulewidthwide}}
		\multicolumn{1}{c}{\rule{0mm}{3mm}{\bf #3} $\algname{#1}(\mbox{#2})$\vspace{-0.6em}}\\
		\therulewide\vskip-0.8em\therulewide
		\vspace{-1em}
		\begin{algorithmic}[1]}
		{\end{algorithmic}
		\vskip-0.3em\therulewide
\end{tabular}}
\newcommand{\minlsep}[2]{\mathcal{S}_{#1}(#2)}
\newcommand{\minlsepst}[1]{\mathcal{S}_{s,t}(#1)}
\newcommand{\minsepst}[1]{\mathcal{L}_{s,t}(#1)}
\newcommand{\minlsepImp}[2]{\mathcal{S}^{*}_{#1}(#2)}
\def\minsep{\mathcal{L}}
\def\sat{\mathrm{Sat}}
\newcommand{\edges}{\texttt{E}}
\newcommand{\nodes}{\texttt{V}}
\newcommand{\minus}{\scalebox{0.75}[1.0]{$-$}}
\definecolor{mygreen}{rgb}{0,0.6,0}
\definecolor{mygray}{rgb}{0.5,0.5,0.5}
\definecolor{mymauve}{rgb}{0.58,0,0.82}
\definecolor{cadmiumgreen}{rgb}{0.0, 0.42, 0.24}
\def\gq1{{\geq}1}
\def\cc{\mathcal{C}}
\newcommand{\minstVertices}[1]{\mathcal{U}^{\min}_{#1}}
\newif \ifnonplanar
\newcommand{\sminus}{\scalebox{0.75}[1.0]{$-$}}
\newcommand{\safeSeps}[2]{\mathcal{F}_{#1,#2}}
\newcommand{\safeSepsk}[3]{\mathcal{F}_{#1,#2,#3}}
\newcommand{\safeSepsMin}[2]{\mathcal{F}_{#1,#2,\min}}
\newcommand{\safeSepsImp}[2]{\mathcal{F}^{*}_{#1,#2}}
\newcommand{\safeSepskImp}[3]{\mathcal{F}^{*}_{#1,#2,#3}}
\newcommand{\closeSeps}[2]{\mathcal{F}^{\mathrm{cl}}_{#1,#2}}
\newcommand{\closeSepsk}[3]{\mathcal{F}^{\mathrm{cl}}_{#1,#2,#3}}
\def\MHS{\mathrm{MHS}}
\newcommand{\closestMinSep}[1]{L^{*}_{#1}}
\newcommand{\mediumbigcup}{\mathop{\scalebox{1.2}{$\bigcup$}}}
\newcommand{\mediumbigcap}{\mathop{\scalebox{1.2}{$\bigcap$}}}
\newcommand{\mediumbiguplus}{\mathop{\scalebox{1.2}{$\biguplus$}}}
\title{Connectivity-Preserving Important Separators:\\
	A Framework for Cut-Uncut Problems}
\author{Batya Kenig \\ Technion, Israel Institute of Technology}{}{}{}{}
\date{}
\begin{document}
	\pagenumbering{gobble}

	\maketitle
\begin{abstract}
	Important separators are a cornerstone of parameterized algorithms for graph separation:
	they reduce an a priori enormous search space of separators to a small, structured family that
	can be enumerated efficiently.  This principle has been remarkably successful for parameterized separation problems, but it does not address cut-uncut problems, where one must cut some	connections while preserving the connectivity of a given set of terminals. These connectivity-preservation requirements create a qualitatively different type of structure, and the classical important-separator machinery no longer gives the right objects to enumerate.
	
	We introduce \emph{connectivity-preserving important separators}: separators that disconnect $s$ from $t$, keep a prescribed terminal set connected to $s$, and are extremal among separators	with this property. Our main result shows that, despite the additional connectivity	constraints, the number of such separators of size at most $k$ is bounded by $2^{O(k^2\log k)}$, and they can be enumerated in	$O(2^{O(k^2\log k)}\cdot n\cdot T(n,m))$ time, where $T(n,m)$ is the time for computing a minimum-cardinality $s,t$-separator.
	
	This gives a systematic extension of the important-separator method with connectivity constraints. The quadratic dependence on $k$ reflects a real phenomenon: in directed graphs, we construct instances with at least $\frac{2^{k^2-1}}{k}$ connectivity-preserving important separators of size at most $k$.
	
	As applications, we obtain an FPT algorithm for optimizing over all minimal
	$s,t$-separators whose source component must contain a prescribed set $A$ and avoid a
	prescribed set $B$, a constraint pattern not expressible as a standard cut-uncut instance.
	We also apply the framework to \textsc{Node Multiway Cut-Uncut}: if the terminal
	partition has $M$ equivalence classes, the problem can be solved in
	$O(2^{O(Mk^2\log k)}\cdot k n^3)$ time, and in the two-class case in
	$O(2^{O(k^2\log k)}\cdot n m^{1+o(1)})$ time.
\end{abstract}

	\clearpage
	\pagenumbering{arabic}
	\setcounter{page}{1}

	\section{Introduction}
\label{sec:introduction}
Graph separation is a cornerstone of parameterized algorithm design. A particularly powerful tool is the notion of \emph{important separators}~\cite{DBLP:journals/tcs/Marx06}. For a given source set $X$ and target set $Y$, an important $X,Y$-separator is a minimal separator that is ``pushed'' as far towards $Y$ as possible. Formally, it maximizes the vertices reachable from $X$: no separator of equal or smaller size leaves a strict superset of these vertices connected to $X$. The number of important separators of size at most $k$ is bounded by $4^k$, allowing efficient enumeration~\cite{DBLP:journals/tcs/Marx06}. This framework is standard in FPT algorithms for problems like \textsc{Multiway Cut}, \textsc{Directed Feedback Vertex Set}, \textsc{Almost 2-SAT}, and more~\cite{DBLP:journals/tcs/Marx06,DBLP:journals/siamcomp/BousquetDT18,DBLP:journals/siamcomp/MarxR14,DBLP:journals/siamcomp/ChitnisHM13,DBLP:journals/jacm/ChenLLOR08,DBLP:journals/corr/abs-0801-1300,DBLP:journals/iandc/LokshtanovM13}.

However, many  separation problems include \emph{connectivity constraints}: one must separate designated terminals while keeping specified terminal sets connected. This tension has been studied under several names, including \emph{connectivity-preserving minimum cut}, \emph{network diversion}, and, in parameterized complexity, \emph{cut-uncut} problems~\cite{DBLP:journals/jcss/DuanX14,DBLP:journals/networks/CullenbineWN13,DBLP:conf/icalp/BentertDFGK24,DBLP:conf/iwpec/BentertFH024}. A fundamental variant is \textsc{Two-Sets Cut-Uncut}: one must separate two terminal sets $S$ and $T$ such that $S$ and $T$ each remain internally connected after the cut~\cite{DBLP:conf/wg/TelleV13,DBLP:conf/iwpec/BentertFH024}. 
These problems are notoriously difficult on general graphs: merely deciding whether a feasible solution exists is NP-complete~\cite{DBLP:conf/icalp/BentertDFGK24,DBLP:journals/networks/CullenbineWN13}, the minimum cut variant is hard to approximate~\cite{DBLP:journals/jcss/DuanX14}, and the problem remains W[1]-hard parameterized by the size of the terminal sets~\cite{DBLP:conf/icalp/BentertDFGK24}.
While recent work has successfully tackled restricted classes like planar or AT-free graphs~\cite{DBLP:journals/tcs/GolovachKP13,kenig2025connectivitypreserving,DBLP:conf/icalp/BentertDFGK24}, developing efficient parameterized algorithms for general graphs remains a highly active research frontier~\cite{DBLP:conf/iwpec/BentertFH024,DBLP:conf/icalp/BentertDFGK24,bentert2025planarnetworkdiversion,DBLP:conf/wg/TelleV13}.

An $X,Y$-\e{separator} is a set of vertices that intercept all paths from $X$ to $Y$. For a separator $S$ and a vertex set $X$, we denote by $G\sminus S$ the graph that results from $G$ by removing vertices $S$ and their adjacent edges, and by $C_X(G\sminus S)$ the vertices reachable from $X$ in $G\sminus S$, namely the union of all connected components of $G\sminus S$ that contain a vertex of $X$. An $X,Y$-separator $S$ is \e{minimal} if no strict subset of $S$ separates $X$ from $Y$. 
Under the convention used in this paper, a minimal $X,Y$-separator $S$ is important if there is no minimal $X,Y$-separator $S'$ such that $C_X(G\sminus S')\subsetneq C_X(G\sminus S)$ and $|S'|\le |S|$. Thus, $S$ is \e{undominated} in the sense that one cannot obtain a strictly included source component without using a larger separator. This is symmetric to the standard convention, which maximizes the source side (i.e., $C_X(G\sminus S)$) among separators of no larger size~\cite{DBLP:journals/tcs/Marx06,DBLP:conf/wg/Marx11,DBLP:books/sp/CyganFKLMPPS15}: applying the standard convention to $Y,X$ gives the orientation used here (see Lemma~\ref{lem:inclusionCsCt}). \eat{For fixed $X$ and $Y$, there are at most $4^k$ important $X,Y$-separators of size at most $k$, and they can be enumerated efficiently~\cite{DBLP:books/sp/CyganFKLMPPS15}. This bounded representative family of separators is a standard local primitive in parameterized algorithms~\cite{DBLP:journals/tcs/Marx06,DBLP:conf/wg/Marx11,DBLP:journals/siamcomp/BousquetDT18,	DBLP:books/sp/CyganFKLMPPS15,DBLP:journals/siamcomp/MarxR14,DBLP:journals/siamcomp/ChitnisHM13,DBLP:journals/iandc/LokshtanovM13}.
}

The main contribution of this paper is to recover the important-separator
paradigm when connectivity constraints are introduced: separators must be
nondominated not only for separating $s$ from $t$, but also for preserving connectivity among a predefined set of terminals. We introduce \emph{connectivity-preserving important separators}. Fix vertices $s,t$ and a terminal set $A$ with $s\in A$. A minimal $s,t$-separator $S$ is connectivity-preserving, or CP, with respect to $A$ if all vertices of $A$ remain in the component of $G\sminus S$ containing $s$: $A\subseteq C_s(G\sminus S)$. A CP $s,t$-separator $S$ is CP-important if there is no CP $s,t$-separator $S'$ such that
\[
A\subseteq C_s(G\sminus S')\subsetneq C_s(G\sminus S)
\qquad\text{and}\qquad
|S'|\le |S|.
\]
\eat{Equivalently, $S$ is a nondominated way to separate $s$ from $t$ while keeping the terminal set $A$ connected.}

\eat{
A related line of work studies secluded connectivity: connected objects, such
as paths or Steiner trees, whose exposure to the rest of the graph is small
\cite{DBLP:conf/esa/ChechikJPP13}. This includes \textsc{$k$-Secluded Tree}~\cite{DBLP:conf/ipec/DonkersJK22}, enumeration of secluded $\mathcal F$-free subgraphs~\cite{DBLP:journals/jcss/JansenKW25}, and compact representations of Steiner trees by induced Steiner subgraphs~\cite{DBLP:conf/mfcs/ConteGKMUW19}. Our setting is a targeted, separator-theoretic counterpart. Instead of isolating a connected object from the entire graph, we seek a small boundary that separates a connected terminal set from a specified target vertex or target set. The component containing the terminals must avoid the hostile side, but it need not be secluded from every outside vertex. This asymmetric form of isolation is the separator analogue of the connectivity-with-small-boundary viewpoint that underlies secluded connectivity, and it is the form needed for cut-uncut constraints.
}

Our setting is also related to secluded connectivity, where one seeks connected objects
with small exposure to the rest of the graph~\cite{DBLP:conf/esa/ChechikJPP13,DBLP:conf/ipec/DonkersJK22,DBLP:journals/jcss/JansenKW25}.
Here the goal is separator-theoretic: find a small boundary that separates a
connected terminal set from a specified target, while preserving the connectivity
of the terminal set. Conceptually, the component containing the terminals must avoid the hostile side, but it need not be secluded from every outside vertex.

\paragraph*{Our contribution.}
We introduce \emph{connectivity-preserving (CP) important separators}. Intuitively, these are minimal separators that successfully enforce the required connectivity constraints while still remaining ``extremal''---meaning they minimize the set of vertices left reachable from the source side. We show that these nondominated separators of size at most $k$ form a small, efficiently enumerable family. However, enumerating these objects is fundamentally more intricate than in the standard setting. Classical important separators rely on the existence of a unique, canonical ``rightmost'' (or symmetrically ``leftmost'') minimum separator that anchors the entire search process~\cite{DBLP:journals/tcs/Marx06}. Under connectivity constraints, the set of minimum separators no longer possesses a unique, inclusion-wise minimal element. To navigate this lack of a single ``leftmost'' anchor, we develop a branching strategy based on repairing connectivity constraint violations. We prove that connectivity violations in the unconstrained canonical separator can be systematically ``repaired'' by forcing the violating components into the correct connected component. We encode this repair step as a  hitting-set problem on the boundaries of these components, justified by a novel structural property concerning the union of all minimum separators. Using this new structural tool, we prove that despite the loss of a unique anchor, the number of CP important separators of size at most $k$ remains bounded by $2^{(2k^2+3k)\log k}$. Furthermore, we provide an explicit algorithm to enumerate all such separators in time $O(2^{O(k^2\log k)} \cdot n \cdot T(n,m))$, where $T(n,m)$ is the time required to compute a standard minimum $s,t$-separator~\cite{Chen2022,DBLP:journals/corr/abs-2309-16629,DBLP:journals/siamcomp/EvenT75}. \eat{This provides the missing separator theory for cut-uncut problems, showing that separator-based methods can simultaneously enforce separation and connectivity.}

In what follows, we let $T(n,m)$ denote the time needed to compute a minimum-cardinality $s,t$-separator in an $n$-vertex, $m$-edge graph; by the standard vertex-splitting reduction to exact maximum flow, and following a long line of work on maximum-flow algorithms~\cite{Dinitz1970,DBLP:conf/stoc/LiuS20,DBLP:conf/focs/KathuriaLS20,	Chen2022,DBLP:journals/corr/abs-2309-16629}, one may take $T(n,m)=m^{1+o(1)}$ for the unweighted instances considered here~\cite{Chen2022}.

\eat{Our main theorem is the missing representative-family theorem for separation under connectivity-preservation constraints, showing that all nondominated separators of size at most $k$ are captured by a bounded enumerable family.}
\def\singleExponentialSafeImportantOfSizek{
	Let $G$ be a simple, undirected graph with $n$ vertices and $m$ edges, let $A\subseteq \nodes(G)$ with $s\in A$ and $t\notin A$, and let $k \in \mathbb N$. The number of connectivity-preserving important $s,t$-separators with respect to $A$ of size at most $k$ is
	$2^{O(k^2\log k)}$, and they can be enumerated in time	$O(2^{O(k^2\log k)}\cdot n\cdot T(n,m))$.
}

\begin{theorem}
	\label{thm:singleExponentialSafeImportantOfSizek}
	\singleExponentialSafeImportantOfSizek
\end{theorem}

\eat{For a prescribed tuple $(G,s,t,A)$, the theorem gives an explicit
representative family: it enumerates the relevant separators together with
their source components $C_s(G\sminus S)$. This yields the following fixed-terminal isolation primitive.}
\def\targetedCPIsolation{
	Let $G$ be a simple, undirected graph with $n$ vertices and $m$ edges, let
	$s,t\in \nodes(G)$, $A,B\subseteq \nodes(G){\setminus}\set{s,t}$, and
	$k\in \mathbb N$. In time
	$O(2^{O(k^2\log k)}\cdot n\cdot m^{1+o(1)})$, one can either find a
	minimum-cardinality, minimal $s,t$-separator $S$ of size at most $k$ satisfying $A\subseteq C_s(G\sminus S)$ and $B\cap C_s(G\sminus S)=\emptyset$, or report that no such separator exists.
}

\begin{corollary}
	\label{cor:targetedCPIsolation}
\targetedCPIsolation
\end{corollary}
\eat{
Corollary~\ref{cor:targetedCPIsolation} gives a basic algorithmic interface for using CP-important separators. It reduces targeted connectivity-preserving isolation to optimization over a bounded family of source components: every candidate in the family contains $A$, avoids $B$, and is represented by a separator of size at most $k$. Thus subsequent algorithms can branch on, compare, filter, or contract these connected source components directly. This is the cut-uncut analogue of the way important separators provide a bounded list of relevant source components for parameterized separation problems without connectivity constraints.\eat{: the naive $n^{O(k)}$ search over all size-$k$ separators is replaced by a $2^{O(k^2\log k)}$ family of nondominated candidates whose source components already satisfy the required connectivity constraints.}
}

\eat{Corollary~\ref{cor:targetedCPIsolation} exposes a new algorithmic capability: optimizing over the source components induced by minimal $s,t$-separators, under simultaneous inclusion and exclusion constraints.} Given arbitrary sets $A$ and $B$, the algorithm finds the smallest minimal $s,t$-separator whose $s$-component contains all of $A$ and excludes all of $B$, with FPT dependence only on the separator size $k$. This is qualitatively stronger than simply solving a terminal cut-uncut instance: the vertices of $B$ may be deleted (i.e., be part of the separator), need not be connected to $t$, and need not satisfy any prescribed connectivity pattern among themselves. Thus CP-important separators give a bounded list of nondominated source components that can be filtered, compared, and optimized directly---a capability missing from current cut-uncut algorithms, and classical important separators once connectivity-preservation constraints are imposed.

We also apply the framework to \textsc{Node Multiway Cut-Uncut} (\textsc{N-MWCU}). In this problem, the input is a graph $G$, a terminal set $A$, and an equivalence relation $\mathcal R$ on $A$. The goal is to delete at most $k$ nonterminal vertices so that two terminals remain connected if and only if they belong to the same equivalence class of $\mathcal R$~\cite{DBLP:journals/siamcomp/ChitnisCHPP16}.
\eat{Thus \textsc{N-MWCU} is a canonical cut-uncut problem: it combines cut constraints between distinct terminal classes with uncut constraints inside each class~\cite{DBLP:journals/siamcomp/ChitnisCHPP16}.}

\begin{theorem}
	\label{thm:NMWCUBoundedCardinality}
	One can solve \textsc{N-MWCU} in time
	$O(2^{O(Mk^2\log k)}\cdot k n^3)$, where $M$ is the number of equivalence
	classes.  In particular, when $M$ is constant, the running time is
	$O(2^{O(k^2\log k)}\cdot k n^3)$, and when $M=2$, the running time is
	$O(2^{O(k^2\log k)}\cdot n m^{1+o(1)})$.
\end{theorem}
Compared with randomized contractions~\cite{DBLP:journals/siamcomp/ChitnisCHPP16}, this matches the state-of-the-art dependence on $k$ for constant $M$---in particular for Two-Sets Cut-Uncut, where $M=2$---and improves the polynomial dependence on the input.

\paragraph{Directed lower bound.}
We also show that the number of connectivity-preserving important separators can be much larger than the number of standard important separators, at least in directed graphs. For standard important separators, the classical $4^k$ bound holds in both undirected and directed graphs~\cite{DBLP:books/sp/CyganFKLMPPS15}. In contrast, we show that for every $k$, there are directed graphs with $\Omega(2^{k^2}/\operatorname{poly}(k))$ connectivity-preserving important $s,t$-separators of size at most $k$. The direction of the edges is important here: the same construction does not work in undirected graphs, and the corresponding lower bound for undirected graphs remains open.
\begin{theorem}[Directed lower bound]
	\label{thm:directed-lower-bound}
	For every integer $k \ge 1$, there exist a directed graph $G$, a subset $A \subseteq \nodes(G)$ where $s\in A$, and a vertex $t\in \nodes(G)$, such that $G$ has at least $\frac{2^{k^2-1}}{k}$
	connectivity-preserving important $s,t$-separators with respect to $A$ of size at most $k$.
\end{theorem}

\eat{
\paragraph{Challenges and techniques.}
The classical enumeration of important separators starts from a simple fact about minimum separators. Fix $s$ and $t$. Among all minimum-cardinality $s,t$-separators, there is a unique important separator: in the standard orientation, it is the one whose deletion leaves the largest possible set reachable from $s$; in our symmetric orientation, it is the one whose deletion leaves the smallest possible set reachable from $s$ (or largest possible reachable set from $t$). This separator anchors the usual recursive enumeration of all important
separators of size at most $k$.
Connectivity preservation removes this starting point. A CP separator must not only separate $s$ from $t$, but also keep every vertex of $A$ reachable from $s$, while minimizing the set of vertices reachable from $s$ under this constraint. The classic important minimum separator may therefore be infeasible: moving the separator closer to $s$ can disconnect some vertex of $A$ from another terminal of $A$. Consequently, the CP setting usually does not have unique minimum separator from which enumeration can start. Even among minimum-cardinality CP separators, there may be exponentially many minimum CP-important separators whose source components are pairwise incomparable by inclusion. In other words, the uniqueness of the minimum important separator, which is the basis for the classic algorithm for enumerating important separators, does not hold when introducing connectivity constraints.

To handle this multiplicity, our algorithm does the following. We start with the unique important minimum $sA,t$-separator, which ignores the connectivity constraint, and analyze precisely why it fails. 
If it is not CP,then some components left after deleting it contain terminals of $A$ but not $s$. These are exactly the components witnessing the violation of the connectivity constraints. Each violation component has a
boundary on the separator. Any CP repair must reconnect the violation to the
source side through such boundary vertices.  Thus the failures of the
ordinary separator define a hitting-set problem inside one minimum separator:
choose a minimal set of boundary vertices that hits all violation components,
force those vertices to the source side, and recompute a closest minimum
separator.

The key structural point is not the hitting-set enumeration itself, but the
completeness theorem behind it: every minimum CP-important separator can be
obtained by repairing the ordinary canonical separator through such boundary
choices.  We prove that vertices lying on minimum $s,t$-separators have a
stability property: once such vertices are forced to remain on the source
side, every minimum separator in the modified instance keeps them there
automatically.  Intuitively, these vertices act as safe gateways through
which violation components can be reattached to the source side.  This theorem
is what allows the algorithm to repair the ordinary separator without losing
the minimum-separator structure.  It also explains why the enumeration is not
an uncontrolled search over connected regions, but a bounded search over
repair sets inside separators.

The full algorithm lifts this minimum-separator repair theorem to all
separator sizes up to $k$.  Each recursive call makes one of two kinds of
irreversible progress: either vertices are committed to the separator,
reducing the remaining budget, or a relevant connectivity value strictly
increases.  A single potential function combines these two measures and
decreases in every recursive call.  This potential-driven recursion, together
with the hitting-set repair theorem, yields the stated
$2^{O(k^2\log k)}$ enumeration bound.

This is not a black-box use of classical important separators.  It is a new
repair framework for separators that must preserve connectivity.  The result
is the missing important-separator theorem for cut-uncut constraints: a
bounded, enumerable, source-component-aware family of nondominated small
interfaces.

\paragraph{Comparison with randomized contractions.}
Randomized contractions give a powerful global framework for cut problems,
including Steiner Cut and \textsc{N-MWCU}
\cite{DBLP:journals/corr/abs-1207-4079,
	DBLP:journals/siamcomp/ChitnisCHPP16}.  For general \textsc{N-MWCU}, they
give an $M$-independent $2^{O(k^2\log k)}$ dependence on $k$, and our
application does not improve this bound for large $M$.  The contribution here
is different: we prove a local representative-family theorem.  For a fixed
tuple $(s,t,A,B)$, the algorithm enumerates all nondominated small interfaces
whose source component contains $A$ and avoids $B$.  This source-component
information is precisely what is needed when solutions are assembled by
choosing, comparing, or contracting preserved components, as in our
\textsc{N-MWCU} application.
}

\paragraph*{Organization.}
Section~\ref{sec:Preliminaries} fixes notation and provides background. Section~\ref{sec:cp_min_seps} characterizes minimum
connectivity-preserving important separators, and Section~\ref{sec:algo} builds on this characterization to give the recursive enumeration algorithm and prove Theorem~\ref{thm:singleExponentialSafeImportantOfSizek}. Section~\ref{sec:NMWCU} proves Corollary~\ref{cor:targetedCPIsolation} and Theorem~\ref{thm:NMWCUBoundedCardinality}. Due to space restrictions, most technical details are
deferred to the appendix, and the directed lower bound of Theorem~\ref{thm:directed-lower-bound} is deferred entirely to Section~\ref{sec:cp-important-lower-bound} of the Appendix.
	\section{Preliminaries and Notation}
\label{sec:Preliminaries}
Let $G$ be an undirected graph with nodes $\nodes(G)$ and edges $\edges(G)$, where $n=|\nodes(G)|$, and $m=|\edges(G)|$. 
We assume, without loss of generality, that $G$ is connected. 
For $A,B\subseteq \nodes(G)$, we abbreviate $AB\eqdef A\cup B$; for $v\in \nodes(G)$ we abbreviate $vA\eqdef \set{v}\cup A$. Let $v\in V$. We denote by $N_G(v)\eqdef\set{u\in\nodes(G) : (u,v)\in \edges(G)}$ the neighborhood of $v$, and by $N_G[v]\eqdef N_G(v)\cup \set{v}$ the \e{closed} neighborhood of $v$.
For a subset of vertices $T\subseteq \nodes(G)$, we denote by $N_G(T)\eqdef \bigcup_{v\in T}N_G(v){\setminus}T$, and $N_G[T]\eqdef N_G(T)\cup T$.
We denote by $G[T]$ the subgraph of $G$ induced by $T$. Formally, $\nodes(G[T])=T$, and $\edges(G[T])=\set{(u,v)\in \edges(G): \set{u,v}\subseteq T}$. For a subset $S\subseteq \nodes(G)$, we abbreviate $G\sminus S\eqdef G[\nodes(G){\setminus} S]$; for $v\in \nodes(G)$, we abbreviate $G\sminus v\eqdef G\sminus \set{v}$.
We say that $G'$ is a \e{subgraph} of $G$ if it results from $G$ by removing vertices and edges; formally, $\nodes(G')\subseteq \nodes(G)$ and $\edges(G')\subseteq \edges(G)$. 
\eat{
Let $T\subseteq \nodes(G)$, and $t\in \nodes(G)$. 
By \e{merging} $T$ into  vertex $t\in \nodes(G)$, we refer to the operation that adds an edge between $t$ and every vertex in $N_G[T]$. Formally, merging $T$ to $t$ results in the graph $G'$ where: 
\begin{align}
	\nodes(G')\eqdef \nodes(G) && \edges(G')\eqdef \edges(G)\cup \set{(t,u):u\in N_G[T]} \label{eq:mergeDef}
\end{align}
We note the distinction from \e{vertex contraction} or \e{vertex identification}\footnote{\href{https://mathworld.wolfram.com/VertexContraction.html}{https://mathworld.wolfram.com/VertexContraction.html}} where the vertices of $T$ are replaced by a single vertex $t$ that is made adjacent to $N_G(T)$.}
\eat{
Let $u\in \nodes(G)$. We denote by $\sat(G,u)$ the graph that results from $G$ by adding all edges between vertices in $N_G(u)$. Formally:
\begin{align*}
	\label{eq:SAT}
	\nodes(\sat(G,u))=\nodes(G) &&\mbox{ and }&& \edges(\sat(G,u))=\edges(G) \cup\set{(x,y): x,y \in N_G(u)}
\end{align*}
}

Let $u,v \in \nodes(G)$. A \e{simple path} between $u$ and $v$, called a $u,v$-path, is a sequence of distinct vertices $u=v_1,\dots,v_k=v$ where, for all $i\in [1,k-1]$, $(v_{i},v_{i+1})\in \edges(G)$, and whose ends are $u$ and $v$.
A subset $A \subseteq \nodes(G)$ is \emph{connected} if $G[A]$ contains a path between every pair of vertices in $A$. A \emph{connected component} is a connected set $A$ such that no superset $A' \supset A$ is connected. 
We denote by $\cc(G)$ the set of connected components of $G$. For $v\in \nodes(G)$, we denote by $C_v(G)\in \cc(G)$ the connected component that contains $v$; for a subset $A\subseteq \nodes(G)$, we define $C_A(G)\eqdef \mediumbigcup_{a\in A}C_a(G)$.

Contracting an edge $(u,v)$ to $u$ (or a connected set $Z$ to $u \in Z$) is defined standardly; the vertex $u$ retains its identity, and its new neighborhood becomes $N_G(u) \cup N_G(v)$ (or $N_G(Z)$); parallel edges are eliminated.\eat{Let $A \subseteq \nodes(G)$ and $u \in \nodes(G) {\setminus} A$, where $G[uA]$ is connected. 
\emph{Contracting the connected set $uA$ to $u$} yields a graph $G'$ with $\nodes(G') = \nodes(G) {\setminus} A$ and $\edges(G') = \edges(G \sminus A) \cup \{(u,a) : a \in N_G(A)\}$. Contracting $uA$ to $u$ is equivalent to a sequence of edge contractions.}
Let $V_1,V_2\subseteq \nodes(G)$ denote disjoint vertex sets of $\nodes(G)$. We say that $V_1$ and $V_2$ are adjacent if there is a pair of adjacent vertices $v_1 \in V_1$ and $v_2\in V_2$. There is a path between $V_1$ and $V_2$ if there exist vertices $v_1 \in V_1$ and $v_2\in V_2$ such that there is a path between $v_1$ and $v_2$.

Let $\mathcal{U}$ be a universe of elements, and $\mathcal{E} = \{e_1,e_2,\dots,e_m\}$ a collection of subsets of $\mathcal{U}$. A \emph{hitting set} of $\mathcal{E}$ is a subset $S \subseteq \mathcal{U}$ such that $S \cap e_i \neq \emptyset$ for every $i \in \set{1, \dots, m}$; $S$ contains at least one element from each set in $\mathcal{E}$. A hitting set $S$ is \emph{minimal} if no proper subset of $S$ is a hitting set of $\mathcal{E}$. We denote by $\MHS(\mathcal{E})$ all minimal hitting sets of $\mathcal{E}$.
 
A parameterization assigns an integer $k$ to each input instance $I$. A problem is \emph{fixed-parameter tractable (FPT)} if it can be solved in time $f(k)\cdot |I|^{O(1)}$, where $f$ is any computable function depending only on $k$. \eat{If $f(k)=2^{O(k)}$, the algorithm is \emph{single-exponential} in $k$. A faster-growing yet modest dependence is called \emph{slightly superexponential}, namely
$f(k)=k^{O(k)} = 2^{O(k\log k)}$, 
which exceeds single-exponential but remains well below $2^{O(k^2)}$ or larger (e.g., double-exponential)~\cite{doi:10.1137/16M1104834}.}
 \eat{
 The following proposition is straightforward.
 \begin{proposition}
 	\label{prop:MHS}
 	Let $\mathcal{U}$ be a universe of elements, and let $\mathcal{E} = \set{e_1,e_2,\dots,e_m}$ be a collection of subsets of $\mathcal{U}$. A \e{hitting set} $S$ of $\mathcal{E}$ is minimal if and only if for every $u\in S$ there exists a subset $e_u\in \mathcal{E}$ such that $e_u\cap S=\set{u}$.
 \end{proposition}
}

\subsection{Minimal Separators} 
\label{sec:minimalSeparators}
Let $s,t \in 
\nodes(G)$. 
For $X \subseteq \nodes(G)$, we let $\cc(G\sminus X)$ denote the set of connected components of $G\sminus X$. The vertex set $X$ is called a \e{separator} of $G$ if $|\cc(G\sminus X)|\geq 2$, an \e{$s,t$-separator} if $s$ and $t$ are in different connected components of $\cc(G\sminus X)$, and a \e{minimal $s,t$-separator} if no proper subset of $X$ is an $s,t$-separator of $G$. We denote by $C_s(G\sminus X)$ and $C_t(G\sminus X)$ the connected components of $\cc(G\sminus X)$ containing $s$ and $t$ respectively.
For a vertex $v\in \nodes(G)$, and a subset $X\subseteq \nodes(G)$, we denote by $C_v(G\sminus X)$ the connected component of $\cc(G\sminus X)$ that contains $v$; for a subset of vertices $A\subseteq \nodes(G)$, we denote by $C_A(G\sminus X)\eqdef \mediumbigcup_{v\in A}C_v(G\sminus X)$. 


\begin{citedlemma}{\cite{DBLP:journals/ijfcs/BerryBC00}}\label{lem:fullComponents}
	An $s,t$-separator $X\subseteq \nodes(G)$ is a minimal $s,t$-separator if and only if~$N_G(C_s(G\sminus X))=N_G(C_t(G\sminus X))=X$. \eat{, in which case $C_s(G\sminus X)$ and $C_t(G\sminus X)$ are called \e{full components} of $\cc(G\sminus X)$.}\eat{ 
	there are two connected components $C_s\eqdef C_s(G,X),C_t\eqdef C_t(G,X)\in \cc_G(X)$, such that $s\in C_s$, $t \in C_t$, and $N_G(C_s)=N_G(C_t)=X$; $C_s$ and $C_t$ are called \e{full components} of $\cc_G(X)$.}
\end{citedlemma}
\eat{A subset $X\subseteq \nodes(G)$ is a \e{minimal separator} of $G$ if there exist vertices $u,v \in \nodes(G)$ such that $X$ is a minimal $u,v$-separator.} \eat{A connected component $C\in \cc(G\sminus X)$ is a \e{full component} of $X$ if $N_G(C)=X$. By Lemma~\ref{lem:fullComponents}, $X$ is a minimal $u,v$-separator if and only if the components $C_u(G\sminus X)$ and $C_v(G\sminus X)$ are full.}We denote by $\minlsepst{G}$ the minimal $s,t$-separators of $G$.
For vertex sets $A,B\subseteq \nodes(G)$, an $A,B$-separator is a set that intercepts all paths from a vertex in $A$ to a vertex in $B$. We extend the notation $\minlsep{A,B}{G}$, $\minsep_{A,B}(G)$, and $\kappa_{A,B}(G)$ from vertex pairs to vertex sets in the natural way.
\eat{
Let $A,B \subseteq \nodes(G)$ be disjoint and non-adjacent. A subset $S \subseteq \nodes(G) {\setminus} AB$ is an $A,B$-separator if, in $G \sminus S$, no path exists between $A$ and $B$. We say $S$ is a minimal $A,B$-separator if no subset of $S$ is an $A,B$-separator. We denote by $\minlsep{A,B}{G}$ the set of minimal $A,B$-separators of $G$. In Section~\ref{sec:minsepsvertexsets} of the Appendix (Lemma~\ref{lem:MinlsASep}), we prove a bijection between minimal $A,B$-separators in $G$, and minimal $s,t$-separators in a certain supergraph of $G$.
}
\def\inclusionCsCt{
	Let $s,t\in \nodes(G)$, and let $S,T\in \minlsepst{G}$. The following holds:
	\begin{align*}
		C_s(G\sminus S)\subseteq C_s(G\sminus T) &&\Longleftrightarrow&& S\subseteq T\cup C_s(G\sminus T) &&\Longleftrightarrow&&   T\subseteq S\cup C_t(G\sminus S). 
	\end{align*}
}

\eat{
\begin{citedlemma}{Submodularity~\cite{DBLP:books/sp/CyganFKLMPPS15}}
	\label{lem:submodularity}
	For any $X,Y\subseteq \nodes(G)$:
	\begin{align*}
		|N_G(X)|+|N_G(Y)|\geq |N_G(X\cup Y)|+|N_G(X\cap Y)|.
	\end{align*}
\end{citedlemma}
Let $S,T\in \minlsepst{G}$. From Lemma~\ref{lem:fullComponents}, we have that $S = N_G(C_s(G\sminus S))$, and $T = N_G(C_s(G\sminus T ))$.
Consequently, we will usually apply Lemma~\ref{lem:submodularity} as follows.
\begin{corollary}
	\label{corr:submodularity}
	Let $S,T\in \minlsepst{G}$. Then:
	\begin{align*}
		|S|+|T|\geq |N_G(C_s(G\sminus S)\cap C_s(G\sminus T))|+|N_G(C_s(G\sminus S)\cup C_s(G\sminus T))|.
	\end{align*}
\end{corollary}
}
\eat{
\begin{citeddefinition}{Important $s,t$-separators~\cite{DBLP:books/sp/CyganFKLMPPS15}}
	\label{def:importantstSeps}
	We say that $S\in \minlsepst{G}$ is \e{important} if, for every $S'\in \minlsepst{G}$: 
$
C_s(G\sminus S')\subset C_s(G\sminus S) \Longrightarrow |S'|>|S|.$
\end{citeddefinition}
}
\begin{citeddefinition}{Important separators~\cite{DBLP:books/sp/CyganFKLMPPS15}}
	\label{def:importantABSeps}
	Let $A,B \subseteq \nodes(G)$.
	We say that $S\in \minlsep{A,B}{G}$ is important if, for every $S'\in \minlsep{A,B}{G}$: $C_A(G\sminus S')\subsetneq C_A(G\sminus S) \Longrightarrow |S'|>|S|.$
\end{citeddefinition}
Thus, among separators of no larger size, our convention prefers those that leave the $A$-side inclusionwise \emph{minimal}. This is the reverse of the more common formulation in the important-separator literature, where one prefers separators that leave the $A$-side inclusionwise \emph{maximal}~\cite{DBLP:journals/tcs/Marx06,DBLP:books/sp/CyganFKLMPPS15}. Crucially, the two formulations are equivalent, as shown by the next lemma, and we use the present one throughout the paper.
We denote by $\minlsepImp{A,B}{G}$ the set of important $A,B$-separators of $G$, and by $\minlsepImp{A,B,k}{G}$ the set of important $A,B$-separators of $G$ whose cardinality is at most $k$. 
\begin{lemma}
	\label{lem:inclusionCsCt}
	\inclusionCsCt
\end{lemma}
\eat{
By Lemma~\ref{lem:inclusionCsCt}, Definition~\ref{def:importantABSeps} is equivalent to the standard formulation~\cite{DBLP:books/sp/CyganFKLMPPS15}, stating that $S\in \minlsepst{G}$ is \e{important} if, for every $S'\in \minlsepst{G}$, $C_s(G\sminus S')\supsetneq C_s(G\sminus S)$ implies $|S'|>|S|$.
The standard formulation orders separators by inclusionwise \emph{maximal} $s$-component, whereas ours orders them by inclusionwise \emph{minimal} $s$-component. Lemma~\ref{lem:inclusionCsCt} shows that these formulations are equivalent.}

By Lemma~\ref{lem:inclusionCsCt}, Definition~\ref{def:importantABSeps} is equivalent to the standard formulation~\cite{DBLP:books/sp/CyganFKLMPPS15} with the separator orientation reversed. Indeed, by Lemma~\ref{lem:inclusionCsCt}, $S\in \minlsepst{G}$ is important by Definition~\ref{def:importantABSeps} if and only if, for every
$S'\in \minlsepst{G}$, $C_t(G\sminus S')\supsetneq C_t(G\sminus S)$ implies $|S'|>|S|$. This is exactly the classical important-separator condition applied to the ordered pair $(t,s)$~\cite{DBLP:books/sp/CyganFKLMPPS15}.

\eat{
\begin{corollary}
	\label{corr:fullComponentsInduced}
	Let $U\subseteq \nodes(G)$ where $s,t\in U$. Let $T\in \minlsepst{G[U]}$. If $T$ is an $s,t$-separator of $G$, then $T\in \minlsepst{G}$. 
\end{corollary}
\begin{proof}
Let $C_s,C_t \in \cc(G\sminus T)$ be the connected components of $G\sminus T$ containing $s$ and $t$, respectively. Therefore, $T\supseteq N_G(C_s)\cap N_G(C_t)$. Since $G[U]$ is an induced subgraph of $G$, then $C_s(G[U]\sminus T)\subseteq C_s$ and $C_t(G[U]\sminus T)\subseteq C_t$. By Lemma~\ref{lem:fullComponents}, we have that $T=N_{G[U]}(C_s(G[U]\sminus T))\cap N_{G[U]}(C_t(G[U]\sminus T))\subseteq N_G(C_s)\cap N_G(C_t)$. Therefore, $T=N_G(C_s)\cap N_G(C_t)$. Since $T\supseteq N_G(C_s)\cup N_G(C_t)$, we get that  $T=N_G(C_s)= N_G(C_t)$. By Lemma~\ref{lem:fullComponents}, we have that $T\in \minlsepst{G}$.
\end{proof}
}
\eat{
Let $S,T\in \minlsep{}{G}$. We say that $S$ \e{crosses} $T$ if there are vertices $u,v\in T$, such that $S$ is a $u,v$-separator. Crossing is known to be a symmetric relation: $S$ crosses $T$ if and only if $T$ crosses $S$~\cite{DBLP:journals/dam/ParraS97}. Hence, if $S$ crosses $T$, we say that $S$ and $T$ are \e{crossing}, and denote this relationship by $S\sharp T$~\cite{DBLP:journals/dam/ParraS97}. When $S$ and $T$ are non-crossing, then we say that they are \e{parallel}, and denote this by $S\| T$. It immediately follows that if $S$ and $T$  are parallel, then $S\subseteq C_S\cup T$ for some connected component $C_S\in \cc(G\sminus T)$, and $T\subseteq C_T\cup S$ for some $C_T\in \cc(G\sminus S)$.
}
\eat{
\begin{definition}[Good graph]
	\label{def:goodgraph}
	We say that $G$ is a \e{good graph} if for every $S\in \minlsep{}{G}$, it holds that $G\sminus S$ has exactly two connected components.
\end{definition}
Examples of good graphs include \e{claw-free} graphs~\cite{DBLP:conf/wg/BerryW12}, and \e{$3$-connected planar} graphs~\cite{MAZOIT2006372}. A claw, denoted $K_{1,3}$ is a graph on four vertices such that one of them, called the center, is adjacent to the other three vertices which themselves are pairwise non-adjacent. A graph is claw-free if it has no claw as an induced subgraph. A graph $G$ is planar if it does not admit a $K_{3,3}$ or a $K_5$ as a minor; it is $3$-connected planar if it is both planar and $3$-connected.
}
\eat{

\begin{citedlemma}{Submodularity, \cite{DBLP:books/sp/CyganFKLMPPS15}}
	\label{lem:submodularity}
	For any $X,Y \subseteq \nodes(G)$:
	\begin{equation}
		\nonumber
		|N_G(X)|+|N_G(Y)| \geq |N_G(X\cap Y)|+|N_G(X \cup Y)|
	\end{equation}
\end{citedlemma}
Let $S,T\in \minlsepst{G}$. From Lemma~\ref{lem:fullComponents}, we have that $S=N_G(C_s(G\sminus S))$, and $T=N_G(C_s(G\sminus T))$. 
Consequently, we will usually apply Lemma~\ref{lem:submodularity} as follows.
\begin{corollary}
	\label{corr:submodularity}
	Let $S,T\in \minlsepst{G}$ then:
	\begin{align*}
		|S|+|T|\geq |N_G(C_s(G\sminus S)\cap C_s(G\sminus T))|+ |N_G(C_s(G\sminus S)\cup C_s(G\sminus T))|
	\end{align*}
\end{corollary}

Following Kloks and Kratsch~\cite{DBLP:journals/siamcomp/KloksK98}, we say that a minimal $s,t$-separator $S\in \minlsepst{G}$ is \e{close} to $s$ if $S\subseteq N_G(s)$.

\begin{citedlemma}{\cite{DBLP:journals/siamcomp/KloksK98}}
	\label{lem:uniqueCloseVertex}
	If $s$ and $t$ are non-adjacent, then there exists exactly one minimal $s,t$-separator $S\in \minlsepst{G}$ that is close to $s$, which can be found in polynomial time.
\end{citedlemma}
}

\eat{
Let $A,B \subseteq \nodes(G)$ be disjoint and non-adjacent. A subset $S \subseteq \nodes(G) {\setminus} AB$ is an $A,B$-separator if, in $G \sminus S$, there is no path between $A$ and $B$. We say that $S$ is a minimal $A,B$-separator if no proper subset of $S$ is an $A,B$-separator. We denote by $\minlsep{A,B}{G}$ the set of minimal $A,B$-separators of $G$. In Section~\ref{sec:minsepsvertexsets} of the Appendix, we prove a bijection between minimal $A,B$-separators in $G$, and minimal $s,t$-separators in a certain supergraph of $G$.}

\def\simpABlemma{
	Let $A$ and $B$ be two disjoint, non-adjacent subsets of $\nodes(G)$. Then $S\in \minlsep{A,B}{G}$ if and only if $S$ is an $A,B$-separator, and for every $w\in S$, there exist two connected components $C_A,C_B\in \cc(G\sminus S)$ such that $C_A\cap A\neq \emptyset$, $C_B\cap B\neq \emptyset$, and $w\in N_G(C_A)\cap N_G(C_B)$.
}

\def\lemMinlsASep{
	Let $A,B\subseteq \nodes(G){\setminus}\set{s,t}$ where $sA$ and $tB$ are disjoint and non-adjacent. Let $H$ be the graph that results from $G$ by adding all edges between $s$ and $N_G[A]$. That is, $\edges(H)=\edges(G)\cup \set{(s,v):v\in N_G[A]}$. Then $\minlsep{sA,tB}{G}=\minlsep{s,tB}{H}$, $C_s(H\sminus S)=C_{sA}(G\sminus S)$, and $C_{tB}(H\sminus S)=C_{tB}(G\sminus S)$ for every $S\in \minlsep{sA,tB}{G}$.
}
\eat{
\begin{lemma}
	\label{lem:MinlsASep}
	\lemMinlsASep
\end{lemma}
}

\eat{
\def\corrMinlsBtSep{
	Let $A,B\subseteq \nodes(G){\setminus}\set{s,t}$ be disjoint. Let $H$ be the graph that results from $G$ by adding all edges between $t$ and $B$, and all edges between $s$ and $N_G[A]$. That is, $\edges(H)=\edges(G)\cup \set{(t,b):b\in B}\cup\set{(s,a):a\in N_G[A]}$. 
	Then 
	\begin{align*}
		& \set{S\in \minlsep{sA,t}{G}: B\cap C_A=\emptyset\text{ for all }C_A\in \cc(G\sminus S)\text{ where }C_A\cap sA\neq \emptyset} \subseteq \minlsepst{H}
	\end{align*}
}
\begin{corollary}
	\label{corr:MinlsBtSepNew}
	\corrMinlsBtSep
\end{corollary}
}

\eat{
\def\corrMinlsBtSep{
	Let $A,B\subseteq \nodes(G){\setminus}\set{s,t}$ be disjoint. Let $H$ be the graph that results from $G$ by adding all edges between $t$ and $B$, and all edges between $s$ and $N_G[A]$. That is, $\edges(H)=\edges(G)\cup \set{(t,b):b\in B}\cup\set{(s,a):a\in N_G[A]}$. 
	Then 
	\begin{align*}
		& \set{S\in \minlsep{sA,t}{G}: B\cap C_A=\emptyset\text{ for all }C_A\in \cc(G\sminus S)\text{ where }C_A\cap sA\neq \emptyset} \subseteq \minlsepst{H}
	\end{align*}
}
\begin{corollary}
	\label{corr:MinlsBtSepNew}
	\corrMinlsBtSep
\end{corollary}
}
\eat{
\begin{citeddefinition}{Important $A,B$-separators~\cite{DBLP:books/sp/CyganFKLMPPS15}}
	\label{def:importantABSeps}
	Let $A,B \subseteq \nodes(G)$.
	We say that $S\in \minlsep{A,B}{G}$ is important if, for every $S'\in \minlsep{A,B}{G}$: $C_A(G\sminus S')\subset C_A(G\sminus S) \Longrightarrow |S'|>|S|.$
\end{citeddefinition}
}
\eat{
\begin{proposition}
	\label{prop:ImportantsAtBSeps}
	Let $s,t\in \nodes(G)$, and $A,B \subseteq \nodes(G){\setminus}\set{s,t}$. Define $H$ to be the graph where $\nodes(H)\eqdef \nodes(G)$, and $\edges(H)\eqdef \edges(G)\cup \set{(s,v): v\in N_G[A]}\cup \set{(t,v): v\in N_G[B]}$. Then $\minlsepImp{sA,tB}{G}=\minlsepImp{s,t}{H}$.
\end{proposition}
\begin{proof}
	By Lemma~\ref{lem:MinlsASep} and Corollary~\ref{corr:MinlsASep}, we have that $\minlsepst{H}=\minlsep{sA,tB}{G}$. Let $S\in \minlsepImp{s,t}{H}$. By construction, it holds that $C_s(H\sminus S)=\bigcup_{v\in sA}C_v(G\sminus S)$. If $S\notin \minlsepImp{sA,tB}{G}$, then, by Definition~\ref{def:importantABSeps}, there exists a $T\in \minlsep{sA,tB}{G}$, such that $|T|\leq |S|$ and $\bigcup_{v\in sA}C_v(G\sminus T)\subset \bigcup_{v\in sA}C_v(G\sminus S)$.
	By Corollary~\ref{corr:MinlsASep}, $T\in \minlsepst{H}$ where $C_s(H\sminus T)\subset C_s(H\sminus S)$. But then, $S\notin \minlsepImp{s,t}{H}$; a contradiction.
	
	Now, let $S\in \minlsepImp{sA,tB}{G}$. If $S\notin \minlsepImp{s,t}{H}$, then by Definition~\ref{def:importantABSeps}, there exists a $T\in \minlsepst{H}$, where $|T|\leq |S|$, and $C_s(H\sminus T)=\bigcup_{v\in sA}C_v(G\sminus T)\subset \bigcup_{v\in sA}C_v(G\sminus S)=C_s(H\sminus S)$. But then, $S\notin \minlsepImp{sA,Bt}{G}$; a contradiction.
\end{proof}
}
\def\contractEdgesMinlSep{
	Let $u\in N_G(s)$, and let $H$ be the graph that results from $G$ by contracting the edge $(s,u)$ to vertex $s$. Then $\minlsepst{H}=\set{S\in \minlsepst{G}: u\in C_s(G\sminus S)}$.
}
\eat{
	We will also require the following result showing that the minimal $s,t$-separators of a graph are maintained following the contraction of an edge $(s,v)\in \edges(G)$.
\begin{lemma}
	\label{lem:contractEdgesMinlSep}
	\contractEdgesMinlSep
\end{lemma}
\begin{proof}
	Let $S\in \minlsepst{G}$ where $u\in C_s(G\sminus S)$. This means that $N_G[u] \subseteq C_s(G\sminus S) \cup S$. By definition, $\edges(H){\setminus}\edges(G)\subseteq \set{(s,v):v\in N_G[u]}$. In other words, every edge in $\edges(H){\setminus}\edges(G)$ is between $s$ and a vertex in $S\cup C_s(G\sminus S)$. Therefore, $S$ is an $s,t$-separator in $H$. For this reason, it also holds that $C_s(H\sminus S)=C_s(G\sminus S)$ and $C_t(G\sminus S)=C_t(H\sminus S)$. Since $C_s(G\sminus S)$ and $C_t(G\sminus S)$ are full connected components associated with $S$ in $G$, then $C_s(H\sminus S)$ and $C_t(H\sminus S)$ are full connected components associated with $S$ in $H$. Therefore, $S\in \minlsepst{H}$.
	
	Now, let $S\in \minlsepst{H}$. 
	Since $u\notin \nodes(H)$, then $u\notin S$. Let $C_s,C_t \in \cc(H\sminus S)$ be the full connected components associated with $S$ in $H$ that contain $s$ and $t$ respectively. That is, $N_{H}(C_s)=N_{H}(C_t)=S$. Since $u\notin S$, and since $(s,u)\in \edges(G)$, then $G[C_s \cup \set{u}]$ is connected. We claim that $S=N_G(C_s \cup \set{u})$. Since $C_s \cup \set{u}$ is a connected component of $G\sminus S$, then $N_G(C_s \cup \set{u})\subseteq S$. Now, take $v\in S$. Then $v\in N_{H}(x)$ for some vertex $x\in C_s$. If $v\in N_G(x)$, then $v\in N_G(C_s)$, and we are done. Otherwise, $x=s$ because all edges in $\edges(H){\setminus}\edges(G)$ have an endpoint in $s \in C_s$. Since $(s,v) \in \edges(H){\setminus}\edges(G)$, then $v\in N_G(u)$. Therefore, $v\in N_G(C_s\cup \set{u})$. So, we get that $S=N_G(C_s\cup \set{u})=N_G(C_t)$. Therefore, $S\in \minlsepst{G}$ where $C_s(G\sminus S)=C_s\cup \set{u}$.
\end{proof}
The following follows directly from Lemma~\ref{lem:contractEdgesMinlSep}.
\begin{corollary}
	\label{corr:contractEdgesMinlSep}
		Let $A\subseteq \nodes(G)$ such that $G[sA]$ is connected. Let $H$ be the graph that results from $G$ by contracting all edges in $G[sA]$. Then $\minlsepst{H}=\set{S\in \minlsepst{G}: A\subseteq  C_s(G\sminus S)}$.
\end{corollary}
}
\eat{
\subsubsection{Minimal Separators between Vertex-Sets}
Let $A,B \subseteq \nodes(G)$ that are pairwise disjoint, and where $A$ and $B$ are nonempty. A subset $X\subseteq \nodes(G){\setminus} AB$ is an \e{$A,B$-separator} if, in the graph $G\sminus X$, there is no path between $A$ and $B$. 
We say that $X$ is a minimal $A,B$-separator  if no proper subset of $X$ has this property. We denote by $\minlsep{A,B}{G}$ the set of minimal $A,B$-separators in $G$.
\eat{
Let $A, B, C\subseteq \nodes(G)$ that are pairwise disjoint, and where $A$ and $B$ are nonempty.  A subset $X\subseteq \nodes(G){\setminus} ABC$ is an \e{$A,BC$-separator} if, in the graph $G\sminus X$, there is no path between $A$ and $BC$. 
A subset $X\subseteq \nodes(G){\setminus} AB$ is an \e{$A|C,B$-separator} if, in the graph $G\sminus X$, there is no path between $A$ and $BC$. Observe that if $X$ is an $A,BC$-separator, then $X\subseteq \nodes(G){\setminus}ABC$; if $X$ is an $A|C,B$-separator, then $X\subseteq \nodes(G){\setminus}AB$ or, in other words, $X$ may contain vertices from $C$.
We say that $X$ is a minimal $A,BC$-separator (minimal $A|C,B$-separator) if no proper subset of $X$ has this property. We denote by $\minlsep{A,BC}{G}$ and $\minlsep{A|C,B}{G}$ the set of minimal $A,BC$-separators and $A|C,B$-separators in $G$, respectively. Observe that $\minlsep{A,B}{G}\equiv \minlsep{A|\emptyset,B}{G}$. 
}

\def\simpABlemma{
	Let $A,B \subseteq \nodes(G)$ be disjoint and nonempty, and let $C\subseteq \nodes(G){\setminus}AB$. Then $S\in \minlsep{A|C,B}{G}$ if and only if $S$ is an $A,BC$-separator, and the following hold: (1) 
	for every $w\in S$, there exists a connected component $C_A\in \cc(G\sminus S)$ such that $C_A\cap A\neq \emptyset$ and $w\in N_G(C_A)$ (2) If $w\in S{\setminus}C$ then there exists a connected component $C_{BC}\in \cc(G\sminus S)$ such that $C_{BC}\cap BC\neq \emptyset$ and $w\in N_G(C_{BC})$.
}

\def\simpsemiABlemma{
	Let $A$, $B$, and $C$ be disjoint vertex sets, and $A,B$ nonempty subsets of $\nodes(G)$. Then $S\in \semiminlsep{A,B}{G}$ if and only if $S$ is a semi-$A,B$-separator, 
	and (1) for every $w\in S{\setminus}A$, there exists a connected component $C_A\in \cc(G\sminus S)$ such that $C_A\cap A\neq \emptyset$, and $w\in N_G(C_A)$, and (2) for every $w\in S{\setminus}B$, there exists a connected component $C_B\in \cc(G\sminus S)$ such that $C_B\cap B\neq \emptyset$, and $w\in N_G(C_B)$.
}

\begin{lemma}
	\label{lem:simpAB}
	\simpABlemma
\end{lemma}
\begin{proof}
	If $S\in \minlsep{A|C,B}{G}$, then for every $w\in S$ it holds that $S{\setminus} \set{w}$ no longer separates $A$ from $BC$. Hence, there is a path from $a\in A$ to $b\in BC$ in $G\sminus (S{\setminus\set{w}})$. Let $C_a \in \cc(G\sminus S)$ denote the connected component that contains $a$. If $w=b \in S$, it means that $w\in C$. If $a$ and $b$ are connected in  $G\sminus (S{\setminus} \set{w})$, and $w=b\in S$, then $w\in N_G(C_a)$.
	If $b\notin S$, then there exist two connected components $C_a,C_b \in \cc(G\sminus S)$
	containing $a\in A$ and $b\in B$, respectively. Since $C_a$ and $C_b$ are connected in $G\sminus (S{\setminus} \set{w})$, then $w\in N_G(C_a)\cap N_G(C_b)$. Overall, we get that there exists a connected component $C_a \in \cc(G\sminus S)$ such that $A\cap C_a \neq \emptyset$, and $w \in N_G(C_a)$, and that if $w\notin C$, then there also exists a connected component $C_b\in \cc(G\sminus S)$ such that $B\cap C_b \neq \emptyset$, and $w\in N_G(C_B)$.
	
	Suppose that the conditions of the lemma hold, and let $w\in S$. If $w\notin C$, then there exists two connected components $C_A,C_B \in \cc(G\sminus S)$ where $A\cap C_A\neq \emptyset$, $BC\cap C_B \neq \emptyset$, and $w\in N_G(C_A)\cap N_G(C_B)$.Therefore, $w$ connects $C_A$ to $C_B$ in $G\sminus (S\setminus \set{w})$, and hence there is a path from $C_A$ to $C_B$, and hence from $A$ to $B$ in $G\sminus (S{\setminus}\set{w})$. Therefore, for every $w\in S{\setminus}C$, it holds that there is an $A,BC$-path in $S{\setminus}\set{w}$.
	If $w\in C$, then by the assumption of the lemma there exists a connected component $C_A \in \cc(G\sminus S)$ where $A\cap C_A\neq \emptyset$, and $w\in N_G(C_A)$. So, in $G\sminus (S{\setminus}\set{w})$, there is a path from $C_A$ to $w\in C$ in $G\sminus (S{\setminus}\set{w})$. Overall, for every $w\in S$, it holds that there is an $A,BC$-path in $S{\setminus}\set{w}$, and by definition $S\in \minlsep{A|C,B}{G}$.
\end{proof}
\eat{
\begin{lemma}
	\label{lem:simpsemiAB}
	\simpsemiABlemma
\end{lemma}
An immediate Corollary of Lemma~\ref{lem:simpsemiAB} is the following.
\begin{corollary}
	\label{corr:simpsemiAB}
	\simpABlemma
\end{corollary}
\begin{proof}
	Let $S \in \minlsep{A,B}{G}$. Since $\minlsep{A,B}{G} \subseteq \semiminlsep{A,B}{G}$, then $S\in \semiminlsep{A,B}{G}$, where $S\cap AB=\emptyset$. By Lemma~\ref{lem:simpsemiAB}, for every $w\in S$, there exist two connected components $C_A,C_B\in \cc(G\sminus S)$ such that $C_A\cap A\neq \emptyset$, $C_B\cap B\neq \emptyset$, and $w\in N_G(C_A)\cap N_G(C_B)$.
	
	Suppose that the conditions of the lemma hold, and let $w\in S$. Since $w\in N_G(C_A)\cap N_G(C_B)$ for some pair of connected components $C_A,C_B \in \cc(G\sminus S)$ where $A\cap C_A \neq \emptyset$ and $B\cap C_B \neq \emptyset$, then $S{\setminus}\set{w}$ is not an $A,B$-separator of $G$. Therefore, $S\in \minlsep{A,B}{G}$.
\end{proof}
Corollary~\ref{corr:simpsemiAB} implies Lemma~\ref{lem:fullComponents}. By Corollary~\ref{corr:simpsemiAB}, $S\in \minlsepst{G}$ if and only if $S$ is an $s,t$-separator and $S\subseteq N_G(C_s(G\sminus S))\cap N_G(C_t(G\sminus S))$. By definition, $N_G(C_s(G\sminus S))\subseteq S$ and $N_G(C_t(G\sminus S))\subseteq S$, and hence $S=N_G(C_s(G\sminus S))\cap N_G(C_t(G\sminus S))$.
}
\def\simpABCcorr{
	Let $A$ and $B$ be two disjoint, nonempty, non-adjacent subsets of $\nodes(G)$, and let $C\subseteq \nodes(G){\setminus}AB$. Then $S\in \minlsep{A,B|C}{G}$ if and only if $S$ is an $A,B|C$-separator, and for every $w\in S$, there exist two connected components $C_A,C_{BC}\in \cc(G\sminus S)$ such that $C_A\cap A\neq \emptyset$, $C_{BC}\cap BC\neq \emptyset$, and $w\in N_G(C_A)\cap N_G(C_{BC})$.
}
\eat{
\begin{corollary}
	\label{corr:simpABCcorr}
	\simpABCcorr
\end{corollary}
}

\eat{
\def\lemminimalABSeps{
	Let $A$, $B$, $C$, and $D$ be pairwise disjoint subsets of $\nodes(G)$, where $a\in A$, $b\in B$, and where $AC$ and $BD$ are non-adjacent.
	Let $G^{A,B}$ be the graph that results from $G$ by merging $A$ and $B$ into vertices $a$ and $b$ respectively.
	Then $\minlsep{AC,BD}{G}=\minlsep{aC,bD}{G^{A,B}}$.
}

\begin{lemma}
	\label{lem:minimalABSeps}
	\lemminimalABSeps
\end{lemma} 
}

\def\lemMinlsASep{
	Let $s,t \in \nodes(G)$, and $A\subseteq \nodes(G){\setminus}st$. 
	Let $H_A$ be the graph that results from $G$ by merging $A$ to vertex $s$. Then $\minlsepst{H_A}=\minlsep{sA,t}{G}$.
}
\begin{lemma}
	\label{lem:MinlsASep}
	\lemMinlsASep
\end{lemma}
\def\lemMinlsMidASep{
	Let $s,t \in \nodes(G)$, and $A\subseteq \nodes(G){\setminus}st$. 
	Let $G_A$ be the graph that results from $G$ by adding all edges between $A$ and $t$. That is, $\edges(G_A)\eqdef \edges(G) \cup \set{(a,t): a\in A}$. Then $\minlsepst{G_A}=\minlsep{s|A,t}{G}$.
}
\begin{lemma}
	\label{lem:MinlsMidASep}
	\lemMinlsMidASep
\end{lemma}

\batya{REMOVE}

\def\lemSemiMinlBtSep{
	Let $s,t \in \nodes(G)$, and $B\subseteq \nodes(G){\setminus}st$. 
	Let $H_B$ be the graph that results from $G$ by adding all edges from $B$ to $t$. Then:
	\begin{equation}
		\minlsepst{H_B}=\set{T\in  \semiminlsep{s,Bt}{G}: s,t \notin T}
	\end{equation}
}
\begin{lemma}
	\label{lem:SemiMinlBtSep}
	\lemSemiMinlBtSep
\end{lemma}

\eat{
Let $S,T\in \minlsep{}{G}$ be two minimal separators of
$G$. We say that $S$ \e{crosses} $T$ if there are vertices $u$ and $v$ in $T$, such
that $S$ is a $u,v$-separator. Crossing is known to be a symmetric
relation: $S$ crosses $T$ if and only if $T$ crosses $S$~\cite{DBLP:journals/dam/ParraS97}. 
Hence, if $S$ crosses
$T$, we say that $S$ and $T$ are \e{crossing}, and denote this relationship by $S\sharp T$~\cite{DBLP:journals/dam/ParraS97}.
It follows from this definition, and the fact that crossing is a symmetric relationship, that if $S\sharp T$ then there exist two connected components $C_1,C_2\in \cc_G(S)$ such that $C_1\cap T\neq \emptyset$, and $C_2\cap T\neq \emptyset$. 
When $S$ and $T$
are non-crossing, then we say that they are \e{parallel}. It immediately follows that if $S$ and $T$ are parallel (non-crossing) then $S \subseteq C_S\cup T$ for some connected component $C_S \in \cc_G(T)$ and $T \subseteq C_T \cup S$ for some connected component $C_T \in \cc_G(S)$. We denote by $S \| T$ the fact that $S$ and $T$ are parallel minimal separators.

\begin{lemma}
	\label{lem:parallelComponent}
	Let $S, T\in \minlsepst{G}$ be distinct minimal $s,t$-separators, such that  $S \| T$. Then $T\subseteq S \cup C_s(G,S)$ or $T\subseteq S \cup C_t(G,S)$.
\end{lemma}
\begin{proof}
	Since $S \| T$, then by definition, there exists a connected component $C_T\in \cc_G(S)$ such that $T\subseteq C_T\cup S$. Suppose, by way of contradiction, that $C_T\notin \set{C_s(G,S),C_t(G,S)}$. Hence, $C_T\cap (C_s(G,S)\cup C_t(G,S))=\emptyset$. By Lemma~\ref{lem:fullComponents}, $S=N_G(C_s(G,S))=N_G(C_t(G,S))$. Since $T$ separates $s$ from $t$, and $T\cap (C_s(G,S)\cup C_t(G,S))=\emptyset$, then $T\supseteq S$. Since $T\neq S$, then $T \notin \minlsepst{G}$, and we arrive at a contradiction.
\end{proof}
}

\eat{
\batya{remove}
\begin{definition}
	\label{def:2conn}
	We say that a graph $G$ has the \e{two-component-property} if, for every pair of non-adjacent vertices ${u,v}\in \nodes(G)$, it holds that $|\cc(G,S)|=2$ for every $S\in \minlsep{uv}{G}$.
\end{definition}
}

}

\paragraph*{Minimum Separators.}
\label{sec:minseps}
\eat{
A subset $S \subseteq V(G)$ is a \emph{minimum} $s,t$-separator if no smaller $s,t$-separator exists; its size, denoted $\kappa_{s,t}(G)$, is the $s,t$-\emph{connectivity} of $G$, and the set of all such separators is $\minsepst{G}$. Similarly, for $A,B \subseteq V(G)$, a minimum $A,B$-separator is a smallest subset $X \subseteq V(G) {\setminus}AB$ that separates $A$ from $B$; we write $\kappa_{A,B}(G)$ for its size and $\minsep_{A,B}(G)$ for the set of all such separators.}
A \emph{minimum} $s,t$-separator is an $s,t$-separator of minimum cardinality; its size is the $s,t$-\emph{connectivity} $\kappa_{s,t}(G)$, and the set of all minimum $s,t$-separators is $\minsepst{G}$. Similarly, for $A,B\subseteq V(G)$, $\kappa_{A,B}(G)$ denotes the minimum size of a set $X\subseteq \nodes(G)\sminus AB$ separating $A$ from $B$, and $\minsep_{A,B}(G)$ denotes the family of all minimum $A,B$-separators. We denote by $T(n,m)$ the time to find a minimum $s,t$-separator in a graph with $n$ vertices and $m$-edges.

We denote by $\minstVertices{s,t}(G) $ the vertices that belong to some minimum $s,t$-separator of $G$:
\begin{equation}
	\label{eq:minstVertices}
	\minstVertices{s,t}(G) \eqdef \set{v\in \nodes(G) : \exists S\in \minsepst{G} \text{ s.t. }v\in S}
\end{equation}
\eat{
\def\vertexIncludeLem{
	Let $v\in \nodes(G)$. There exists a minimum $s,t$-separator $S\in \minsepst{G}$ that contains $v$ if and only if $\kappa_{s,t}(G\sminus v)=\kappa_{s,t}(G)-1$.
}
\begin{lemma}
	\label{lem:vertexInclude}
	\vertexIncludeLem
\end{lemma}
\begin{proof}
	Let $S\in \minsep_{s,t}(G)$ where $v\in S$. Then $S{\setminus}\set{v}$ is an $s,t$-separator of $G\sminus v$. Therefore, $\kappa_{s,t}(G\sminus v)\leq |S{\setminus}\set{v}|=\kappa_{s,t}(G)-1$.
	Suppose, by way of contradiction, that  $\kappa_{s,t}(G\sminus v)<\kappa_{s,t}(G)-1$, and let $T\in \minsep_{s,t}(G\sminus v)$. Then $T\cup \set{v}$ is an $s,t$-separator of $G$ where $|T\cup \set{v}|=\kappa_{s,t}(G\sminus v)+1<\kappa_{s,t}(G)$; a contradiction.
	
	For the other direction, suppose that $\kappa_{s,t}(G\sminus v)=\kappa_{s,t}(G)-1$, and let $T\in \minsep_{s,t}(G\sminus v)$. $T\cup \set{v}$ is an $s,t$-separator of $G$ where $|T\cup \set{v}|=\kappa_{s,t}(G\sminus v)+1=\kappa_{s,t}(G)$. By definition, $T\cup \set{v}\in \minsep_{s,t}(G)$.
\end{proof}
}

\begin{citedlemma}{Unique minimum important separator~\cite{DBLP:books/sp/CyganFKLMPPS15}}
\label{lem:uniqueMinImportantSep}
Let $\emptyset \subsetneq A,B \subseteq \nodes(G)$. If $A$ and $B$ are disjoint and non-adjacent, then there is a unique minimum $A,B$-separator denoted $\closestMinSep{A,B}(G)$ that is important, which can be found in time $O((m+n)\kappa_{A,B}(G))$.
\end{citedlemma}
Let $A,B\subseteq \nodes(G)$. We say that a minimum $A,B$-separator $S\in L_{A,B}(G)$ is \emph{closest to $B$} if, for every $T\in \minsep_{A,B}(G)$, it holds that $C_B(G\sminus S)\subseteq C_B(G\sminus T)$. By the standard formulation of important separators and Lemma~\ref{lem:inclusionCsCt}, Lemma~\ref{lem:uniqueMinImportantSep} implies that such a separator is unique. We denote it by $L^B_{A,B}(G)$. In particular, when $B=\{t\}$, we write $L^t_{A,t}(G)$, and when $A=\{s\}$ we write $L^t_{s,t}(G)$.

	\subsection{Connectivity-Preserving (CP) Important Separators}
\label{sec:safeImportantCloseSeps}
Let $s,t\in \nodes(G)$, $A\subseteq \nodes(G)$.\eat{We consider minimal $s,t$-separators in $G$ that meet the connectivity constraint that $sA$ remains connected after applying the separation.}
We say that $S\in \minlsepst{G}$ is connectivity-preserving (CP) with respect to $A$ if $A\subseteq C_s(G\sminus S)$.
We denote by $\safeSeps{s}{t}(G,A)$ the minimal CP $s,t$-separators with respect to $A$, or just the minimal CP $s,t$-separators, when $A$ is clear from the context. Formally:
\begin{equation}
	\label{eq:FsAt}
	\safeSeps{s}{t}(G,A)\eqdef \set{S\in \minlsepst{G}:A\subseteq C_s(G\sminus S)}.
\end{equation}
We denote by $\safeSepsk{s}{t}{k}(G,A)$ the minimal CP $s,t$-separators with respect to $A$, whose cardinality is at most $k$.
We denote by $f_{s,t}(G,A)$ the size of minimum CP $s,t$-separators:
\begin{equation}
	\label{eq:fsAt}
	f_{s,t}(G,A)\eqdef \min\set{\lvert S \rvert: S \in \safeSeps{s}{t}(G,A)}.
\end{equation}
We denote by $\safeSepsMin{s}{t}(G,A)$ the minimum cardinality CP $s,t$-separators (i.e., whose cardinality is $f_{s,t}(G,A)$). The following is immediate from~\eqref{eq:FsAt} and~\eqref{eq:fsAt}.
\def\simpleProp{
		 (i) $\safeSeps{s}{t}(G,A)\subseteq \minlsep{sA,t}{G}$, and (ii) $\kappa_{sA,t}(G) \leq f_{s,t}(G,A)$.
}
\begin{proposition}
	\label{prop:simpleProp}
	\simpleProp
\end{proposition}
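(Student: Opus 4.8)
The plan is to unwind the two defining equations; both assertions are essentially definitional, which is precisely why they are flagged as immediate.

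For the inclusion in part~(1), I would simply appeal to~\eqref{eq:FsAt}: the set $\safeSeps{sA}{t}(G,\mc{R})$ is \emph{defined} as the subfamily of $\minlsep{sA,t}{G}$ consisting of those minimal $sA,t$-separators $S$ with $\mc{R}(G\sminus S)=\textsc{true}$, so it is a subset of $\minlsep{sA,t}{G}$ by construction and there is nothing further to prove.

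For part~(2) I would argue the two inequalities separately, each by a one-line observation. For $\kappa_{s,t}(G)\le \kappa_{sA,t}(G)$: since $s\in sA$, any $sA,t$-separator $S$ leaves no path between $sA$ and $t$ in $G\sminus S$, and in particular no path from $s$ to $t$, so $S$ is also an $s,t$-separator; hence the family of $sA,t$-separators is contained in the family of $s,t$-separators, and a minimum-cardinality $sA,t$-separator is an $s,t$-separator of size $\kappa_{sA,t}(G)$, giving $\kappa_{s,t}(G)\le\kappa_{sA,t}(G)$. For $\kappa_{sA,t}(G)\le f_{sA,t}(G,\mc{R})$: by~\eqref{eq:fsAt}, $f_{sA,t}(G,\mc{R})$ is the minimum cardinality over $\safeSeps{sA}{t}(G,\mc{R})$; by part~(1) every member of this set lies in $\minlsep{sA,t}{G}$ and is therefore an $sA,t$-separator, hence has cardinality at least $\kappa_{sA,t}(G)$, so the minimum over $\safeSeps{sA}{t}(G,\mc{R})$ is also at least $\kappa_{sA,t}(G)$ (and if $\safeSeps{sA}{t}(G,\mc{R})=\emptyset$ the bound holds vacuously under the convention $\min\emptyset=+\infty$).

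No genuine obstacle arises here; the only facts worth stating explicitly are that $s\in sA$ makes every $sA,t$-separator an $s,t$-separator, that minimal $sA,t$-separators are in particular $sA,t$-separators, and that minimising over a subfamily can only increase the value.
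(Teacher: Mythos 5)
Your proof is correct and matches the paper's treatment: the paper regards both items as immediate consequences of the definitions in~\eqref{eq:FsAt} and~\eqref{eq:fsAt}, which is exactly what you spell out (part (1) by construction, part (2) from every $sA,t$-separator being an $s,t$-separator since $s\in sA$, and every CP minimal $sA,t$-separator being an $sA,t$-separator). Nothing is missing; your explicit handling of the empty-family convention is a harmless extra.
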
\eat{
\begin{proof}
	By Definition (see~\eqref{eq:FsAt}), $\safeSeps{s}{t}(G,A)\subseteq \minlsep{sA,t}{G}$. Item 2 is immediately follows.
\end{proof}}
\begin{definition}
	\label{def:safeImp}
We say that  $S\in \safeSeps{s}{t}(G,A)$ is \e{important} if, for every $S'\in \safeSeps{s}{t}(G,A)$:
	\begin{align*} 
		C_{s}(G\sminus S') \subsetneq C_{s}(G\sminus S)&&\Longrightarrow&& |S'|>|S|.
	\end{align*}
\end{definition}
We denote by $\safeSepsImp{s}{t}(G,A)$ the important CP $s,t$-separators, and by $\safeSepskImp{s}{t}{k}(G,A)$ the important CP $s,t$-separators of cardinality at most $k$. 

\eat{
\begin{citedlemma}{Submodularity~\cite{DBLP:books/sp/CyganFKLMPPS15}}
	\label{lem:submodularityV2}
	For any $X,Y\subseteq \nodes(G)$:
	\begin{align*}
		|N_G(X)|+|N_G(Y)|\geq |N_G(X\cup Y)|+|N_G(X\cap Y)|.
	\end{align*}
\end{citedlemma}
Let $S,T\in \minlsepst{G}$. From Lemma~\ref{lem:fullComponents}, we have that $S = N_G(C_s(G\sminus S))$, and $T = N_G(C_s(G\sminus T ))$.
Consequently, we will usually apply Lemma~\ref{lem:submodularityV2} as follows.
\begin{corollary}
	\label{corr:submodularityV2}
	Let $S,T\in \minlsepst{G}$. Then:
	\begin{align*}
		|S|+|T|\geq |N_G(C_s(G\sminus S)\cap C_s(G\sminus T))|+|N_G(C_s(G\sminus S)\cup C_s(G\sminus T))|.
	\end{align*}
\end{corollary}
}

\eat{
\begin{lemma}
	\label{lem:generalContractiontSide}
	Let $T\in \minlsep{sA,t}{G}$, and let $G'$ be the graph that results
	from $G$ by contracting a subset $Z\subseteq \nodes(G)$ where $C_t(G\sminus T)\subseteq Z\subseteq C_t(G\sminus T)\cup T$ to vertex $t$. Then $\minlsep{sA,t}{G'}\subseteq \minlsep{sA,t}{G}$.
\end{lemma}
\begin{proof}
	Since $T$ is an $sA,t$-separator, we have $Z\cap sA=\emptyset$.
	Moreover, since $T\in \minlsep{sA,t}{G}$, every vertex of $T$ has a
	neighbor in $C_t(G\sminus T)$; equivalently, $T=N_G(C_t(G\sminus T))$.
	In particular, $G[Z]$ is connected and contains $t$.
	
	Let $S\in \minlsep{sA,t}{G'}$. We prove that
	$S\in \minlsep{sA,t}{G}$.
	
	First, $S$ is an $sA,t$-separator in $G$. Indeed, if there were an
	$sA,t$-path in $G\sminus S$, then contracting $Z$ to $t$ would yield
	an $sA,t$-walk in $G'\sminus S$, and hence an $sA,t$-path in
	$G'\sminus S$, contradicting that $S$ separates $sA$ from $t$ in
	$G'$.
	
	It remains to prove minimality. Let $S_0\subsetneq S$. Since
	$S\in \minlsep{sA,t}{G'}$, the set $S_0$ is not an $sA,t$-separator
	in $G'$. Hence there is an $sA,t$-path $P'$ in $G'\sminus S_0$.
	
	Let $u$ be the vertex immediately preceding $t$ on $P'$. The prefix of
	$P'$ from $sA$ to $u$ does not use $t$, and hence corresponds to an
	$sA,u$-path in $G\sminus S_0$.
	
	If $ut\in E(G)$, then this prefix together with the edge $ut$ gives an
	$sA,t$-path in $G\sminus S_0$. Otherwise, the edge $ut$ was created by
	contracting $Z$ to $t$. Thus $u$ has a neighbor in $Z$ in $G$. Since
	$G[Z]$ is connected and contains $t$, there is a $u,t$-path in $G$
	whose internal vertices lie in $Z$.
	
	Finally, $S_0\cap Z=\emptyset$: the vertices of $Z$ are represented in
	$G'$ only by the terminal vertex $t$, and separators do not contain
	the terminal vertices. Therefore the above $u,t$-path avoids $S_0$.
	Combining it with the $sA,u$-path obtained from the prefix of $P'$,
	we get an $sA,t$-path in $G\sminus S_0$.
	
	Thus no proper subset $S_0\subsetneq S$ separates $sA$ from $t$ in
	$G$. Since $S$ itself is an $sA,t$-separator in $G$, it follows that $S\in \minlsep{sA,t}{G}$.
\end{proof}
}

	\section{Connectivity-Preserving Important Minimum Separators}
\label{sec:cp_min_seps}

Let $A\subseteq \nodes(G) {\setminus} \set{s,t}$. Recall that $\closestMinSep{sA,t}(G)$ is the unique important minimum $sA, t$-separator in the unconstrained setting (Lemma~\ref{lem:uniqueMinImportantSep}). In the presence of connectivity constraints (i.e., $A$ remains connected to $s$), there is no longer a unique important connectivity-preserving (CP) minimum $s,t$-separator with respect to the connectivity constraint associated with $A$, and in fact there may be exponentially many that are pairwise incomparable. In other words, $|\safeSepskImp{s}{t}{\min}(G,A)|$ is unbounded.

We show that despite this multiplicity, the set of CP important minimum separators can be enumerated efficiently. The main result, Theorem \ref{thm:mainThmBoundedCardinality}, establishes that all such separators can be generated by computing minimal hitting sets of the ``violation boundaries'' of the unconstrained unique important minimum $sA,t$-separator $\closestMinSep{sA,t}(G)$.
This result relies crucially on Theorem \ref{thm:minsepsAt}, which uncovers a fundamental property of the set $\minstVertices{s,t}(G)$ (see~\eqref{eq:minstVertices}), the union of all minimum $s,t$-separators. Complete proofs and technical details are deferred to Section~\ref{sec:mainThmBoundedCardinalityAppendix} of the Appendix.
\subsection{Structural Properties of Minimum Separators}

To establish our main result, we first prove a property of the set $\minstVertices{s,t}(G)$ (see~\eqref{eq:minstVertices}): if $D\subseteq\minstVertices{s,t}(G)$, then \emph{any} minimum $sDX,t$-separator $S\in \minsep_{sDX,t}(G)$ must place $D$ in the $s$-component; that is $D\subseteq C_s(G\sminus S)$. Here, $X\subseteq \nodes(G){\setminus}\set{s,t}$ can be any subset of vertices.
This is a strong connectivity guarantee essential for enumerating CP important separators.
\def\minsepsAtThm{
	Let $D \subseteq\minstVertices{s,t}(G)$, and let $X\subseteq \nodes(G){\setminus}\set{s,t}$. For every $T \in \minsep_{sDX,t}(G)$ it holds that $D \subseteq C_s(G\sminus T)$.
}
\begin{theorem}
	\label{thm:minsepsAt}
\minsepsAtThm
\end{theorem}

\begin{proof}[Proof Idea]
		The proof proceeds by contradiction. Let $T \in \minsep_{sDX,t}(G)$. Define $T' \eqdef N_G(C_s(G\sminus T))$ (note that $T'\subseteq T$). We define the induced graph $G' \eqdef G[C_s(G\sminus T) \cup T' \cup C_t(G\sminus T)]$. The main part of the proof establishes that $G'$ has the same $s,t$-connectivity as $G$ (i.e., $\kappa_{s,t}(G')=\kappa_{s,t}(G)$). Because $T \in \minlsep{sDX,t}{G}$, every vertex of $D$ is separated from $t$ by $T$. Therefore, if $v\in D{\setminus}C_s(G\sminus T)$, then $v\notin T\cup C_t(G\sminus T)$, and hence $v\notin V(G')$. Since $v\in \minstVertices{s,t}(G)$, we show that this implies that $\kappa_{s,t}(G')<\kappa_{s,t}(G)$, leading to a contradiction.
		\eat{
		By Lemma~\ref{lem:fullComponents}, we have that $T'\in \minlsepst{G'}$. 
		To prove the theorem, we proceed via three interdependent claims regarding the induced graph $G'$.
		\begin{enumerate}[itemsep=0pt,topsep=0pt,parsep=0pt,partopsep=0pt]
			\item \textbf{Claim 1: Validity of Separators.} We first establish that any minimal $s,t$-separator in $G'$, $S\in \minlsepst{G'}$, such that $S \subseteq T'\cup C_s(G'\sminus T')$, is a minimal $s,t$-separator in the original graph $G$. 
			This requires proving that no $s,t$-path in $G$ can bypass $S$ by routing through the discarded regions (i.e., vertices in $\nodes(G){\setminus}\nodes(G')$). 
			
			\item \textbf{Claim 2: Lower Bounds on $s,t$-separator Cardinalities.} Next, we prove that any $s,t$-separator $W$ in $G'$, where $W \subseteq T' \cup C_t(G\sminus T)$, must have size at least $|T'|$. That is, $|W|\geq |T'|$.
			As illustrated in Figure~\ref{fig:proofIllustration}, the vertices of $T{\setminus}T'$ and $D{\setminus}C_s(G\sminus T)$ lie outside $G'$. Every $sDX,t$-path in $G$ that avoids $T{\setminus}T'$ passes through $T'$ and since $W \subseteq T' \cup C_t(G\sminus T)$, must pass through $W$ (see Figure~\ref{fig:proofIllustration}). Consequently, if $|W|<|T'|$, then $W\cup (T{\setminus}T')$ is also an $sDX,t$-separator and $|W\cup (T{\setminus}T')|<|T|$, contradicting that $T\in \minsep_{sDX,t}(G)$.
		
			\item \textbf{Claim 3: Connectivity Preservation.} Crucially, we prove that the connectivity of the induced graph $G'$ is identical to that of the original: $\kappa_{s,t}(G') = \kappa_{s,t}(G)$.
			This is the most delicate step because $G'$ is a subgraph of $G$, and could thus have smaller connectivity. 
			By employing a submodularity argument on vertex-separators, taking advantage of the lattice of separators (Lemma~\ref{lem:inclusionCsCt}), and applying the previous two claims, we show that $\kappa_{s,t}(G') \geq \kappa_{s,t}(G)$. Since $G'$ is an induced subgraph of $G$, we get that $\kappa_{s,t}(G') = \kappa_{s,t}(G)$.
		\end{enumerate}
		
		\textbf{Conclusion of Proof:} Assume by contradiction that $D \nsubseteq C_s(G\sminus T)$, and $v\in D{\setminus}C_s(G\sminus T)$. Since $T$ is an $sDX,t$-separator, $v\notin C_t(G\sminus T)\cup T$, and thus $v\notin \nodes(G')$. Since $v\in \minstVertices{s,t}(G)$, we show that this implies that $\kappa_{s,t}(G') < \kappa_{s,t}(G)$ leading to a contradiction (Claim 3).
	}
	\end{proof}

\def\minsepsAtSetCorr{
	Let $A\subseteq \nodes(G){\setminus}\set{s,t}$, $D \subseteq\minstVertices{sA,t}(G)$, and $X\subseteq \nodes(G){\setminus}\set{s,t}$. For every $T \in \minsep_{sADX,t}(G)$ it holds that $D \subseteq C_{sA}(G\sminus T)$.
}

\begin{corollary}
	\label{corr:minsepsAtSets}
	\minsepsAtSetCorr
\end{corollary}
Corollary~\ref{corr:minsepsAtSets} is the extension of Theorem~\ref{thm:minsepsAt} when the source side is a set (i.e., $sA$).
\eat{The reduction is obtained by applying Lemma~\ref{lem:MinlsASep}:
we add edges from $s$ to $N_G[A]$ and obtain an auxiliary graph $H$ in which
minimum $sA,t$-separators of $G$ are exactly minimum $s,t$-separators of $H$,
and, for every such separator $T$, it holds that $C_s(H\sminus T)=C_{sA}(G\sminus T)$.
Hence the assumption $D\subseteq U^{\min}_{sA,t}(G)$ becomes $D\subseteq U^{\min}_{s,t}(H)$.
Moreover, a separator $T\in \minsep_{sADX,t}(G)$ corresponds, under the same
construction, to a separator $T\in \minsep_{sDX,t}(H)$.  Applying
Theorem~\ref{thm:minsepsAt} in $H$ gives $D\subseteq C_s(H\sminus T)$, and the
component identity gives $D\subseteq C_{sA}(G\sminus T)$.  The full proof is
deferred to Appendix~\ref{sec:mainThmBoundedCardinalityAppendix}.
}
\eat{
\begin{figure}[ht]
	\centering
	
\begin{tikzpicture}[scale=0.75, transform shape]
		\foreach \y/\name in {3/v1, 2/v2, 1/v3, 0/v4} {
			\node[inner sep=0pt, minimum size=2.2mm, circle, fill=black] (\name) at (0,\y) {};
		}
		
		\node[inner sep=0pt, minimum size=2.2mm, circle, fill=black] (v5) at (3,4) {};
		\node[inner sep=0pt, minimum size=2.2mm, circle, fill=black] (v6) at (2,-1) {};
		\node[inner sep=0pt, minimum size=2.2mm, circle, fill=black] (v7) at (4,-1) {};
		
		\draw (v5) circle[radius=3mm];
		\draw (v6) circle[radius=3mm];
		\draw (v7) circle[radius=3mm];
		
		\node[draw, circle, minimum size=2cm] (Cs) at (-4,1.5) {};
		\node[draw, circle, minimum size=2cm] (Ct) at (4,1.5) {};
		\node[draw, circle, minimum size=0.8cm] (C1) at (1.3,5) {};
		\node[draw, circle, minimum size=0.8cm] (C2) at (-0.5,-1.5) {};
		\node[draw, circle, minimum size=0.8cm] (C3) at (4.2,-2.2) {};
		
		\node at (Cs) {\scriptsize $C_s(G{-}T)$};
		\node at (Ct) {\scriptsize $C_t(G{-}T)$};
		\node at (C1) {\scriptsize $C_1$};
		\node at (C2) {\scriptsize $C_2$};
		\node at (C3) {\scriptsize $C_3$};
		
		\node at (-2,3.7) {\scriptsize $G'$};
		
		\foreach \i/\angle in {1/60, 2/30, 3/-30, 4/-60} {
			\path (Cs.center) ++(\angle:1cm) coordinate (p\i);
			\draw (p\i) -- (v\i);
		}
		
		\foreach \i/\angle in {1/120, 2/150, 3/210, 4/240} {
			\path (Ct.center) ++(\angle:1cm) coordinate (q\i);
			\draw (q\i) -- (v\i);
		}
		
		\path (C1.center) ++(180:0.4cm) coordinate (c1left);
		\path (C1.center) ++(-90:0.4cm) coordinate (c1bottom);
		\path (C1.center) ++(0:0.4cm) coordinate (c1right);
		\draw (c1left) -- (v1);
		\draw[bend left=25] (c1bottom) to (v2);
		\draw (c1right) -- (v5);
		
		\path (C2.center) ++(90:0.4cm) coordinate (c2top);
		\path (C2.center) ++(60:0.4cm) coordinate (c2topright);
		\draw (c2top) -- (v4);
		\draw[bend right=36] (c2topright) to (v2);
		
		\path (C2.center) ++(0:0.4cm) coordinate (c2right);
		\draw (c2right) -- (v6);
		
		\path (C2.center) ++(315:0.4cm) coordinate (c2bottomright);
		\draw (c2bottomright) -- (v7);
		
		\path (Ct.center) ++(270:1cm) coordinate (ctbottom);
		\draw (ctbottom) -- (v7);
		
		\path (Ct.center) ++(90:1cm) coordinate (cttop);
		\draw (cttop) -- (v5);
		
		\path (C3.center) ++(90:0.4cm) coordinate (c3top);
		\draw (c3top) -- (v7);
		
		\path (Ct.center) ++(265:1cm) coordinate (ctlower);
		\draw (ctlower) -- (v6);
		
		\draw[dotted, thick] (0,1.5) ellipse [x radius=5.6cm, y radius=2.1cm];
		
	\fill[blue!20, opacity=0.4] (0,1.5) ellipse [x radius=0.4cm, y radius=1.9cm];
	
	\node[blue!80!black] at (0,1.5) {\scriptsize $T'$};
		
	\end{tikzpicture}
	\caption{Illustration for the proof of Theorem~\ref{thm:minsepsAt}. The vertices of $T$ are represented by black solid vertices. The graph $G'$ is enclosed in the dotted ellipse, and the vertices of $T''$ (that are outside of $G'$) are encircled. The connected components $C_1,C_2,C_3\in \cc(G\sminus T)$ have the property that $C_i\cap D\neq \emptyset$. \label{fig:proofIllustration}}
\end{figure}
}

The following is an iterated form of Corollary~\ref{corr:minsepsAtSets} that is applied in the enumeration algorithm.\eat{; its role will become clear when we describe the enumeration algorithm.}
\def\XInvariant{
		Let $X_1,X_2,\dots,X_\ell \subseteq \nodes(G)$ be nonempty and pairwise disjoint. Let $\calX_0 \eqdef \emptyset$, and for every $i\in \set{1,\dots,\ell}$ let $\calX_i\eqdef X_1\cup X_2\cup \cdots \cup X_i$. If, for all $i\in \set{1,2,\dots,\ell}$, it holds that $X_i\subseteq \minstVertices{s\calX_{i-1},t}(G)$, then for every $A\subseteq \nodes(G)$, it holds that: if $T\in \minsep_{sA\calX_\ell,t}(G)$ then $\calX_\ell\subseteq C_s(G\sminus T)$.
}
\begin{corollary}
	\label{corr:XInvariant}
	\XInvariant
\end{corollary}
\eat{
\begin{proof}
	Fix $A_0\subseteq V(G)$, and let $T\in \minsep_{sA_0\calX_\ell,t}(G)$. We show that $X_i \subseteq C_s(G\sminus T)$ by induction on $i$.
	
	Base: $i=1$. Since $X_1 \subseteq \minstVertices{s,t}(G)$, then by Theorem~\ref{thm:minsepsAt}, we have that $X_1\subseteq C_s(G\sminus T)$ (i.e., take $D\defeq X_1$ and
	$X\defeq A_0\cup (\calX_\ell{\setminus} X_1)$ in Theorem~\ref{thm:minsepsAt}).
	
	Step: Let $1\leq j\leq \ell-1$, and assume the claim holds for all indices $i\leq j$, we prove for $i=j+1$.
	By the induction hypothesis, we have that $X_k\subseteq C_s(G\sminus T)$ for every $k\leq j$. In particular,
	\begin{equation}
		\label{eq:hypothesis}
		\calX_{i-1}=\bigcup_{k=1}^{j=i-1}X_k \subseteq C_s(G\sminus T).
	\end{equation}
	Since $X_i \subseteq \minstVertices{s\calX_{i-1},t}(G)$, we apply
	Corollary~\ref{corr:minsepsAtSets} with
	\[
	A_{\mathrm{cor}} \defeq \calX_{i-1},\qquad
	D_{\mathrm{cor}} \defeq X_i,\qquad
	X_{\mathrm{cor}} \defeq A_0\cup (\calX_\ell{\setminus} \calX_i).
	\]
	Indeed,
	\[
	sA_{\mathrm{cor}}D_{\mathrm{cor}}X_{\mathrm{cor}}
	=
	s\calX_{i-1}X_iA_0(\calX_\ell{\setminus} \calX_i)
	=
	sA_0\calX_\ell,
	\]
	and therefore the hypothesis $T\in \minsep_{sA_0\calX_\ell,t}(G)$ is exactly the
	hypothesis needed to apply Corollary~~\ref{corr:minsepsAtSets}. Hence $X_i\subseteq C_{s\calX_{i-1}}(G\sminus T)$.
Since $\calX_{i-1}\subseteq C_s(G\sminus T)$ by the induction hypothesis, we have $C_{s\calX_{i-1}}(G\sminus T)=C_s(G\sminus T)$.
Therefore, $X_i\subseteq C_s(G\sminus T)$.
\end{proof}}

\subsection{Characterization of Minimum CP Important Separators}
Theorem~\ref{thm:mainThmBoundedCardinality} characterizes the important minimum connectivity-preserving separators on which the enumeration algorithm relies. In Section~\ref{sec:algo}, the algorithm first reduces the problem to instances satisfying $\kappa_{sX,t}(G)=\kappa_{sXA,t}(G)=f_{s,t}(G,AX)$, where $X\subseteq \nodes(G)$, which are guaranteed to have the property that for every $T\in \minsep_{sX,t}(G)$, it holds that $X\subseteq C_s(G\sminus T)$. In other words, $\minsep_{sX,t}(G)\subseteq \safeSeps{s}{t}(G,X)$.
Equivalently, at that stage there exists a minimum $sAX,t$-separator denoted $L^t$, of size $\kappa_{sX,t}(G)$ where $A\subseteq C_s(G\sminus L^t)$. The theorem shows that, in this setting, the minimum CP important separators $\safeSepskImp{s}{t}{\min}(G,AX)$ are determined by the unique minimum important separator $L^*_{sXA,t}(G)$ together with the connected components of $G\sminus L^*_{sAX,t}(G)$ that witness the violations of the connectivity constraints. In particular, it implies that their number is at most $2^{\kappa_{sX,t}(G)}$, and that all of them can be obtained in time $O(2^{\kappa_{sX,t}(G)}\cdot n\cdot T(n,m))$. 

Let $\mathcal{E} = \set{e_1,e_2,\dots,e_m}$ be a collection of subsets of $\nodes(G)$. Recall that $\MHS(\mathcal{E})$ refers to the minimal hitting sets of $\mathcal{E}$ (see Section~\ref{sec:Preliminaries}). 

\def\mainThmBoundedCardinality{
	Let $A\subseteq \nodes(G)$. Letting $L^*\eqdef \closestMinSep{sA,t}(G)$ for brevity, define the set of violating components in $G\sminus L^*$:
	\begin{align*}
			\D\eqdef \set{C\in \cc(G\sminus L^*) : C\cap A \neq \emptyset, s\notin C} && \text{ and }&&\varepsilon_{\D}\eqdef \set{N_G(C) : C\in \D}.
	\end{align*}
	If $\kappa_{s,t}(G)=\kappa_{sA,t}(G)=f_{s,t}(G,A)$, and $\D\neq \emptyset$, then:
	\[
	\emptyset \subset \safeSepskImp{s}{t}{\min}(G,A)\subseteq \mediumbigcup_{Y\in \MHS(\varepsilon_\D)}\set{\closestMinSep{sAY,t}(G) : |\closestMinSep{sAY,t}(G)|=\kappa_{s,t}(G)},
	\]
	and $\safeSepskImp{s}{t}{\min}(G,A)$ can be computed in time $O(2^{\kappa_{s,t}(G)}\cdot n\cdot T(n,m))$.
}
\begin{theorem}
	\label{thm:mainThmBoundedCardinality}
	\mainThmBoundedCardinality
\end{theorem}
\eat{
The theorem starts from the canonical unconstrained separator $L^*\eqdef \closestMinSep{sA,t}(G)$ which, by Lemma~\ref{lem:uniqueMinImportantSep}, exits, is unique and can be computed in time $O((m+n)\kappa_{s,t}(G))$. Then it identifies the connected components of $G\sminus L^*$ that witness violations of the constraints, and shows that every important minimum connectivity-preserving separator is obtained by repairing these violations through a minimal hitting set of their neighborhoods (i.e, $\MHS(\varepsilon_{\D})$). 
}
\begin{proof}[Proof Sketch]
	Let $L^* \defeq \closestMinSep{sA,t}(G)$ (see Lemma~\ref{lem:uniqueMinImportantSep}),
	and let $\D$ be the set of violating components of $G\sminus L^*$, as in the theorem statement. For every $C\in\D$ it holds, by definition, that $N_G(C)\subseteq L^*$. Let $\varepsilon_{\D}\defeq \set{N_G(C)\mid C\in\D}$.
	See Figure~\ref{fig:proofMainIllustration} for illustration that depicts the connected components of $\D$ that are exactly the obstructions to $L^*$ being connectivity-preserving.
	
	\begin{enumerate}[itemsep=0pt,topsep=0pt,parsep=0pt,partopsep=0pt]
		\item \textbf{Characterizing feasibility among minimum separators.}
		We first show that if $S\in \minsep_{sA,t}(G)$, then $S\in \safeSepsMin{s}{t}(G,A)$ if and only if $C\subseteq C_s(G\sminus S)$ for every $C\in\D$.
		The proof of this equivalence is deferred to section~\ref{sec:thm_mainThmBoundedCardinalityProof} of the Appendix; it uses the minimality of the separators and the fact that the boundary of every violating component is contained in the canonical minimum separator $L^*$.
		\eat{
Let $S\in \safeSepsMin{s}{t}(G,A)$. Since $\kappa_{s,t}(G)=f_{s,t}(G,A)$,
the separator $S$ is also a minimum $s,t$-separator; that is, $S\in \minsep_{s,t}(G)$.
Fix a connected component $C\in\D$. By definition of $L^*$, and its uniqueness (Lemma~\ref{lem:uniqueMinImportantSep}), we have that $C_{sA}(G\sminus L^*)\subseteq C_{sA}(G\sminus S)$. By Lemma~\ref{lem:fullComponents}, and the fact that $L^*,S\in \minsep_{s,t}(G)$, we have that $L^*=N_G(C_s(G\sminus L^*))$, and $S=N_G(C_s(G\sminus S))$. Hence, we have that $N_G(C)\subseteq L^*\subseteq C_{sA}(G\sminus S)\cup S$.
Now, if all vertices of $N_G(C)$ belonged to $S$, then the violation witnessed by $C$ would remain present in $G\sminus S$ because $N_G(C)$ isolates $C$ from the rest of the vertices, contradicting that $S\in \safeSeps{s}{t}(G,A)$ (i.e., $A\subseteq C_s(G\sminus S)$). Hence there exists
a vertex $u\in N_G(C)\cap C_{sA}(G\sminus S)$.
Since $C_s(G\sminus L^*)\subseteq C_{sA}(G\sminus L^*) \subseteq C_{sA}(G\sminus S)$, we have $S\cap C_s(G\sminus L^*)=\emptyset$, and thus $C_s(G\sminus L^*)\subseteq C_s(G\sminus S)$. 
Moreover, because $L^*\in \minsep_{s,t}(G)$, by Lemma~\ref{lem:fullComponents}, $L^*=N_G(C_s(G\sminus L^*))$. As $u\in N_G(C)\subseteq L^*$, it follows that $u$ has a neighbor in $C_s(G\sminus L^*)$ and therefore $u\in C_s(G\sminus S)$.
Using this vertex $u$, the proof then shows that every vertex of $C$ is connected to $u$ in $G\sminus S$, and therefore
$C\subseteq C_s(G\sminus S)$.
The converse direction shows that if $S$ is a minimum $s,t$-separator and every violating component is contained in $C_s(G\sminus S)$, then $S\in \safeSepsMin{s}{t}(G,A)$. In particular, such an $S$ is also a minimum $sA,t$-separator.}
		
\item \textbf{Encoding the constraints by a hitting set of $\varepsilon_{\D}$.}
Let $S\in \safeSepsMin{s}{t}(G,A)$, and let $X \defeq C_s(G\sminus S)\cap L^*$.
By the first step, for every $C\in\D$ we have $X\cap N_G(C)\neq\emptyset$.
Thus $X$ is a hitting set of $\varepsilon_{\D}$ (see Figure~\ref{fig:proofMainIllustration}). Moreover, because $S$ is important, we may assume that $X$ is inclusionwise minimal among the hitting sets of $\varepsilon_{\D}$. Therefore $X\in \MHS(\varepsilon_{\D})$.
		\item \textbf{Using Theorem~\ref{thm:minsepsAt}.}
		This is the key step. Every set in $\varepsilon_{\D}$ is a subset of $L^*$, and since $\kappa_{s,t}(G)=\kappa_{sA,t}(G)$, $L^*$ is a minimum $s,t$-separator. Therefore, for every $X\in \MHS(\varepsilon_{\D})$, it holds that $X\subseteq L^* \subseteq \minstVertices{s,t}(G)$. 
		Theorem~\ref{thm:minsepsAt} then implies that $X\subseteq C_s(G\sminus S_X)$ for every $S_X\in \minsep_{sX,t}(G)$.
		In particular, $X\subseteq C_s(G\sminus \closestMinSep{sX,t}(G))$.
		Since $X$ is a hitting set of $\varepsilon_{\D}$, for every $C\in\D$ we have $N_G(C)\cap C_s(G\sminus \closestMinSep{sX,t}(G))\neq\emptyset$.\eat{
		Applying the first step to $\closestMinSep{sX,t}(G)$ implies that $C\subseteq C_s(G\sminus \closestMinSep{sX,t}(G))$  for every $C\in\D$,
		and therefore $\closestMinSep{sX,t}(G)\in \safeSepsMin{s}{t}(G,A)$.}
		The same argument as in the proof of the first step now shows that $C\subseteq C_s(G\sminus \closestMinSep{sX,t}(G))$  for every $C\in\D$. Applying the converse direction of the first step, we conclude that $\closestMinSep{sX,t}(G)\in \safeSepsMin{s}{t}(G,A)$.
		The proof then shows that every important minimum connectivity-preserving separator is obtained in this way, and therefore
		\[
		\safeSepskImp{s}{t}{\min}(G,A)
		\subseteq
		\mediumbigcup_{X\in \MHS(\varepsilon_{\D})}
		\{\,\closestMinSep{sX,t}(G): |\closestMinSep{sX,t}(G)|=\kappa_{s,t}(G)\,\}.
		\]
		
		\item \textbf{Enumeration.}
Since every set in $\varepsilon_{\D}$ is a subset of $L^*$ and $|L^*|=\kappa_{s,t}(G)$, every minimal hitting set of $\varepsilon_{\D}$ is a subset of $L^*$, and therefore there are at most $2^{\kappa_{s,t}(G)}$ such sets.
		Enumerating all minimal hitting sets $X\in \MHS(\varepsilon_{\D})$
		and computing the corresponding closest minimum separator $\closestMinSep{sX,t}(G)$ (Lemma~\ref{lem:uniqueMinImportantSep}) yields all important minimum connectivity-preserving separators in time
	$	O(2^{\kappa_{s,t}(G)}\cdot n\cdot T(n,m))$.
	\end{enumerate}
\end{proof}

\eat{
Corollary~\ref{corr:mainThmBoundedCardinalityExt} is the form of
Theorem~\ref{thm:mainThmBoundedCardinality} that is needed by the recursive
enumeration algorithm.  At that point in the recursion, the algorithm has
already forced a set $X$ to belong to the $s$-component. Therefore the relevant
minimum connectivity is not $\kappa_{s,t}(G)$, but rather
$\kappa_{sX,t}(G)$, and the connectivity constraint is imposed on $AX$ rather than $A$.
The corollary says that, as long as every minimum $sX,t$-separator maintains the connectivity of every vertex in $X$ to $s$, Theorem~\ref{thm:mainThmBoundedCardinality} can be
applied with $X$ absorbed into the source side. Consequently, the minimum
CP-important separators with respect to $AX$ are still obtained by repairing
the violating components of the unique minimum important separator
$L^*_{sAX,t}(G)$ through minimal hitting sets of their neighborhoods.
This is the exact structural step used by the algorithm when it reaches a recursive call in
which the premise $\kappa_{sX,t}(G)=\kappa_{sAX,t}(G)=f_{s,t}(G,AX)$
has been established and the unique minimum important separator $L^*_{sAX,t}(G)$ violates the connectivity constraint for $A$.}

\def\mainThmBoundedCardinalityExt{
	Let $X\subseteq \nodes(G){\setminus}\set{s,t}$ such that for every $T\in \minsep_{sX,t}(G)$ it holds that $X\subseteq C_s(G\sminus T)$, and let $A\subseteq \nodes(G)$. Letting $L^*\eqdef \closestMinSep{sAX,t}(G)$, define the set of violating components in $G\sminus L^*$:
\begin{align*}
	\D\eqdef \set{C\in \cc(G\sminus L^*) : C\cap A \neq \emptyset, s\notin C} && \text{ and }&&\varepsilon_{\D}\eqdef \set{N_G(C) : C\in \D}.
	\end{align*}
	If $\kappa_{sX,t}(G)=\kappa_{sAX,t}(G)=f_{s,t}(G,AX)$, and $\D\neq \emptyset$, then:
	\[
	\emptyset \subset \safeSepskImp{s}{t}{\min}(G,AX) \subseteq
	\mediumbigcup_{Y\in \MHS(\varepsilon_{\D})}
	\set{L^*_{sAXY,t}(G): |L^*_{sAXY,t}(G)|=\kappa_{sX,t}(G)}
	\]
	and $\safeSepskImp{s}{t}{\min}(G,AX)$ can be computed in time $O(2^{\kappa_{sX,t}(G)}\cdot n\cdot T(n,m))$.
}

\begin{corollary}
	\label{corr:mainThmBoundedCardinalityExt}
	\mainThmBoundedCardinalityExt
\end{corollary}
The proof is deferred to Appendix~\ref{sec:mainThmBoundedCardinalityAppendix}.
The corollary extends Theorem~\ref{thm:mainThmBoundedCardinality} to the case
where an additional set $X$ is required to stay connected to $s$.  This changes
the relevant connectivity from $\kappa_{s,t}(G)$ to
$\kappa_{sX,t}(G)$, while the connectivity constraint is imposed on $AX$.
The assumption that $X\subseteq C_s(G\sminus T)$ for every
$T\in\minsep_{sX,t}(G)$ guarantees that no minimum $sX,t$-separator separates
any vertex of $X$ from $s$.  Hence the repair principle of
Theorem~\ref{thm:mainThmBoundedCardinality} applies here as well: the
components of $G\sminus L^*_{sAX,t}(G)$ that contain vertices of $A$ outside
the $s$-component define the hitting-set instance $\varepsilon_{\D}$, and
minimal hitting sets of $\varepsilon_{\D}$ generate the candidate separators
$L^*_{sAXY,t}(G)$.

\eat{
\begin{proof}[Proof Sketch]
	The corollary is not obtained by simply applying
	Theorem~\ref{thm:mainThmBoundedCardinality} with the terminal set $AX$: that
	theorem requires the equality
	$\kappa_{s,t}=\kappa_{sAX,t}=f_{s,t}(G,AX)$, whereas here the correct base
	connectivity is $\kappa_{sX,t}(G)$.  The proof therefore first absorbs $X$ into
	the source side.  Let $H$ be the graph obtained from $G$ by adding all edges
	between $s$ and $N_G[X]$.  By Lemma~\ref{lem:MinlsASep}, it holds that for every
	$Z\subseteq \nodes(G){\setminus}\set{s,t}$
	\[
		\minlsep{sXZ,t}{G}=\minlsep{sZ,t}{H}\qquad\text{ and }\qquad C_{sZ}(H\sminus S)=C_{sXZ}(G\sminus S)\text{ for all } S\in \minlsep{sZ,t}{H}.
	\]
	In particular,
	\[
	\kappa_{s,t}(H)=\kappa_{sX,t}(G)
	\qquad\text{and}\qquad
	\kappa_{sA,t}(H)=\kappa_{sAX,t}(G).
	\]
	
	The hypothesis that every $T\in \minsep_{sX,t}(G)$ satisfies
	$X\subseteq C_s(G\sminus T)$ is the point that makes this reduction faithful. It implies that, for every $T\in \minsep_{sX,t}(G)=\minsep_{s,t}(H)$, it holds that $C_{sX}(G\sminus T)=C_{s}(G\sminus T)=C_s(H\sminus T)$. Hence minimum CP $s,t$-separators with respect to $A$ in $H$ are exactly minimum CP
	$s,t$-separators with respect to $AX$ in $G$ (i.e., $\safeSepsMin{s}{t}(H,A)=\safeSepsMin{s}{t}(G,AX)$), and the containment order of their
	$s$-components is preserved. Therefore $\safeSepskImp{s}{t}{\min}(H,A)=\safeSepskImp{s}{t}{\min}(G,AX)$.
	This is the main technical point: the auxiliary graph $H$ changes the source set
	for the purpose of minimum-separator computations, but it does not change the
	CP-important minimum separators that the corollary is trying to characterize.
	
	It remains to check that the objects appearing in
	Theorem~\ref{thm:mainThmBoundedCardinality} are translated correctly.  The
	unique minimum important separator $L^*_{sA,t}(H)$ is exactly
	$L^*_{sAX,t}(G)$, because Lemma~\ref{lem:MinlsASep} preserves both the relevant
	separator family and the source-side order.  Moreover, since
	$L^*_{sAX,t}(G)$ is a minimum $sX,t$-separator and the hypothesis places
	$X$ in $C_s(G\sminus L^*)$, the edges added in the construction of $H$ do not
	merge any component of $G\sminus L^*$ outside the $s$-component.  Thus the
	violating components in $H\sminus L^*$ are exactly the violating components in $G\sminus L^*$:
	\[
	\mathcal D
	=
	\{C\in\cc(G\sminus L^*) : s\notin C,\ C\cap A\neq\emptyset\},
	\]
	and their neighborhoods are the same in $G$ and in $H$.  Hence the hitting-set
	family used by Theorem~\ref{thm:mainThmBoundedCardinality} in $H$ is precisely $	\varUpsilon=\set{N_G(C):C\in\mathcal D}$.

	We can now apply Theorem~\ref{thm:mainThmBoundedCardinality} in the graph $H$
	with terminal set $A$.  Translating the resulting separators back through
	Lemma~\ref{lem:MinlsASep} gives
	\[
	\emptyset\subsetneq \safeSepskImp{s}{t}{\min}(G,AX)
	\subseteq
	\bigcup_{Y\in\MHS(\varUpsilon)}
	\{L^*_{sAXY,t}(G): |L^*_{sAXY,t}(G)|=\kappa_{sX,t}(G)\}.
	\]
	The running-time bound is the one from Theorem~\ref{thm:mainThmBoundedCardinality}
	applied in $H$, using
	$\kappa_{s,t}(H)=\kappa_{sX,t}(G)$.
\end{proof}
}

\begin{figure}[ht]
	\centering
	
	\begin{tikzpicture}[scale=0.75, transform shape]
		\foreach \y/\name in {3/v1, 2/v2, 1/v3, 0/v4} {
			\node[inner sep=0pt, minimum size=2.2mm, circle, fill=black] (\name) at (0,\y) {};
		}
		
		
	\node[above=0.5pt of v1] {\scriptsize $v_1$};
	
	\node[above=0.5pt of v2] (v2lab) {\scriptsize $v_2$};
	
	\node[above=0.5pt of v3] (v3lab) {\scriptsize $v_3$};
	
	\node[above=0.5pt of v4] {\scriptsize $v_4$};
	
	
\begin{scope}[on background layer]
	\node[
	ellipse,
	fit={(-0.18,2.18) (0.18,1.00)},
	fill=yellow,
	fill opacity=0.35,
	draw=yellow!70!black
	] (Xellipse) {};
\end{scope}

\node[right=-1.7pt of Xellipse] {\small $X$};

		\node[draw, circle, minimum size=2cm] (Cs) at (-4,1.5) {};
		\node[draw, circle, minimum size=2cm] (Ct) at (4,1.5) {};
		\node[draw, circle, minimum size=0.8cm] (C1) at (1.3,5) {};
		\node[draw, circle, minimum size=0.8cm] (C2) at (-0.5,-1.5) {};
		\node[draw, circle, minimum size=0.8cm] (C3) at (-1.3,4.5) {};
		
		\node at (Cs) {\scriptsize $C_s(G\sminus L^*)$};
		\node at (Ct) {\scriptsize $C_t(G\sminus L^*)$};
		\node at (C1) {\scriptsize $C_1$};
		\node at (C2) {\scriptsize $C_2$};
		\node at (C3) {\scriptsize $C_3$};

		\foreach \i/\angle in {1/60, 2/30, 3/-30, 4/-60} {
			\path (Cs.center) ++(\angle:1cm) coordinate (p\i);
			\draw (p\i) -- (v\i);
		}
		
		\foreach \i/\angle in {1/120, 2/150, 3/210, 4/240} {
			\path (Ct.center) ++(\angle:1cm) coordinate (q\i);
			\draw (q\i) -- (v\i);
		}
		
		\path (C1.center) ++(220:0.4cm) coordinate (c1left);
		\path (C1.center) ++(-90:0.4cm) coordinate (c1bottom);
		\path (C1.center) ++(0:0.4cm) coordinate (c1right);
		\draw (c1left) -- (v1);
		\draw[bend left=25] (c1bottom) to (v2);
		
		\path (C2.center) ++(90:0.4cm) coordinate (c2top);
		\path (C2.center) ++(60:0.4cm) coordinate (c2topright);
		\draw (c2top) -- (v4);
		\draw[bend right=36] (c2topright) to (v2);
		
			\path (C3.center) ++(180:0.4cm) coordinate (c3left);
	\path (C3.center) ++(-90:0.4cm) coordinate (c3bottom);
	\draw (c3bottom) -- (v1);
	\draw[bend right=25] (c3left) to (v3);

	\end{tikzpicture}
	\caption{Illustration for the proof of Theorem~\ref{thm:mainThmBoundedCardinality}. The vertices of $L^*$ are represented by black solid vertices. The connected components $C_1,C_2,C_3\in \cc(G\sminus L^*)$ are those that witness the connectivity constraints violation (i.e., $\D=\set{C_1,C_2,C_3}$). Note that $N_G(C_1)=\set{v_1,v_2}$, $N_G(C_2)=\set{v_2,v_4}$, and $N_G(C_3)=\set{v_1,v_3}$. Therefore, $X=\set{v_2,v_3}$ is a minimal hitting set of $\varepsilon_{\D}$.	\label{fig:proofMainIllustration}}
\end{figure}

\eat{
\begin{proof}[Proof Sketch]
	The proof consists of the following:
	\begin{enumerate}[itemsep=0pt,topsep=0pt,parsep=0pt,partopsep=0pt]
		\item \textbf{Necessary and Sufficient Conditions for Preserving Connectivity:}
		We first prove that if $S\in \safeSepsMin{sA}{t}(G,\mc{R})$ is a CP-minimum $sA,t$-separator satisfying $\mathcal{R}$, then $C\subseteq C_s(G\sminus S)$ for every violating component $C \in \mathcal{D}$.
		\emph{Why?}	We show that if $\kappa_{s,t}(G)=f_{sA,t}(G,\mc{R})$, then for every $C\in\mathcal D$ at least one boundary vertex in $N_G(C)$ must lie in $C_s(G\sminus S)$; otherwise we get a violation of $\mc{R}$.
		We then amplify this local condition into $C\subseteq C_s(G\sminus S)$: using the fact that $S$ is also a minimum $s,t$-separator, we show that $N_G(C)\cap C_s(G\sminus S)\neq\emptyset$ implies $C\subseteq C_s(G\sminus S)$.
		Conversely, we show that if $S\in \minsep_{sAQ,t}(G)$, and $C\subseteq C_s(G\sminus S)$ for every $C\in \D$, then $S$ satisfies $\mc{R}$, thus $S\in \safeSepsMin{sA}{t}(G,\mc{R})$.
		
		\item \textbf{Formulation as a Hitting-Set Problem:}
		So, we have that if $\kappa_{s,t}(G)=f_{sA,t}(G,\mc{R})$, then $S\in \safeSepsMin{sA}{t}(G,\mc{R})$ iff $N_G(C)\cap C_s(G\sminus S)\neq \emptyset$ for every $C\in \mc{D}$. In other words, $S\in \safeSepsMin{sA}{t}(G,\mc{R})$ iff $X\subseteq C_s(G\sminus S)$ where $X\cap N_G(C)\neq \emptyset$ for every $C\in \mc{D}$. Or, that $X$ is a hitting set of $\varepsilon_{\mc{D}}$. It is not hard to see that it is enough to require that $X \in \MHS(\varepsilon_{\D})$. 
		
		\item \textbf{The Critical Role of Theorem \ref{thm:minsepsAt}:}
		This is the pivot point of the proof. We have established that if $\kappa_{s,t}(G)=f_{sA,t}(G,\mc{R})$, then $S\in \safeSepsMin{sA}{t}(G,\mc{R})$ iff $X\subseteq C_s(G\sminus S)$ where $X\in \MHS(\varepsilon_{\D})$. But how can we guarantee that $X\subseteq C_s(G\sminus \closestMinSep{sX,t}(G))$?
		Note that the neighborhoods $N_G(C)$ are subsets of the original separator $\closestMinSep{sAQ,t}(G)$. Since $\kappa_{s,t}(G)=f_{sA,t}(G,\mc{R})$, and by Lemma~\ref{lem:safeMinsAtEqualtosAQt}, $f_{sA,t}(G,\mc{R})=f_{sAQ,t}(G,\mc{R})$, then $L^*_{sAQ,t}(G)\in \minsep_{s,t}(G)$.
		Therefore, any hitting set $X\subseteq L^*_{sAQ,t}(G)$ is a subset of $\minstVertices{G}$.
		Theorem \ref{thm:minsepsAt} provides the missing guarantee: it ensures that if $X \subseteq \minstVertices{G}$, then for $\closestMinSep{sX,t}(G)$ it holds that $X\subseteq C_s(G\sminus \closestMinSep{sX,t}(G))$.
		
		\item \textbf{Enumeration:}
		Since $|\closestMinSep{sAQ,t}(G)|=\kappa_{s,t}(G)\leq k$, the 
		universe size for the hitting set instance is at most $k$. Thus, there are at most $2^k$ minimal hitting sets, allowing efficient enumeration.
	\end{enumerate}
\end{proof}
}
	\section{Enumerating Connectivity-Preserving Important Separators}
\label{sec:algo}
We now describe the recursive enumeration algorithm
$\algname{Gen{-}Seps}$ (Algorithm~\ref{alg2e:GenSeps}). The complete
pseudocode and the full correctness and run-time analysis are given in
Appendix~\ref{sec:AlgForImpSafeSepsAppendix}. A recursive call has the form
$\algname{Gen{-}Seps}(G,s,t,A,X,Z,k)$.
The graph $G$, the terminals $s,t$, and the integer $k$ describe the current
residual instance. The set $Z$ contains vertices that have already been
committed to the separator along the current recursion path; these vertices are
appended to the separators returned by the current call. The set $X$ records
vertices that have been added to the connectivity requirement. Thus the current
call works with the terminal set $AX$, and the separators considered in the
call are members of $\safeSepskImp{s}{t}{k}(G,AX)$. In particular, if
$S\in\safeSepskImp{s}{t}{k}(G,AX)$, then $AX\subseteq C_s(G\sminus S)$.

For every recursive call $\algname{Gen{-}Seps}(G,s,t,A,X,Z,k)$, the algorithm
maintains the invariant
\[
\tag{Inv}
\text{for every }Q\subseteq\nodes(G)
\text{ and every }T\in\minsep_{sXQ,t}(G),\qquad
X\subseteq C_s(G\sminus T).
\]
This invariant is what allows Corollary~\ref{corr:mainThmBoundedCardinalityExt}
to be applied after vertices have been added to $X$. The algorithm first
establishes the premise
$\kappa_{sX,t}(G)=\kappa_{sAX,t}(G)=f_{s,t}(G,AX)$. Once this premise holds, it
computes the unique important minimum $sAX,t$-separator
$L^*\eqdef\closestMinSep{sAX,t}(G)$ and distinguishes two cases: whether or not $L^*$ is
CP with respect to $AX$ (i.e., $AX\subseteq C_s(G\sminus L^*)$). If not, Corollary~\ref{corr:mainThmBoundedCardinalityExt} is used to find
a set $Y$ that is added to $X$ before the branching step is applied.
\eat{
 If $L^*$ is
CP with respect to $AX$ (i.e., $AX\subseteq C_s(G\sminus L^*)$), the algorithm handles the call by 
a branching step over $L^*$ that is similar to the standard branching step in the algorithm for enumerating (standard) important separators. If $L^*$ violates the connectivity
requirement, Corollary~\ref{corr:mainThmBoundedCardinalityExt} is used to find
a set $Y$ that is added to $X$ before the branching step is applied.}

Progress is measured by the potential
\[
\lambda(G,A,X,k)\eqdef
(k+1)\bigl(2k-\kappa_{sAX,t}(G)\bigr)+
\bigl(k-\kappa_{sX,t}(G)\bigr).
\]
The potential $\lambda$ is used to bound the depth of the recursion tree. This is the standard role of a measure in branching-algorithm analyses: one assigns a numerical value to each recursive call and proves that this value decreases often enough along every root-to-leaf path \cite{DBLP:journals/jacm/FominGK09}. In the simplest case, the measure strictly decreases at every recursive call. 
We show that in our algorithm $\algname{Gen{-}Seps}$, among any two consecutive recursive steps
at least one strictly decreases $\lambda$. Hence, if a root-to-leaf path in the recursion tree path contains $q$ steps that strictly decrease $\lambda$, then its total length is at most $2q+1$. Moreover, throughout the recursion, every recursive call satisfies $\kappa_{sX,t}(G)\le k$ and $\kappa_{sAX,t}(G)\le k$, and therefore
$0\le \lambda(G,A,X,k)\le 2k^2+3k$. Since $\lambda$ is integer-valued, every root-to-leaf path contains only
$O(k^2)$ strict decreases, and therefore has length $O(k^2)$.
\eat{
For a parent call $\mathcal I$ and a child call $\mathcal I'$, the
\emph{charge} of the transition $\mathcal I\to\mathcal I'$ is
$\lambda(\mathcal I)-\lambda(\mathcal I')$. Thus a transition has positive
charge exactly when the potential strictly decreases. The analysis in
Appendix~\ref{sec:AlgForImpSafeSepsAppendix} proves that every ordinary
recursive transition has positive charge. If the transition selects $b\ge 1$
vertices into the separator, then the remaining budget decreases from $k$ to
$k-b$. The term $2k-\kappa_{sAX,t}(G)$ gives progress: decreasing $k$ by $b$
decreases $2k$ by $2b$, while deleting $b$ vertices can decrease
$\kappa_{sAX,t}(G)$ by at most $b$. Hence
$2k-\kappa_{sAX,t}(G)$ decreases by at least $b$. The second term,
$k-\kappa_{sX,t}(G)$, does not increase, because $k$ decreases by $b$ and
$\kappa_{sX,t}(G)$ can decrease by at most $b$. Therefore the whole potential
strictly decreases. If the budget is unchanged, then the transition increases
either $\kappa_{sX,t}(G)$ or $\kappa_{sAX,t}(G)$, and therefore strictly
decreases one of the two terms of $\lambda$.

On every live call, $\kappa_{sX,t}(G)\le k$ and
$\kappa_{sAX,t}(G)\le k$, and hence
$0\le\lambda(G,A,X,k)\le 2k^2+3k$. Since $\lambda$ is integer-valued, a
root-to-leaf path contains only $O(k^2)$ positively charged ordinary
transitions. The only transition that need not decrease $\lambda$ is the
bookkeeping transition in which the hitting set found via
Corollary~\ref{corr:mainThmBoundedCardinalityExt} is added to $X$;
Appendix~\ref{sec:AlgForImpSafeSepsAppendix} shows that, if the recursion
continues, this transition is immediately followed by an ordinary transition
with positive charge. Therefore the height of the recursion tree is $O(k^2)$.
Since each branching step has at most $k^{O(1)}$ children, the total number of
recursive calls, and hence the size of the enumerated family, is bounded by
$2^{O(k^2\log k)}$ up to polynomial factors.
}

\paragraph*{Phase 1: Establishing the Premise of
	Corollary~\ref{corr:mainThmBoundedCardinalityExt}.}
The first phase ensures that the current recursive call reaches a state in
which Corollary~\ref{corr:mainThmBoundedCardinalityExt} can be applied. In a
call $\algname{Gen{-}Seps}(G,s,t,A,X,Z,k)$, the relevant connectivity
requirement is $AX$, and the premise needed is
$\kappa_{sX,t}(G)=\kappa_{sAX,t}(G)=f_{s,t}(G,AX)$.
To test this premise, the algorithm computes
$L^t\eqdef L^t_{sX,t}(G)$, the unique minimum $sX,t$-separator closest to $t$.
By (Inv), every minimum $sX,t$-separator $T$ satisfies
$X\subseteq C_s(G\sminus T)$; in particular,
$X\subseteq C_s(G\sminus L^t)$. Thus the only possible obstruction to $L^t$
being CP with respect to $AX$ is the existence of a vertex of $A$ that is not
contained in $C_s(G\sminus L^t)$.

If $A\subseteq C_s(G\sminus L^t)$, then $L^t$ is CP with respect to $AX$.
Since $L^t$ is a minimum $sX,t$-separator, we have
$f_{s,t}(G,AX)\le |L^t|=\kappa_{sX,t}(G)$. The reverse inequality follows from
Proposition~\ref{prop:simpleProp}, and therefore
$\kappa_{sX,t}(G)=\kappa_{sAX,t}(G)=f_{s,t}(G,AX)$. The algorithm then proceeds
to Phase~2.

If $A\not\subseteq C_s(G\sminus L^t)$, $L^t$ violates the connectivity requirement for $AX$. Then no
minimum $sX,t$-separator is CP with respect to $AX$, and hence
$\kappa_{sX,t}(G)<f_{s,t}(G,AX)$. Lemma~\ref{lem:partZeroOfAlg} shows that
finding $\safeSepskImp{s}{t}{k}(G,AX)$ reduces to finding
$\safeSepskImp{s}{t}{k}(G,AXy)$ for every $y\in L^t$. The algorithm therefore
branches over the vertices $y\in L^t$. In the branch indexed by $y$, subsequent
separators are considered with the additional requirement
$y\in C_s(\cdot)$, i.e., with $X$ replaced by $Xy$.
This update preserves (Inv). Indeed, since
$y\in L^t\subseteq U^{\min}_{sX,t}(G)$, Corollary~\ref{corr:XInvariant}
implies that after adding $y$ to $X$, the invariant continues to hold. It also
gives progress: Lemma~\ref{lem:potentialReducedPreFirstPart} shows that
$\kappa_{sXy,t}(G)>\kappa_{sX,t}(G)$, and therefore the potential $\lambda$ strictly
decreases. Thus each iteration of Phase~1 either establishes the premise of
Corollary~\ref{corr:mainThmBoundedCardinalityExt}, or creates at most
$|L^t|\le k$ recursive branches, each with smaller potential and with (Inv)
preserved.

\paragraph*{Phase 2: The Premise Verified.}
At this point $L^t\eqdef L^t_{sX,t}(G)$ is CP with respect to $AX$, and the
premise of Corollary~\ref{corr:mainThmBoundedCardinalityExt} holds. The
algorithm computes the unique important minimum $sAX,t$-separator
$L^*\eqdef\closestMinSep{sAX,t}(G)$. If $L^*$ is CP with respect to $AX$, then $L^*$ is the canonical minimum
CP-important separator for the current call. The algorithm outputs the
corresponding separator $Z\cup L^*$ and then branches as follows. 
For every vertex $v\in L^*$: one recursive call commits $v$ to the separator, and the
other forces $v$ to be separated from $s$. 
If $L^*$ is not CP with respect to $AX$, then the algorithm proceeds to
Phase~3.

\paragraph*{Phase 3: Handling Constraint Violations
	(Corollary~\ref{corr:mainThmBoundedCardinalityExt}).}
Assume that $L^*$ violates the connectivity requirement, i.e.,
$AX\not\subseteq C_s(G\sminus L^*)$. Since Phase~1 has already established
$\kappa_{sX,t}(G)=\kappa_{sAX,t}(G)=f_{s,t}(G,AX)$, the hypotheses of
Corollary~\ref{corr:mainThmBoundedCardinalityExt} are satisfied in the current
instance.

Let $\D$ be the connected components of $G\sminus L^*$ that contain a vertex of
$A$ but do not contain $s$, and let
$\varepsilon_{\D}\eqdef\set{N_G(C):C\in\D}$. Corollary~\ref{corr:mainThmBoundedCardinalityExt}
guarantees that there exists a minimal hitting set
$Y\in\MHS(\varepsilon_{\D})$ such that
$\closestMinSep{sAXY,t}(G)\in\safeSepskImp{s}{t}{\min}(G,AX)$. Since
$Y\subseteq L^*$ and $|L^*|\le k$, the algorithm can find such a set $Y$ by
enumerating the minimal hitting sets of $\varepsilon_{\D}$ in time $O^*(2^k)$.
This enumeration finds a suitable set $Y$; the recursion does
not branch over all minimal hitting sets.
After such a set $Y$ is found, the algorithm splits the remaining separators
into at most $k+1$ cases. One recursive call replaces $X$ by $XY$, thereby
strengthening the connectivity requirement from $AX$ to $AXY$. For each
$y\in Y$, the algorithm creates the following vertex-indexed calls: one call commits $y$ to the separator, and the other forces $y$ to be separated from $s$. Since $|Y|\le |L^*| \leq k$, this gives at most $2k+1$ recursive
calls. The call with $X$ replaced by $XY$ preserves (Inv) by
Corollary~\ref{corr:XInvariant}. In Appendix~\ref{sec:AlgForImpSafeSepsAppendix}, we show that this call
is immediately followed by a call that strictly decreases $\lambda$.

\eat{
The recursive calls use three operations. First, a vertex may be deleted from
the graph and added to $Z$, which commits it to the separator and decreases the
remaining budget. Second, a vertex or a set of vertices may be added to $X$,
strengthening the connectivity requirement; such updates are made only when
(Inv) is preserved. Third, the algorithm may add edges incident to $s$ or to
$t$ in order to exclude separators already represented by another branch. The
coverage of the solution space and the potential decrease for these operations
are proved in Appendix~\ref{sec:AlgForImpSafeSepsAppendix}.}

\eat{
\section{Enumerating Connectivity-Preserving Important Separators}
\label{sec:algo}
We now describe the recursive enumeration algorithm
$\algname{Gen{-}Seps}$ (Algorithm~\ref{alg2e:GenSeps}). The complete pseudocode and the full correctness and run-time analysis are given in Appendix~\ref{sec:AlgForImpSafeSepsAppendix}. A recursive call has the form
\[
\algname{Gen{-}Seps}(G,s,t,A,X,Z,k).
\]
The meaning of the parameters is as follows. The graph $G$, the terminals $s,t$, and the integer $k$ describe the current residual instance. The set $Z$ contains vertices that have already been committed to the separator along the current recursion path; these vertices are appended to the separators returned by the current call.  The set $X$ records vertices that have been added to the connectivity requirement. Consequently, the current call works with the terminal set $AX$, and the separators considered in the call are members of $\safeSepskImp{s}{t}{k}(G,AX)$. In particular, if $S\in\safeSepskImp{s}{t}{k}(G,AX)$, then by definition
$AX\subseteq C_s(G\sminus S)$. 

For every recursive call $\algname{Gen{-}Seps}(G,s,t,A,X,Z,k)$, the algorithm maintains the invariant:
\[
\tag{Inv}
\text{for every }Q\subseteq \nodes(G)
\text{ and every }T\in\minsep_{sXQ,t}(G),\qquad
X\subseteq C_s(G\sminus T).
\]
This invariant enables the application of Corollary~\ref{corr:mainThmBoundedCardinalityExt}, which is the engine for discovering the CP important separators. In particular, when the algorithm reaches
a call in which
\[
\kappa_{sX,t}(G)=\kappa_{sAX,t}(G)=f_{s,t}(G,AX),
\]
the hypotheses of Corollary~\ref{corr:mainThmBoundedCardinalityExt} are satisfied in the current instance. The corollary can then be applied with $AX$ as the connectivity requirement.
Let $L^*\eqdef \closestMinSep{sAX,t}(G)$ be the unique important, minimum $sAX,t$-separator (se Lemma~\ref{lem:uniqueMinImportantSep}), and let $\D$ be the set of connected components witnessing a violation of the connectivity constraints in $G\sminus L^*$; these are the connected components that contain a vertex of $A$ but do not contain $s$. Corollary~\ref{corr:mainThmBoundedCardinalityExt} characterizes the
minimum CP-important separators in the current call by the neighborhoods of the components in $\D$: there exists a minimal hitting set $Y\in\MHS(\set{N_G(C):C\in\D})$ such that $\closestMinSep{sAXY,t}(G)\in \safeSepskImp{s}{t}{\min}(G,AX)$. The algorithm may enumerate these minimal hitting sets in order to find such a set $Y$, but it does not create one recursive branch for each minimal hitting set. Once such a $Y$ is found,
it uses $Y$ to split the remaining separators into at most $k$ distinct cases. One case adds all vertices of $Y$ to the set $X$, thereby strengthening the connectivity requirement from $AX$ to $AXY$. The other cases choose a single vertex $y\in Y$ and either commit $y$ to the separator or modify the graph so that any remaining separator separates $y$ from $s$. Since $Y\subseteq L^*$ and $|L^*|\le k$, this creates at most $k+1$ recursive calls. The updates to $X$ are precisely the ones for which (Inv) is preserved, as proved in
Appendix~\ref{sec:AlgForImpSafeSepsAppendix}.

Progress is measured by the potential
\[
\lambda(G,A,X,k)
\eqdef
(k+1)\bigl(2k-\kappa_{sAX,t}(G)\bigr)
+
\bigl(k-\kappa_{sX,t}(G)\bigr).
\]
For a parent call $\mathcal I$ and a child call $\mathcal I'$, the
\emph{charge} of the recursive transition $\mathcal I\to\mathcal I'$ is $\lambda(\mathcal I)-\lambda(\mathcal I')$.
Thus a transition has positive charge exactly when the potential strictly
decreases. The analysis in Appendix~\ref{sec:AlgForImpSafeSepsAppendix} proves that every ordinary recursive transition has positive charge. This positivity comes from one of two sources.
If the transition selects $b\ge 1$ vertices into the separator, then the remaining budget decreases from $k$ to $k-b$. The term $2k-\kappa_{sAX,t}(G)$ is the reason this gives progress: decreasing $k$ by
$b$ decreases $2k$ by $2b$, while deleting $b$ vertices can decrease
$\kappa_{sAX,t}(G)$ by at most $b$. Hence $2k-\kappa_{sAX,t}(G)$ decreases by at least $b$. The second term, $k-\kappa_{sX,t}(G)$, does not increase, because $k$ decreases by $b$ and
$\kappa_{sX,t}(G)$ can decrease by at most $b$. Therefore the whole potential strictly decreases. If the budget is unchanged, then the transition increases either $\kappa_{sX,t}(G)$ or
$\kappa_{sAX,t}(G)$, and therefore strictly decreases one of the two terms of
$\lambda$.  On every live call we have
$\kappa_{sX,t}(G)\le k$ and $\kappa_{sAX,t}(G)\le k$, and hence
\[
0\le \lambda(G,A,X,k)\le 2k^2+3k.
\]
Since $\lambda$ is integer-valued, a root-to-leaf path contains only
$O(k^2)$ positively charged ordinary transitions.  The only transition that
need not decrease $\lambda$ is the bookkeeping transition in which the hitting
set found via Corollary~\ref{corr:mainThmBoundedCardinalityExt} is added to
$X$; Appendix~~\ref{sec:AlgForImpSafeSepsAppendix} shows that, if the recursion continues, this transition is
immediately followed by an ordinary transition with positive charge. Therefore the height of the recursion tree is $O(k^2)$. Since each branching step has at most $k^{O(1)}$ children, the total number of recursive calls, and hence the size of the enumerated family, is bounded by $2^{O(k^2\log k)}$ up to polynomial factors.

\paragraph*{Phase 1: Establishing the Premise of Corollary~\ref{corr:mainThmBoundedCardinalityExt}.}
The first phase ensures that the current recursive call reaches a state in
which Corollary~\ref{corr:mainThmBoundedCardinalityExt} can be applied.  In a
call $\algname{Gen{-}Seps}(G,s,t,A,X,Z,k)$, the relevant connectivity
requirement is $AX$, and the premise needed later is $\kappa_{sX,t}(G)=\kappa_{sAX,t}(G)=f_{s,t}(G,AX)$.
To test this premise, the algorithm computes the separator $L^t\eqdef L^t_{sX,t}(G)$,
the unique minimum $sX,t$-separator closest to $t$. By the invariant maintained
by the recursion, every minimum $sX,t$-separator $T$ satisfies
$X\subseteq C_s(G\sminus T)$; in particular, $X\subseteq C_s(G\sminus L^t)$.
Thus the only possible obstruction to $L^t$ being a CP separator with respect
to $AX$ is the existence of a vertex of $A$ that is not contained in
$C_s(G\sminus L^t)$.

If $A\subseteq C_s(G\sminus L^t)$, then $L^t$ is a CP $s,t$-separator with
respect to $AX$. Since $L^t$ has size $\kappa_{sX,t}(G)$, we get $f_{s,t}(G,AX)\le |L^t|=\kappa_{sX,t}(G)$, while the reverse inequality follows from Proposition~\ref{prop:simpleProp}.
Hence $\kappa_{sX,t}(G)=\kappa_{sAX,t}(G)=f_{s,t}(G,AX)$, and the algorithm proceeds to the next phase.

Otherwise, $L^t$ violates the connectivity requirement for $AX$. Then no
minimum $sX,t$-separator is a CP separator with respect to $AX$, and therefore $\kappa_{sX,t}(G)< f_{s,t}(G,AX)$. Lemma~\ref{lem:partZeroOfAlg} shows that finding $\safeSepskImp{s}{t}{k}(G,AX)$ can be reduced to finding $\safeSepskImp{s}{t}{k}(G,AXy)$ for every $y\in L^t_{sX,t}$. The algorithm
branches over the vertices $y\in L^t$. In the branch indexed by $y$, it updates
the recursive instance so that subsequent separators are considered under the
additional requirement that $y$ belongs to the component $C_s(\cdot)$. The
validity of this update is exactly the invariant-preservation step: since
$y\in L^t\subseteq U^{\min}_{sX,t}(G)$, Corollary~\ref{corr:XInvariant} implies that after adding $y$ to $X$, the invariant continues to hold.

This branching step also gives progress in the potential.  Lemma~\ref{lem:potentialReducedPreFirstPart}
shows that the update in the branch indexed by $x$ strictly increases the
relevant connectivity, namely $\kappa_{sXx,t}(G)>\kappa_{sX,t}(G)$, and hence strictly decreases the potential $\lambda$. Repeating this phase therefore cannot continue indefinitely; each iteration either establishes the premise of Corollary~\ref{corr:mainThmBoundedCardinalityExt}, or creates only
$|L^t|\le k$ recursive branches, each with strictly smaller potential and with the invariant preserved.

\paragraph*{Phase 2: The Premise Verified.}
At this point the separator $L^t\eqdef L^t_{sX,t}(G)$ is CP with respect to $AX$. Since $L^t$ is a minimum $sX,t$-separator and, by the invariant, every minimum $sX,t$-separator contains $X$ in its $s$-component, we have
\[
f_{s,t}(G,AX)\le |L^t|=\kappa_{sX,t}(G)=\kappa_{sAX,t}(G).
\]
The reverse inequality $f_{s,t}(G,AX)\ge \kappa_{sAX,t}(G)$ follows from
Proposition~\ref{prop:simpleProp}. Hence $\kappa_{sX,t}(G)=\kappa_{sAX,t}(G)=f_{s,t}(G,AX)$,
so the premise of Corollary~\ref{corr:mainThmBoundedCardinalityExt} holds.
With the premise established, we compute the unique minimum important separator $L^* \eqdef \closestMinSep{sAX,t}(G)$. If $L^*$ satisfies the connectivity constraints (i.e., $AX\subseteq C_s(G\sminus L^*)$), we perform branching similar to the setting in standard important separators.

\paragraph*{Phase 3: Handling Constraint Violations (Corollary~\ref{corr:mainThmBoundedCardinalityExt}).}
The critical difficulty arises if $L^*$ \emph{violates} the connectivity constraints; that is, $AX\not\subseteq C_s(G\sminus L^*)$. Since the premise $f_{s,t}(G,AX)=\kappa_{sX,t}(G)$ holds, we invoke Corollary~\ref{corr:mainThmBoundedCardinalityExt}.
We identify the components $\mathcal{D}$ in $G \sminus L^*$ causing the violation. Since $L_{sX,t}^t(G)$ is a CP, minimum separator, Corollary~\ref{corr:mainThmBoundedCardinalityExt} guarantees that there exists a minimal hitting set $Y \in \MHS(\varepsilon_{\D})$, such that $L^*_{sAXY,t}(G)\in \safeSepskImp{s}{t}{\min}(G,AX)$.
Since $Y\subseteq L^*$, where $|L^*|\leq k$, the algorithm will find such a set $Y$ in time $O^*(2^k)$.
In the recursive step, the connectivity constraint associated with $X$ is strengthened to $XY$, ensuring that the new canonical separator $L^*_{sAXY,t}$ satisfies the connectivity constraint (i.e., $AXY\subseteq C_s(G\sminus L^*_{sAXY,t})$) guaranteeing progress ($\lambda$ reduction), as described in Phase2.

\subsection{Graph Modifications, Invariants, and Progress Guarantees}
To maintain the set of CP-important separators while reducing $\lambda$, we apply the following:
\begin{enumerate}[itemsep=0pt,topsep=0pt,parsep=0pt,partopsep=0pt]
	\item \textbf{Vertex Deletion ($G\sminus x$):} Used when a vertex $x$ is committed to the separator $Z$. This reduces $k$, directly reducing $\lambda$.
	\item \textbf{Edge Insertion ($G_i$):} To disconnect a vertex $x$, belonging to a unique minimum important separator, from $s$ (or $t$), we modify the graph by adding edges from $s$ to $N_G[x]$ (or $t$ to $N_G[x]$). Whenever this operation is employed by our algorithm, we show that it increases the connectivity ($\kappa_{s,t}$ or $\kappa_{sAQ,t}$), thus reducing $\lambda$ (see Lemmas~\ref{lem:potentialReducedPreFirstPart} and~\ref{lem:sAtMinlSepLargerThanMinsAtSep_s}).
	\item \textbf{Constraint Strengthening ($Q \gets Qx$):} Used to enforce that $x$ is connected to $sA$. When $x$ belongs to a minimum important separator, this results in a strict increase to $\kappa_{sAQ,t}(G)$ (Lemma~\ref{lem:sAtMinlSepLargerThanMinsAtSep_s}).
\end{enumerate}
Crucially, these operations are designed to maintain the solution space. We prove in Section~\ref{sec:AlgForImpSafeSepsAppendix} of the Appendix that these operations maintain the set of CP important $sA,t$-separators.
}
\eat{
\DontPrintSemicolon
\newcommand\mycommfont[1]{\footnotesize\ttfamily\textcolor{blue}{#1}}
\SetCommentSty{mycommfont}
\newcommand*{\tikzmk}[1]{\tikz[remember picture,overlay,] \node (#1) {};\ignorespaces}
\newcommand{\boxit}[1]{\tikz[remember picture,overlay]{\node[yshift=3pt,fill=#1,opacity=.25,fit={(A)($(B)+(.92\linewidth,.8\baselineskip)$)}]
		{};}\ignorespaces}
\colorlet{mypink}{red!40}
\colorlet{myblue}{cyan!50}
\colorlet{myyellow}{yellow!50}

\let\oldnl\nl
\newcommand{\nlnonumber}{\renewcommand{\nl}{\let\nl\oldnl}}
\SetKwFor{ForNoEnd}{for}{do}{}  

\begin{algoTwo}[thb]
	\SetAlgoLined
	\caption{{$\boldsymbol{\algname{Gen{-}Seps}}$}: Algorithm for listing $\safeSepskImp{sA}{t}{k}(G,\mc{R}_{sA,\P,Q})$ \label{alg2e:GenSeps}}
	\setcounter{AlgoLine}{0}  
	\SetKwProg{Algorithm2}{Algorithm1}{:}{}
	\KwIn{Graph $G$, $s,t\in \nodes(G)$, $A,Q\subseteq \nodes(G){\setminus}\set{s,t}$, a set $\P=\set{A_1,\dots,A_M}$ of pairwise disjoint subsets of $sA$, $Z\subseteq \nodes(G)$, and $k\in \mathbb{N}_{\geq 0}$.}	
	\KwOut{$\safeSepskImp{sA}{t}{k}(G,\mc{R}_{sA,\P,Q})$.}
	\setcounter{AlgoLine}{0} 
	Compute $L_{s,t}^t(G)\in \minsep_{s,t}(G)$ closest to $t$. \label{line2e:L_stt}  \tcp*[l]{By Lem.~\ref{lem:uniqueMinImportantSep}, $L_{s,t}^t$ is unique and computed in $O(n\cdot T(n,m))$.} 
	\lIf{$|L_{s,t}^t(G)| > k$}{\Return  \label{line2e:L_stt>k}}
	\uIf{$L_{s,t}^t(G) = \emptyset$}{
		\lIf{$L_{s,t}^t(G)\in \minlsep{sA,t}{G}$ and $L_{s,t}^t(G)\models \mc{R}_{sA,\P,Q}(G)$}{Output $Z$} \label{line2e:returnZEmpty}
		\Return \label{line2e:returnZEmpty2}
	}
	Let $L_{s,t}^t(G)=\set{x_1,\dots,x_\ell}$ \label{line2e:L_{s,t}^tNormal} \; 
	
	\uIf{$L_{s,t}^t(G) \notin \minlsep{sAQ,t}{G}$  \label{line2e:prefirstIf}}{
		\ForNoEnd{$i=1$ to $i=\ell$}{
			Let $G_i$ be the graph where $\edges(G_i)=\edges(G)\cup \set{(s,u): u\in N_G[x_i]}$. \label{line2e:preGiGen} \;
			
			{$\boldsymbol{\algname{Gen{-}Seps}}$}$(G_i, sA,t,\P,Q,Z,k)$ \;  \label{line2e:endPreFirstPart} \tcp*[l]{$\kappa_{s,t}(G_i) = \kappa_{sx_i,t}(G) > \kappa_{s,t}(G)$ (Lem.~\ref{lem:potentialReducedPreFirstPart}).}}
	}
	
	\uElseIf{$L_{s,t}^t(G) \not\models \mc{R}_{sA,\P,Q}(G)$ \label{line2e:secondIf}}{
		\tcp*[l]{If here, then $\kappa_{s,t}(G)=\kappa_{sAQ,t}(G)$, and $L_{s,t}^t(G) \in \minsep_{sAQ,t}(G)$.}
		\ForNoEnd{$i=1$ to $i=\ell$}{
			Let $G_i$ be the graph where $\edges(G_i)=\edges(G)\cup \set{(t,x_j): j\leq i-1}$.  \label{line2e:FirstPartGraphGen} \;
				
				{$\boldsymbol{\algname{Gen{-}Seps}}$}$(G_i, sA,t,\P,Qx_i,Z,k)$ \;  \label{line2e:endFirstPart} \tcp*[l]{$\kappa_{sAQx_i,t}(G_i){\geq}\kappa_{sAQx_i,t}(G) {>} \kappa_{sAQ,t}(G)$ (Lem.~\ref{lem:sAtMinlSepLargerThanMinsAtSep_s} (2)).}}
	}
		\uElseIf{$\closestMinSep{sAQ,t}(G) \models \mc{R}_{sA,\P,Q}(G)$ \label{line2e:elseIfBegin}}{
			{$\boldsymbol{\algname{Gen{-}Seps}}$}$(G\sminus \closestMinSep{sAQ,t}(G), sA,t,\P, Q,Z\cup \closestMinSep{sAQ,t}(G),k\sminus |\closestMinSep{sAQ,t}(G)|)$ \;  \label{line2e:secondPart1} 
			
			Let $\closestMinSep{sAQ,t}(G)=\set{x_1,\dots,x_\ell}$ \; \tcp*[l]{By Lem.~\ref{lem:uniqueMinImportantSep}, $\closestMinSep{sAQ,t}(G)$ is unique and computed in $O(n\cdot T(n,m))$.} 
			
			\ForNoEnd{$i=1$ to $i=\ell$}{
				Let $G_i$ be the graph that results from $G$ by adding all edges between $t$ and $N_G[x_i]$ and between $s$ and $\set{x_1,\dots,x_{i-1}}$  \; \tcp*[l]{$\safeSepsImp{sA}{t}(G,\mc{R}){\setminus}\set{\closestMinSep{sAQ,t}(G)}\subseteq  \mediumbiguplus_{i=1}^\ell \safeSepsImp{sA}{t}(G_i,\mc{R})$ (Lem.~\ref{lem:partTwoOfAlg}).} 
			
				{$\boldsymbol{\algname{Gen{-}Seps}}$}$(G_i, sA,t,\P,Q, Z,k)$  \label{line2e:secondPart1End} \tcp*[l]{$\kappa_{sAQ,t}(G_i)\geq \kappa_{sAQ,tx_i}(G)> \kappa_{sAQ,t}(G)$ (Lem.~\ref{lem:sAtMinlSepLargerThanMinsAtSep_s} (1)).}
			}
		}
		\uElse{
			$\begin{aligned}[t]
				\D \eqdef \set{C\in \cc(G\sminus \closestMinSep{sAQ,t}(G)): s\notin C, \exists i\in [1,M] \text{ s.t. }\emptyset \subset C \cap A_i \subset A_i \text{ or } C\cap Q\neq \emptyset, C\cap sA =\emptyset}
			\end{aligned}$\label{line2e:thirdPartStart}\;
			
			$\varUpsilon \eqdef \set{N_{G}(C): C\in \D}$\;
			
			Let $X\in \MHS(\varUpsilon)$ such that $\kappa_{sX,t}(G)=\kappa_{s,t}(G)$ \label{line2e:thirdPartComputeMHSX} \tcp*[f]{By Theorem~\ref{thm:mainThmBoundedCardinality}, if $\kappa_{s,t}(G)=f_{sA,t}(G,\mc{R})$ such a set $X$ exists, and can be computed in time $O(2^{\kappa_{s,t}(G)}\cdot n\cdot T(n,m))$.}\;
			
			{$\boldsymbol{\algname{Gen{-}Seps}}$}$(G, sAX,t,\P,Q,Z,k)$ \label{line2e:thirdPartRecurseWithX} \tcp*[f]{By Theorem~\ref{thm:mainThmBoundedCardinality}, $\closestMinSep{sX,t}(G)=\closestMinSep{sAQX,t}(G)\in \safeSepsMin{sA}{t}^{*}(G)$, so progress will definitely be made in the next iteration (in lines~\ref{line2e:elseIfBegin}-\ref{line2e:secondPart1End}).}\;
			
			Let $X=\set{x_1,\dots,x_\ell}$,  \label{line2e:thirdPartRest}\;
			
			\ForNoEnd{$i=1$ to $i=\ell$}{	
				$X_{i-1}\eqdef \set{x_1,\dots,x_{i-1}}$ \;
				
				{$\boldsymbol{\algname{Gen{-}Seps}}$}$(G\sminus x_i, sAX_{i-1},t,\P,Q,Z\cup \set{x_i},k-1)$ \label{line2e:thirdPartSecondRecursiveCall} \;
				
				Let $G_i$ be the graph where $\edges(G_i)=\edges(G)\cup \set{(u,t):u\in N_G[x_i]}$ \tcp*[l]{See Lemma~\ref{lem:thirdPart}} \; 
				\hspace{-7pt}{$\boldsymbol{\algname{Gen{-}Seps}}$}$(G_i, sAX_{i-1},t,\P,Q,Z,k)$ \label{line2e:thirdPartEnd} \tcp*[l]{$\kappa_{sAQ,t}(G_i){\geq} \kappa_{sAQ,tx_i}(G){>} \kappa_{sAQ ,t}(G)$ (Lem.~\ref{lem:sAtMinlSepLargerThanMinsAtSep_s} (1)).} 
			}
		}
	\end{algoTwo}
}
	\section{Applications}
\label{sec:NMWCU}
\def\F{\mathcal{F}}

In this section, we demonstrate the power of our enumeration framework through
two applications.  First, we prove Corollary~\ref{cor:targetedCPIsolation}, which shows that CP-important separators allow us to optimize over the source components of minimal $s,t$-separators under simultaneous inclusion and exclusion constraints. We then use the same framework to obtain an algorithm for \textsc{NMWC-U}.

\subsection{Proof of Corollary~\ref{cor:targetedCPIsolation}}

We present an algorithm for computing a minimum-cardinality, minimal
$s,t$-separator $S$ of size at most $k$ such that
$A\subseteq C_s(G\sminus S)$ and
$B\cap C_s(G\sminus S)=\emptyset$. We require the following lemma.

\def\dominatedByCPImportant{
	Let $s,t\in \nodes(G)$, and let
	$A,B\subseteq \nodes(G){\setminus}\set{s,t}$. Suppose that the set 
	\[
	\set{S\in \safeSeps{s}{t}(G,A) :
		B\cap C_s(G\sminus S)= \emptyset}\neq \emptyset
	\]
	is nonempty, and let $T^* \in \arg\min_{|S|}\set{S\in \safeSeps{s}{t}(G,A) : B\cap C_s(G\sminus S)=\emptyset}$.
	Then there exists $T\in \safeSepsImp{s}{t}(G,A)$ such that
	$C_s(G\sminus T)\subseteq C_s(G\sminus T^*)$ and
	$|T|\le |T^*|$.
}
\begin{lemma}
	\label{lem:dominatedByCPImportant}
	\dominatedByCPImportant
\end{lemma}

\paragraph{The Algorithm.}
The algorithm first computes $\safeSepskImp{s}{t}{k}(G,A)$, the set of
important CP $s,t$-separators with respect to $A$ of size at most $k$. By
Theorem~\ref{thm:singleExponentialSafeImportantOfSizek}, we have
$|\safeSepskImp{s}{t}{k}(G,A)|\in 2^{O(k^2\log k)}$, and the family can be
enumerated within the corresponding time bound. The algorithm then filters out separators
$S\in \safeSepskImp{s}{t}{k}(G,A)$ where
$B\cap C_s(G\sminus S)\neq \emptyset$, resulting in
\[
\mc{Z}\eqdef
\set{S\in \safeSepskImp{s}{t}{k}(G,A) :
	B\cap C_s(G\sminus S)=\emptyset}. \tag{1}
\]
If $\mc{Z}=\emptyset$, the algorithm returns $\bot$, indicating that no
feasible separator exists. Otherwise, the algorithm returns a
minimum-cardinality separator in $\mc{Z}$.

The correctness of this simple algorithm, establishing Corollary~\ref{cor:targetedCPIsolation}, is deferred to Appendix~\ref{sec:proofsFromNMWCUApplication}. The key point is that, if any feasible separator exists, then among the feasible separators of minimum cardinality there is one that is CP-important: otherwise one can repeatedly replace the current optimal separator by a no-larger CP separator with a strictly contained $s$-component, which preserves the condition $B\cap C_s(G\sminus S)=\emptyset$. Therefore it is sufficient to enumerate $\safeSepskImp{s}{t}{k}(G,A)$, filter by $B\cap C_s(G\sminus S)=\emptyset$, and return a minimum-cardinality separator among the survivors. The running time is dominated by the enumeration guaranteed by Theorem~\ref{thm:singleExponentialSafeImportantOfSizek}.

Corollary~\ref{cor:targetedCPIsolation} shows that CP-important separators allow efficient optimization over all minimal $s,t$-separators $S\in\minlsepst{G}$ satisfying $A\subseteq C_s(G\sminus S)$ and $B\cap C_s(G\sminus S)=\emptyset$.
This optimization problem cannot be represented as a cut-uncut instance. The condition $B\cap C_s(G\sminus S)=\emptyset$ does not prescribe a connectivity pattern for the vertices of $B$: they may belong to the separator, be connected to $t$, be separated from $t$, and need not be connected to one another or separated from one another. Hence existing cut-uncut algorithms do not apply, whereas CP-important separators allow the feasible source sides to be enumerated and optimized over explicitly.
\eat{
Corollary~\ref{cor:targetedCPIsolation} illustrates a feature of the framework
that is not provided by previous cut-uncut techniques. It gives an FPT
procedure, parameterized only by the separator size $k$, for optimizing over
minimal $s,t$-separators according to the $s$-component they induce. The algorithm
does not merely find a separator; it enumerates a bounded family of relevant
source components $C_s(G\sminus S)$ and then filters this family by
simultaneous inclusion and exclusion constraints:
\[
A\subseteq C_s(G\sminus S)
\qquad\text{and}\qquad
B\cap C_s(G\sminus S)=\emptyset .
\]

The capability of optimizing over the connected components associated with
minimal separators, under simultaneous inclusion and exclusion constraints, is
unavailable in randomized contractions and from current algorithms for
Cut-Uncut problems. The algorithm for N-MWCU, for example, takes as input a
prescribed equivalence relation on a terminal set and seeks a separator that
precisely meets the connectivity and separation conditions implied by the equivalence relation of the input. In Corollary~\ref{cor:targetedCPIsolation}, the set $B$ is not required to form a
terminal class, its vertices may be deleted, and the vertices of
$B{\setminus}S$ are not required to be connected to $t$, separated from each
other, or arranged according to any prescribed equivalence relation. Placing
$B$ into terminal classes would therefore impose constraints that are not part
of the problem, while leaving $B$ outside the terminal set would not enforce
$B\cap C_s(G\sminus S)=\emptyset$.

This result provides, for the first time, a bounded enumerable family of
minimal $s,t$-separators that is rich enough to support optimization under
source-component inclusion and exclusion constraints, with dependence only on
the size of the separator. This is precisely the kind of source-component
filtering that classical important separators support for pure separation, and
that was missing once connectivity constraints are imposed.
}

\subsection{Node Multiway Cut-Uncut}

We now explain how Theorem~\ref{thm:singleExponentialSafeImportantOfSizek} yields Theorem~\ref{thm:NMWCUBoundedCardinality}. Recall that an instance of \textsc{Node Multiway	Cut-Uncut} consists of a graph $G$, a terminal set $A$, an equivalence relation
$\mc R$ on $A$, and an integer $k$. Let $\mathcal P=\set{A_1,\ldots,A_M}$ be the partition of $A$ induced by $\mc R$.
The connection to CP-important separators is through the notion of \e{close separators}.

\begin{definition}[Close Separators]
	\label{def:safeClose}
	Let $A, B$ be disjoint vertex sets.
	We say that a minimal $A,B$-separator $S \in \safeSepsk{A}{B}{k}(G,A)$ is
	\e{close to $A$} if for every other separator
	$S' \in \safeSepsk{A}{B}{k}(G,A)$, it holds that
	$C_A(G\sminus S') \not \subset C_A(G\sminus S)$.
\end{definition}

\def\FsaExists{
	Let $A, B\subseteq \nodes(G)$ be disjoint and nonadjacent, and let
	$S \in \safeSeps{A}{B}(G,A)$
	($S\in \safeSepsk{A}{B}{k}(G,A)$). There exists a
	$T\in\closeSeps{A}{B}(G,A)$
	($T\in\closeSepsk{A}{B}{k}(G,A)$) where
	$C_A(G\sminus T)\subseteq C_A(G\sminus S)$.
}
\begin{lemma}[Domination Lemma]
	\label{lem:FsaExists}
	\FsaExists
\end{lemma}

Every close separator is CP-important. Indeed, if $S\in \closeSepsk{A}{B}{k}(G,A)$ were dominated by another CP separator $S'$ of size at most $|S|$, then $C_A(G\sminus S')\subsetneq C_A(G\sminus S)$, contradicting closeness. Hence $\closeSepsk{A}{B}{k}(G,A) \subseteq \safeSepskImp{s}{t}{k}(G,A)$, and Theorem~\ref{thm:singleExponentialSafeImportantOfSizek} gives $|\closeSepsk{A}{B}{k}(G,A)|\in 2^{O(k^2\log k)}$.

For each equivalence class $A_i\in \mc P$, the algorithm enumerates $\closeSepsk{A_i}{A{\setminus}A_i}{k}(G,A_i)$. Lemma~\ref{lem:FsaExists} implies that, for every solution to this instance of \textsc{N-MWCU}, the connected component containing $A_i$ contains $C_{A_i}(G\sminus S)$ for some $S\in \closeSepsk{A}{B}{k}(G,A)$. Thus the algorithm can guess one close separator $S_i\in \closeSepsk{A_i}{A{\setminus}A_i}{k}(G,A_i)$ for each equivalence class $A_i\in \mc P$, contract the corresponding pairwise disjoint $A_i$-components, and solve the resulting \textsc{Node Multiway Cut} instance in time $O(k4^kn^3)$~\cite{DBLP:journals/algorithmica/ChenLL09}. Full details, correctness proof, and run-time analysis are given in section~\ref{sec:AppendixNMWCU} of the Appendix.

	\clearpage
	\bibliography{main,motivation}
	
	\clearpage
	
	\appendix
	
	\section{Proofs from Section~\ref{sec:minimalSeparators}}
\label{sec:minsepsvertexsets}
\eat{
Let $A$ and $B$ be two disjoint, non-adjacent subsets of $\nodes(G)$. A vertex-set $X\subseteq \nodes(G)$ is called an \e{$A,B$-separator} if, in the graph $G\sminus X$, there is no path between $A$ and $B$. We say that $X$ is a minimal $A,B$-separator if no proper subset of $X$ has this property. We say that a subset $X\subseteq \nodes(G)\setminus AB$ is a minimum $A,B$-separator if, for every $A,B$-separator $S$, it holds that $|X|\leq |S|$. We denote by $\minsep_{A,B}(G)$ the set of minimum $A,B$-separators, and by $\kappa_{A,B}(G)$ the size of a minimum $A,B$-separator.
We denote by $\minlsep{A,B}{G}$  the set of minimal $A,B$-separators in $G$. 
\eat{
We say that an $A,B$-separator $X$ is 
\e{safe} if there are two distinct, connected components $C_A,C_B\in \cc_G(X)$, where $A\subseteq C_A$ and $B\subseteq C_B$.}
In this short section, we establish that finding a minimal or minimum $A,B$-separator, where $A,B\subseteq \nodes(G)$ are disjoint and non-adjacent, can be reduced to finding a minimal or minimum $a,b$-separator in $G'$, where $a\in A,b\in B$, and $G'$ is the graph that results from merging $A$ to $a$ and $B$ to $b$ (see~\eqref{eq:mergeDef}).
}
\begin{replemma}{\ref{lem:inclusionCsCt}}
	\inclusionCsCt
\end{replemma}
\begin{proof}
	If $C_s(G\sminus S)\subseteq C_s(G\sminus T)$ then $N_G(C_s(G\sminus S))\subseteq C_s(G\sminus T)\cup N_G(C_s(G\sminus T))$. Since $S,T\in \minlsepst{G}$, then by Lemma~\ref{lem:fullComponents}, it holds that $S=N_G(C_s(G\sminus S))$ and $T=N_G(C_s(G\sminus T))$. Therefore, $S \subseteq C_s(G\sminus T)\cup T$. Hence, $C_s(G\sminus S)\subseteq C_s(G\sminus T) \Longrightarrow S \subseteq C_s(G\sminus T)\cup T$. 
	If $S \subseteq C_s(G\sminus T)\cup T$, then by definition, $S\cap C_t(G\sminus T)=\emptyset$. Therefore, $C_t(G\sminus T)$ is connected in $G\sminus S$. By definition, this means that $C_t(G\sminus T)\subseteq C_t(G\sminus S)$. Therefore, it holds that $N_G(C_t(G\sminus T))\subseteq C_t(G\sminus S) \cup N_G(C_t(G\sminus S))$. Since $S,T\in \minlsepst{G}$, then by Lemma~\ref{lem:fullComponents}, it holds that $S=N_G(C_t(G\sminus S))$ and $T=N_G(C_t(G\sminus T))$. Consequently, $T\subseteq S\cup C_t(G\sminus S)$. So, we have shown that
	$C_s(G\sminus S)\subseteq C_s(G\sminus T)  \Longrightarrow S\subseteq T\cup C_s(G\sminus T) \Longrightarrow T\subseteq S\cup C_t(G\sminus S)$.
	If $T\subseteq S\cup C_t(G\sminus S)$, then by definition, $T\cap C_s(G\sminus S)=\emptyset$. Therefore, $C_s(G\sminus S)$ is connected in $G\sminus T$. Consequently, $C_s(G\sminus S)\subseteq C_s(G\sminus T)$. This completes the proof.
\end{proof}

\subsection{Separators Between Vertex-Sets}
\begin{lemma}
	\label{lem:simpAB}
	\simpABlemma
\end{lemma}
\begin{proof}
	If $S\in \minlsep{A,B}{G}$, then for every $w\in S$ it holds that $S{\setminus} \set{w}$ no longer separates $A$ from $B$. Hence, there is a path from some $a\in A$ to some $b\in B$ in $G\sminus (S{\setminus} \set{w})$. 
	Let $C_a$ and $C_b$ denote the connected components of $\cc(G\sminus S)$ containing $a\in A$ and $b\in B$, respectively. Since $C_a$ and $C_b$ are connected in $G\sminus (S{\setminus} \set{w})$, then $w\in N_G(C_a)\cap N_G(C_b)$.
	
	Suppose that for every $w\in S$, there exist two connected components $C_A,C_B\in \cc(G\sminus S)$ such that $C_A\cap A\neq \emptyset$, $C_B\cap B\neq \emptyset$, and $w\in N_G(C_A)\cap N_G(C_B)$. If $S\notin \minlsep{A,B}{G}$, then $S{\setminus} \set{w}$ separates $A$ from $B$ for some $w\in S$. Since $w$ connects $C_A$ to $C_B$ in $G\sminus (S{\setminus} \set{w})$, no such $w\in S$ exists, and thus $S\in \minlsep{A,B}{G}$.
\end{proof}

Observe that Lemma~\ref{lem:simpAB} implies Lemma~\ref{lem:fullComponents}. By Lemma~\ref{lem:simpAB}, it holds that $S\in \minlsepst{G}$ if and only if $S$ is an $s,t$-separator and $S\subseteq N_G(C_s(G\sminus S))\cap N_G(C_t(G\sminus S))$. By definition, $N_G(C_s(G\sminus S))\subseteq S$ and $N_G(C_t(G\sminus S))\subseteq S$, and hence $S=N_G(C_s(G\sminus S))\cap N_G(C_t(G\sminus S))$, and $S=N_G(C_s(G\sminus S))=N_G(C_t(G\sminus S))$.

\begin{lemma}
	\label{lem:minlsepsupergraph}
	Let $G$ and $H$ be graphs where $\nodes(G)=\nodes(H)$ and $\edges(G)\subseteq \edges(H)$. Let $S\in \minlsep{A,B}{G}$. If $S$ is an $A,B$-separator in $H$, then $S\in \minlsep{A,B}{H}$.
\end{lemma}
\begin{proof}
	Since $S\in \minlsep{A,B}{G}$, then by Lemma~\ref{lem:simpAB}, for every $w\in S$ there exist $C_A^w(G\sminus S)\in \cc(G\sminus S)$ and $C_B^w(G\sminus S) \in \cc(G\sminus S)$ where $A\cap C_A^w(G\sminus S) \neq \emptyset$, $B\cap C_B^w(G\sminus S)\neq \emptyset$, and $w\in N_G(C_A^w(G\sminus S))\cap N_G(C_B^w(G\sminus S))$. Since $\edges(H)\supseteq \edges(G)$, and since $S$ is an $A,B$-separator in $H$, then $C_A^w(H\sminus S)\supseteq C_A^w(G\sminus S)$, and $C_B^w(H\sminus S)\supseteq C_B^w(G\sminus S)$. Therefore, $w\in N_H(C_A^w(H\sminus S))\cap N_H(C_B^w(H\sminus S)$) for every $w\in S$.
By Lemma~\ref{lem:simpAB}, it holds that $S\in \minlsep{A,B}{H}$.
\end{proof}

\def\lemMinlsASep{
	Let $A,B\subseteq \nodes(G){\setminus}\set{s,t}$ where $sA$ and $tB$ are disjoint and non-adjacent. Let $H$ be the graph that results from $G$ by adding all edges between $s$ and $N_G[A]$. That is, $\edges(H)=\edges(G)\cup \set{(s,v):v\in N_G[A]}$. Then $\minlsep{sA,tB}{G}=\minlsep{s,tB}{H}$, $C_s(H\sminus S)=C_{sA}(G\sminus S)$, and $C_{tB}(H\sminus S)=C_{tB}(G\sminus S)$ for every $S\in \minlsep{sA,tB}{G}$.
}

\begin{lemma}
	\label{lem:MinlsASep}
	\lemMinlsASep
\end{lemma}
\begin{proof}
	Let $T \in \minlsep{sA,tB}{G}$, and let $C_1,\dots,C_k$ denote the connected components of $\cc(G\sminus T)$ containing vertices from $sA$, and $D_1,\dots, D_{\ell}$ denote the connected components of $\cc(G\sminus T)$ that contain vertices from $Bt$. We let $C_{sA}\eqdef \mediumbigcup_{i=1}^k C_i$ and $D_{tB}\eqdef \mediumbigcup_{i=1}^{\ell}D_i$. Since $T$ is an $sA,tB$-separator of $G$ then $C_{sA}$ and $D_{tB}$ are disjoint and nonadjacent.
	Assume wlog that $s\in C_1$. By definition, the edges added to $G$ to form $H$ are between $C_1$ and $C_{sA} \cup T$. Since, by definition, $D_{tB} \cap (C_{sA} \cup T)=\emptyset$, then, $T$ separates $sA$ from $tB$ in $H$, and in particular, $T$ separates $s$ from $tB$ in $H$. Since $\edges(H) \supseteq \edges(G)$, then by Lemma~\ref{lem:minlsepsupergraph}, if $T \in \minlsep{sA,tB}{G}$ and $T$ is an $sA,tB$-separator in $H$, then $T\in \minlsep{sA,tB}{H}$. 
	Since, by construction, $A\subseteq N_{H}[s]{\setminus} T$ then $H\sminus T$ has a connected component $D_{s}\eqdef C_{sA}$ that contains $sA$, and the set of connected components $D_1,\dots, D_{\ell}$ containing vertices from $tB$. By Lemma~\ref{lem:simpAB}, we have that $T = N_{H}(D_{s})=N_{H}(C_{sA})$, and $T= N_{H}(D_{tB})$. Hence, $T= N_H(D_{s})\cap N_H(D_{tB})$. By Lemma~\ref{lem:fullComponents}, we have that $T\in \minlsep{s,tB}{H}$, where $D_s\eqdef C_s(H\sminus T)=C_{sA}(G\sminus T)$.
	
	Let $T\in \minlsep{s,tB}{H}$. We first show that $T$ separates $sA$ from $tB$ in $G$; if not, there is a path from $x\in sA$ to $tB$ in $G\sminus T$. By definition of $H$, $x\in N_{H}[s]{\setminus}T$. This means that there is a path from $s$ to $tB$ (via $x$) in $H \sminus T$, which contradicts the assumption that $T$ is an $s,tB$-separator of $H$. If $T \notin \minlsep{sA,tB}{G}$, then there is a $T' \in \minlsep{sA,tB}{G}$ where $T' \subset T$. By the previous direction,  $T'\in \minlsep{sA,tB}{G} \subseteq \minlsep{s,tB}{H}$, and hence $T'\in \minlsep{s,tB}{H}$, contradicting the minimality of $T\in \minlsep{s,tB}{H}$.
	
	Now, let $S\in \minlsep{s,tB}{H}$. Then $S \in \minlsep{sA,tB}{G}$, where every edge in $\edges(H){\setminus}\edges(G)$ is between $s$ and a vertex in $C_{sA}\cup S$. Therefore, $C_s(H\sminus S)=C_{sA}(G\sminus S)$. For the same reason, $D \in \cc(G\sminus S)$ contains a vertex from $tB$ if and only if $D \in \cc(H\sminus S)$. Hence, $C_{tB}(H\sminus S)=C_{tB}(G\sminus S)$.
\end{proof}

\eat{\subsection{Proof from Section~\ref{sec:safeImportantCloseSeps}}
\label{sec:safeImportantCloseSepsProofs}}
\def\lemMinlsBtSep{
	Let $A,B\subseteq \nodes(G)$ be disjoint, and let $t\in B$. Let $H$ be the graph that results from $G$ by adding all edges between $t$ and $B$. That is, $\edges(H)=\edges(G)\cup \set{(t,b):b\in B}$. Then 
	\begin{align*}
		\set{S\in \minlsep{A,t}{G}: B\cap C_A(G\sminus S)=\emptyset}\subseteq \minlsep{A,t}{H}.
	\end{align*}
	where $C_A(G\sminus S)\eqdef \bigcup_{\set{C\in \cc(G\sminus S): C\cap A\neq \emptyset}}C$.
}
\begin{lemma}
	\label{lem:MinlsBtSepNew}
	\lemMinlsBtSep
\end{lemma}
\begin{proof}
	Let $T \in \minlsep{A,t}{G}$ where $B\cap C_A(G\sminus T)=\emptyset$, and let $C_1,\dots,C_k$ denote the connected components of $\cc(G \sminus T)$ containing vertices from $Bt$. Assume wlog that $t\in C_1$. By definition, the edges added to $G$ to form $H$ are between $C_1$ and $C_1\cup \cdots \cup C_k\cup T$. Since $C_A(G\sminus T)\cap B=\emptyset$, then $T$ separates $A$ from $t$ in $H$. Since $\edges(G)\subseteq \edges(H)$, then by Lemma~\ref{lem:minlsepsupergraph}, if $T\in \minlsep{A,t}{G}$ and $T$ is an $A,t$-separator in $H$, then $T\in \minlsep{A,t}{H}$. 
\end{proof}

\def\lemMinlsBtSepSafe{
	Let $s,t\in \nodes(G)$, $A,B,\P,Q$ be the parameters defining the connectivity constraint $\mc{R}_{A,B,\P,Q}$, where $A,B,Q\subseteq \nodes(G){\setminus}\set{s,t}$, and let $D\subseteq \nodes(G){\setminus}(A\cup \set{s,t})$.
	Let $H$ be the graph that results from $G$ by adding all edges between $t$ and $D$. That is, $\edges(H)=\edges(G)\cup \set{(t,d):d\in D}$. 
	\begin{align*}
		\set{S\in \safeSeps{sA}{t}(G,\mc{R}): D\cap C_{sA}(G\sminus S)=\emptyset} &\subseteq \safeSeps{sA}{t}(H,\mc{R}), \text{ and }\\
		\set{S\in \safeSepsImp{sA}{t}(G,\mc{R}): D\cap C_{sA}(G\sminus S)=\emptyset} &\subseteq \safeSepsImp{sA}{t}(H,\mc{R}). 
	\end{align*}
}
\def\lemMinlsBtSepSafe{
	Let $s,t\in \nodes(G)$, and let
$A,D\subseteq \nodes(G){\setminus}\set{s,t}$. Let $H$ be the graph
obtained from $G$ by adding all edges between $t$ and $D$; that is,
\[
\nodes(H)=\nodes(G)
\qquad\text{and}\qquad
\edges(H)=\edges(G)\cup\set{(t,d):d\in D}.
\]
Then
\[
\set{S\in \safeSeps{s}{t}(G,A):D\cap C_{s}(G\sminus S)=\emptyset}
\subseteq
\safeSeps{s}{t}(H,A),
\]
and
\[
\set{S\in \safeSepsImp{s}{t}(G,A):D\cap C_{s}(G\sminus S)=\emptyset}
\subseteq
\safeSepsImp{s}{t}(H,A).
\]
}
\begin{lemma}
	\label{corr:MinlsBtSepSafe}
	\lemMinlsBtSepSafe
\end{lemma}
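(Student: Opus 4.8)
\noindent\emph{Proof plan.} The plan is to derive both inclusions from a single \emph{component invariant}: passing from $G$ to $H$ by adding the edges $\set{(t,d):d\in D}$ does not change the connected component on the $sA$-side of any $sA,t$-separator whose $sA$-side already avoids $D$. Precisely, let $S\in\minlsep{sA,t}{G}$ with $D\cap C_{sA}(G\sminus S)=\emptyset$. Since $S$ separates $sA$ from $t$, we have $t\notin C_{sA}(G\sminus S)$, and by hypothesis $D\cap C_{sA}(G\sminus S)=\emptyset$; hence every new edge of $H$ has both endpoints outside $C_{sA}(G\sminus S)$. Therefore $C_{sA}(G\sminus S)$ is still a connected component of $H\sminus S$, so $C_{sA}(H\sminus S)=C_{sA}(G\sminus S)$, whereas $C_t(G\sminus S)\subseteq C_t(H\sminus S)$ and every connected component of $H\sminus S$ other than $C_t(H\sminus S)$ coincides with a component of $G\sminus S$. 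I would isolate this as a short claim, since it is the workhorse of both parts.

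For the first inclusion, fix $S\in\safeSeps{sA}{t}(G,\mc{R})$ with $D\cap C_{sA}(G\sminus S)=\emptyset$. By~\eqref{eq:FsAt} and Proposition~\ref{prop:simpleProp}, $S\in\minlsep{sA,t}{G}$, so the invariant applies. First, $S$ separates $sA$ from $t$ in $H$, since $t\notin C_{sA}(H\sminus S)=C_{sA}(G\sminus S)$. Second, $S$ is still a \emph{minimal} $sA,t$-separator of $H$: by the standard full-component characterization of minimal separators (Section~\ref{sec:Preliminaries}), it suffices that the $sA$-component and the $t$-component of $H\sminus S$ are both full with respect to $S$, and indeed every $v\in S$ has --- already in $G$, hence in $H$ --- a neighbour in $C_{sA}(G\sminus S)=C_{sA}(H\sminus S)$ and a neighbour in $C_t(G\sminus S)\subseteq C_t(H\sminus S)$. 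Third, $S\models\mc{R}$ in $H$: by~\eqref{eq:mcR} the truth of $\mc{R}(G\sminus S)$ depends only on the induced subgraph on $C_{sA}(G\sminus S)$ (when $B=sA$, as for the family $\mc{R}_{sA,\P,Q}$ used here, all conjuncts of~\eqref{eq:mcR} live inside that component), which is unchanged, so $\mc{R}(H\sminus S)=\mc{R}(G\sminus S)=\textsc{true}$; alternatively one applies Lemma~\ref{lem:monotonic} to $C_{sA}(G\sminus S)=C_{sA}(H\sminus S)$. Hence $S\in\safeSeps{sA}{t}(H,\mc{R})$, which is the first displayed inclusion.

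For the second inclusion I would argue by contraposition. Let $S\in\safeSepsImp{sA}{t}(G,\mc{R})$ with $D\cap C_{sA}(G\sminus S)=\emptyset$; by the first inclusion $S\in\safeSeps{sA}{t}(H,\mc{R})$ and $C_{sA}(H\sminus S)=C_{sA}(G\sminus S)$. Suppose $S$ were \emph{not} important in $H$: by Definition~\ref{def:safeImp} there is $S'\in\safeSeps{sA}{t}(H,\mc{R})$ with $C_{sA}(H\sminus S')\subsetneq C_{sA}(H\sminus S)$ and $|S'|\le|S|$. Every $d\in D\setminus S'$ is adjacent to $t$ in $H$ and $t\notin S'$, so $D\setminus S'\subseteq C_t(H\sminus S')$; hence $C_{sA}(H\sminus S')$ avoids $D$, and running the argument of the invariant for $S'$ gives $C:=C_{sA}(H\sminus S')=C_{sA}(G\sminus S')$ with $N_G(C)=N_H(C)=S'$ (the new edges touch neither $t$ nor $C$, and $C$ is full in $H$) and, as before, $\mc{R}(G[C])=\textsc{true}$. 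Now set $\hat S:=N_G(C)$. Since $C$ is connected, contains $s$ and $A$, and excludes $t$, it is exactly the $sA$-component of $G\sminus\hat S$; thus $C_{sA}(G\sminus\hat S)=C\subsetneq C_{sA}(G\sminus S)$, $\mc{R}(G\sminus\hat S)=\mc{R}(G[C])=\textsc{true}$, and $|\hat S|=|S'|\le|S|$. If $\hat S\in\minlsep{sA,t}{G}$, then $\hat S\in\safeSeps{sA}{t}(G,\mc{R})$ contradicts the importance of $S$ in $G$, completing the proof.

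I expect the main obstacle to be the last step: verifying $\hat S=N_G(C)\in\minlsep{sA,t}{G}$. The difficulty is that a new edge $(t,d)$ can make a vertex of $D$ ``necessary'' in $H$ --- it bridges $C$ to $t$ through that edge --- while being redundant in $G$, so $S'$ itself, and a priori even $N_G(C)$, need not be a minimal $sA,t$-separator of $G$. I plan to handle this by staying with $\hat S=N_G(C)$ rather than $S'$, either showing $\hat S$ is minimal outright, or replacing it by an inclusion-minimal $sA,t$-separator $\hat S''\subseteq\hat S$ and proving that deleting redundant vertices cannot enlarge the $sA$-component beyond $C_{sA}(G\sminus S)$; this last control is exactly where the submodularity of $X\mapsto|N_G(X)|$ and the closest-minimal-separator results of Section~\ref{sec:minimalSeparators} enter. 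Everything else is bookkeeping around the component invariant and Lemma~\ref{lem:monotonic}.
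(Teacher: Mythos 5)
Your first inclusion and the setup of the second are sound and follow the same route as the paper: transfer the witness back from $H$ to $G$, using that the new edges are all incident to $t$ and therefore leave the $sA$-side component and its neighborhood untouched. The genuine gap is exactly the step you flag and then leave open: showing that $\hat S=N_G(C)$ (which, as you note, equals $S'$) lies in $\minlsep{sA,t}{G}$. This is not a routine loose end, and neither of your fallback plans closes it as stated. A minimal $sA,t$-separator of $H$ whose $sA$-side avoids $D$ need \emph{not} be minimal in $G$: take $\nodes(G)=\set{s,v,d,w,t}$ with edges $sv,vd,sw,wt$ and $D=\set{d}$; then $S'=\set{v,w}$ is minimal in $H$ with $C_{s}(H\sminus S')=\set{s}$, but $v$ is redundant in $G$. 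Moreover, your second fallback (pass to an inclusion-minimal $\hat S''\subseteq\hat S$) is dangerous precisely because deleting a redundant vertex can blow up the $sA$-component (in the example it jumps from $\set{s}$ to $\set{s,v,d}$), so the strict containment in $C_{sA}(G\sminus S)$ that drives the contradiction with Definition~\ref{def:safeImp} is lost unless you prove otherwise; and submodularity is not the tool that rescues it.

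What actually closes the gap — and what the paper does — is to prove that $S'$ \emph{itself} is minimal in $G$, using the extra structure coming from the importance argument rather than from $S'$ alone. From $C\subsetneq C_{sA}(H\sminus S)$ and Lemma~\ref{lem:inclusionCsCt} one gets $S'\cup C\subseteq S\cup C_{sA}(G\sminus S)$; since no new edge has an endpoint in $C_{sA}(G\sminus S)\cup S$, the graphs $G$ and $H$ agree on this set. Then for $u\in S'\cap S$, minimality of $S$ in $G$ (Lemma~\ref{lem:simpAB}) gives $u$ a neighbour in $C_t(G\sminus S)\subseteq C_t(G\sminus S')$; and for $u\in S'\cap C_{sA}(G\sminus S)$ (so $u\notin D$ and $N_H(u)=N_G(u)$), one takes the $H$-certificate that $u$ reaches $C_t(H\sminus S')$ and follows it until it first leaves $C_{sA}(G\sminus S)$: that exit vertex lies in $S\setminus S'$, hence in $C_t(G\sminus S')$, and the prefix uses only $G$-edges and avoids $S'$. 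Thus every $u\in S'$ has a $G$-neighbour in $C_t(G\sminus S')$, which together with $S'=N_G(C)$ yields $S'\in\minlsep{sA,t}{G}$ (again via Lemma~\ref{lem:simpAB}), and since $C_{sA}(G\sminus S')=C\subsetneq C_{sA}(G\sminus S)$, $S'\models\mc{R}$, and $|S'|\le|S|$, this contradicts $S\in\safeSepsImp{sA}{t}(G,\mc{R})$. Without this reachability argument (or an equivalent), your proof is incomplete at its decisive step. (A small slip elsewhere: the new edges do touch $t$; what you need, and what is true, is only that they have no endpoint in $C$ or in $C_{sA}(G\sminus S)$.)
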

\begin{proof}
	We first prove the inclusion for CP separators. Let $S\in \safeSeps{s}{t}(G,A)$ where $	D\cap C_s(G\sminus S)=\emptyset$.
	Set $C\eqdef C_s(G\sminus S)$.
	Since $S\in \safeSeps{s}{t}(G,A)$, we have $S\in \minlsepst{G}$ and $A\subseteq C$.
	We claim first that
	\[
	C_s(H\sminus S)=C_s(G\sminus S)=C.
	\]
	Indeed, the only edges in $\edges(H){\setminus}\edges(G)$ are edges
	between $t$ and vertices of $D$. Since $t\notin C$ and
	$D\cap C=\emptyset$, no added edge has an endpoint in $C$. Hence the
	$s$-component is unchanged after deleting $S$. That is, $C_s(H\sminus S)=C_s(G\sminus S)=C$.
	In particular, $S$ is an $s,t$-separator of $H$ and $A\subseteq C_s(H\sminus S)$.

	It remains to verify minimality in $H$. Since $S\in\minlsepst{G}$,
	Lemma~\ref{lem:simpAB} implies that every vertex $v\in S$ has a
	neighbor in $C_s(G\sminus S)$ and a neighbor in $C_t(G\sminus S)$.
	The first component is unchanged in $H\sminus S$, and the
	$t$-component of $H\sminus S$ contains $C_t(G\sminus S)$. Therefore
	every $v\in S$ has a neighbor in both $C_s(H\sminus S)$ and
	$C_t(H\sminus S)$. By Lemma~\ref{lem:fullComponents}, applied to the
	pair $s,t$ in $H$, we get $S\in\minlsepst{H}$.
	Together with $A\subseteq C_s(H\sminus S)$, this gives $S\in\safeSeps{s}{t}(H,A)$, proving the first inclusion.

	We now prove the inclusion for important CP separators. Let $S\in \safeSepsImp{s}{t}(G,A)$ where $D\cap C_s(G\sminus S)=\emptyset$.
	By the first part, $S\in \safeSeps{s}{t}(H,A)$,
	and, as above, $C_s(H\sminus S)=C_s(G\sminus S)$.
	Set $C\eqdef C_s(G\sminus S)$.
	Suppose, toward a contradiction, that $	S\notin \safeSepsImp{s}{t}(H,A)$.
	Then there exists $S'\in\safeSeps{s}{t}(H,A)$ such that
	\[
	C_s(H\sminus S')\subsetneq C_s(H\sminus S)
	\qquad\text{and}\qquad
	|S'|\le |S|.
	\]
	Let $C'\eqdef C_s(H\sminus S')$.
	Then $C'\subsetneq C$.

	We first observe that $	C'=C_s(G\sminus S')$.
	Indeed, since $S'$ is an $s,t$-separator in $H$, the vertex $t$ is not
	in $C'$. If some $d\in D$ belonged to $C'$, then the added edge
	$(t,d)$ would put $t$ in the same component as $s$ in $H\sminus S'$,
	a contradiction. Hence $D\cap C'=\emptyset$.
	Since all edges of $\edges(H){\setminus}\edges(G)$ are incident with
	$t$ and a vertex of $D$, no added edge is incident with $C'$. Therefore
	the $s$-component is the same in $G\sminus S'$ and in $H\sminus S'$,
	as claimed. That is, $C_s(G\sminus S')=C_s(H\sminus S')$.
	
	Now define $W\eqdef N_G(C')$.
	Since $C'=C_s(G\sminus S')$, every neighbor of $C'$ in $G$ belongs to
	$S'$, and hence $W\subseteq S'$.
	Therefore
	\[
	|W|\le |S'|\le |S|.
	\]
	Moreover, $W$ separates $s$ from $t$ in $G$, and $C_s(G\sminus W)=C'$.
	Since $S'\in\safeSeps{s}{t}(H,A)$, we have
	\[
	A\subseteq C_s(H\sminus S')=C'=C_s(G\sminus W).
	\]
	
	It remains to prove that $W$ is a minimal $s,t$-separator in $G$; $W\in \minlsepst{G}$.
	By definition of $W=N_G(C')$, every vertex $w\in W$ has a neighbor in
	$C'$. We show that every $w\in W$ also has a path to $t$ in
	$G\sminus (W{\setminus}\set{w})$.
	
	Fix $w\in W$. Since $W\subseteq S'$ and
	$S'\in\minlsepst{H}$, Lemma~\ref{lem:simpAB} gives a $w,t$-path
	$P$ in $H\sminus (S'{\setminus}\set{w})$. If $P$ uses no edge from
	$\edges(H){\setminus}\edges(G)$, then $P$ is already a $w,t$-path in
	$G\sminus (W{\setminus}\set{w})$, because $W\subseteq S'$.
	
	Assume therefore that $P$ uses an added edge. Since all added edges
	are incident with $t$, we may take $P$ so that its last edge is
	$(d,t)$ for some $d\in D$, and the subpath $P[w,d]$ is contained in
	$G\sminus (S'{\setminus}\set{w})$.
	
	Since $d\in D$ and $D\cap C=\emptyset$, we have $d\notin C$. On the
	other hand, $w$ has a neighbor in $C'\subseteq C$. Thus, along the
	path obtained by starting at that neighbor of $w$, going to $w$, and
	then following $P[w,d]$, one goes from $C$ to a vertex outside $C$.
	Because $S=N_G(C)$, this path must meet $S$. Let $z$ be the first
	vertex of $S$ encountered on this path when moving from $w$ toward
	$d$, allowing the possibility $z=w$.
	The subpath from $w$ to $z$ uses only edges of $G$ and avoids
	$S'{\setminus}\set{w}$. Since $W\subseteq S'$, this subpath also avoids
	$W{\setminus}\set{w}$.
	
	Because $S\in\minlsepst{G}$, Lemma~\ref{lem:simpAB} gives a
	$z,t$-path in $G\sminus (S{\setminus}\set{z})$. This path is contained
	in $C_t(G\sminus S)\cup\set{z}$. Since $C'\subseteq C=C_s(G\sminus S)$,
	every vertex of $W=N_G(C')$ lies in $C\cup S$. Thus the internal
	vertices of this $z,t$-path avoid $W$. Moreover, if $z\neq w$, then
	$z$ was reached along a path avoiding $S'{\setminus}\set{w}$, so
	$z\notin S'$; as $W\subseteq S'$, this gives $z\notin W$. If
	$z=w$, then the path is allowed to start at $w$. Hence the $z,t$-path
	avoids $W{\setminus}\set{w}$.
	Combining the $w,z$ subpath with the $z,t$ path gives a $w,t$-path in
	$G\sminus (W{\setminus}\set{w})$.
	
	We have shown that every $w\in W$ has a neighbor in
	$C_s(G\sminus W)=C'$ and a path to $t$ in
	$G\sminus (W{\setminus}\set{w})$. Equivalently, every $w\in W$ has a
	neighbor in the $s$-component and a neighbor in the $t$-component of
	$G\sminus W$. By Lemma~\ref{lem:fullComponents}, $W\in\minlsepst{G}$.
	Since also $A\subseteq C_s(G\sminus W)$, we have $	W\in\safeSeps{s}{t}(G,A)$.
	
	Finally,
	\[
	C_s(G\sminus W)=C'\subsetneq C=C_s(G\sminus S)\qquad\text{and}\qquad	|W|\le |S|.
	\]
	This contradicts the assumption that
	$S\in\safeSepsImp{s}{t}(G,A)$.
	Therefore no such $S'$ exists, and hence $	S\in\safeSepsImp{s}{t}(H,A)$.
	This proves the second inclusion and completes the proof.
\end{proof}

\eat{
\subsection{Proofs from Section~\ref{sec:connectivityPreservingImportantSeparators}}
\label{sec:connectivityConstraintsProofs}
\begin{definition}
	\label{def:modelsR}
	We say $S\subseteq \nodes(G)$ \e{models} $\mc{R}$, in notation $S\models \mc{R}$, if $\mc{R}(G\sminus S)$ evaluates to \textsc{true}.
\end{definition}

\begin{lemma}
	\label{lem:polySizeRep}
	\polySizeRep
\end{lemma}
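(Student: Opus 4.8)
The plan is to expand the macros appearing in~\eqref{eq:mcR} into a Boolean circuit over the atomic predicates $\Same{u,v}{\cdot}$ and to bound the total number of gates and leaves, charging $O(1)$ machine words (equivalently $O(\log n)$ bits) to each leaf, since a leaf $\Same{u,v}{\cdot}$ only needs to record the pair of vertex identifiers $u,v$. The formula $\mc{R}_{A,B,\P,Q}(G)$ is a conjunction of two blocks, which I would size separately.

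\textbf{First block.} For $\bigwedge_{i=1}^M \AllSame{A_i}{G}$, recall (from the remark following Definition~\ref{def:AllSame}) that each $\AllSame{A_i}{G}$ expands to a circuit of size $|A_i|-1$. The key observation is that $\P=\set{A_1,\dots,A_M}$ is a set of \emph{pairwise disjoint} subsets of $A\subseteq\nodes(G)$, so $\sum_{i=1}^M|A_i|\le|A|\le n$; in particular $M\le n$ as well. Hence the first block, including its $M-1$ outer $\wedge$-gates, has total size $\sum_{i=1}^M O(|A_i|) = O(n)$.

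\textbf{Second block.} For $\bigwedge_{q\in Q}\ReachSome{q,B}{G}$, recall (from the remark following Definition~\ref{def:ReachSome}) that each conjunct expands to a circuit of size $|B|\le n$. There are $|Q|\le n$ conjuncts, so this block, including its $|Q|-1$ outer $\wedge$-gates, has size $O(n)\cdot O(n)=O(n^2)$.

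Combining the two blocks with one additional top-level $\wedge$-gate gives total size $O(n)+O(n^2)=O(n^2)$, as claimed. The argument is essentially a counting exercise; the only subtlety worth flagging is that one must use the pairwise-disjointness of the $A_i$ to keep the first block linear rather than settling for the trivial $M\cdot\max_i|A_i|$ bound—though, since the $\ReachSome$ block already costs $\Theta(n^2)$ in the worst case (when $|Q|,|B|=\Theta(n)$), this refinement does not affect the stated $O(n^2)$ bound and there is no genuine obstacle here.
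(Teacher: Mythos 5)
Your proposal is correct and follows essentially the same argument as the paper: the paper likewise counts at most $\sum_{i=1}^M(|A_i|-1)\leq |A|\leq n$ atoms for the $\AllSame{A_i}{G}$ block (using pairwise disjointness) and $|Q|\cdot|B|\leq n^2$ atoms for the $\ReachSome{q,B}{G}$ block, giving $O(n^2)$ total. The only cosmetic difference is your explicit accounting of the connective gates and per-leaf storage, which does not change the bound.
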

\begin{proof}
	Each $\AllSame{A_i}{G}$ contributes $|A_i|-1$ $\Same{\cdot,\cdot}{G}$ atoms, for a total of $\sum_{i=1}^M(|A_i|-1)\leq |A|\leq n$ atoms. 
	For every $q\in Q$, the $\ReachSome{q,B}{G}$ constraint is represented with an  \textsc{or} gate with $|B|\leq n$ atoms (i.e., of size $\leq n$). Taking the conjunct of $|Q|$ such circuits, results in a circuit of size $|B|\cdot |Q| \leq n^2$.
	Overall, the connectivity constraints represented in~\eqref{eq:mcR} can be represented in the total space $O(n^2)$. This proves item 1 of the Lemma.
	
	For item 2, Run BFS or DFS to compute the connected components $\cc(G\sminus S)$. This takes time $O(n+m)$ and space $O(n)$. For every $i\in \set{1,\dots,M}$, let $a_i\in A_i$ be an arbitrary, but fixed, representative of $A_i$. Check if $A_i \subseteq C_{a_i}(G\sminus S)$. This takes time $O(|A|)=O(n)$.
	For the reachability conditions, compute $C_B(G\sminus S)\eqdef \mediumbigcup_{b\in B}C_b(G\sminus S)$ in time $O(n)$. Test that $Q\subseteq C_B(G\sminus S)$. This again can be done in time $O(n)$.
\end{proof}

\eat{
\begin{replemma}{\ref{lem:polySizeVerification}}
	\polySizeVerification
\end{replemma}
\begin{proof}
	Run BFS or DFS to compute the connected components $\cc(G\sminus S)$. This takes time $O(n+m)$ and space $O(n)$. For every $i\in \set{1,\dots,M}$, let $a_i\in A_i$ be an arbitrary, but fixed, representative of $A_i$. Check if $A_i \subseteq C_{a_i}(G\sminus S)$. This takes time $O(|A|)=O(n)$.
	For the reachability conditions, compute $C_B(G\sminus S)\eqdef \mediumbigcup_{b\in B}C_b(G\sminus S)$ in time $O(n)$. Test that $Q\subseteq C_B(G\sminus S)$. This again can be done in time $O(n)$.
\end{proof}
}
\def\monotonic{
	Let $A,B,Q\subseteq \nodes(G)$, and $\P\eqdef \set{A_1,\dots,A_M}$ pairwise disjoint subsets of $A$. Let $S\subseteq \nodes(G)$. (1) If $S\models \mc{R}$ then $S'\models \mc{R}$ for every $S'\subseteq \nodes(G)$ where $C_{AB}(G\sminus S) \subseteq C_{AB}(G\sminus S')$. (2) If $S\not\models \mc{R}$ then $S' \not\models \mc{R}$ for every $S'\subseteq \nodes(G)$ where $C_{AB}(G\sminus S') \subseteq C_{AB}(G\sminus S)$.
}
Lemma~\ref{lem:monotonic} establishes that the set of connectivity constraints $\mc{R}_{A,B,\P,Q}$ (see~\eqref{eq:mcR}) considered in this paper are monotonic.
\begin{lemma}
	\label{lem:monotonic}
	\monotonic
\end{lemma}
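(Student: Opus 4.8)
The plan is to derive part~(2) from part~(1), and to prove part~(1) by unfolding the structure of $\mc{R}_{A,B,\P,Q}$ in~\eqref{eq:mcR} conjunct by conjunct. For part~(2): assume $S\not\models\mc{R}$ and let $S'\subseteq\nodes(G)$ satisfy $C_{AB}(G\sminus S')\subseteq C_{AB}(G\sminus S)$; if we had $S'\models\mc{R}$, then applying part~(1) with the roles of $S$ and $S'$ exchanged (its hypothesis $C_{AB}(G\sminus S')\subseteq C_{AB}(G\sminus S)$ is exactly what we are given) would force $S\models\mc{R}$, a contradiction. Hence it suffices to prove part~(1).

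The single load-bearing step for part~(1) is a \emph{descent from unions of components to individual components}: if $C_{AB}(G\sminus S)\subseteq C_{AB}(G\sminus S')$, then every connected component $C\in\cc(G\sminus S)$ with $C\cap(A\cup B)\neq\emptyset$ is contained in one connected component of $G\sminus S'$. Indeed, for such a $C$ and any $x'\in C\cap(A\cup B)$ we have $C=C_{x'}(G\sminus S)\subseteq C_{AB}(G\sminus S)\subseteq C_{AB}(G\sminus S')$; since by definition $C_{AB}(G\sminus S')$ is the union of the connected components of $G\sminus S'$ that meet $A\cup B$, it is disjoint from $S'$, so $C\cap S'=\emptyset$ and $G[C]$ is a connected subgraph of $G\sminus S'$. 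Therefore $C$ lies inside a single component $C'\in\cc(G\sminus S')$.

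Granting this, I would verify that $\mc{R}(G\sminus S')$ evaluates to \textsc{true}, using $S\models\mc{R}$. For each $A_i\in\P$ (the case $A_i=\emptyset$ being vacuous): $\AllSame{A_i}{G\sminus S}$ holds, so there is $C_i\in\cc(G\sminus S)$ with $A_i\subseteq C_i$; as $A_i\subseteq A$, the component $C_i$ meets $A\cup B$, so by the descent step $C_i\subseteq C_i'$ for some $C_i'\in\cc(G\sminus S')$, whence $A_i\subseteq C_i'$ and $\AllSame{A_i}{G\sminus S'}$ holds. For each $q\in Q$: $\ReachSome{q,B}{G\sminus S}$ holds, i.e. $q$ shares a component $C$ of $G\sminus S$ with some $b\in B$; then $C$ meets $B\subseteq A\cup B$, so $C\subseteq C'$ for some $C'\in\cc(G\sminus S')$, and since $q,b\in C'$ the predicate $\ReachSome{q,B}{G\sminus S'}$ holds. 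Hence every conjunct of $\mc{R}_{A,B,\P,Q}(G\sminus S')$ is \textsc{true}, so $S'\models\mc{R}$, completing part~(1).

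I do not expect a genuine obstacle: once the descent step is in place, everything else is a mechanical unfolding of Definitions~\ref{def:AllSame} and~\ref{def:ReachSome}. The only subtlety to get right is \emph{not} to conflate the hypothesis "a union of components of $G\sminus S$ is contained in a union of components of $G\sminus S'$" with the (generally false) statement that individual components of $G\sminus S$ coincide with components of $G\sminus S'$; the correct intermediate fact is merely the one-sided containment $C\subseteq C'$, together with the observation that $C_{AB}(G\sminus S')$ avoids $S'$, which is what makes that containment valid.
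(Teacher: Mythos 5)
Your proposal is correct and follows essentially the same route as the paper's own proof: both reduce part (2) to part (1) by contradiction, and both prove part (1) by observing that any component of $G\sminus S$ containing an $A_i$ (or containing $q$ together with a vertex of $B$) lies inside $C_{AB}(G\sminus S)\subseteq C_{AB}(G\sminus S')$, is therefore disjoint from $S'$, and hence stays connected inside a single component of $G\sminus S'$, making every conjunct of $\mc{R}(G\sminus S')$ evaluate to \textsc{true}. Your explicit isolation of the ``descent to individual components'' step is just a cleaner packaging of the same argument the paper carries out separately for the $\AllSame{A_i}{\cdot}$ and $\ReachSome{q,B}{\cdot}$ conjuncts.
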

\begin{proof}
	Suppose that $S\models \mc{R}$. By definition, $\mc{R}(G\sminus S)$ evaluates to $\textsc{true}$. 
	We first show that for every $i\in \set{1,\dots,M}$, $A_i$ is connected in $G\sminus S'$. Since $S\models \mc{R}$, then $A_i$ is connected in $G\sminus S$, which means that there exists a $C_i \in \cc(G\sminus S)$ such that $A_i \subseteq C_i$. By definition, $C_i \subseteq C_{AB}(G\sminus S)\subseteq C_{AB}(G\sminus S')$. Since $S'\cap C_{AB}(G\sminus S')=\emptyset$, then $C_i\cap S'=\emptyset$. Therefore, $C_i$ is connected in $G\sminus S'$. Hence, the conjunction $\bigwedge_{i=1}^M\AllSame{A_i}{G\sminus S'}$ evaluates to $\textsc{true}$.
	
	Let $q\in Q$. If $\ReachSome{q,B}{G\sminus S}=\textsc{true}$, then there exists a $b\in B$ and a connected component $C_b(G\sminus S)\in \cc(G\sminus S)$ such that $q \in C_b(G\sminus S)$. By definition, $C_b(G\sminus S)\subseteq C_{AB}(G\sminus S)\subseteq C_{AB}(G\sminus S')$. Therefore, $S'\cap C_b(G\sminus S)\subseteq S'\cap C_{AB}(G\sminus S')=\emptyset$. Therefore, $S'\cap C_b(G\sminus S)=\emptyset$, which means that $C_b(G\sminus S)$ is connected in $G\sminus S'$; or that $C_b(G\sminus S)\subseteq C_b(G\sminus S')$, and hence $q\in C_b(G\sminus S')$, or  $\ReachSome{q,B}{G\sminus S'}=\textsc{true}$. Therefore, $$\bigwedge_{q\in Q}\ReachSome{q,B}{G\sminus S'}=\textsc{true}.$$
	Overall, we get that $\mc{R}(G\sminus S')$ evaluates to \textsc{true}, and hence $S'\models \mc{R}$.
	
	Suppose, by way of contradiction that $S\not\models \mc{R}$, and there exists a subset $S' \subseteq \nodes(G)$ where $C_{AB}(G\sminus S')\subseteq C_{AB}(G\sminus S)$ and $S'\models \mc{R}$. By the previous item, we have that $S\models \mc{R}$; a contradiction.
\end{proof}
}

	\section{The Foundations of Connectivity-Preserving Important Minimum Separators}
\label{sec:mainThmBoundedCardinalityAppendix}
Let $A\subseteq \nodes(G){\setminus}\set{s,t}$.
Recall that $\closestMinSep{sAQ,t}(G)$ is the unique important, minimum $sA,t$-separator (see Lemma~\ref{lem:uniqueMinImportantSep}). In this section, we present our main technical result, which forms the foundation for the enumeration algorithm.
\begin{reptheorem}{\ref{thm:mainThmBoundedCardinality}}
	\mainThmBoundedCardinality
\end{reptheorem}	
Theorem~\ref{thm:mainThmBoundedCardinality} shows that if $\kappa_{s,t}(G) = f_{sA,t}(G,\mc{R})$, then there are at most $2^{\kappa_{s,t}(G)}$ CP, important, {\bf minimum} $sA,t$-separators, and they can be computed in time $O(2^{\kappa_{s,t}(G)} \cdot n \cdot T(n,m))$. This result extends the classical framework for important separators to the setting where the separator must preserve the connectivity constraints; namely that the terminals $A$ remain connected after the separation. While uniqueness no longer holds in this setting, Theorem~\ref{thm:mainThmBoundedCardinality} shows that the number of relevant separators is singly-exponential in $\kappa_{s,t}(G)$, and that this set can be computed efficiently.

Theorem~\ref{thm:minsepsAt} below plays a central role in the proof of Theorem~\ref{thm:mainThmBoundedCardinality}. It shows that if $D\subseteq \nodes(G)$ is contained in the union of all minimum $s,t$-separators (i.e., $D \subseteq \minstVertices{s,t}(G)$, see~\eqref{eq:minstVertices}), then every minimum $sD,t$-separator necessarily preserves the connectivity of $sD$; that is, $D \subseteq C_s(G \sminus T)$ for all $T \in \minsep_{sD,t}(G)$. This guarantee is essential for the construction of connectivity-preserving separators. Beyond its role in our analysis, Theorem~\ref{thm:minsepsAt} may be of independent interest, as it uncovers a previously unexplored property of $sD,t$-separators when $D \subseteq \minstVertices{s,t}(G)$. 
Section~\ref{sec:thm_minsepsAt} of this Appendix is devoted to proving Theorem~\ref{thm:minsepsAt}, and Section~\ref{sec:thm_mainThmBoundedCardinalityProof} applies it to prove Theorem~\ref{thm:mainThmBoundedCardinality}.

\subsection{Proof of Theorem~\ref{thm:minsepsAt} and Corollary~\ref{corr:minsepsAtSets}}
\label{sec:thm_minsepsAt}
\begin{reptheorem}{\ref{thm:minsepsAt}}
	\minsepsAtThm
\end{reptheorem}	
The rest of this section is devoted to proving Theorem~\ref{thm:minsepsAt} that is used for proving Theorem~\ref{thm:mainThmBoundedCardinality}.
To that end, we require the following.
\begin{citedtheorem}{Menger~\cite{DBLP:books/daglib/0030488}}
	\label{thm:Menger}
	Let $G$ be an undirected graph and $s,t \in \nodes(G)$. The minimum number of vertices separating $s$ from $t$ in $G$ (i.e., $\kappa_{s,t}(G)$) is equal to the maximum number of internally vertex-disjoint $s,t$-paths in $G$.
\end{citedtheorem}

\begin{citedlemma}{Submodularity~\cite{DBLP:books/sp/CyganFKLMPPS15}}
	\label{lem:submodularity}
	For any $X,Y\subseteq \nodes(G)$:
	\begin{align*}
		|N_G(X)|+|N_G(Y)|\geq |N_G(X\cup Y)|+|N_G(X\cap Y)|.
	\end{align*}
\end{citedlemma}
Let $S,T\in \minlsepst{G}$. From Lemma~\ref{lem:fullComponents}, we have that $S = N_G(C_s(G\sminus S))$, and $T = N_G(C_s(G\sminus T ))$.
Consequently, we will usually apply Lemma~\ref{lem:submodularity} as follows.
\begin{corollary}
	\label{corr:submodularity}
	Let $S,T\in \minlsepst{G}$. Then:
	\begin{align*}
		|S|+|T|\geq |N_G(C_s(G\sminus S)\cap C_s(G\sminus T))|+|N_G(C_s(G\sminus S)\cup C_s(G\sminus T))|.
	\end{align*}
\end{corollary}

\def\vertexIncludeLem{
	Let $v\in \nodes(G){\setminus}\set{s,t}$. There exists a minimum $s,t$-separator $S\in \minsepst{G}$ that contains $v$ if and only if $\kappa_{s,t}(G\sminus v)=\kappa_{s,t}(G)-1$.
}
\begin{lemma}
	\label{lem:vertexInclude}
	\vertexIncludeLem
\end{lemma}
\begin{proof}
	Let $S\in \minsep_{s,t}(G)$ where $v\in S$. Then $S{\setminus}\set{v}$ is an $s,t$-separator of $G\sminus v$. Therefore, $\kappa_{s,t}(G\sminus v)\leq |S{\setminus}\set{v}|=\kappa_{s,t}(G)-1$.
	Suppose, by way of contradiction, that  $\kappa_{s,t}(G\sminus v)<\kappa_{s,t}(G)-1$, and let $T\in \minsep_{s,t}(G\sminus v)$. Then $T\cup \set{v}$ is an $s,t$-separator of $G$ where $|T\cup \set{v}|=\kappa_{s,t}(G\sminus v)+1<\kappa_{s,t}(G)$; a contradiction.
	
	For the other direction, suppose that $\kappa_{s,t}(G\sminus v)=\kappa_{s,t}(G)-1$, and let $T\in \minsep_{s,t}(G\sminus v)$. $T\cup \set{v}$ is an $s,t$-separator of $G$ where $|T\cup \set{v}|=\kappa_{s,t}(G\sminus v)+1=\kappa_{s,t}(G)$. By definition, $T\cup \set{v}\in \minsep_{s,t}(G)$.
\end{proof}

\def\technicalfortworesults{
	Let $D\subseteq \nodes(G)$, and let $T\in \minsep_{sD,t}(G)$. Let $T'\eqdef N_G(C_s(G\sminus T))$, and $G'\eqdef G[C_s(G\sminus T)\cup T' \cup C_t(G\sminus T)]$. Then the following holds:
	\begin{enumerate}[noitemsep, topsep=0pt, partopsep=0pt, parsep=0pt]
		\item For all $S\in \minlsepst{G'}$ such that $S\subseteq T'\cup C_s(G'\sminus T')$ it holds that $S\in \minlsepst{G}$.
		\item For all $S\in \minlsepst{G'}$ such that $S\subseteq T'\cup C_t(G'\sminus T')$, it holds that $|S|\geq  |T'|$.
		\item  $\kappa_{s,t}(G')=\kappa_{s,t}(G)$.
	\end{enumerate}
}
\begin{lemma}
	\label{lem:technicalfortworesults}
	\technicalfortworesults
\end{lemma}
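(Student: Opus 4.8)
The plan is to analyze the structure of $G'$, which is obtained from $G$ by discarding $C_t(G\sminus T)$ and replacing it... wait, no — $G'$ \emph{keeps} $C_t(G\sminus T)$ but discards whatever lies on the $t$-side beyond $T$ that is not reachable, i.e. $G' = G[C_s(G\sminus T)\cup T'\cup C_t(G\sminus T)]$ where $T' = N_G(C_s(G\sminus T)) \subseteq T$. The key observations are: (i) $T$ is a minimum $sD,t$-separator, so by Menger's theorem (Theorem~\ref{thm:Menger}) there are $\kappa_{sD,t}(G)$ internally vertex-disjoint $sD,t$-paths, each crossing $T$ exactly once; (ii) $T' = N_G(C_s(G\sminus T))$ is itself an $s,t$-separator of $G$ (and of $G'$), and $C_s(G\sminus T') \supseteq C_s(G\sminus T)$ by Lemma~\ref{lem:inclusionCsCt}, with $T'$ a minimal $s,t$-separator since $C_s$ is a full component (Lemma~\ref{lem:fullComponents}); (iii) crucially, no vertex of $C_t(G\sminus T)$ has a neighbour in $C_s(G\sminus T)$, so in $G'$ every path from $s$ to $C_t(G\sminus T)$ passes through $T'$, and moreover $G'[T'\cup C_t(G'\sminus T')]$ contains, by Lemma~\ref{lem:fullComponents}, a $v,t$-path through $C_t$ for each $v \in T'$, i.e. $C_t(G\sminus T')$ is a full $t$-component of $T'$ in $G'$ too.

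For item~1: take $S\in\minlsepst{G'}$ with $S\subseteq T'\cup C_s(G'\sminus T')$. I would argue that $C_t(G'\sminus S) \supseteq C_t(G'\sminus T')$ (since $S$ lies on the $s$-side of $T'$, nothing in $C_t(G\sminus T')$ is removed) and $C_t(G'\sminus T') = C_t(G\sminus T)$, which equals the full $t$-component. Since $S$ separates $s$ from $t$ in $G'$ and $S\subseteq C_s(G\sminus T)\cup T'$, the vertices of $S$ together with $C_s(G'\sminus S)$ all lie inside $C_s(G\sminus T)\cup T' = G[C_s(G\sminus T)\cup T']$, on which $G$ and $G'$ agree; and every vertex of $V(G)\setminus (C_s(G\sminus T)\cup T')$ lies in $C_t(G\sminus T')$ which is connected to $t$ via a path avoiding $C_s(G\sminus T)$, hence avoiding $S$. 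Therefore $S$ separates $s$ from $t$ in $G$ as well. Minimality transfers because $S = N_G(C_s(G'\sminus S))$ (full component, Lemma~\ref{lem:simpAB}/\ref{lem:fullComponents}) and $C_s(G'\sminus S) = C_s(G\sminus S)$, so $S = N_G(C_s(G\sminus S)) = N_G(C_t(G\sminus S))$ and hence $S\in\minlsepst{G}$.

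For item~2: take $S\in\minlsepst{G'}$ with $S\subseteq T'\cup C_t(G'\sminus T')$. Here $S$ lies on the $t$-side of $T'$, so $C_s(G'\sminus S)\supseteq C_s(G'\sminus T')\supseteq C_s(G\sminus T)$, and in particular $S\cap C_s(G\sminus T)=\emptyset$, so $S$ is an $s,t$-separator of $G$ as well (no vertex of $C_s(G\sminus T)$ is removed, and $C_s(G\sminus T)$ is connected in $G'\sminus S$ hence doesn't reach $t$). But $T$ is a \emph{minimum} $sD,t$-separator and $S$ is an $s,t$-separator with $C_s(G\sminus T)\subseteq C_s(G\sminus S)$, so in fact $D\cap(C_s(G\sminus S)\cup S) \supseteq D\cap(C_s(G\sminus T)\cup T)\supseteq \emptyset$; I'd want $S$ to actually be an $sD,t$-separator so that $|S|\geq |T|\geq |T'|$. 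Since $D$ might intersect $T$, I instead compare directly against $T'$: $S$ and $T'$ both separate $s$ from $t$ in $G'$, with $T'$ a minimum such separator \emph{for the configuration where $C_s(G\sminus T)$ is on the $s$-side}; more cleanly, $C_s(G'\sminus S)\supseteq C_s(G'\sminus T')$ with $T',S\in\minlsepst{G'}$, so by Corollary~\ref{corr:submodularity} / Lemma~\ref{lem:inclusionCsCt}, $|S|\geq |T'|$ provided $T'$ is the \emph{closest} minimum separator to $s$ — which would follow if $\kappa_{s,t}(G')=|T'|$. This is exactly where item~3 is needed, so the argument for item~2 should be reorganised to use item~3.

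For item~3: $\kappa_{s,t}(G)\leq\kappa_{s,t}(G')$ is clear since $G'$ is a subgraph of $G$ (fewer paths). For the reverse, I'd exhibit $\kappa_{s,t}(G)$ internally-disjoint $s,t$-paths in $G'$: since $T$ separates $sD$ from $t$ with $|T|$ minimum, and $T\supseteq T'$ with $T'=N_G(C_s(G\sminus T))$ a minimal $s,t$-separator, I need $|T'|\geq\kappa_{s,t}(G)$; this holds because $T'$ is an $s,t$-separator of $G$. Then I need $\kappa_{s,t}(G')\leq|T'|$: but $T'\subseteq V(G')$ is an $s,t$-separator of $G'$, so $\kappa_{s,t}(G')\leq|T'|$. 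Combined with $\kappa_{s,t}(G')\geq\kappa_{s,t}(G)$ and $|T'|\geq\kappa_{s,t}(G)$ — hmm, this gives $\kappa_{s,t}(G)\leq\kappa_{s,t}(G')\leq|T'|$, and I still need $|T'| = \kappa_{s,t}(G)$. That forces $T'$ to be a \emph{minimum} $s,t$-separator of $G$. This need not hold in general! So the right claim must be that $\kappa_{s,t}(G') = \kappa_{s,t}(G)$ where the inequality $\kappa_{s,t}(G')\le\kappa_{s,t}(G)$ comes not from $T'$ but from the following: build a flow in $G'$. Take a maximum set $\mathcal{P}$ of $\kappa_{s,t}(G)$ internally-disjoint $s,t$-paths in $G$; reroute the portion of each path inside $C_t(G\sminus T')$ so it stays in $G'$ — but those portions already lie in $C_t(G\sminus T)\subseteq V(G')$; the only vertices a path can use outside $V(G')$ are in $V(G)\setminus(C_s(G\sminus T)\cup T'\cup C_t(G\sminus T))$, which is the \emph{rest of} $C_t(G\sminus T')\setminus C_t(G\sminus T)$ together with components of $T$ other than $C_s, C_t$. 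I expect the honest proof reroutes each such excursion: a path leaving $C_s(G\sminus T)$ enters $T'$, may wander outside $V(G')$, then must re-enter at $t$ which is in $C_t(G\sminus T)$; shortcut it directly through $G'$ using disjointness of the wandering segments. \textbf{This rerouting/flow argument for item~3 is the main obstacle}, and I would prove it via Menger plus a careful surgery on a maximum $s,t$-flow, using that $T'$ is a separator and $C_s(G\sminus T), C_t(G\sminus T)$ are its full components in $G'$; items 1 and 2 then follow as above, using item~3 to pin down that $T'=\closestMinSep{s,t}(G')$ and invoking Lemma~\ref{lem:inclusionCsCt}.
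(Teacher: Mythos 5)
Your proposal does not close the lemma, and its two gaps reinforce each other. For item~2 you abandon the direct argument and propose to deduce $|S|\ge|T'|$ from item~3, via the claim that $T'$ becomes the minimum $s,t$-separator of $G'$ closest to $s$, i.e.\ that $|T'|=\kappa_{s,t}(G')$. That intermediate claim is false: item~3 only gives $\kappa_{s,t}(G')=\kappa_{s,t}(G)$, and $|T'|$ can strictly exceed this. Take $\nodes(G)=\set{s,u,a,b,t,d}$ with edges $\set{s,u},\set{u,a},\set{u,b},\set{a,t},\set{b,t},\set{d,a},\set{d,b}$ and $D=\set{d}$: the unique minimum $sD,t$-separator is $T=\set{a,b}$, so $T'=T$ and $G'=G\sminus d$, yet $\kappa_{s,t}(G')=\kappa_{s,t}(G)=1<2=|T'|$ (the closest minimum separator in $G'$ is $\set{u}$, not $T'$). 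Moreover, your plan for item~2 never uses the hypothesis that $T$ is a \emph{minimum} $sD,t$-separator, which is exactly what the paper's proof exploits: if $S\in\minlsepst{G'}$ with $S\subseteq T'\cup C_t(G'\sminus T')$ had $|S|<|T'|$, then, writing $T''=T\setminus T'$, the set $S\cup T''$ would separate $sD$ from $t$ in $G$ (every component of $G\sminus T$ meeting $sD$ reaches $t$ in $G\sminus T''$ only through $T'$, and in $G'$ every path from $T'$ to $t$ meets $S$), giving an $sD,t$-separator of size $<|T|$ and contradicting $T\in\minsep_{sD,t}(G)$. Your earlier remark that such an $S$ ``is an $s,t$-separator of $G$ as well'' is also unjustified: an $s,t$-path of $G$ may leave $T'$, detour through $T''$ and the other components of $G\sminus T$, and re-enter $C_t(G\sminus T)$ behind $S$ — this is precisely why $T''$ must be added.

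For item~3 you explicitly leave the main step (rerouting a maximum family of internally disjoint $s,t$-paths into $G'$) unproven, and the difficulty you do not address — routing the shortcut segments of several paths vertex-disjointly inside $T'\cup C_t(G\sminus T)$ — is where the minimality of $T$ must enter; it is essentially the content of item~2 in disguise. The paper avoids any flow surgery: it proves item~2 first, as above, and then obtains item~3 by uncrossing an arbitrary $S\in\minlsepst{G'}$ with $T'$ via submodularity (Corollary~\ref{corr:submodularity} and Lemma~\ref{lem:inclusionCsCt}): the two resulting minimal separators $S_1\subseteq T'\cup C_s(G'\sminus T')$ and $S_2\subseteq T'\cup C_t(G'\sminus T')$ satisfy $|S_1|\ge\kappa_{s,t}(G)$ by item~1 and $|S_2|\ge|T'|$ by item~2, whence $|S|\ge\kappa_{s,t}(G)$. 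So your dependency order (item~2 from item~3) is inverted relative to a workable proof, and with item~3 unproven the scheme is circular. Two smaller repairs: the ``clear'' inequality is $\kappa_{s,t}(G')\le\kappa_{s,t}(G)$ (a subgraph has no more disjoint paths), not the direction you state; and in item~1 the claim that every vertex of $\nodes(G)\setminus(C_s(G\sminus T)\cup T')$ lies in $C_t(G\sminus T')$ is false in general (vertices of $T''$ and components of $G\sminus T$ other than $C_s,C_t$ need not be there). The correct route, as in the paper, is to take the first vertex $a\in T'$ on a hypothetical $s,t$-path of $G$ avoiding $S$, note that its prefix lies in $C_s(G\sminus T)$, and use $T=N_G(C_t(G\sminus T))$ to step from $a$ directly into $C_t(G\sminus T)=C_t(G'\sminus T')$, contradicting that $S$ separates $s$ from $t$ in $G'$.
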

\begin{proof}
	We first prove item 1.
	Since $T\in \minlsep{sD,t}{G}$, then by Lemma~\ref{lem:simpAB}, it holds that $T=N_G(C_t(G\sminus T))$. We claim that $T'\in \minlsepst{G}$. Since $T'\eqdef N_G(C_s(G\sminus T))$, then clearly it is an $s,t$-separator in $G$. Since $T'\subseteq T=N_G(C_t(G\sminus T))$, then $T'{\setminus}\set{w}$ no longer separates $s$ from $t$ in $G$, for any $w\in T'$. Therefore, by definition, we have that $T'\in \minlsepst{G}$. By definition of $G'$ it holds that $C_s(G\sminus T)=C_s(G\sminus T')=C_s(G'\sminus T')$ and $C_t(G\sminus T)=C_t(G'\sminus T')$. Therefore, $T'=N_{G'}(C_s(G'\sminus T'))=N_{G'}(C_t(G'\sminus T'))$, and by lemma~\ref{lem:fullComponents}, we have that $T'\in \minlsepst{G'}$. Overall, we have that $T'\in \minlsepst{G}\cap \minlsepst{G'}$.
	Let $T''\eqdef T{\setminus} T'$. Observe that $T'$ is an $sD,t$-separator in $G\sminus T''$.
	
	Let $W\in \minlsepst{G'}$, such that $W\subseteq T'\cup C_s(G\sminus T)$. To prove item 1 of the lemma it is enough to show that $W$ is an $s,t$-separator in $G$ because $C_s(G'\sminus W)\subseteq C_s(G\sminus W)$, and $C_t(G'\sminus W)\subseteq C_t(G\sminus W)$. By lemma~\ref{lem:fullComponents}, $C_s(G\sminus W)$ and $C_t(G'\sminus W)$ are full components of $W$ in $G'$. Therefore, if $W$ is an $s,t$-separator in $G$ then $C_s(G\sminus W)$ and $C_t(G\sminus W)$ are full components of $W$ in $G$, and by lemma~\ref{lem:fullComponents}, it holds that $W\in \minlsepst{G}$. So, we prove that $W$ is an $s,t$-separator in $G$. Suppose it is not, and let $P$ be an $s,t$-path in $G\sminus W$. Since $T'\in \minlsepst{G}$, then $\nodes(P)\cap T'\neq \emptyset$. Let $a\in \nodes(P)\cap T'$ be the vertex in $ \nodes(P)\cap T'$ that is closest to $s$ on $P$. This means that the subpath of $P$ from $s$ to $a$ lies entirely in $C_s(G\sminus T)$ (i.e., and does not meet any vertex in $W$). On the other hand, $T'\subseteq T = N_G(C_t(G\sminus T))$, and hence $a\in N_G(C_t(G\sminus T))$. Since, by definition, $C_t(G\sminus T)=C_t(G'\sminus T')$, and $W \subseteq T'\cup C_s(G\sminus T)$ where $C_s(G\sminus T)=C_s(G'\sminus T')$, then $W$ does not separate $s$ from $t$ in $G'$, contradicting the fact that $W\in \minlsepst{G'}$. This proves that $\set{W\in \minlsepst{G'}: W\subseteq T'\cup C_s(G\sminus T)}\subseteq \minlsepst{G}$. In particular, this means that for every $W\in \minlsepst{G'}$, such that $W\subseteq T'\cup C_s(G\sminus T)$, it holds that $|W|\geq \kappa_{s,t}(G)$.
	
	\eat{
	We now prove item 2.
	For every $W\in \minlsepst{G'}$, such that $W\subseteq C_t(G\sminus T)\cup T'$, it holds that $|T'|\leq |W|$. To prove the claim suppose, by way of contradiction, that there exists a $W\in \minlsepst{G'}$ where $W\subseteq C_t(G\sminus T)\cup T'$ and $|W|<|T'|$. Let $C\in \cc(G\sminus T)$ be any connected component of $G\sminus T$ such that $C\cap sD\neq \emptyset$ (see Figure~\ref{fig:proofIllustration} for illustration).
	
	Consider any path from $C$ to $t$ in $G$. Since $T=T'\cup T''$ where, by definition $T'\cap T''=\emptyset$, then any path from $C$ to $t$ in $G$ that avoids $T''$ must pass through a vertex in $T'$. In other words, every path from $C$ to $t$ in $G\sminus T''$ passes through a vertex in $T'$, or that $T'$ is an $s,tD$-separator in $G\sminus T''$.
	Since $T'\in \minlsepst{G'}$ and $W\in \minlsepst{G'}$ where $W\subseteq T'\cup C_t(G\sminus T)$, then every path from $T'$ to $t$ in $G\sminus  T''$ passes through a vertex in $W$. In particular, every path from $C$ to $t$ in $G\sminus T''$ passes through a vertex in $W$. Hence, we have that every path from $C\in \cc(G\sminus T)$ (where $C\cap sD\neq \emptyset$) to $t$, passes through a vertex in $W\cup T''$. In other words, $W\cup T''$ separates $C$ from $t$ in $G$ for every $C\in \cc(G\sminus T)$ where $C\cap sD\neq \emptyset$. That is, $W\cup T''$ is an $sD,t$-separator in $G$. On the other hand, $|W|<|T'|$, and therefore, $|W\cup T''|\leq |W|+|T''|<|T'|+|T''|=|T|$. But then, $T\notin \minsep_{sD,t}(G)$, and we arrive at a contradiction. }
	
	We now prove item 2. Let $T' \eqdef N_G(C_s(G\sminus T))$, and $T'' \eqdef T{\setminus} T'$.
	Let $W\in \minlsepst{G'}$ such that $W\subseteq T'\cup C_t(G'\sminus T')$.
	Since, by the definition of $G'$, we have $C_t(G'\sminus T')=C_t(G\sminus T)$,
	we may write $W\subseteq T'\cup C_t(G\sminus T)$.
	We prove that $|W|\ge |T'|$.
	
	Suppose, towards a contradiction, that $|W|<|T'|$. We first show that
	$W\cup T''$ is an $sD,t$-separator in $G$.
		
	Let $C$ be a connected component of $G\sminus T$ such that $C\cap sD\neq \emptyset$.
	We claim that every path from $C$ to $t$ in $G$ meets $W\cup T''$.
	Assume otherwise, and let $P$ be a path from a vertex of $C$ to $t$ in
	$G\sminus (W\cup T'')$.
		
	Since $T$ is an $sD,t$-separator, every path from $C$ to $t$ must meet
	$T$. Since $P$ avoids $T''$, it follows that $P$ meets $T'$. Let $a$ be
	the last vertex of $P$ that belongs to $T'$, when traversing $P$ from
	$C$ towards $t$. Since $P$ avoids $W$, we have $a\notin W$.
		
	Consider the suffix $P[a,t]$ of $P$ from $a$ to $t$. By the choice of
	$a$, this suffix contains no vertex of $T'$ other than $a$, and since
	$P$ avoids $T''$, it contains no vertex of $T''$. Hence the internal
	vertices of $P[a,t]$ avoid $T=T'\cup T''$.
		
	We claim that all internal vertices of $P[a,t]$ lie in $C_t(G\sminus T)$.
	Indeed, after leaving $a$, the path enters some connected component of
	$G\sminus T$. If this component were not $C_t(G\sminus T)$, then in order to reach
	$t$, the path would have to leave it through a vertex of $T$. But the
	suffix $P[a,t]$ contains no vertex of $T$ after $a$, a contradiction.
		Therefore, $	\nodes(P[a,t])\subseteq \set{a}\cup C_t(G\sminus T)$.
	Since $a\in T'$ and $C_t(G\sminus T)=C_t(G'\sminus T')$, this suffix is an $a,t$-path
	in $G'$. Moreover, it avoids $W$.
		
		On the other hand, $a\in T'=N_G(C_s(G\sminus T))$, so $a$ has a neighbor
		$b\in C_s(G\sminus T)$. Since $	W\subseteq T'\cup C_t(G\sminus T)$,
		we have $W\cap C_s(G\sminus T)=\emptyset$.
		Thus $b$ is connected to $s$ inside $C_s(G\sminus T)$ in $G'\sminus W$, and together
		with the edge $ba$ this gives an $s,a$-path in $G'\sminus W$. Combining this
		$s,a$-path with the $a,t$-path $P[a,t]$ yields an $s,t$-path in
		$G'\sminus W$, contradicting the assumption that $W$ is an $s,t$-separator in
		$G'$.
		
		Therefore every path from every component $C$ of $G\sminus T$ with
		$C\cap sD\neq\emptyset$ to $t$ meets $W\cup T''$. Equivalently,
		$W\cup T''$ is an $sD,t$-separator in $G$.
		
		Finally, $W\subseteq V(G')$, while $T''\cap V(G')=\emptyset$ by the
		definition of $G'$. Hence $W\cap T''=\emptyset$, and therefore
		$$
		|W\cup T''|
		=
		|W|+|T''|
		<
		|T'|+|T''|
		=
		|T|.
		$$
		This contradicts the assumption that $T\in \minsep_{sD,t}(G)$.		
		Thus no such $W$ with $|W|<|T'|$ exists, and consequently $	|W|\ge |T'|$.

	We now prove item 3. Since every $s,t$-path in $G'$ is contained in $G$, then clearly $\kappa_{s,t}(G')\leq\kappa_{s,t}(G)$. We prove that $\kappa_{s,t}(G')\geq\kappa_{s,t}(G)$.  
	From item 2 of the lemma,  we have that for every $W\in \minlsepst{G'}$ such that $W\subseteq T'\cup C_t(G\sminus T)$, it holds that $|W|\geq |T'|$. From item 1 of the lemma, we have that for every $W\in \minlsepst{G'}$ such that $W\subseteq T'\cup C_s(G\sminus T)$, it holds that $|W|\geq \kappa_{s,t}(G)$. By submodularity (Lemma~\ref{lem:submodularity} and Corollary~\ref{corr:submodularity}), and since $T'\in \minlsepst{G'}$, then for any $S\in \minlsepst{G'}$ it holds that:
	\begin{equation}
		\label{eq:submodFoAddingEdge1}
		|T'|+ |S|\geq |N_{G'}\left(C_s(G'\sminus S)\cap C_s(G'\sminus T')\right)| + |N_{G'}\left(C_s(G'\sminus S)\cup C_s(G'\sminus T')\right)|
	\end{equation}
	By~\cite{DBLP:journals/siamcomp/KloksK98} (cf. Lemma 4), there is a unique minimal $s,t$-separator $S_1\in \minlsepst{G'}$ such that $S_1\subseteq N_{G'}(C_s(G'\sminus S)\cap C_s(G'\sminus T'))$. Likewise, there is a unique minimal $s,t$-separator $S_2\in \minlsepst{G'}$ such that $S_2\subseteq N_{G'}(C_s(G'\sminus S)\cup C_s(G'\sminus T'))$. Therefore, from~\eqref{eq:submodFoAddingEdge1}, we have that:
	\begin{align}
		|S|+|T'|&\geq |N_{G'}\left(C_s(G'\sminus S)\cap C_s(G'\sminus T')\right)| + |N_{G'}\left(C_s(G'\sminus S)\cup C_s(G'\sminus T')\right)| \nonumber \\
		&\geq |S_1|+|S_2| \underbrace{\geq}_{\substack{S_2\subseteq T'\cup C_t(G\sminus T)\\S_1 \subseteq T'\cup C_s(G\sminus T)}} \kappa_{s,t}(G)+|T'|	\label{eq:submodFoAddingEdge21}
	\end{align}
	Since $S_1 \subseteq N_{G'}(C_s(G'\sminus S)\cap C_s(G'\sminus T'))$ where $C_s(G'\sminus S)\cap C_s(G'\sminus T')\subseteq C_s(G'\sminus T')$, then by Lemma~\ref{lem:inclusionCsCt}, we have that $S_1 \subseteq T'\cup C_s(G'\sminus T')=T'\cup C_s(G\sminus T)$. Since $S_2\subseteq N_{G'}(C_s(G'\sminus S)\cup C_s(G'\sminus T'))$ where $C_s(G'\sminus S)\cup C_s(G'\sminus T')\supseteq C_s(G'\sminus T')$, then by Lemma~\ref{lem:inclusionCsCt}, we have that $S_2 \subseteq T'\cup C_t(G'\sminus T')=T'\cup C_t(G\sminus T)$. 
	
	So, we have that $S_1 \subseteq T'\cup C_s(G\sminus T)$, and that $S_2 \subseteq T'\cup C_t(G\sminus T)$. 
	By claim 2, for every $W\in \minlsepst{G'}$ such that $W\subseteq T'\cup C_t(G\sminus T)$, it holds that $|W|\geq |T'|$. In particular, $|S_2|\geq |T'|$. By claim 1, for every $W\in \minlsepst{G'}$ such that $W\subseteq T'\cup C_s(G\sminus T)$, it holds that $|W|\geq \kappa_{s,t}(G)$. From~\eqref{eq:submodFoAddingEdge21}, we get that $|S|\geq \kappa_{s,t}(G)$. Since $S\in \minlsepst{G'}$ can be any minimal $s,t$-separator of $G'$, then $\kappa_{s,t}(G')\geq \kappa_{s,t}(G)$. Since $\nodes(G')\subseteq \nodes(G)$ then $\kappa_{s,t}(G')\leq \kappa_{s,t}(G)$. Therefore,  $\kappa_{s,t}(G')= \kappa_{s,t}(G)$, and this completes the proof.
\end{proof}
\eat{
	\begin{figure}[h]
		\centering
		\includegraphics[width=0.35\textwidth]{proofillustrationnew.pdf}
		\caption{Illustration for the proof of Lemma~\ref{lem:technicalfortworesults}, case 2. The vertices of $T$ are represented by the black solid vertices. The graph $G'$ is enclosed in the dotted ellipse, and the vertices of $T''$ (that are outside of $G'$) are encircled.	The connected components $C_1,C_2,C_3\in \cc(G\sminus T)$ have the property that $C_i\cap D\neq \emptyset$. \label{fig:proofIllustration}}
	\end{figure}
}
\begin{figure}[ht]
	\centering
	
	\begin{tikzpicture}
		\foreach \y/\name in {3/v1, 2/v2, 1/v3, 0/v4} {
			\node[inner sep=0pt, minimum size=2.2mm, circle, fill=black] (\name) at (0,\y) {};
		}
		
		\node[inner sep=0pt, minimum size=2.2mm, circle, fill=black] (v5) at (3,4) {};
		\node[inner sep=0pt, minimum size=2.2mm, circle, fill=black] (v6) at (2,-1) {};
		\node[inner sep=0pt, minimum size=2.2mm, circle, fill=black] (v7) at (4,-1) {};
		
		\draw (v5) circle[radius=3mm];
		\draw (v6) circle[radius=3mm];
		\draw (v7) circle[radius=3mm];
		
		\node[draw, circle, minimum size=2cm] (Cs) at (-4,1.5) {};
		\node[draw, circle, minimum size=2cm] (Ct) at (4,1.5) {};
		\node[draw, circle, minimum size=0.8cm] (C1) at (1.3,5) {};
		\node[draw, circle, minimum size=0.8cm] (C2) at (-0.5,-1.5) {};
		\node[draw, circle, minimum size=0.8cm] (C3) at (4.2,-2.2) {};
		
		\node at (Cs) {\scriptsize $C_s(G{-}T)$};
		\node at (Ct) {\scriptsize $C_t(G{-}T)$};
		\node at (C1) {\scriptsize $C_1$};
		\node at (C2) {\scriptsize $C_2$};
		\node at (C3) {\scriptsize $C_3$};
		
		\node at (-2,3.7) {\scriptsize $G'$};
		
		\foreach \i/\angle in {1/60, 2/30, 3/-30, 4/-60} {
			\path (Cs.center) ++(\angle:1cm) coordinate (p\i);
			\draw (p\i) -- (v\i);
		}
		
		\foreach \i/\angle in {1/120, 2/150, 3/210, 4/240} {
			\path (Ct.center) ++(\angle:1cm) coordinate (q\i);
			\draw (q\i) -- (v\i);
		}
		
		\path (C1.center) ++(180:0.4cm) coordinate (c1left);
		\path (C1.center) ++(-90:0.4cm) coordinate (c1bottom);
		\path (C1.center) ++(0:0.4cm) coordinate (c1right);
		\draw (c1left) -- (v1);
		\draw[bend left=25] (c1bottom) to (v2);
		\draw (c1right) -- (v5);
		
		\path (C2.center) ++(90:0.4cm) coordinate (c2top);
		\path (C2.center) ++(60:0.4cm) coordinate (c2topright);
		\draw (c2top) -- (v4);
		\draw[bend right=36] (c2topright) to (v2);
		
		\path (C2.center) ++(0:0.4cm) coordinate (c2right);
		\draw (c2right) -- (v6);
		
		\path (C2.center) ++(315:0.4cm) coordinate (c2bottomright);
		\draw (c2bottomright) -- (v7);
		
		\path (Ct.center) ++(270:1cm) coordinate (ctbottom);
		\draw (ctbottom) -- (v7);
		
		\path (Ct.center) ++(90:1cm) coordinate (cttop);
		\draw (cttop) -- (v5);
		
		\path (C3.center) ++(90:0.4cm) coordinate (c3top);
		\draw (c3top) -- (v7);
		
		\path (Ct.center) ++(265:1cm) coordinate (ctlower);
		\draw (ctlower) -- (v6);
		
		\draw[dotted, thick] (0,1.5) ellipse [x radius=5.6cm, y radius=2.1cm];
		
	\end{tikzpicture}
	\caption{Illustration for the proof of Lemma~\ref{lem:technicalfortworesults}, case 2. The vertices of $T$ are represented by the black solid vertices. The graph $G'$ is enclosed in the dotted ellipse, and the vertices of $T''$ (that are outside of $G'$) are encircled.	The connected components $C_1,C_2,C_3\in \cc(G\sminus T)$ have the property that $C_i\cap D\neq \emptyset$. \label{fig:proofIllustrationAppendix}}
\end{figure}
\begin{reptheorem}{\ref{thm:minsepsAt}}
	\minsepsAtThm
\end{reptheorem}	
\begin{proof}
	Let $T\in \minsep_{sDX,t}(G)$, and let $T'\eqdef N_G(C_s(G\sminus T))$. 
	Apply Lemma~\ref{lem:technicalfortworesults} with $D' \eqdef D\cup X$. Since
	$T\in \minsep_{sD',t}(G)$, for	$T' \eqdef N_G(C_s(G\sminus T))$ and
	$G' \eqdef G[C_s(G\sminus T)\cup T'\cup C_t(G\sminus T)]$, we have by Lemma~\ref{lem:technicalfortworesults} (item 3) that
	$\kappa_{s,t}(G')=\kappa_{s,t}(G)$.

	Now, suppose by way of contradiction that $D\not\subseteq C_s(G\sminus T)$. Since $T\in \minsep_{sDX,t}(G)$, then $D\cap (T\cup C_t(G\sminus T))=\emptyset$, and hence $D\not\subseteq \nodes(G')$. Let $v\in D{\setminus} \nodes(G')$. Since $D\subseteq \minstVertices{s,t}(G)$, then $v\in \minstVertices{s,t}(G){\setminus} \nodes(G')$. 
	
	Since $v\in U^{\min}_{s,t}(G)$, Lemma~\ref{lem:vertexInclude} gives
	\[
	\kappa_{s,t}(G\sminus v)=\kappa_{s,t}(G)-1.
	\]
	Moreover, since $v\notin \nodes(G')$, the graph $G'$ is a subgraph of
	$G\sminus v$. Therefore,
	\[
	\kappa_{s,t}(G')
	\le \kappa_{s,t}(G\sminus v)
	=
	\kappa_{s,t}(G)-1.
	\]
	Together with $\kappa_{s,t}(G')=\kappa_{s,t}(G)$ from Lemma~\ref{lem:technicalfortworesults},
	this gives the contradiction
	\[
	\kappa_{s,t}(G)
	=
	\kappa_{s,t}(G')
	\le
	\kappa_{s,t}(G)-1.
	\]
Therefore, $D\subseteq C_s(G\sminus T)$ as required.
\end{proof}

\begin{repcorollary}{\ref{corr:minsepsAtSets}}
	\minsepsAtSetCorr
\end{repcorollary}
\begin{proof}
	Let $H$ be the graph obtained from $G$ by adding all edges between
	$s$ and $N_G[A]$; that is,
	$$
	V(H)=V(G)
	\qquad\text{and}\qquad
	E(H)=E(G)\cup \{(s,v):v\in N_G[A]\}.
	$$
	By Lemma~\ref{lem:MinlsASep}, applied with $B=\emptyset$, the minimal
	$sA,t$-separators of $G$ are exactly the minimal $s,t$-separators
	of $H$ ($\minlsepst{H}=\minlsep{sA,t}{G}$), and for every $S\in \minlsepst{H}$, $	C_s(H\sminus S)=C_{sA}(G\sminus S)$.
	In particular, the minimum separators are the same: $	\minsep_{sA,t}(G)=\minsep_{s,t}(H)$.
	Consequently, $\minstVertices{s,t}(H)=\minstVertices{sA,t}(G)$.
	Since $D\subseteq \minstVertices{sA,t}(G)$, we therefore have $D\subseteq \minstVertices{s,t}(H)$.
	
	Now let $T\in L_{sADX,t}(G)$. We claim that $T\in L_{sDX,t}(H)$.
	
	Let $K_G$ be the graph obtained from $G$ by adding all edges between
	$s$ and $N_G[ADX]$. By Lemma~\ref{lem:MinlsASep}, applied in $G$ with the set $ADX$,
	the minimum $sADX,t$-separators of $G$ are exactly the minimum
	$s,t$-separators of $K_G$; that is $\minsep_{sADX,t}(G)=\minsep_{s,t}(K_G)$.
	
	Let $K_H$ be the graph obtained from $H$ by adding all edges between
	$s$ and $N_H[DX]$. By Lemma~\ref{lem:MinlsASep}, applied in $H$ with the set $DX$,
	the minimum $sDX,t$-separators of $H$ are exactly the minimum
	$s,t$-separators of $K_H$; that is $\minsep_{sDX,t}(H)=\minsep_{s,t}(K_H)$.
	
	We now observe that $K_G=K_H$. Indeed, $H$ is obtained from $G$ by
	adding all edges from $s$ to $N_G[A]$. Thus the additional edges used
	to construct $K_H$ from $H$ are exactly the edges from $s$ to vertices
	in $N_G[DX]\cup N_G[A]$, together with edges already incident with $s$
	created in the first step. Hence both constructions add precisely all
	edges from $s$ to $N_G[ADX]$. Therefore $K_G=K_H$.
	
	It follows that
	\[
	\minsep_{s,t}(K_G)=\minsep_{sADX,t}(G)=\minsep_{sDX,t}(H)=\minsep_{s,t}(K_H).
	\]
	Thus $T\in \minsep_{sDX,t}(H)$.
	
	We may now apply Theorem~\ref{thm:minsepsAt} in the graph $H$, with the same set $D$
	and with the auxiliary set $X$. Since $D\subseteq \minstVertices{s,t}(H)$
	and $T\in \minsep_{sDX,t}(H)$, Theorem~\ref{thm:minsepsAt} implies that $D\subseteq C_s(H\sminus T)$.

	It remains only to identify $C_s(H\sminus T)$ with $C_{sA}(G\sminus T)$. This does
	not require minimality of $T$. Since $H$ is obtained from $G$ by adding
	edges from $s$ to every vertex of $N_G[A]$, every component of $G\sminus T$
	that contains a vertex of $sA$ is connected to $s$ in $H\sminus T$. Hence $C_{sA}(G\sminus T)\subseteq C_s(H\sminus T)$.
	Conversely, every edge of $\edges(H){\setminus}\edges(G)$ is incident with $s$ and its
	other endpoint lies in $N_G[A]$. Such an endpoint belongs either to
	$T$ or to a component of $G\sminus T$ meeting $A$. Therefore these added edges
	cannot connect $s$ in $H\sminus T$ to any component of $G\sminus T$ disjoint from
	$sA$. Hence $C_s(H\sminus T)\subseteq C_{sA}(G\sminus T)$.
	Thus $C_s(H\sminus T)=C_{sA}(G\sminus T)$.
	Consequently, $D\subseteq C_{sA}(G\sminus T)$,
	as required.
\end{proof}

\begin{repcorollary}{\ref{corr:XInvariant}}
	\XInvariant
\end{repcorollary}
\begin{proof}
	Fix $A_0\subseteq V(G)$, and let $T\in \minsep_{sA_0\calX_\ell,t}(G)$. We show that $X_i \subseteq C_s(G\sminus T)$ by induction on $i$.
	
	Base: $i=1$. Since $X_1 \subseteq \minstVertices{s,t}(G)$, then by Theorem~\ref{thm:minsepsAt}, we have that $X_1\subseteq C_s(G\sminus T)$ (i.e., take $D\defeq X_1$ and
	$X\defeq A_0\cup (\calX_\ell{\setminus} X_1)$ in Theorem~\ref{thm:minsepsAt}).
	
	Step: Let $1\leq j\leq \ell-1$, and assume the claim holds for all indices $i\leq j$, we prove for $i=j+1$.
	By the induction hypothesis, we have that $X_k\subseteq C_s(G\sminus T)$ for every $k\leq j$. In particular,
	\begin{equation}
		\label{eq:hypothesis}
		\calX_{i-1}=\bigcup_{k=1}^{j=i-1}X_k \subseteq C_s(G\sminus T).
	\end{equation}
	Since $X_i \subseteq \minstVertices{s\calX_{i-1},t}(G)$, we apply
	Corollary~\ref{corr:minsepsAtSets} with
	\[
	A_{\mathrm{cor}} \defeq \calX_{i-1},\qquad
	D_{\mathrm{cor}} \defeq X_i,\qquad
	X_{\mathrm{cor}} \defeq A_0\cup (\calX_\ell{\setminus} \calX_i).
	\]
	Indeed,
	\[
	sA_{\mathrm{cor}}D_{\mathrm{cor}}X_{\mathrm{cor}}
	=
	s\calX_{i-1}X_iA_0(\calX_\ell{\setminus} \calX_i)
	=
	sA_0\calX_\ell,
	\]
	and therefore the hypothesis $T\in \minsep_{sA_0\calX_\ell,t}(G)$ is exactly the
	hypothesis needed to apply Corollary~~\ref{corr:minsepsAtSets}. Hence $X_i\subseteq C_{s\calX_{i-1}}(G\sminus T)$.
	Since $\calX_{i-1}\subseteq C_s(G\sminus T)$ by the induction hypothesis, we have $C_{s\calX_{i-1}}(G\sminus T)=C_s(G\sminus T)$.
	Therefore, $X_i\subseteq C_s(G\sminus T)$.
\end{proof}

\subsection{Proof of Theorem~\ref{thm:mainThmBoundedCardinality}}
\label{sec:thm_mainThmBoundedCardinalityProof}
We prove Theorem~\ref{thm:mainThmBoundedCardinality} by a series of Lemmas.
\def\minVerticesInMainGraph{
	Let $T\in \minsepst{G}$. Then $\minstVertices{s,t}(G)\subseteq C_s(G\sminus T)\cup T \cup C_t(G\sminus T)$.
}
\begin{lemma}
	\label{lem:minVerticesInMainGraph}
	\minVerticesInMainGraph
\end{lemma}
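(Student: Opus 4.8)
The statement concerns the vertex set $\minstVertices{\cdot}$ of vertices that lie on \emph{some} minimum $s,t$-separator, and how it transfers between the ``trimmed'' graph $G'\eqdef G[C_s(G\sminus T)\cup N_G(C_s(G\sminus T))\cup C_t(G\sminus T)]$ produced from a minimum $sD,t$-separator $T$ (as in Lemma~\ref{lem:technicalfortworesults} and the proof of Theorem~\ref{thm:minsepsAt}) and the main graph $G$. The plan is to reduce everything to a numerical statement about min-cut values via Lemma~\ref{lem:vertexInclude}, which says that $v\in\minstVertices{H}$ if and only if $\kappa_{s,t}(H\sminus v)=\kappa_{s,t}(H)-1$. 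The first step is to record the trivial monotonicity: if $G'$ is an induced subgraph of $G$ containing $s$ and $t$, then intersecting any $s,t$-separator of $G$ (resp.\ of $G\sminus v$) with $\nodes(G')$ still separates $s$ from $t$ in $G'$, so $\kappa_{s,t}(G')\le\kappa_{s,t}(G)$ and $\kappa_{s,t}(G'\sminus v)\le\kappa_{s,t}(G\sminus v)$ for every $v\in\nodes(G')$. Combined with Lemma~\ref{lem:technicalfortworesults}(3), which already gives $\kappa_{s,t}(G')=\kappa_{s,t}(G)$, these inequalities pin down the top value and leave only a one-unit gap to control at each vertex.

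Next I would prove the inclusion $\minstVertices{G'}\subseteq\minstVertices{G}$. Take $v\in\minstVertices{G'}$; by Lemma~\ref{lem:vertexInclude} and $\kappa_{s,t}(G')=\kappa_{s,t}(G)$ we get $\kappa_{s,t}(G'\sminus v)=\kappa_{s,t}(G)-1$, hence $\kappa_{s,t}(G)-1\le\kappa_{s,t}(G\sminus v)\le\kappa_{s,t}(G)$, so it remains to rule out $\kappa_{s,t}(G\sminus v)=\kappa_{s,t}(G)$, i.e.\ to exhibit an actual minimum $s,t$-separator of $G$ through $v$. Fix a minimum $s,t$-separator $S\subseteq\nodes(G')$ of $G'$ with $v\in S$; then $|S|=\kappa_{s,t}(G)$. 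If $S$ lies on the $s$-side, $S\subseteq T'\cup C_s(G\sminus T)$ where $T'\eqdef N_G(C_s(G\sminus T))$, then Lemma~\ref{lem:technicalfortworesults}(1) yields $S\in\minlsepst{G}$, hence $S\in\minsep_{s,t}(G)$ and $v\in\minstVertices{G}$, as required. For the converse inclusion $\minstVertices{G}\cap\nodes(G')\subseteq\minstVertices{G'}$ (in case the statement is an equality), one argues symmetrically: a minimum $s,t$-separator of $G$ through $v\in\nodes(G')$ either already lies inside $\nodes(G')$, or can be replaced — using that $T'\in\minlsepst{G}\cap\minlsepst{G'}$ and $C_t(G\sminus T)=C_t(G'\sminus T')$ — by its ``closest to $t$'' variant $N_G(C_s(G\sminus S))$ without increasing cardinality and while staying inside $G'$, after which Lemma~\ref{lem:technicalfortworesults}(1) again applies.

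The main obstacle is the case where the witnessing minimum separator $S$ of $G'$ straddles the middle of $G'$, i.e.\ $S\not\subseteq T'\cup C_s(G\sminus T)$ and $S\not\subseteq T'\cup C_t(G\sminus T)$, so that naively reinserting the deleted part $T''\eqdef T\sminus T'$ gives only a separator $S\cup T''$ of $G$ of size up to $\kappa_{s,t}(G)+|T''|$, which is too large. To handle this I would mimic the submodularity argument used in the proof of Lemma~\ref{lem:technicalfortworesults}(3): apply Lemma~\ref{lem:submodularity} (via Corollary~\ref{corr:submodularity}) to $S$ and $T'$ inside $G'$, and invoke the Kloks--Kratsch uniqueness of the minimal separator contained in $N_{G'}(C_s(G'\sminus S)\cap C_s(G'\sminus T'))$ and in $N_{G'}(C_s(G'\sminus S)\cup C_s(G'\sminus T'))$, together with Lemma~\ref{lem:inclusionCsCt}, to split $S$ into an $s$-side piece $S_1\subseteq T'\cup C_s(G\sminus T)$ and a $t$-side piece $S_2\subseteq T'\cup C_t(G\sminus T)$ with $|S_1|+|S_2|\le|S|+|T'|$. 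The $s$-side piece is already a minimum $s,t$-separator of $G$ by Lemma~\ref{lem:technicalfortworesults}(1); one then has to trace through which of $S_1,S_2$ the vertex $v$ belongs to and, if $v$ sits on the $t$-side, push $S_2$ forward along the full component $C_t$ back toward $t$ through $G$, using Lemma~\ref{lem:fullComponents} and Lemma~\ref{lem:simpAB} to keep it minimal and of the right size. Once $v$ is placed inside a genuine minimum $s,t$-separator of $G$, Lemma~\ref{lem:vertexInclude} closes the argument.
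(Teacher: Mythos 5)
Your proposal is aimed at the wrong statement, so it does not prove the lemma. Lemma~\ref{lem:minVerticesInMainGraph} (as its proof and every use of it make clear, e.g., in the proofs of Lemma~\ref{lem:mainThmBoundedCardinality} and Lemma~\ref{lem:minSafeSepWithUstMin}) is a \emph{confinement} statement: for a minimum $s,t$-separator $T\in\minsep_{s,t}(G)$, every vertex that lies on \emph{some} minimum $s,t$-separator of $G$ is contained in $C_s(G\sminus T)\cup T\cup C_t(G\sminus T)$, i.e., $\minstVertices{G}\subseteq C_s(G\sminus T)\cup T\cup C_t(G\sminus T)$. What you set out to prove is a \emph{transfer} statement, namely $\minstVertices{G'}\subseteq\minstVertices{G}$ (possibly with equality on $\nodes(G')$) for the trimmed graph $G'$ built from a minimum $sD,t$-separator as in Lemma~\ref{lem:technicalfortworesults}. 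Neither of these implies the other: knowing that minimum separators of $G'$ are minimum separators of $G$ says nothing about where the vertices of $\minstVertices{G}$ can sit relative to $C_s(G\sminus T)\cup T\cup C_t(G\sminus T)$, which is exactly what the downstream proofs need (they take a vertex $v$ of some minimum $s,t$-separator and must rule out $v$ lying in a component of $G\sminus T$ containing neither $s$ nor $t$). The paper's proof is a short Menger argument: the $\kappa_{s,t}(G)$ internally vertex-disjoint $s,t$-paths each meet $T$ exactly once, hence stay inside $C_s(G\sminus T)\cup T\cup C_t(G\sminus T)$, so $\kappa_{s,t}(G[C_s(G\sminus T)\cup T\cup C_t(G\sminus T)])\geq\kappa_{s,t}(G)$; then any $v$ outside this set would satisfy $\kappa_{s,t}(G\sminus v)\geq\kappa_{s,t}(G)$, and Lemma~\ref{lem:vertexInclude} shows $v$ is in no minimum $s,t$-separator.

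A repair in the spirit of your toolkit does exist: taking $D=\set{s}$ in Lemma~\ref{lem:technicalfortworesults}, a minimum $s,t$-separator $T$ satisfies $T=N_G(C_s(G\sminus T))$ (Lemma~\ref{lem:fullComponents}), so item (3) of that lemma already yields $\kappa_{s,t}(G[C_s(G\sminus T)\cup T\cup C_t(G\sminus T)])=\kappa_{s,t}(G)$, and combining this with Lemma~\ref{lem:vertexInclude} and monotonicity of $\kappa_{s,t}$ under passing to induced subgraphs gives the lemma without Menger. But your write-up never establishes this inequality or the containment of $\minstVertices{G}$; instead it chases the inclusion $\minstVertices{G'}\subseteq\minstVertices{G}$, and even there the crucial ``straddling'' case is only gestured at (``push $S_2$ forward along the full component $C_t$ back toward $t$'' is not an argument), and the claimed ``closest to $t$'' replacement $N_G(C_s(G\sminus S))$ is in fact the minimal separator closest to $s$. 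As it stands, the proposal has a genuine gap: it proves (at best) a different claim than the one the paper states and later relies on.
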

\begin{proof}
	By Menger's Theorem (Theorem~\ref{thm:Menger}), there exist $k\eqdef \kappa_{s,t}(G)$ pairwise internally vertex-disjoint paths from $s$ to $t$ in $G$ denoted $P_1,\dots,P_k$. Each one of these $k$ paths meets exactly one vertex in $T$. To see why, observe that every $s,t$-path in $G$ must meet at least one vertex in $T$. On the other hand, if some path $P_i$ meets two or more vertices in $T$, then the remaining $k-1$ pairwise internally vertex-disjoint paths  must meet at most $k-2$ vertices of $T$. By the pigeon-hole principle, some pair of paths meet the same vertex of $T$, contradicting the fact that they are pairwise-vertex-disjoint. 
	Consequently, for every path $P_i$, we have that $\nodes(P_i)\subseteq C_s(G\sminus T)\cup T \cup C_t(G\sminus T)$. This, in turn, means that the induced graph $G[C_s(G\sminus T)\cup T \cup C_t(G\sminus T)]$ has $k$ pairwise internally vertex-disjoint $s,t$ paths. By Menger's Theorem, we have that $\kappa_{st}(G[C_s(G\sminus T)\cup T \cup C_t(G\sminus T)])\geq k$. 
	
	Now, suppose by way of contradiction, that there is a vertex $v \in \minstVertices{s,t}(G) {\setminus} (C_s(G\sminus T)\cup T \cup C_t(G\sminus T))$. By Lemma~\ref{lem:vertexInclude}, we have that:
	\begin{align}
		k= \kappa_{s,t}(G) > \kappa_{s,t}(G\sminus v) \geq  \kappa_{s,t}(G[C_s(G\sminus T)\cup T \cup C_t(G\sminus T)])\geq k
	\end{align}
	but this is clearly a contradiction.
\end{proof}

\eat{
\def\lemmainThmBoundedCardinality{
	Let $\emptyset \subset A\subseteq \nodes(G){\setminus}\set{s,t}$, $Q\subseteq \nodes(G){\setminus}\set{s,t}$, and $\P\eqdef \set{A_1,\dots,A_M}$ pairwise disjoint subsets of $sA$, and $\mc{R}_{sA,\P,Q}$ a set of monotonic connectivity constraints (see~\eqref{eq:mcR}). Letting $L^*\eqdef \closestMinSep{sAQ,t}(G)$ for brevity, define the set of violating components $\D$ in $G\sminus L^*$:
	\begin{align*}
		\D \eqdef &\set{C\in \cc(G\sminus  L^*): s\notin C, \exists i\in [1,M] \text{ s.t. }\emptyset\subset C \cap A_i\subset A_i}\\
		&~~~~~\cup \set{C\in \cc(G\sminus  L^*): C\cap Q\neq \emptyset, sA\cap C=\emptyset}.
	\end{align*}
	If $\kappa_{s,t}(G)=\kappa_{sA,t}(G)=f_{sA,t}(G,\mc{R})$ then
	\begin{align*}
		\emptyset \subset \safeSepsMin{sA}{t}(G,\mc{R})=\set{S\in \minsep_{sAQ,t}(G): \mediumbigcup_{C\in \D}C \subseteq C_s(G\sminus S)}.
	\end{align*}
}
}
\def\lemmainThmBoundedCardinality{
	Let $\emptyset \subset A\subseteq \nodes(G){\setminus}\set{s,t}$. Letting $L^*\eqdef \closestMinSep{sA,t}(G)$ for brevity, define the set of violating components $\D$ in $G\sminus L^*$:
	\begin{align*}
		\D \eqdef &\set{C\in \cc(G\sminus  L^*): A\cap C \neq \emptyset, s\notin C}.
	\end{align*}
	If $\kappa_{s,t}(G)=\kappa_{sA,t}(G)=f_{s,t}(G,A)$ then
	\begin{align*}
		\emptyset \subset \safeSepsMin{s}{t}(G,A)=\set{S\in \minsep_{sA,t}(G): \mediumbigcup_{C\in \D}C \subseteq C_s(G\sminus S)}.
	\end{align*}
}
\begin{lemma}
	\label{lem:mainThmBoundedCardinality}
	\lemmainThmBoundedCardinality
\end{lemma}	
\begin{proof}
	Since $\kappa_{s,t}(G)=\kappa_{sA,t}(G)=f_{s,t}(G,A)$, the family
	$\safeSepsMin{s}{t}(G,A)$ is nonempty. Moreover, every separator in
	$\safeSepsMin{s}{t}(G,A)$ has cardinality $\kappa_{sA,t}(G)$ and is an
	$sA,t$-separator, and hence belongs to $\minsep_{sA,t}(G)$. In other words, since $\kappa_{sA,t}(G)=f_{s,t}(G,A)$, then $\safeSepsMin{s}{t}(G,A)\subseteq \minsep_{sA,t}(G)$.
	
	Let $L^*=L^*_{sA,t}(G)$. By Lemma~\ref{lem:uniqueMinImportantSep}, $L^*$ is the unique minimum
	important $sA,t$-separator. Equivalently, for every
	$S\in \minsep_{sA,t}(G)$, it holds that $C_{sA}(G\sminus L^*)\subseteq C_{sA}(G\sminus S)$.
	
	We first prove the inclusion from left to right. Let
	$S\in \safeSepsMin{s}{t}(G,A)$. Then $A\subseteq C_s(G\sminus S)$ and
	\[
	|S|=f_{s,t}(G,A)=\kappa_{sA,t}(G).
	\]
	Thus $S\in \minsep_{sA,t}(G)$. Since also
	$f_{s,t}(G,A)=\kappa_{s,t}(G)$, we have $S\in \minsep_{s,t}(G)$.
	
	Let $C\in\mathcal D$. Since $C$ is a component of $G\sminus L^*$ that intersects
	$A$, and $L^*$ is an $sA,t$-separator, the component $C$ is distinct
	from $C_t(G\sminus L^*)$. It is also distinct from $C_s(G\sminus L^*)$ by the
	definition of $\mathcal D$. Since $S\in \minsep_{s,t}(G)$, we have
	$S\subseteq \minstVertices{s,t}(G)$. By Lemma~\ref{lem:minVerticesInMainGraph}, applied with
	$T=L^*$,
	\[
	S\subseteq \minstVertices{s,t}(G)\underset{\text{Lem.}~\ref{lem:minVerticesInMainGraph}}{\subseteq} C_s(G\sminus L^*)\cup L^*\cup C_t(G\sminus L^*).
	\]
	Therefore $S\cap C\subseteq 	\minstVertices{s,t}(G)\cap C=\emptyset$.
	Because $C$ is connected, $C\cap A\neq\emptyset$, and
	$A\subseteq C_s(G\sminus S)$, it follows that $	C\subseteq C_s(G\sminus S)$.
	Thus $	\bigcup_{C\in\mathcal D} C\subseteq C_s(G\sminus S)$.
	This proves
	\[
	\safeSepsMin{s}{t}(G,A)
	\subseteq
	\set{S\in \minsep_{sA,t}(G): \bigcup_{C\in\mathcal D} C\subseteq C_s(G-S)}.
	\]
	
	For the converse inclusion, let $S\in \minsep_{sA,t}(G)$ where $\bigcup_{C\in \D} C\subseteq C_s(G\sminus S)$.
	Since $\kappa_{sA,t}(G)=\kappa_{s,t}(G)$, the separator $S$ has
	cardinality $\kappa_{s,t}(G)$ and separates $s$ from $t$. Hence
	$S\in \minsep_{s,t}(G)$, and in particular $S\in \minlsepst{G}$.
	
	It remains to show that $A\subseteq C_s(G\sminus S)$. Let $a\in A$. If
	$a\notin C_s(G\sminus L^*)$, then the connected component of $G\sminus L^*$ that
	contains $a$ belongs to $\D$, and therefore
	$a\in C_s(G\sminus S)$ by assumption.
	
	Otherwise, $a\in C_s(G\sminus L^*)$. By the source-side minimality of $L^*$, $C_{sA}(G\sminus L^*)\subseteq C_{sA}(G\sminus S)$.
	In particular, $C_s(G\sminus L^*)\subseteq C_{sA}(G\sminus S)$. Since
	$C_s(G\sminus L^*)$ contains $s$ and is connected, and since $S$ is disjoint
	from $C_s(G\sminus L^*)$, this inclusion actually places
	$C_s(G\sminus L^*)$ inside the $s$-component of $G\sminus S$; that is $C_s(G\sminus L^*)\subseteq C_s(G\sminus S)$. Hence	$a\in C_s(G\sminus S)$.
	
	Thus every $a\in A$ belongs to $C_s(G\sminus S)$, so
	$S\in \safeSeps{s}{t}(G,A)$. Since $|S|=\kappa_{sA,t}(G)=f_{s,t}(G,A)$,
	we get that $S\in \safeSepsMin{s}{t}(G,A)$.
	This proves the reverse inclusion and completes the proof.
\end{proof}

\begin{lemma}
		\label{lem:minSafeSepWithUstMin}
	Let $R\in \minsep_{s,t}(G)$, and let $D\in \cc(G\sminus R)$ be a connected
	component such that $s\notin D$ and $t\notin D$. Define
	$T_D\defeq N_G(D)$. Then, for every nonempty $A\subseteq D$, it holds that
	\[
	\set{S\in \minsep_{sA,t}(G): A\subseteq C_s(G-S)}
	=
	\set{S\in \minsep_{s,t}(G): T_D\cap C_s(G\sminus S)\neq\emptyset}.
	\]
\end{lemma}
\begin{proof}
	First observe that $R$ is an $sA,t$-separator, because $A\subseteq D$,
	$s\notin D$, $t\notin D$, and $D$ is a connected component of $G\sminus R$.
	Therefore $	\kappa_{sA,t}(G)\le |R|=\kappa_{s,t}(G)$.
	The reverse inequality $\kappa_{s,t}(G)\le \kappa_{sA,t}(G)$ is immediate,
	since every $sA,t$-separator separates $s$ from $t$. Hence
	\[
	\kappa_{sA,t}(G)=\kappa_{s,t}(G).
	\]
	Consequently, $\minsep_{sA,t}(G)\subseteq \minsep_{s,t}(G)$.
	
	We next record two simple consequences of Lemma~\ref{lem:minVerticesInMainGraph}. Since
	$T_D=N_G(D)\subseteq R\in \minsep_{s,t}(G)$, we have $T_D\subseteq \minstVertices{s,t}(G)$.
	Thus, for every $S\in \minsep_{s,t}(G)$, Lemma~\ref{lem:minVerticesInMainGraph} applied to the minimum
	separator $R$ gives
	\[
	S\subseteq \minstVertices{s,t}(G)
	\subseteq C_s(G\sminus R)\cup R\cup C_t(G\sminus R).
	\]
	Since $D\in \cc(G\sminus R)$ is a connected component of $G\sminus R$ distinct from both
	$C_s(G\sminus R)$ and $C_t(G\sminus R)$, it follows that $	S\cap D=\emptyset$.
	In particular, for every $S\in \minsep_{s,t}(G)$, the component $D$ remains
	connected in $G\sminus S$.
	We now prove the two inclusions.
	
	Let $S\in \minsep_{sA,t}(G)$ where $A\subseteq C_s(G\sminus S)$.
	As observed above, $S\in \minsep_{s,t}(G)$ and $S\cap D=\emptyset$. Since
	$A\subseteq D$ and $D$ is connected in $G\sminus S$, the whole component $D$
	is contained in $C_s(G\sminus S)$. Since $D$ is a component of $G\sminus R$ distinct
	from the whole graph, and since $G$ is connected, $T_D=N_G(D)$ is
	nonempty. Moreover, $T_D\subseteq R$ and $S\cap R$ need not be empty,
	but not all vertices of $T_D$ can belong to $S$: otherwise $S$ would
	separate $D$ from the rest of the graph, contradicting
	$D\subseteq C_s(G\sminus S)$ and $s\notin D$. Hence there exists
	$v\in T_D{\setminus} S$. Since $v$ has a neighbor in $D\subseteq C_s(G\sminus S)$,
	we get $	v\in C_s(G\sminus S)$.
	Therefore $T_D\cap C_s(G\sminus S)\neq\emptyset.$

	Conversely, let $S\in \minsep_{s,t}(G)$ where $T_D\cap C_s(G\sminus S)\neq \emptyset$.
	Choose $v\in T_D\cap C_s(G\sminus S)$. Since $v\in T_D=N_G(D)$, there is a
	neighbor $u\in D$ of $v$. As shown above, $S\cap D=\emptyset$, so $D$
	is connected in $G\sminus S$. Hence every vertex of $A\subseteq D$ is connected
	to $u$, and therefore to $v$, in $G\sminus S$. Since $v\in C_s(G\sminus S)$, it follows
	that $	A\subseteq C_s(G\sminus S)$.
	Thus $S$ is an $sA,t$-separator. Since $	|S|=\kappa_{s,t}(G)=\kappa_{sA,t}(G)$,
	we get $S\in \minsep_{sA,t}(G)$. Hence
	\[
	S\in \set{S\in \minsep_{sA,t}(G): A\subseteq C_s(G\sminus S)}.
	\]
	The two inclusions prove the equality.
\end{proof}

\eat{
\begin{lemma}
	\label{lem:minSafeSepWithUstMin}
	Let $S\in \minsep_{s,t}(G)$, and let $D\in \cc(G\sminus S)$ where $s\notin D$ and $t\notin D$. Define $T_D\eqdef N_G(D)$. For every $A\subseteq D$, it holds that:
	\begin{align}
		\label{eq:minSafeSepWithUstMin}
		\set{S\in \minsep_{sA,t}(G): A\subseteq C_s(G\sminus S)}=\set{S\in \minsep_{s,t}(G): T_D\cap C_s(G\sminus S)\neq \emptyset}.
	\end{align}
\end{lemma}
\begin{proof}
	We first note that for every $A\subseteq D$, it holds that $S$ is an $sA,t$-separator. Therefore, $\kappa_{sA,t}(G)\leq |S|=\kappa_{s,t}(G)$. Since, by definition, $\kappa_{sA,t}(G)\geq \kappa_{s,t}(G)$, we get that $\kappa_{sA,t}(G)= \kappa_{s,t}(G)$, and hence $\minsep_{sA,t}(G)\subseteq \minsep_{s,t}(G)$. 
	Therefore, 
	$$\set{S\in \minsep_{sA,t}(G): A\subseteq C_s(G\sminus S)}=\set{S\in \minsep_{s,t}(G): A\subseteq C_s(G\sminus S)},$$ and we can express~\eqref{eq:minSafeSepWithUstMin} as:
	\begin{align}
		\set{S\in \minsep_{s,t}(G): A\subseteq C_s(G\sminus S)}=\set{S\in \minsep_{s,t}(G): T_D\cap C_s(G\sminus S)\neq \emptyset}. 	\label{eq:minSafeSepWithUstMin1}
	\end{align}
	Let $T\in \minsep_{s,t}(G)$ where $T_D\cap C_s(G\sminus T)=\emptyset$.
	Since $T_D\subseteq S \subseteq \minstVertices{s,t}(G)$, then By Lemma~\ref{lem:minVerticesInMainGraph}, we have that $T_D\subseteq C_s(G\sminus T) \cup T \cup C_t(G\sminus T)$. Therefore, $T_D \cap C_s(G\sminus T)=\emptyset$ if and only if $T_D \subseteq T\cup C_t(G\sminus T)$. Therefore, proving~\eqref{eq:minSafeSepWithUstMin1}, is equivalent to proving:
	\begin{align}
		\set{S\in \minsep_{s,t}(G): A\not\subseteq C_s(G\sminus S)}=\set{S\in \minsep_{s,t}(G): T_D\subseteq  S\cup C_t(G\sminus S)}. 	\label{eq:minSafeSepWithUstMin2}
	\end{align}
	Let $T\in \minsep_{s,t}(G)$ where $T_D\subseteq T\cup C_t(G\sminus T)$. Since $T_D$ is an $s,A$-separator, then every $s,A$-path passes through a vertex in $T\cup C_t(G\sminus T)$. Hence $A\not\subseteq C_s(G\sminus T)$.	
	
	Now, let $T\in \minsep_{s,t}(G)$ where $T_D \not\subseteq T\cup C_t(G\sminus T)$. Since $T\in \minsep_{s,t}(G)$, then $T\subseteq \minstVertices{s,t}(G)$, and by Lemma~\ref{lem:minVerticesInMainGraph}, we have that $T\subseteq C_s(G\sminus S) \cup S \cup C_t(G\sminus S)$. In particular, $T\cap D=\emptyset$. Also by Lemma~\ref{lem:minVerticesInMainGraph}, we have that $S\subseteq C_s(G\sminus T) \cup T \cup C_t(G\sminus T)$. Therefore, if $v\in T_D\subseteq S$ is such that $v\notin T\cup C_t(G\sminus T)$, then $v\in C_s(G\sminus T)$. Since $T\cap D=\emptyset$, then there is a $v,a$-path that resides entirely in $D\cup \set{v}$, and hence avoids $T$, for every $a\in A$. Since $v\in C_s(G\sminus T)$, then there is an $s,a$-path (via $v$), in $G\sminus T$, for every $a\in A$. Consequently, $A\subseteq C_s(G\sminus T)$.
\end{proof}
}

\begin{lemma}
	\label{lem:minSafeSepImportant}
	Let $R\in \minsep_{s,t}(G)$, and let $D\in \cc(G\sminus R)$ where $s\notin D$ and $t\notin D$. Define $T_D\eqdef N_G(D)$. For every $\emptyset \subset A\subseteq D$, it holds that:
	
	\begin{align}
		\label{eq:minSafeSepImportant}
		\set{S\in \minsep_{sA,t}(G): A\subseteq C_s(G\sminus S)}=\mediumbigcup_{v \in T_D}\set{S \in \minsep_{sv,t}(G): |S|=\kappa_{s,t}(G)}.
	\end{align}
\end{lemma}
\begin{proof}
	By Lemma~\ref{lem:minSafeSepWithUstMin}, it holds that:
	\begin{align}
		\set{S\in \minsep_{sA,t}(G): A\subseteq C_s(G\sminus S)}=\mediumbigcup_{v\in T_D} \set{S\in \minsep_{s,t}(G): v\in C_s(G\sminus S)}. \label{eq:minSafeSepImportant1}
	\end{align}
	
	Since $T_D \subseteq R \in \minsep_{s,t}(G)$, then $T_D\subseteq \minstVertices{s,t}(G)$. By Theorem~\ref{thm:minsepsAt}, for every $v\in T_D\subseteq \minstVertices{s,t}(G)$:
	\begin{align*}
		\minsep_{sv,t}(G) &\subseteq \set{S\in \minlsepst{G}: v\in C_s(G\sminus S)},&& \text{and since $\minsep_{s,t}(G)\subseteq \minlsepst{G}$, then} \\
		\minsep_{s,t}(G) \cap \minsep_{sv,t}(G) &= \set{S\in \minsep_{s,t}(G): v\in C_s(G\sminus S)} && 
	\end{align*}
	Noting that $\minsep_{s,t}(G) \cap \minsep_{sv,t}(G)=\set{S\in \minsep_{sv,t}(G) : |S|=\kappa_{s,t}(G)}$, we have that:
	$$\mediumbigcup_{v\in T_D}\set{S\in \minsep_{sv,t}(G): |S|=\kappa_{s,t}(G)}=\mediumbigcup_{v\in T_D}\set{S\in \minsep_{s,t}(G): v\in C_s(G\sminus S)},$$ 
	and the claim follows from~\eqref{eq:minSafeSepImportant1}.
\end{proof}

\begin{reptheorem}{\ref{thm:mainThmBoundedCardinality}}
	\mainThmBoundedCardinality
\end{reptheorem}	
\begin{proof}
	Let $k\defeq \kappa_{s,t}(G)$.
	By assumption,
	\[
	k=\kappa_{s,t}(G)=\kappa_{sA,t}(G)=f_{s,t}(G,A).
	\]
	Let $L^*\eqdef L^*_{sA,t}(G)$.
	For each $C\in \D$, since $C$ is a connected component of
	$G\sminus L^*$, we have $N_G(C)\subseteq L^*$.
	Thus every set in $\varepsilon_{\D}$ is a subset of $L^*$.
	
	First, $\safeSepsMin{s}{t}(G,A)$ is nonempty by the assumption
	$f_{s,t}(G,A)=k<\infty$. Hence
	$\safeSepskImp{s}{t}{\min}(G,A)$ is also nonempty:
	indeed, choose $S\in \safeSepsMin{s}{t}(G,A)$ such that
	$C_s(G\sminus S)$ is inclusionwise minimal among all sets
	$C_s(G\sminus T)$ with $T\in \safeSepsMin{s}{t}(G,A)$.
	If $S$ were not important in $\safeSeps{s}{t}(G,A)$, then there
	would be $S'\in \safeSeps{s}{t}(G,A)$ such that
	\[
	C_s(G\sminus S')\subsetneq C_s(G\sminus S)
	\qquad\text{and}\qquad
	|S'|\le |S|.
	\]
	Since $S$ has minimum cardinality in $\safeSeps{s}{t}(G,A)$, this
	would imply $S'\in \safeSepsMin{s}{t}(G,A)$, contradicting the
	choice of $S$. Therefore
	$S\in \safeSepskImp{s}{t}{\min}(G,A)$, and so $	\safeSepskImp{s}{t}{\min}(G,A)\neq \emptyset$.

	We now prove the desired inclusion. Let $S\in \safeSepskImp{s}{t}{\min}(G,A)$.
	Since $S\in \safeSepsMin{s}{t}(G,A)$ and
	$f_{s,t}(G,A)=\kappa_{s,t}(G)=\kappa_{sA,t}(G)$,
	Lemma~\ref{lem:mainThmBoundedCardinality} gives
	\[
	S\in \minsep_{sA,t}(G)
	\qquad\text{and}\qquad
	\bigcup_{C\in \D} C\subseteq C_s(G\sminus S).
	\]
	In particular, for every $C\in \D$, $	C\subseteq C_s(G\sminus S)$.
	Fix a component $C\in \D$, and let $A_C\eqdef A\cap C$.
	Then $A_C$ is nonempty. Since $L^*\in \minsep_{sA,t}(G)$ and
	$\kappa_{sA,t}(G)=\kappa_{s,t}(G)$, we have
	$L^*\in \minsep_{s,t}(G)$.
	Moreover, $C$ is a connected component of $G\sminus L^*$ with
	$s,t\notin C$. Therefore Lemma~\ref{lem:minSafeSepImportant},
	applied with $R\eqdef L^*$, $D\eqdef C$, and $A\eqdef A_C$, gives
	\begin{equation}
		\label{eq:mainProofEq1}
		\set{T\in \minsep_{sA_C,t}(G): A_C\subseteq C_s(G\sminus T)}
		=
		\bigcup_{v\in N_G(C)}
		\set{T\in \minsep_{sv,t}(G): |T|=k}.
	\end{equation}
	Now $S\in \minsep_{sA,t}(G)$, hence $S$ is an $sA_C,t$-separator. Hence, $|S|\geq \kappa_{sA_C,t}(G)$.
	Since $|S|=\kappa_{s,t}(G) \leq \kappa_{sA_C,t}(G)$, then 
	$|S|=k=\kappa_{sA_C,t}(G)$; it follows that $S\in \minsep_{sA_C,t}(G)$.
	Also $A_C\subseteq C\subseteq C_s(G\sminus S)$. Thus $S$ belongs
	to the left-hand side of~\eqref{eq:mainProofEq1}. Consequently,
	there exists a vertex $v_C\in N_G(C)$ such that $S\in \minsep_{sv_C,t}(G)$ and 	$|S|=k$.

	Choose one such vertex $v_C$ for every $C\in\D$, and define
	\[
	Z\defeq \set{v_C:C\in \D}.
	\]
	Then $Z$ is a hitting set of $\varepsilon_{\D}$. Let
	$Y\subseteq Z$ be an inclusionwise minimal hitting set of
	$\varepsilon_{\D}$. Then $Y\in \MHS(\varepsilon_{\D})$.
	Since $Y\subseteq Z$, and since $S\in \minsep_{sv_C,t}(G)$ for
	every $v_C\in Z$, the set $S$ separates $sY$ from $t$. Since also
	$S\in \minsep_{sA,t}(G)$, it separates $sAY$ from $t$.
	Moreover,
	\[
	\kappa_{s,t}(G)\le \kappa_{sAY,t}(G)\le |S|=k=\kappa_{s,t}(G).
	\]
	Therefore $	S\in \minsep_{sAY,t}(G)$.

	Let
	\[
	L_Y\eqdef L^*_{sAY,t}(G).
	\]
	Since $S\in \minsep_{sAY,t}(G)$, we get $|L_Y|=\kappa_{sAY,t}(G)=k$.
	We claim that $L_Y$ is a minimum CP $s,t$-separator with respect
	to $A$.
	
	First observe that $	Y\subseteq \bigcup_{C\in\D} N_G(C)\subseteq L^*$.
	Since $L^*\in \minsep_{s,t}(G)$, this implies $	Y\subseteq L^* \subseteq \minstVertices{s,t}(G)$.
	By Theorem~\ref{thm:minsepsAt}, applied with $D\eqdef Y$ and
	$X\eqdef A$, every separator in $\minsep_{sAY,t}(G)$ places $Y$
	on the $s$-side. In particular, $	Y\subseteq C_s(G\sminus L_Y)$.

	Now let $C\in\D$. Since $Y$ is a hitting set of
	$\varepsilon_{\D}$, there exists $y_C\in Y\cap N_G(C)$.
	By the previous paragraph, 	$y_C \in Y \subseteq C_s(G\sminus L_Y)$; thus $y_C\in C_s(G\sminus L_Y)$.

	We also have $L_Y\cap C=\emptyset$. Indeed, since
	$L_Y\in \minsep_{sAY,t}(G)$ and $|L_Y|=k=\kappa_{s,t}(G)$, the
	set $L_Y$ is also a minimum $s,t$-separator; that is, $	L_Y\in \minsep_{s,t}(G)$.
	Therefore $	L_Y\subseteq \minstVertices{s,t}(G)$.
	Applying Lemma~\ref{lem:minVerticesInMainGraph} with
	$T\eqdef L^*$ gives
	\[
	L_Y \subseteq \minstVertices{s,t}(G)
	\subseteq
	C_s(G\sminus L^*)\cup L^*\cup C_t(G\sminus L^*).
	\]
	The component $C$ is a component of $G\sminus L^*$ distinct from
	both $C_s(G\sminus L^*)$ and $C_t(G\sminus L^*)$, and it is
	disjoint from $L^*$. Hence $	C\cap \minstVertices{s,t}(G)=\emptyset$,
	and so $L_Y\cap C=\emptyset$.

	Since $C$ is connected, $L_Y\cap C=\emptyset$, and
	$y_C\in N_G(C)$ belongs to $C_s(G\sminus L_Y)$, it follows that $C\subseteq C_s(G\sminus L_Y)$.
	As this holds for every $C\in \D$, and since
	$L_Y\in \minsep_{sA,t}(G)$ because $L_Y\in \minsep_{sAY,t}(G)$
	and $|L_Y|=k=\kappa_{sA,t}(G)$, Lemma~\ref{lem:mainThmBoundedCardinality}
	implies $	L_Y\in \safeSepsMin{s}{t}(G,A)$.

	We now compare $L_Y$ with the original separator $S$. Since both
	$S$ and $L_Y$ belong to $\minsep_{sAY,t}(G)$, and
	$L_Y=L^*_{sAY,t}(G)$ is the unique minimum important
	$sAY,t$-separator, by Lemma~\ref{lem:uniqueMinImportantSep} we have
	$C_{sAY}(G\sminus L_Y)\subseteq C_{sAY}(G\sminus S)$.
	Since $S\in \safeSepsMin{s}{t}(G,A)$ and $S\in\minsep_{sAY,t}(G)$,
	we have $A\cup Y\subseteq C_s(G\sminus S)$,
	and therefore $	C_{sAY}(G\sminus S)=C_s(G\sminus S)$.
	Similarly, since $L_Y\in \safeSepsMin{s}{t}(G,A)$ and
	$Y\subseteq C_s(G\sminus L_Y)$, we have $C_{sAY}(G\sminus L_Y)=C_s(G\sminus L_Y)$.
	Hence $	C_s(G\sminus L_Y)\subseteq C_s(G\sminus S)$.

	Since $L_Y\in \safeSepsMin{s}{t}(G,A)$ and
	$S\in \safeSepskImp{s}{t}{\min}(G,A)$, the last inclusion cannot
	be strict. Otherwise $L_Y$ would be a CP $s,t$-separator with
	\[
	C_s(G\sminus L_Y)\subsetneq C_s(G\sminus S)
	\qquad\text{and}\qquad
	|L_Y|=|S|=k,
	\]
	contradicting the importance of $S$ in $\safeSeps{s}{t}(G,A)$ (i.e., that $S\in \safeSepskImp{s}{t}{\min}(G,A)$).
	Therefore $	C_s(G\sminus L_Y)=C_s(G\sminus S)$.
	Since both $L_Y$ and $S$ are minimal $s,t$-separators,
	Lemma~\ref{lem:fullComponents} gives
	\[
	L_Y
	=
	N_G(C_s(G\sminus L_Y))
	=
	N_G(C_s(G\sminus S))
	=
	S.
	\]
	Thus $	S=L_Y=L^*_{sAY,t}(G)$,
	where $Y\in \MHS(\varepsilon_{\D})$ and $	=|L^*_{sAY,t}(G)|=k=\kappa_{s,t}(G)$.
	Since $S\in \safeSepskImp{s}{t}{\min}(G,A)$ was arbitrary, we
	conclude that
	\[
	\safeSepskImp{s}{t}{\min}(G,A)
	\subseteq
	\bigcup_{Y\in \MHS(\varepsilon_{\D})}
	\set{L^*_{sAY,t}(G): |L^*_{sAY,t}(G)|=\kappa_{s,t}(G)}.
	\]
	
	It remains to justify the enumeration bound. Since every member of
	$\varepsilon_{\D}$ is a subset of $L^*$, every minimal hitting set
	of $\varepsilon_{\D}$ is also a subset of
	$\bigcup_{C\in\D}N_G(C)\subseteq L^*$.
	Moreover,
	\[
	|L^*|=\kappa_{sA,t}(G)=\kappa_{s,t}(G)=k.
	\]
	Therefore $	|\MHS(\varepsilon_{\D})|\le 2^k$.
	
	We enumerate all subsets of $L^*$, keep the inclusionwise minimal
	hitting sets of $\varepsilon_{\D}$, and for each such hitting set
	$Y$ compute the unique minimum important separator
	$L^*_{sAY,t}(G)$ using Lemma~\ref{lem:uniqueMinImportantSep}. We
	then keep precisely those computed separators whose cardinality is
	$k$. By the inclusion proved above, this list contains
	$\safeSepskImp{s}{t}{\min}(G,A)$.
	
	Computing each separator takes $O(n\cdot T(n,m))$ time by
	Lemma~\ref{lem:uniqueMinImportantSep}. Since there are at most
	$2^k$ candidate hitting sets, the total running time is
	\[
	O(2^k\cdot n\cdot T(n,m))
	=
	O(2^{\kappa_{s,t}(G)}\cdot n\cdot T(n,m)).
	\]
	This completes the proof.
\end{proof}

\begin{repcorollary}{\ref{corr:mainThmBoundedCardinalityExt}}
	\mainThmBoundedCardinalityExt
\end{repcorollary}
\begin{proof}
	Let $H$ be the graph obtained from $G$ by adding all edges between
	$s$ and $N_G[X]$; that is,
	\[
	\nodes(H)=\nodes(G)
	\qquad\text{and}\qquad
	\edges(H)=\edges(G)\cup\set{(s,v):v\in N_G[X]}.
	\]
	
	We shall use the following consequence of Lemma~\ref{lem:MinlsASep}. For every
	$Z\subseteq \nodes(G){\setminus}\set{s,t}$, the minimal $sXZ,t$-separators of
	$G$ are exactly the minimal $sZ,t$-separators of $H$; that is $\minlsep{sXZ,t}{G}=\minlsep{sZ,t}{H}$. Moreover, for
	every such separator $S\in \minlsep{sZ,t}{H}$, $C_{sZ}(H\sminus S)=C_{sXZ}(G\sminus S)$.
	Indeed, applying Lemma~\ref{lem:MinlsASep} to $G$ with the set $XZ$ reduces
	$sXZ,t$-separators in $G$ to $s,t$-separators in the graph obtained
	from $G$ by adding all edges from $s$ to $N_G[XZ]$. Applying
	Lemma~~\ref{lem:MinlsASep} to $H$ with the set $Z$ reduces $sZ,t$-separators in $H$ to
	$s,t$-separators in the graph obtained from $H$ by adding all edges
	from $s$ to $N_H[Z]$. These two resulting graphs have the same
	edge set, namely $	\edges(G)\cup \set{(s,u):u\in N_G[XZ]}$.
	The component identity follows from Lemma~\ref{lem:MinlsASep}.
	Therefore:
	\begin{equation}
		\label{eq:transferClaim}
		\minlsep{sXZ,t}{G}=\minlsep{sZ,t}{H}
		\qquad\text{and}\qquad
		C_{sZ}(H\sminus S)=C_{sXZ}(G\sminus S).\tag{1}
	\end{equation}
	Taking $Z=\emptyset$ gives
	\[
	\minsep_{sX,t}(G)=\minsep_{s,t}(H)
	\qquad\text{and}\qquad
	\kappa_{sX,t}(G)=\kappa_{s,t}(H).
	\]
	Taking $Z=A$ gives
	\[
	\minsep_{sAX,t}(G)=\minsep_{sA,t}(H)
	\qquad\text{and}\qquad
	\kappa_{sAX,t}(G)=\kappa_{sA,t}(H).
	\]
	Therefore, by the assumptions of the corollary,
	\[
	\kappa_{s,t}(H)
	=
	\kappa_{sX,t}(G)
	=
	\kappa_{sAX,t}(G)
	=
	\kappa_{sA,t}(H).
	\]
	
	We next prove that $f_{s,t}(H,A)=f_{s,t}(G,AX)$.
	Let $S\in \safeSepsMin{s}{t}(G,AX)$. Then $	|S|=f_{s,t}(G,AX)=\kappa_{sX,t}(G)$.
	Since $AX\subseteq C_s(G\sminus S)$, the set $S$ is an $sX,t$-separator in
	$G$ where $|S|=\kappa_{sX,t}(G)$. Hence $S\in \minsep_{sX,t}(G)$. By eq.~\eqref{eq:transferClaim}, applied with
	$Z=\emptyset$, $S\in \minsepst{H}$, and
	\[
	C_s(H\sminus S)=C_{sX}(G\sminus S)\underset{X\subseteq C_s(G\sminus S)}{=}C_s(G\sminus S),
	\]
	where the last equality follows from $X\subseteq C_s(G\sminus S)$. Since
	$A\subseteq C_s(G\sminus S)$, we get $A\subseteq C_s(H\sminus S)$. Thus
	$S\in \safeSeps{s}{t}(H,A)$, and hence $f_{s,t}(H,A)\le |S|=f_{s,t}(G,AX)$.
	
	Conversely, let $S\in \safeSepsMin{s}{t}(H,A)$. Then $|S|=f_{s,t}(H,A)$.
	Since every CP $s,t$-separator in $H$ is an $s,t$-separator in $H$, $f_{s,t}(H,A)\ge \kappa_{s,t}(H)=\kappa_{sX,t}(G)$.
	On the other hand, the previous paragraph showed that
	$f_{s,t}(H,A)\le f_{s,t}(G,AX)=\kappa_{sX,t}(G)$. Therefore $	|S|=f_{s,t}(H,A)=\kappa_{sX,t}(G)$.
	By eq.~\eqref{eq:transferClaim} applied with $Z=\emptyset$, $S$ is an $sX,t$-separator
	in $G$ where $|S|=\kappa_{sX,t}(G)$, and hence $S\in \minsep_{sX,t}(G)$. By the hypothesis of the
	corollary, $X\subseteq C_s(G\sminus S)$.
	Again using eq.~\eqref{eq:transferClaim},
	\begin{equation}
		\label{eq:componentIdentity}
		C_s(H\sminus S)=C_{sX}(G\sminus S)=C_s(G\sminus S).\tag{2}
	\end{equation}
	Since $A\subseteq C_s(H\sminus S)$, we get $AX\subseteq C_s(G\sminus S)$.
	
	Moreover, $S\in \minlsepst{G}$. Indeed, since
	$S$ is a minimal $sX,t$-separator in $G$, every vertex of $S$ has a
	neighbor in $C_{sX}(G\sminus S)$ and a neighbor in $C_t(G\sminus S)$.
	Because $X\subseteq C_s(G\sminus S)$, we have
	$C_{sX}(G\sminus S)=C_s(G\sminus S)$. Hence every vertex of $S$ has a neighbor in
	$C_s(G\sminus S)$ and a neighbor in $C_t(G\sminus S)$, and therefore, by Lemma~\ref{lem:fullComponents},
	$S\in \minlsepst{G}$. Thus $S\in \safeSeps{s}{t}(G,AX)$, and since
	\[
	|S|=\kappa_{sX,t}(G)=f_{s,t}(G,AX),
	\]
	we get $S\in \safeSepsMin{s}{t}(G,AX)$.
	
	We have shown
	\[
	\safeSepsMin{s}{t}(H,A)=\safeSepsMin{s}{t}(G,AX).
	\]
	We now show that this equality restricts to the subfamilies of
	important minimum CP separators; namely, $\safeSepskImp{s}{t}{\min}(H,A)=\safeSepskImp{s}{t}{\min}(G,AX)$. Let $S\in \safeSepskImp{s}{t}{\min}(G,AX)$. By the equality $\safeSepsMin{s}{t}(H,A)=\safeSepsMin{s}{t}(G,AX)$, we also have that $S\in \safeSepsMin{s}{t}(H,A)$.
	For every
	separator $S'\in \safeSepsMin{s}{t}(H,A)$, the component identities
	above (eq.~\eqref{eq:componentIdentity}) give $C_s(H\sminus S')=C_s(G\sminus S')$.
	Indeed, every such $S'$ has size $\kappa_{sX,t}(G)$ and is an
	$sX,t$-separator in $G$, and hence, by the hypothesis of the corollary,
	$X\subseteq C_s(G\sminus S')$.
	
	Suppose first that $S\notin \safeSepskImp{s}{t}{\min}(H,A)$. Then there
	exists $S'\in \safeSeps{s}{t}(H,A)$ such that
	\[
	C_s(H\sminus S')\subsetneq C_s(H\sminus S)
	\qquad\text{and}\qquad
	|S'|\le |S|.
	\]
	Since $S\in \safeSepsMin{s}{t}(H,A)$, it has minimum cardinality in $\safeSeps{s}{t}(H,A)$. Hence the inequality
	$|S'|\le |S|$ implies $S'\in \safeSepsMin{s}{t}(H,A)$. By the equality $\safeSepsMin{s}{t}(H,A)=\safeSepsMin{s}{t}(G,AX)$, we have 
	$S'\in \safeSepsMin{s}{t}(G,AX)$, and the component identity gives $C_s(G\sminus S')\subsetneq C_s(G\sminus S)$.
	Thus $S\notin \safeSepskImp{s}{t}{\min}(G,AX)$; a contradiction.
	
	The converse implication is identical, exchanging $G$ and $H$.
	Therefore,
	\[
	\safeSepskImp{s}{t}{\min}(H,A)=\safeSepskImp{s}{t}{\min}(G,AX).
	\]
	
	We next identify the canonical separator used in Theorem~\ref{thm:mainThmBoundedCardinality}. Let
	$L_H^*\defeq L^*_{sA,t}(H)$.
	By eq.~\eqref{eq:transferClaim}, applied with $Z=A$, we have $\minlsep{sA,t}{H}=\minlsep{sAX,t}{G}$, and $C_{sA}(H\sminus S)=C_{sAX}(G\sminus S)$
	for every such separator $S\in \minlsep{sA,t}{H}$. Hence the source-side order is
	preserved. By uniqueness of the minimum important separator
	from Lemma~\ref{lem:uniqueMinImportantSep}, $L_H^*=L^*_{sAX,t}(G)=L^*$.	We now compare the violating components in $H\sminus L^*$ and in $G\sminus L^*$.
	Since $	|L^*|=\kappa_{sAX,t}(G)=\kappa_{sX,t}(G)$,
	and $L^*$ is an $sX,t$-separator in $G$, we have
	$L^*\in \minsep_{sX,t}(G)$. By the hypothesis of the corollary, $	X\subseteq C_s(G\sminus L^*)$.
	Consequently, every vertex of $N_G[X]{\setminus} L^*$ lies in
	$C_s(G\sminus L^*)$. Therefore the edges added in the construction of $H$
	only add edges from $s$ to vertices already lying in $C_s(G\sminus L^*)$,
	after deleting $L^*$. Hence they do not merge any component of
	$G\sminus L^*$ outside $C_s(G\sminus L^*)$ with another component, nor with the
	$s$-component. Thus the connected components of $H\sminus L^*$ not containing
	$s$ are exactly the connected components of $G\sminus L^*$ not containing
	$s$.
	
	It follows that the family of violating components defined by
	Theorem~\ref{thm:mainThmBoundedCardinality} in $H$ is precisely
	\[
	\mathcal D
	=
	\set{C\in \cc(G\sminus L^*) : C\cap A\neq\emptyset,\ s\notin C}.
	\]
	Moreover, for every $C\in\mathcal D$, $N_H(C)=N_G(C)$.
	Indeed, every edge of $\edges(H){\setminus} \edges(G)$ is incident with $s$, and its
	other endpoint lies in $N_G[X]\subseteq C_s(G\sminus L^*)\cup L^*$. Since
	$C$ is a component of $G\sminus L^*$ not containing $s$, no such added edge
	has an endpoint in $C$ and the other endpoint outside $C$ except
	possibly through a vertex of $L^*$, which is already counted in
	$N_G(C)$. Hence the neighborhood family used by Theorem~\ref{thm:mainThmBoundedCardinality} in $H$ is
	exactly $\varepsilon_{\mathcal D}=\set{N_G(C):C\in\mathcal D}$.

	We may now apply Theorem~\ref{thm:mainThmBoundedCardinality} in the graph $H$ with the set $A$.
	Since $\kappa_{s,t}(H)=\kappa_{sA,t}(H)=f_{s,t}(H,A)$,
	and $\mathcal D\neq\emptyset$, Theorem~\ref{thm:mainThmBoundedCardinality} gives
	\[
	\emptyset
	\subset
	\safeSepskImp{s}{t}{\min}(H,A)
	\subseteq
	\bigcup_{Y\in \MHS(\varepsilon_{\mathcal D})}
	\{L^*_{sAY,t}(H): |L^*_{sAY,t}(H)|=\kappa_{s,t}(H)\}.
	\]
	It remains only to translate the right-hand side back to $G$. Let
	$Y\subseteq \nodes(G){\setminus}\set{s,t}$. By eq.~\eqref{eq:transferClaim}, applied with $Z=A\cup Y$,
	it holds that $\minlsep{sAY,t}{H}=\minlsep{sAXY,t}{G}$, and $C_{sAY}(H\sminus S)=C_{sAXY}(G\sminus S)$
	for every such separator $S\in \minlsep{sAY,t}{H}$. Thus the source-side order is preserved,
	and by uniqueness of the minimum important separator from Lemma~\ref{lem:uniqueMinImportantSep}, $L^*_{sAY,t}(H)=L^*_{sAXY,t}(G)$.
	Also, $\kappa_{s,t}(H)=\kappa_{sX,t}(G)$.
	Using the equality $\safeSepskImp{s}{t}{\min}(H,A)=\safeSepskImp{s}{t}{\min}(G,AX)$, the inclusion obtained
	from Theorem~\ref{thm:mainThmBoundedCardinality} becomes
	\[
	\emptyset
	\subset
	\safeSepskImp{s}{t}{\min}(G,AX)
	\subseteq
	\bigcup_{Y\in \MHS(\varepsilon_{\mathcal D})}
	\set{L^*_{sAXY,t}(G): |L^*_{sAXY,t}(G)|=\kappa_{sX,t}(G)}.
	\]
	Finally, the running-time bound follows from Theorem~~\ref{thm:mainThmBoundedCardinality} applied in
	$H$:
	\[
	O(2^{\kappa_{s,t}(H)}\cdot n\cdot T(n,m))
	=
	O(2^{\kappa_{sX,t}(G)}\cdot n\cdot T(n,m)).
	\]
	This completes the proof.
\end{proof}

\eat{
\begin{proof}
	Since $\kappa_{s,t}(G)=\kappa_{sA,t}(G)$, then $\closestMinSep{sA,t}(G) \in \minsep_{s,t}(G)$ and Lemma~\ref{lem:minSafeSepImportant} applies. Therefore, for every $C\in \D$, it holds that:
	\begin{equation}
		\label{eq:mainThmBoundedCardinality1}
		\set{S\in \minsep_{sA,t}(G): C\subseteq C_s(G\sminus S)}=\mediumbigcup_{v\in N_G(C)}\set{S\in \minsep_{sv,t}(G): \abs{S}=\kappa_{s,t}(G)}
	\end{equation}
	Since $f_{s,t}(G,A)=\kappa_{sA,t}(G)=\kappa_{s,t}(G)$, then:
	\begin{align}
		\safeSepsMin{s}{t}(G,A)&\underset{\text{Lem.~\ref{lem:mainThmBoundedCardinality}}}{=}\set{S\in \minsep_{sA,t}(G):\mediumbigcup_{C\in \D}C \subseteq C_s(G\sminus S)} \nonumber \\
		&=\mediumbigcap_{C\in \D}\set{S\in \minsep_{sA,t}(G): C\subseteq C_s(G\sminus S)} \nonumber\\
		&\underset{\text{eq.~\eqref{eq:mainThmBoundedCardinality1}}}{=}\mediumbigcap_{C\in \D}\left(\mediumbigcup_{v\in N_G(C)}\set{S\in \minsep_{sv,t}(G): |S|=\kappa_{s,t}(G)}\right) \label{eq:mainThmBoundedCardinality2}\\
		&=\mediumbigcup_{Y\in \MHS(\varepsilon_{\D})}\set{S\in \minsep_{sY,t}(G): |S|=\kappa_{s,t}(G)}  \label{eq:mainThmBoundedCardinality3}
	\end{align}
	We now show that the set of separators in eq.~\eqref{eq:mainThmBoundedCardinality2} and eq.~\eqref{eq:mainThmBoundedCardinality3}  are equal. Let $T$ be a separator that belongs to~\eqref{eq:mainThmBoundedCardinality2}. Then, for every $C\in \D$, there exists a vertex $v_C\in N_G(C)$ so that $T$ is an $sv_C,t$-separator whose cardinality is $\kappa_{s,t}(G)$.
	Let $Z \eqdef \set{v_C : C\in \D}$. By definition, $Z$ hits every set in $\varepsilon_{\D}$. Letting $Y\subseteq Z$ be a minimal hitting set of $\varepsilon_{\D}$, we get that $T\in \set{S\in \minsep_{sY,t}(G): |S|=\kappa_{s,t}(G)}$.
	Now, let $T$ be a separator that belongs to eq.~\eqref{eq:mainThmBoundedCardinality3}. That is, $T\in \minsep_{sY,t}(G)$ where $Y\in \MHS(\varepsilon_{\D})$ and $|T|=\kappa_{s,t}(G)$. By definition, for every $C\in \D$, there exists a $v_C\in Y\cap N_G(C)$ such that $T$ is an $sv_C,t$-separator whose size is $\kappa_{s,t}(G)$. In particular, $T\in \minsep_{sv_C,t}(G)$ where $|T|=\kappa_{s,t}(G)$. Hence, $T$ belongs to the set in eq.~\eqref{eq:mainThmBoundedCardinality2}.
	From~\eqref{eq:mainThmBoundedCardinality3}, it follows directly that $\safeSepsMin{sA}{t}^{*}(G)\subseteq\bigcup_{Y\in \MHS(\varepsilon_{\D})}\set{\closestMinSep{sY,t}(G): |\closestMinSep{sY,t}(G)|=\kappa_{s,t}(G)}$.
	
	Since $\cup_{B\in \varepsilon_{\D}} B \subseteq \closestMinSep{sA,t}(G)$, and $|\closestMinSep{sA,t}(G)|=\kappa_{sA,t}(G)=\kappa_{s,t}(G)$, then $\varepsilon_{\D}$ has at most $2^{\kappa_{s,t}(G)}$ hitting sets; $|\MHS(\varepsilon_{\D})|\leq 2^{\kappa_{s,t}(G)}$. By Lemma~\ref{lem:uniqueMinImportantSep}, $\closestMinSep{sY,t}(G)$ for some $Y\in \MHS(\varepsilon_{\D})$ is unique and can be computed in time $O(n\cdot T(n,m))$, leading to a runtime of $O(2^{\kappa_{s,t}(G)}\cdot n\cdot T(n,m))$.
\end{proof}
}

	\section{Formal Description and Analysis of Algorithm ~\hyperref[alg2e:GenSeps]{$\algname{Gen{-}Seps}$}}
\label{sec:AlgForImpSafeSepsAppendix}

We now give the formal description and analysis of
Algorithm~\ref{alg2e:GenSeps}. A recursive call has the form
\[
\algname{Gen{-}Seps}(G,s,t,A,X,Z,k).
\]
The set $A$ is the original terminal set that must remain connected to $s$.
The set $X$ contains vertices that the recursion has already forced to remain
in the $s$-component, and the set $Z$ contains vertices that have already been
committed to the separator. Thus, the current call is responsible for covering
the separators in $\safeSepskImp{s}{t}{k}(G,AX)$,
and whenever the call outputs a set $S$ in the current graph, the corresponding
candidate in the original recursion branch is $Z\cup S$.

The key invariant maintained by the algorithm is the following:
\[
\tag{Inv}
\text{for every } Q\subseteq \nodes(G)
\text{ and every } T\in \minsep_{sXQ,t}(G),
\quad
X\subseteq C_s(G\sminus T).
\]
In words, once a vertex is placed in $X$, every minimum separator that is later
computed while treating $X$ as part of the source side is guaranteed to include all vertices of
$X$ in the $s$-component (i.e., every vertex of $X$ remains connected to $s$). This is the property that allows the algorithm to replace statements about $sX,t$- and $sAX,t$-separators by statements about CP
$s,t$-separators with respect to $AX$. The invariant is not assumed: it is
proved in Lemma~\ref{lem:GenSepsInvariant}. The proof uses the fact that the
algorithm enlarges $X$ only in Line~\ref{line2e:increaseX} and in
Line~\ref{line:recursiveCommitToSep}, and in both cases the added vertices
belong to the union of minimum separators for the current source side, so
Corollary~\ref{corr:XInvariant} applies.

To guide the analysis, we associate each call to
Algorithm~\ref{alg2e:GenSeps} with the potential
\begin{equation}
	\label{eq:LambdaPotential}
	\lambda(G,A,X,k)
	\eqdef
	(k+1)(2k-\kappa_{sAX,t}(G))+(k-\kappa_{sX,t}(G)).
\end{equation}
This potential is used only for calls that can still generate a separator of
size at most $k$. In such calls, the tests in
Lines~\ref{line2e:L_sAtt>k} and~\ref{line2e:L_stt>k} have failed, and hence
\[
\kappa_{sAX,t}(G)\le k
\qquad\text{and}\qquad
\kappa_{sX,t}(G)\le k.
\]
Therefore $\lambda(G,A,X,k)$ is a nonnegative integer bounded by
$2k^2+3k$. Lemma~\ref{claim:recursion-height} proves that every recursive
step either strictly decreases this potential, or is the special recursive
step in Line~\ref{line:recursiveCommitToSep}, which cannot occur twice
consecutively. This gives the $O(k^2)$ bound on the recursion height.

The algorithm first checks whether the current instance can contain any
separator of size at most $k$. In Line~\ref{line2e:L_sAtt>k}, if
$\kappa_{sAX,t}(G)>k$, then every CP $s,t$-separator with respect to $AX$ has
size larger than $k$, because every such separator is an $sAX,t$-separator.
Hence the call returns.

The algorithm then computes, in Line~\ref{line2e:L_stt}, the unique minimum
$sX,t$-separator closest to $t$, denoted $L^t_{sX,t}(G)$. By
Lemma~\ref{lem:uniqueMinImportantSep}, this separator is well-defined and can
be computed in time $O((m+n)\kappa_{sX,t}(G))=O(mk)$. If
$|L^t_{sX,t}(G)|=\kappa_{sX,t}(G)>k$, then every separator in
$\safeSepskImp{s}{t}{k}(G,AX)$ has size larger than $k$, since every CP
$s,t$-separator with respect to $AX$ is also an $sX,t$-separator. Therefore
the call returns in Line~\ref{line2e:L_stt>k}.

Next suppose that the algorithm reaches Line~\ref{line2e:returnZEmptyIf}, so
that $L^t_{sX,t}(G)=\emptyset$. Then the empty set is a minimum
$sX,t$-separator. Applying (Inv) with $Q=\emptyset$ and
$T=\emptyset$, we get $X\subseteq C_s(G)$. If $A\subseteq C_s(G)$, then
$AX\subseteq C_s(G)$, so the empty set is a CP $s,t$-separator with respect to
$AX$ in the current graph. The algorithm therefore outputs $Z$ in
Line~\ref{line2e:returnZEmpty}. If $A\not\subseteq C_s(G)$, then no separator
in the current graph can make all vertices of $AX$ lie in the $s$-component,
because deleting vertices cannot connect a vertex of $A$ to $s$. In this case
the call returns in Line~\ref{line2e:returnZEmpty2}. From
Line~\ref{line2e:L_{s,t}^tNormal} onward, the algorithm therefore proceeds
under the assumption that $0<|L^t_{sX,t}(G)|\le k$.

The first branching step is the block
Lines~\ref{line2e:prefirstIf}--\ref{line2e:prefirstIfEnd}. This block handles
the case where
\[
L^t_{sX,t}(G)\notin \safeSeps{s}{t}(G,A).
\]
Let $L^t_{sX,t}(G)=\set{x_1,\ldots,x_\ell}$.
By Lemma~\ref{lem:partZeroOfAlg}, applied with the terminal set $AX$, every
separator in $\safeSepskImp{s}{t}{k}(G,AX)$ is contained in one of the families
\[
\safeSepskImp{s}{t}{k}(G,AXx_i),
\qquad i\in\{1,\ldots,\ell\}.
\]
Therefore the recursive calls in Line~\ref{line2e:increaseX} cover all
separators represented by the current call. Moreover, for every
$x_i\in L^t_{sX,t}(G)$, Lemma~\ref{lem:potentialReducedPreFirstPart} gives $\kappa_{sXx_i,t}(G)>\kappa_{sX,t}(G)$.
Thus the call in Line~\ref{line2e:increaseX} strictly decreases the potential
\eqref{eq:LambdaPotential}. Lemma~\ref{lem:GenSepsInvariant} proves that
(Inv) is preserved for these recursive calls.

We may now assume that the test in Line~\ref{line2e:prefirstIf} fails; that is, $L^t_{sX,t}(G)\in \safeSeps{s}{t}(G,A)$.
Apply (Inv) with $Q=\emptyset$ and
$T=L^t_{sX,t}(G)\in\minsep_{sX,t}(G)$. This gives $X\subseteq C_s(G\sminus L^t_{sX,t}(G))$.
Since the current branch also satisfies $A\subseteq C_s(G\sminus L^t_{sX,t}(G))$,
we get $AX\subseteq C_s(G\sminus L^t_{sX,t}(G))$.
Hence $L^t_{sX,t}(G)$ is a CP $s,t$-separator with respect to $AX$, and
\[
f_{s,t}(G,AX)\le |L^t_{sX,t}(G)|=\kappa_{sX,t}(G).
\]
On the other hand, every CP $s,t$-separator with respect to $AX$ is an
$sAX,t$-separator, and every $sAX,t$-separator is an $sX,t$-separator. Therefore
\[
\kappa_{sX,t}(G)
\le
\kappa_{sAX,t}(G)
\le
f_{s,t}(G,AX).
\]
Consequently,
\[
\kappa_{sX,t}(G)=\kappa_{sAX,t}(G)=f_{s,t}(G,AX).
\]
This equality is the premise needed in the final branch, where
Corollary~\ref{corr:mainThmBoundedCardinalityExt} is applied.

The next step is the branch beginning at Line~\ref{line2e:elseIfBegin}. In
this branch the algorithm tests whether the unique important minimum
$sAX,t$-separator $L^*\defeq \closestMinSep{sAX,t}(G)$
is CP with respect to $A$. If $L^*\in \safeSeps{s}{t}(G,A)$,
then we apply (Inv) with $Q=A$ and
$T=L^*\in\minsep_{sAX,t}(G)$. This gives $X\subseteq C_s(G\sminus L^*)$.
Since the branch assumes $A\subseteq C_s(G\sminus L^*)$, we get $AX\subseteq C_s(G\sminus L^*)$.
Thus $L^*$ is CP with respect to $AX$. Corollary~\ref{corr:partTwoOfAlg} gives
\[
\safeSepsImp{s}{t}(G,AX)
\subseteq
\{L^*\}
\cup
\bigcup_{x\in L^*}
\safeSepsImp{s}{t}(G_x,AX),
\]
where $G_x$ is obtained from $G$ by adding all edges between $t$ and $N_G[x]$.
The separator $L^*$ itself is covered by the recursive call in
Line~\ref{line2e:secondPart1}, where $L^*$ is deleted, added to $Z$, and the
budget is reduced by $|L^*|$. The remaining families are covered by the loop in
Lines~\ref{line2e:secondPart1LoopBegin}--\ref{line2e:secondPart1End}. For each
$x_i\in L^*$, the graph $G_i$ is obtained by adding all edges between $t$ and
$N_G[x_i]$, and the recursive call in Line~\ref{line2e:secondPart1End} covers
$\safeSepskImp{s}{t}{k}(G_i,AX)$. Lemma~\ref{lem:sAtMinlSepLargerThanMinsAtSep_s}
shows that the relevant connectivity strictly increases (i.e., $\kappa_{sAX,t}(G)<\kappa_{sAXx_i,t}(G)$ for every $x_i\in L^*$) in these edge-insertion
calls, and Lemma~\ref{claim:recursion-height} uses this to prove that the
potential strictly decreases.

It remains to describe the final branch, beginning at
Line~\ref{line2e:thirdPartStart}. In this branch,
\[
L^t_{sX,t}(G)\in \safeSeps{s}{t}(G,A)
\qquad\text{and}\qquad
L^*\notin \safeSeps{s}{t}(G,A),
\]
where $L^*=\closestMinSep{sAX,t}(G)$. As shown above, at this point
\[
\kappa_{sX,t}(G)=\kappa_{sAX,t}(G)=f_{s,t}(G,AX).
\]
Moreover, (Inv) gives precisely the hypothesis required by
Corollary~\ref{corr:mainThmBoundedCardinalityExt}: for every
$Q\subseteq\nodes(G)$ and every $T\in\minsep_{sXQ,t}(G)$, the set $X$ lies in
$C_s(G\sminus T)$. The algorithm therefore defines, in
Line~\ref{line2e:thirdPartStart},
\begin{align*}
\mathcal D
\defeq
\set{C\in\cc(G\sminus L^*) : s\notin C,\ C\cap A\neq\emptyset},&&\text{ and }&&\varUpsilon \eqdef \set{N_G(C):C\in\mathcal D}.
\end{align*}
Using the enumeration procedure of Corollary~\ref{corr:mainThmBoundedCardinalityExt}, the algorithm finds one $Y\in\mathrm{MHS}(\Upsilon)$ such that $\closestMinSep{sAXY,t}(G)\in \safeSepskImp{s}{t}{\min}(G,AX)$.
Since every set in $\varUpsilon$ is a
subset of $L^*$, every vertex of the minimal hitting set $Y$ belongs to $L^*$;
in particular, $|Y|\le |L^*|\le k$.

The recursive calls in Lines~\ref{line:recursiveCommitToSep},
\ref{line:recursiveGy}, and~\ref{line:recursiveGy2} are justified by
Lemma~\ref{lem:thirdPart}. Writing $Y=\set{y_1,\ldots,y_q}$, Lemma~\ref{lem:thirdPart} gives
\[
\safeSepsImp{s}{t}(G,AX)
\subseteq
\safeSepsImp{s}{t}(G,AXY)
\cup
\bigcup_{i=1}^q
\{T\cup\{y_i\}:T\in\safeSepsImp{s}{t}(G\sminus y_i,AX)\}
\cup
\bigcup_{i=1}^q
\safeSepsImp{s}{t}(G_i,AX),
\]
where $G_i$ is obtained from $G$ by adding all edges between $t$ and
$N_G[y_i]{\setminus}\set{t}$. The first family is covered by the recursive call in
Line~\ref{line:recursiveCommitToSep}, which adds $Y$ to $X$. The second family
is covered by the recursive calls in Line~\ref{line:recursiveGy}, where $y_i$
is committed to the separator, deleted from the graph, and the budget is reduced
by one. The third family is covered by the recursive calls in
Line~\ref{line:recursiveGy2}, where the graph is modified by adding the edges
from $t$ to $N_G[y_i]{\setminus}\set{t}$. Lemma~\ref{lem:GenSepsInvariant} proves
that (Inv) is preserved in all these recursive calls.

The correctness of the recursive calls is proved in
Theorem~\ref{thm:GenSepsCompleteness}. The invariant (Inv) is proved in
Lemma~\ref{lem:GenSepsInvariant}. The recursion height is bounded in
Lemma~\ref{claim:recursion-height}. Finally,
Theorem~\ref{thm:detailedRuntimeAnalysis} combines completeness, the recursion
height, the $O(k)$ branching bound, and the cost of applying
Corollary~\ref{corr:mainThmBoundedCardinalityExt}, to obtain the output-size
and running-time bounds.
\eat{
\batya{TODO}
To guide the analysis, we associate each call to~\hyperref[alg2e:GenSeps]{$\algname{Gen{-}Seps}$} with the potential
\begin{equation}
	\label{eq:LambdaPotential}
	\lambda(G,A,X,k) \eqdef (k+1)(2k-\kappa_{sAX,t}(G))+(k-\kappa_{sX,t}(G))
\end{equation}
In the remainder of this section, we describe the algorithm and prove Theorem~\ref{thm:singleExponentialSafeImportantOfSizek}, which bounds the number of CP, important separators of size at most~$k$ and the runtime to enumerate them. This Section contains the complete details of the algorithm, with all stated claims and their proofs.

The algorithm begins in line~\ref{line2e:L_stt} by computing the unique minimum $sX,t$-separator $L_{sX,t}^t \in \minsep_{sX,t}(G)$ that is closest to $t$, which can be done in time $O(n \cdot T(n,m))$ (Lemma~\ref{lem:uniqueMinImportantSep}). 
If $|L_{sX,t}^t| = \kappa_{sX,t}(G) > k$, then by proposition~\ref{prop:simpleProp}, $f_{s,t}(G,AX) \geq \kappa_{sX,t}(G) > k$, implying $\safeSepskImp{s}{t}{k}(G,AX) = \emptyset$. The algorithm thus terminates in line~\ref{line2e:L_stt>k}. 
If instead $L_{sX,t}^t = \emptyset$, it separates $sAX$ from $t$, and $AX\subseteq C_s(G\sminus L_{s,t}^t)$, then $Z$ is a connectivity-preserving (CP) $s,t$-separator (with respect to $AX$) of size at most $k$, and is output in line~\ref{line2e:returnZEmpty}. Otherwise, $Z$ is not CP wirth respect to $AX$, and no CP minimal $s,t$-separator exists in $G \sminus Z$, so the algorithm returns in line~\ref{line2e:returnZEmpty2}. 
From line~\ref{line2e:L_{s,t}^tNormal}, the algorithm proceeds under the assumption that $0 < |L_{sX,t}^t| \leq k$.

The pseudocode in lines \ref{line2e:prefirstIf}-\ref{line2e:prefirstIfEnd}
handles the case where $L_{sX,t}^t \notin \safeSepsk{s}{t}{k}$, and relies on lemmas~\ref{lem:partZeroOfAlg} and~\ref{lem:potentialReducedPreFirstPart}. Lemma~\ref{lem:partZeroOfAlg} shows that finding $\safeSepskImp{s}{t}{k}(G,AX)$ can be reduced to finding $\safeSepskImp{s}{t}{k}(G,AXy)$ in for every $y\in L_{sX,t}^t$. Lemma~\ref{lem:potentialReducedPreFirstPart} shows that $\kappa_{sXy,t}(G)>\kappa_{sX,t}(G)$ for every $y\in L_{sX,t}^t(G)$, leading to a strict reduction in the potential $\lambda(G,A,X,k)$ (see~\eqref{eq:LambdaPotential}). 
}

\eat{
\def\lemPartZeroOfAlg{
	Let $A\subseteq\nodes(G){\setminus}\set{s,t}$ and $\mc{R}$ a monotonic set of connectivity constraints over $sA$. Let $T\in \minlsepst{G}$ where $A\cap (T\cup C_t(G\sminus T))\neq \emptyset$. Letting $T=\set{x_1,\dots, x_\ell}$, it holds that:
	\begin{equation*}
		\safeSepsImp{sA}{t}(G,\mc{R})\subseteq \mediumbigcup_{i=1}^\ell \safeSepsImp{sAx_i}{t}(G,\mc{R})=\mediumbigcup_{i=1}^\ell \safeSepsImp{sA}{t}(G_i,\mc{R})
	\end{equation*}
	where $G_i$ is the graph that results from $G$ by adding all edges between $s$ and $N_G[x_i]$.
}
\begin{lemma}
	\label{lem:partZeroOfAlg}
	\lemPartZeroOfAlg
\end{lemma}
\begin{proof}
	We first show that:
	\begin{equation}
		\label{eq:partZeroOfAlg_1}
		\minlsep{sA,t}{G} \subseteq \mediumbigcup_{i=1}^\ell \minlsep{sAx_i,t}{G}
	\end{equation}
	Let $S\in \minlsep{sA,t}{G}$. Suppose, by way of contradiction, that $S \notin \mediumbigcup_{i=1}^\ell \minlsep{sAx_i,t}{G}$. This means that $S$ does not separate $t$ from any vertex in $T$, and hence $T\subseteq S\cup C_t(G\sminus S)$. By Lemma~\ref{lem:inclusionCsCt}, it holds that $C_t(G\sminus T) \subseteq C_t(G\sminus S)$. Since $T\in \minlsepst{G}$, and $S\in \minlsep{sA,t}{G}$, then
	\begin{align*}
		C_t(G\sminus T) \cup T \underset{\text{Lem.~\ref{lem:fullComponents}}}{=}C_t(G\sminus T) \cup N_G(C_t(G\sminus T)) \subseteq C_t(G\sminus S)\cup N_G(C_t(G\sminus S))\underset{\text{Lem.~\ref{lem:simpAB}}}{=}C_t(G\sminus S)\cup  S
	\end{align*}
	But then, 
	\[
	\emptyset \subset A\cap (T\cup C_t(G\sminus T)) \subseteq A\cap (C_t(G\sminus S) \cup S)
	\]
	which means that $S$ does not separate $sA$ from $t$, and hence $S\notin \minlsep{sA,t}{G}$; a contradiction. This proves~\ref{eq:partZeroOfAlg_1}. Furthermore, it also proves that 
	\begin{equation}
		\label{eq:partZeroOfAlg_2}
		\safeSeps{sA}{t}(G, \mc{R})\subseteq \mediumbigcup_{i=1}^\ell \safeSeps{sAx_i}{t}(G, \mc{R}).
	\end{equation}
	We now prove that $\safeSepsImp{sA}{t}(G, \mc{R})\subseteq \mediumbigcup_{i=1}^\ell \safeSepsImp{sAx_i}{t}(G, \mc{R})$.
	Let $S\in \safeSepsImp{sA}{t}(G, \mc{R})$. By~\eqref{eq:partZeroOfAlg_2}, we get that $S\in \safeSeps{sAx_i}{t}(G, \mc{R})$ for some $x_i \in T$. Suppose, by way of contradiction, that $S\notin  \safeSepsImp{sAx_i}{t}(G, \mc{R})$. By definition~\ref{def:importantABSeps}, there exists a $S'\in \safeSeps{sAx_i}{t}(G, \mc{R})$ where:
	\begin{align*}
		C_{sAx_i}(G\sminus S') \subset C_{sAx_i}(G\sminus S) && \text{ and } && |S'|\leq |S|.
	\end{align*}
	By definition, $S\cap C_{sAx_i}(G\sminus S)=\emptyset$. Since $C_{sA}(G\sminus S')\subseteq C_{sAx_i}(G\sminus S')$, then $S\cap C_{sA}(G\sminus S')=\emptyset$.
	Consequently, $C_{sA}(G\sminus S') \subseteq C_{sA}(G\sminus S)$.  Let $S''\eqdef N_G(C_{sA}(G\sminus S'))$.
	Since $S'\in \safeSeps{sAx_i}{t}(G, \mc{R})\subseteq \minlsep{sAx_i,t}{G}$, then by Lemma~\ref{lem:simpAB}, $S'=N_G(C_t(G\sminus S'))$. Therefore, $S''=N_G(C_{sA}(G\sminus S'))\cap N_G(C_{t}(G\sminus S'))$, and by Lemma~\ref{lem:simpAB}, $S''\in \minlsep{sA,t}{G}$, where $C_{sA}(G\sminus S')=C_{sA}(G\sminus S'')$, and thus $S''\models \mc{R}$, and $S''\in \safeSeps{sA}{t}(G,\mc{R})$. By construction, $C_{sA}(G\sminus S'')=C_{sA}(G\sminus S')\subseteq C_{sA}(G\sminus S)$ and $|S''|\leq |S'|\leq |S|$. If $C_{sA}(G\sminus S'')=C_{sA}(G\sminus S)$, then $S''=S$, which means that $S\subseteq S'$; a contradiction. Otherwise, $C_{sA}(G\sminus S'')\subset C_{sA}(G\sminus S)$ and $|S''|\leq |S|$, contradicting the assumption that $S\in \safeSepsImp{sA}{t}(G, \mc{R})$. This proves that $	\safeSepsImp{sA}{t}(G, \mc{R})\subseteq \mediumbigcup_{i=1}^\ell \safeSepsImp{sAx_i}{t}(G, \mc{R})$.
	
	By Lemma~\ref{lem:MinlsASep}, we have that $\minlsep{sAx_i,t}{G}=\minlsep{sA,t}{G_i}$ where $\edges(G_i)=\edges(G)\cup \set{(s,u):u\in N_G[x_i]}$. Since $\mc{R}$ is defined over $sA$, then $\safeSeps{sAx_i}{t}(G,\mc{R})=\safeSeps{sA}{t}(G_i,\mc{R})$ and  $\safeSepsImp{sAx_i}{t}(G,\mc{R})=\safeSepsImp{sA}{t}(G_i,\mc{R})$. 
\end{proof}
}

\def\lemPartZeroOfAlg{
	Let $A\subseteq \nodes(G)$ and let $S\in \minlsepst{G}$. If $A\not\subseteq C_s(G\sminus S)$, then:
	\[
			\safeSepsImp{s}{t}(G,A)\subseteq \bigcup_{x\in S}\safeSepsImp{s}{t}(G,Ax).
	\]
}
\begin{lemma}
	\label{lem:partZeroOfAlg}
	\lemPartZeroOfAlg
\end{lemma}
\begin{proof}
	Let $T\in \safeSepsImp{s}{t}(G,A)$. We show that there exists
	$x\in S$ such that $T\in \safeSepsImp{s}{t}(G,Ax)$.
	Since $T\in \safeSepsImp{s}{t}(G,A)$, in particular
	$T\in \safeSeps{s}{t}(G,A)$. Hence $A\subseteq C_s(G\sminus T)$.
	By assumption, $A\not\subseteq C_s(G\sminus S)$, so choose $a\in A{\setminus} C_s(G\sminus S)$.
	Since $a\in A\subseteq C_s(G\sminus T)$, there is an $s,a$-path $P$
	in $G\sminus T$.
	The path $P$ starts at $s\in C_s(G\sminus S)$ and ends at
	$a\notin C_s(G\sminus S)$. Therefore $P$ must contain a vertex of
	$S$. Let $x\in \nodes(P)\cap S$.
	Since $P$ is contained in $G\sminus T$, we have $x\notin T$. Moreover,
	the subpath of $P$ from $s$ to $x$ is contained in $G\sminus T$, and
	therefore $x\in C_s(G\sminus T)$.
	Together with $A\subseteq C_s(G\sminus T)$, this gives $Ax\subseteq C_s(G\sminus T)$.
	Thus $	T\in \safeSeps{s}{t}(G,Ax)$.

	It remains to prove that $T$ is important in $\safeSeps{s}{t}(G,Ax)$; that is $T\in \safeSepsImp{s}{t}(G,Ax)$.
	Suppose not. Then there exists $T'\in \safeSeps{s}{t}(G,Ax)$
	such that
	\[
	C_s(G\sminus T')\subsetneq C_s(G\sminus T)
	\qquad\text{and}\qquad
	|T'|\le |T|.
	\]
	Since $T'\in \safeSeps{s}{t}(G,Ax)$, we have $Ax\subseteq C_s(G\sminus T')$.
	In particular, $A\subseteq C_s(G\sminus T')$,
	and so $T'\in \safeSeps{s}{t}(G,A)$.
	But then $T'$ contradicts the assumption that
	$T\in \safeSepsImp{s}{t}(G,A)$. Therefore $	T\in \safeSepsImp{s}{t}(G,Ax)$.

	Since $T\in \safeSepsImp{s}{t}(G,A)$ was arbitrary, we conclude that
	\[
	\safeSepsImp{s}{t}(G,A)
	\subseteq
	\bigcup_{x\in S}\safeSepsImp{s}{t}(G,Ax).
	\]
\end{proof}

\def\lemPartZeroOfAlgBetter{
	Let $A\subseteq \nodes(G)$ and let $L^t\in \minsep_{s,t}{G}$ the unique minimum $s,t$-separator closest to $t$. If $A\not\subseteq C_s(G\sminus L^t)$, then:
	\[
	\safeSepsImp{s}{t}(G,A)\subseteq \bigcup_{x\in L^t}\safeSepsImp{s}{t}(G_x,A),
	\]
	where $G_x$ is the graph that results from $G$ by adding all edges between $s$ and $N_G[x]$.
}

For the proofs of Lemmas~\ref{lem:potentialReducedPreFirstPart} and~\ref{lem:sAtMinlSepLargerThanMinsAtSep_s}, we require Lemma~\ref{lem:sAtMinlSepLargerThanMinsAtSep_t} below.
\begin{lemma}
	\label{lem:sAtMinlSepLargerThanMinsAtSep_t}
	Let $A\subseteq \nodes(G){\setminus}\set{s,t}$ where $sA\cap N_G[t]=\emptyset$, and let $L^t \in \minsep_{sA,t}(G)$ be the unique minimum $sA,t$-separator that is closest to $t$. Then, for any $T\in \minlsep{sA,t}{G}$:
	\begin{align*}
		\text{If }&& C_t(G\sminus L^t)\not\subseteq C_t(G\sminus T) && \text{ then } && |T|>\kappa_{sA,t}(G).
	\end{align*} 
\end{lemma}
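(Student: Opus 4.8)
The plan is to prove the contrapositive. Suppose $T\in\minlsep{sA,t}{G}$ with $|T|\le\kappa_{sA,t}(G)$. Since $T$ is in particular an $sA,t$-separator we have $|T|\ge\kappa_{sA,t}(G)$, hence $|T|=\kappa_{sA,t}(G)$ and $T$ is a minimum $sA,t$-separator. Writing $\kappa\eqdef\kappa_{sA,t}(G)$, it then suffices to show $C_t(G\sminus L^t)\subseteq C_t(G\sminus T)$, which contradicts the hypothesis $C_t(G\sminus L^t)\not\subseteq C_t(G\sminus T)$.

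To prove this containment I would invoke the extremality of the closest-to-$t$ minimum separator: namely that $C_t(G\sminus L^t)$ is the (unique) inclusion-minimal set among the $t$-side components $C_t(G\sminus S)$ of minimum $sA,t$-separators $S$, so that $C_t(G\sminus L^t)\subseteq C_t(G\sminus S)$ for every $S\in\minsep_{sA,t}(G)$, in particular for $S=T$. If this uncrossing fact is already recorded in the appendix on minimal separators, the proof is complete at this point. Otherwise I would derive it from submodularity as follows. Set $C\eqdef C_t(G\sminus L^t)$ and $C'\eqdef C_t(G\sminus T)$; by Lemma~\ref{lem:fullComponents} we have $L^t=N_G(C)$ and $T=N_G(C')$, both of size $\kappa$, and since $L^t,T$ are $sA,t$-separators, $t\in C\cap C'$ while $sA$ is disjoint from $C\cup C'\cup L^t\cup T$.

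The two observations needed are: (i) $C\cup C'$ is connected (both pieces contain $t$), contains $t$, and satisfies $N_G(C\cup C')\subseteq N_G(C)\cup N_G(C')=L^t\cup T$, so $N_G(C\cup C')$ is an $sA,t$-separator and $|N_G(C\cup C')|\ge\kappa$; and (ii) $N_G(C\cap C')$ is an $sA,t$-separator as well, since the connected component of $t$ in $G\sminus N_G(C\cap C')$ stays inside $C\cap C'$ (any path leaving $C\cap C'$ must cross $N_G(C\cap C')$), and $N_G(C\cap C')$ avoids both $t$ and $sA$ because each of its vertices lies either in $C\setminus C'\subseteq C$ or in $N_G(C)=L^t$, both disjoint from $sA$; hence $|N_G(C\cap C')|\ge\kappa$. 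Submodularity (Lemma~\ref{lem:submodularity}, Corollary~\ref{corr:submodularity}) then gives $2\kappa=|N_G(C)|+|N_G(C')|\ge|N_G(C\cup C')|+|N_G(C\cap C')|\ge 2\kappa$, so $|N_G(C\cap C')|=\kappa$. Letting $C''\eqdef C_t(G\sminus N_G(C\cap C'))\subseteq C\cap C'$, one checks $N_G(C'')\subseteq N_G(C\cap C')$, so $N_G(C'')$ is a minimum $sA,t$-separator whose $t$-component is $C''\subseteq C$; by the inclusion-minimality of $C=C_t(G\sminus L^t)$ this forces $C''=C$, whence $C=C''\subseteq C\cap C'\subseteq C'$, as required.

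The main obstacle is observation (ii): because $C\cap C'$ need not be connected, one cannot simply quote that the neighborhood of a connected set separates it, and must argue directly both that the $t$-component of $G\sminus N_G(C\cap C')$ remains confined to $C\cap C'$ and that $N_G(C\cap C')$ is disjoint from $sA\cup\set{t}$. Everything else — the reduction to minimum separators, the submodularity inequality, and the use of the extremal property of $L^t$ — is routine.
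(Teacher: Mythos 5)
Your proof is correct and takes essentially the same route as the paper's: reduce to $|T|=\kappa_{sA,t}(G)$, apply submodularity to the two $t$-side components $C_t(G\setminus L^t)$ and $C_t(G\setminus T)$ (whose neighborhoods are again $sA,t$-separators), and invoke the closest-to-$t$ extremality of $L^t$ to force the containment. Your extra step via $C''$ merely handles the possible disconnectedness of the intersection a bit more explicitly than the paper's terser conclusion, but the argument is the same.
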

\begin{proof}
	Suppose, by way of contradiction, that $C_t(G\sminus L^t)\not\subseteq C_t(G\sminus T)$, and that $|T|\leq \kappa_{sA,t}(G)$. By submodularity (Lemma~\ref{lem:submodularity} and Corollary~\ref{corr:submodularity}), we have that:
	\begin{align}
		\underbrace{|T|}_{=\kappa_{sA,t}(G)}+\underbrace{|L^t|}_{=\kappa_{sA,t}(G)} & \geq \underbrace{|N_G(C_t(G\sminus T)\cup C_t(G\sminus L^t))|}_{\geq \kappa_{sA,t}(G)}+  |N_G(C_t(G\sminus T)\cap C_t(G\sminus L^t))| && \label{eq:sAtMinlSepLargerThanMinsAtSep_t_1}
	\end{align}
	Since $|T|\le \kappa_{sA,t}(G)$ and $T$ is an $sA,t$-separator, we have $T\in\minsep_{sA,t}(G)$.
	Since $L^t, T\in \minsep_{sA,t}(G)$, then clearly both $C_t(G\sminus T)\cap C_t(G\sminus L^t)$ and $C_t(G\sminus T)\cup C_t(G\sminus L^t)$ are vertex-sets that contain $t$ and no vertices from $sA$. Hence, $N_G(C_t(G\sminus T)\cap C_t(G\sminus L^t))$ and $N_G(C_t(G\sminus T)\cup C_t(G\sminus L^t))$ are $sA,t$-separators of $G$. In particular,  $|N_G(C_t(G\sminus T)\cup C_t(G\sminus L^t))| \geq \kappa_{sA,t}(G)$.
	From~\eqref{eq:sAtMinlSepLargerThanMinsAtSep_t_1}, we get that  $|N_G(C_t(G\sminus T)\cap C_t(G\sminus L^t))|=\kappa_{sA,t}(G)$. Since $C_t(G\sminus L^t)\not\subseteq C_t(G\sminus T)$, then $C_t(G\sminus T)\cap C_t(G\sminus L^t)) \subset  C_t(G\sminus L^t)$, then $L^t$ is not the closest to $t$; a contradiction.
\end{proof}

\def\potentialReducedPreFirstPart{
	Let $X\subseteq \nodes(G)$, and let $L_{sX,t}^t \in \minsep_{sX,t}(G)$ be the unique minimum $sX,t$-separator that is closest to $t$, and let $y\in L_{sX,t}^t(G)$. Then, $\kappa_{sXy,t}(G)> \kappa_{sX,t}(G)$.
}
\begin{lemma}
	\label{lem:potentialReducedPreFirstPart}
	\potentialReducedPreFirstPart
\end{lemma}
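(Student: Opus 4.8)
The plan is to argue by contradiction, leveraging the extremal property that makes $L_{s,t}^t$ the separator ``closest to $t$.'' First I would dispose of the easy inequality: every $sx,t$-separator is in particular an $s,t$-separator, so $\kappa_{sx,t}(G)\ge \kappa_{s,t}(G)$ always, and it remains only to exclude equality. So I would assume $\kappa_{sx,t}(G)=\kappa_{s,t}(G)$; this value is finite, hence a minimum $sx,t$-separator $L$ exists with $|L|=\kappa_{s,t}(G)$, and since $L$ also separates $s$ from $t$, $L$ is in fact a \emph{minimum} $s,t$-separator, $L\in\minsep_{s,t}(G)$. Because $L$ is a legitimate $sx,t$-separator it avoids its terminals, so $x\notin L$, and $x$ is cut off from $t$ in $G\sminus L$; therefore $x\notin C_t(G\sminus L)$. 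This last fact is what I intend to contradict.

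Next I would bring in two structural facts about $L_{s,t}^t$. The first is the defining property of the minimum $s,t$-separator closest to $t$ (Lemma~\ref{lem:uniqueMinImportantSep}): its $t$-side component is inclusion-minimal among the $t$-side components of all minimum $s,t$-separators, so in particular $C_t(G\sminus L_{s,t}^t)\subseteq C_t(G\sminus L)$ for the $L$ fixed above. (If that lemma is stated only as existence/uniqueness, this inclusion is the standard consequence of the $t$-side components of minimum $s,t$-separators being closed under intersection, by submodularity of the neighbourhood function.) The second is that $L_{s,t}^t$, being a minimum and hence \emph{minimal} $s,t$-separator, is full on the $t$-side: by Lemma~\ref{lem:fullComponents} (equivalently Lemma~\ref{lem:simpAB}), $L_{s,t}^t=N_G\!\bigl(C_t(G\sminus L_{s,t}^t)\bigr)$, so every vertex of $L_{s,t}^t$ — and in particular $x$ — has a neighbour inside $C_t(G\sminus L_{s,t}^t)$.

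Finally I would splice these together: choose $y\in N_G(x)\cap C_t(G\sminus L_{s,t}^t)$, which exists by the second fact; by the first fact $y\in C_t(G\sminus L)$; and since $x\notin L$ is adjacent in $G$ to $y\in C_t(G\sminus L)$, the vertex $x$ lies in the same component of $G\sminus L$ as $y$, i.e.\ $x\in C_t(G\sminus L)$ — contradicting $x\notin C_t(G\sminus L)$. Hence $\kappa_{sx,t}(G)>\kappa_{s,t}(G)$. I expect the main obstacle to be nailing down the precise form of the ``closest to $t$'' property used in the first structural fact — namely that $C_t(G\sminus L_{s,t}^t)$ is contained in the $t$-side component of \emph{every} minimum $s,t$-separator, not merely minimal among those components — because the whole argument turns on transporting the neighbour $y$ from $C_t(G\sminus L_{s,t}^t)$ into $C_t(G\sminus L)$; the only other point needing care is the convention that an $sx,t$-separator cannot contain $x$, which is exactly what licenses the conclusion $x\notin C_t(G\sminus L)$ rather than the vacuous $x\notin V(G\sminus L)$.
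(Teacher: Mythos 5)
Your proof is correct and follows essentially the paper's argument: both rest on the fullness of $C_t(G\sminus L_{s,t}^t)$ (giving $x\in L_{s,t}^t$ a neighbour on the $t$-side) combined with the closest-to-$t$/submodularity property, which the paper invokes as Lemma~\ref{lem:sAtMinlSepLargerThanMinsAtSep_t} to conclude $|T|>\kappa_{s,t}(G)$ from $C_t(G\sminus L_{s,t}^t)\not\subseteq C_t(G\sminus T)$ for a minimum $sx,t$-separator $T$, while you use exactly the contrapositive of that fact inside a proof by contradiction. The only cosmetic difference is that the paper handles $x\in N_G(t)$ as a separate trivial case, which your observation that equality would force $\kappa_{sx,t}(G)$ to be finite (so a minimum $sx,t$-separator exists) covers implicitly.
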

	\begin{proof}
		If $y\in N_G(t)$, then $\kappa_{sXy,t}(G)=\infty$, and the claim trivially holds. So, assume that $sXy \cap N_G[t]=\emptyset$.
		Let $T\in \minsep_{sXy,t}(G)$. By definition, $y\notin T\cup C_t(G\sminus T)$, and hence $N_G[y]\cap C_t(G\sminus T)=\emptyset$. Since $y\in L_{sX,t}^t(G) \in \minsep_{sX,t}(G)$, then, by Lemma~\ref{lem:fullComponents},  $N_G[y]\cap C_t(G\sminus L_{sX,t}^t(G))\neq \emptyset$. Consequently, $C_t(G\sminus T) \not\supseteq C_t(G\sminus L_{sX,t}^t(G))$. By Lemma~\ref{lem:sAtMinlSepLargerThanMinsAtSep_t}, it holds that $\kappa_{sXy,t}(G)> \kappa_{sX,t}(G)$.  
\end{proof}

\def\sAtMinlSepLargerThanMinsAtSep_s{
	Let $A\subseteq \nodes(G){\setminus}\set{s,t}$.
	\begin{enumerate}[noitemsep, topsep=0pt]
		\item If $x\in \closestMinSep{sA,t}(G)$ then $\kappa_{sA,tx}(G)> \kappa_{sA,t}(G)$.
		\item Let $L_{sA,t}^t(G)\in \minsep_{sA,t}(G)$ that is closest to $t$. If $x\in L_{sA,t}^t(G)$ then $\kappa_{sAx,t}(G)> \kappa_{sA,t}(G)$.
	\end{enumerate}
}
\begin{lemma}
	\label{lem:sAtMinlSepLargerThanMinsAtSep_s}
	\sAtMinlSepLargerThanMinsAtSep_s
\end{lemma}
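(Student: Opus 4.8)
The plan is to prove both items by the same short contradiction argument, exploiting the duality between the two ``closest'' minimum separators. Two structural inputs will drive the proof. First, any minimum $sA,t$-separator is automatically a \emph{minimal} one, and for a minimal $sA,t$-separator $S$ both sides are full: by Lemmas~\ref{lem:simpAB} and~\ref{lem:fullComponents}, every vertex of $S$ has a neighbour in $C_t(G\sminus S)$ and also a neighbour in $C_{sA}(G\sminus S)$. (The second assertion holds even though $C_{sA}(G\sminus S)$ may be a union of several components: a vertex of $S$ with no neighbour in that union could be removed from $S$ without reconnecting any vertex of $sA$ to $t$, contradicting minimality.) Second, I will use the extremality of the closest separators: $L^t_{sA,t}(G)$ minimises the $t$-side, so $C_t(G\sminus L^t_{sA,t}(G))\subseteq C_t(G\sminus T)$ for every minimum $sA,t$-separator $T$, and dually $\closestMinSep{sA,t}(G)$ minimises the $sA$-side, so $C_{sA}(G\sminus \closestMinSep{sA,t}(G))\subseteq C_{sA}(G\sminus T)$ for every such $T$.

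For item~2, the bound $\kappa_{sAx,t}(G)\ge \kappa_{sA,t}(G)$ is immediate since every $sAx,t$-separator is an $sA,t$-separator. Assume for contradiction that equality holds, and let $T$ be a minimum $sAx,t$-separator; then $|T|=\kappa_{sA,t}(G)$, so $T$ is also a minimum (hence minimal) $sA,t$-separator, and being disjoint from the terminal set $sAx$ it satisfies $x\notin T$ and $x\notin C_t(G\sminus T)$. Since $x\in L^t_{sA,t}(G)$ and this separator is minimal, $x$ has a neighbour $y\in C_t(G\sminus L^t_{sA,t}(G))$; by the extremality of $L^t_{sA,t}(G)$ we have $C_t(G\sminus L^t_{sA,t}(G))\subseteq C_t(G\sminus T)$, so $y\in C_t(G\sminus T)$. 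Because $x\notin T$, this forces $x\in C_t(G\sminus T)$, contradicting the fact that $T$ separates $x$ from $t$.

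Item~1 is the mirror image: $\kappa_{sA,tx}(G)\ge \kappa_{sA,t}(G)$ trivially, and assuming equality we pick a minimum $sA,tx$-separator $T$ with $|T|=\kappa_{sA,t}(G)$, which is again a minimal $sA,t$-separator with $x\notin T$ and $x\notin C_{sA}(G\sminus T)$; since $x\in\closestMinSep{sA,t}(G)$ is in a minimal $sA,t$-separator, $x$ has a neighbour in $C_{sA}(G\sminus\closestMinSep{sA,t}(G))\subseteq C_{sA}(G\sminus T)$, so $x\in C_{sA}(G\sminus T)$, contradicting that $T$ separates $sA$ from $x$. I do not expect a serious obstacle; the only points that require care are (i) confirming that $\closestMinSep{sA,t}(G)$ denotes the minimum $sA,t$-separator closest to the \emph{source} side $sA$ (which is exactly what makes item~1 true rather than vacuous, since the $t$-closest separator would itself be an $sA,tx$-separator), and (ii) making sure the ``both sides full'' property of minimal separators is used correctly on the $sA$-side even when that side is a union of components rather than a single component.
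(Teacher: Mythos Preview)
Your proof is correct and follows essentially the same logic as the paper's. Both arguments hinge on the same observation: if $T$ had size $\kappa_{sA,t}(G)$, then by extremality of the closest minimum separator the relevant side of $T$ would contain the corresponding side of $L^*_{sA,t}(G)$ (resp.\ $L^t_{sA,t}(G)$); since $x$ lies in that separator and hence, by fullness, has a neighbour on that side, $x$ would land on the wrong side of $T$.

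The only packaging difference is that the paper first passes to the auxiliary graph $H$ (adding edges from $s$ to $N_G[A]$, via Lemma~\ref{lem:MinlsASep}) so it can work with ordinary two-terminal $s,t$-separators, and then appeals to Lemma~\ref{lem:sAtMinlSepLargerThanMinsAtSep_t} (a submodularity statement whose contrapositive is exactly your extremality claim). You instead argue directly in the $sA,t$ setting and invoke the extremality of $L^*$ and $L^t$ as a black box. Your careful remark about fullness on the $sA$-side (even though $C_{sA}$ may be a union of components) is precisely what the passage to $H$ buys the paper for free, and your justification via minimality is the content of Lemma~\ref{lem:simpAB}. So the two proofs are interchangeable; yours is slightly more self-contained at the cost of the extra sentence about the $sA$-side.
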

\begin{proof}
	By Lemma~\ref{lem:MinlsASep}, it holds that $\minlsep{sA,t}{G}=\minlsepst{H}$ where $H$ is the graph that results from $G$ by adding all edges between $s$ and $N_G[A]$.  In particular, $\kappa_{sA,t}(G)=\kappa_{s,t}(H)$, $\closestMinSep{sA,t}(G)=\closestMinSep{s,t}(H)$, and $L_{sA,t}^t(G)=L_{s,t}^t(H)$. 
	
	Let $T\in \minsep_{sA,tx}(G)$, and hence $T\in \minsep_{s,tx}(H)$. By definition, $x\notin T\cup C_s(H\sminus T)$, and hence $N_H[x]\cap C_s(H\sminus T)=\emptyset$. Since $x\in \closestMinSep{s,t}(H)$, then by Lemma~\ref{lem:fullComponents}, $N_H[x]\cap C_s(H\sminus \closestMinSep{s,t}(H))\neq \emptyset$. Consequently, $C_s(H\sminus T) \not\supseteq C_s(H\sminus \closestMinSep{s,t}(H))$. We apply Lemma~\ref{lem:sAtMinlSepLargerThanMinsAtSep_t} with the roles of $s$ and $t$ interchanged; equivalently, we use its symmetric form for the $s$-component. By Lemma~\ref{lem:sAtMinlSepLargerThanMinsAtSep_t}, $\kappa_{sA,tx}(G)=\kappa_{s,tx}(H)>\kappa_{s,t}(H)=\kappa_{sA,t}(G)$.
	
	Let $T\in \minsep_{sAx,t}(G)$, and hence $T\in \minsep_{sx,t}(H)$. By definition, $x\notin T\cup C_t(H\sminus T)$, and hence $N_H[x]\cap C_t(H\sminus T)=\emptyset$. Since $x\in L_{sA,t}^t(G)$, then by Lemma~\ref{lem:fullComponents}, $N_H[x]\cap C_t(H\sminus L_{s,t}^t(H))\neq \emptyset$. Consequently, $C_t(H\sminus T) \not\supseteq C_t(H\sminus L_{s,t}^t(H))$. By Lemma~\ref{lem:sAtMinlSepLargerThanMinsAtSep_t}, $\kappa_{sAx,t}(G)=\kappa_{sx,t}(H)>\kappa_{s,t}(H)=\kappa_{sA,t}(G)$.
\end{proof}

\def\partTwoOfAlg{
	Let $A\subseteq \nodes(G){\setminus}\set{s,t}$, and let
\[
L\defeq \closestMinSep{sA,t}(G)=\set{x_1,\ldots,x_\ell}
\]
be the unique minimum $sA,t$-separator closest to $t$. Suppose that
$L\in \safeSeps{s}{t}(G,A)$. For every $i\in\set{1,\ldots,\ell}$, let
$G_i$ be the graph obtained from $G$ by adding all edges between $t$
and $N_G[x_i]$, and all edges between $s$ and
$\set{x_1,\ldots,x_{i-1}}$. Then
\[
\safeSepsImp{s}{t}(G,A)
\subseteq
\set{L}
\cup
\bigcup_{i=1}^{\ell}\safeSepsImp{s}{t}(G_i,A).
\]
}
\def\partTwoOfAlgReduced{
	Let $A\subseteq V(G)\setminus\set{s,t}$, and let
	\[
	L\defeq L^*_{sA,t}(G)=\set{x_1,\ldots,x_\ell}
	\]
	be the unique important minimum $sA,t$-separator.
	Suppose that
	$L\in \safeSeps{s}{t}(G,A)$. For every $i\in\set{1,\ldots,\ell}$, let
	$G_i$ be the graph obtained from $G$ by adding all edges between $t$
	and $N_G[x_i]$. Then
	\[
	\safeSepsImp{s}{t}(G,A)
	\subseteq
	\set{L}
	\cup
	\bigcup_{i=1}^{\ell}\safeSepsImp{s}{t}(G_i,A).
	\]
}
\begin{lemma}
	\label{lem:partTwoOfAlg}
	\partTwoOfAlgReduced
\end{lemma}
\begin{proof}
	Let $S\in \safeSepsImp{s}{t}(G,A)$. If $S=L$, then there is nothing
	to prove. Assume therefore that $S\neq L$. We prove that
	$S\in \safeSepsImp{s}{t}(G_i,A)$ for some $i\in\set{1,\ldots,\ell}$.
	
	Set $C\defeq C_s(G\sminus S)$.
	Since $S\in \safeSepsImp{s}{t}(G,A)$, we have $S\in \safeSeps{s}{t}(G,A)$,
	and hence $	S\in \minlsepst{G}$ and $A\subseteq C$.
	By assumption, $L\in \safeSeps{s}{t}(G,A)$,
	so in particular $	L\in \minlsepst{G}$ and $A\subseteq C_s(G\sminus L)$.

	We first show that there exists $x_i\in L$ such that
	\[
	x_i\notin S\cup C_s(G\sminus S). \tag{1}
	\]
	Suppose not. Then $	L\subseteq S\cup C_s(G\sminus S)$.
	Since $L,S\in \minlsepst{G}$, Lemma~\ref{lem:inclusionCsCt} gives
	\[
	C_s(G\sminus L)\subseteq C_s(G\sminus S). \tag{2}
	\]
	Moreover, $L$ is a minimum $sA,t$-separator, while $S$ is an
	$sA,t$-separator because $A\subseteq C_s(G\sminus S)$. Therefore
	\[
	|L|\le |S|. \tag{3}
	\]
	If the inclusion in~(2) is strict, then $L\in \safeSeps{s}{t}(G,A)$, $	C_s(G\sminus L)\subsetneq C_s(G\sminus S)$, and $|L|\le |S|$, contradicting the assumption that
	$S\in \safeSepsImp{s}{t}(G,A)$. Hence equality holds in~(2): $	C_s(G\sminus L)=C_s(G\sminus S)$.
	Since $L,S\in \minlsepst{G}$, Lemma~\ref{lem:fullComponents} implies $L=N_G(C_s(G\sminus L))
	=N_G(C_s(G\sminus S))=S$
	contradicting $S\neq L$. Therefore some $x_i\in L$ satisfies~(1); that is there exists an $x_i\in L$ such that $x_i\notin S\cup C_s(G\sminus S)$.
	Fix such an index $i$ where $x_i\in L$. We claim that
	\[
	N_G[x_i]\cap C_s(G\sminus S)=\emptyset. \tag{4}
	\]
	Indeed, $x_i\notin C_s(G\sminus S)$ by~(1). Moreover, if some neighbor
	$u\in N_G(x_i)$ belonged to $C_s(G\sminus S)$, then, since
	$x_i\notin S$ by~(1), the edge $ux_i$ would imply
	$x_i\in C_s(G\sminus S)$, a contradiction. Thus~(4) holds.
	
	Let $D_i\eqdef N_G[x_i]{\setminus}\set{t}$.
	Since $s\in C_s(G\sminus S)$, equation~(4) also implies
	$s\notin N_G[x_i]$. Hence $D_i\subseteq \nodes(G){\setminus}\set{s,t}$.
	Furthermore, adding all edges between $t$ and $N_G[x_i]$ is the same
	as adding all edges between $t$ and $D_i$, since the possible edge from
	$t$ to itself is irrelevant. Thus $G_i$ is precisely the graph obtained
	from $G$ by adding all edges between $t$ and $D_i$.
	
	By~(4), $	D_i\cap C_s(G\sminus S)=\emptyset$.
	Since $S\in \safeSepsImp{s}{t}(G,A)$, Lemma~\ref{corr:MinlsBtSepSafe}
	applied with $D=D_i$ gives $S\in \safeSepsImp{s}{t}(G_i,A)$.
	Thus every $S\in \safeSepsImp{s}{t}(G,A)$ either equals $L$, or
	belongs to $\safeSepsImp{s}{t}(G_i,A)$ for some
	$i\in\set{1,\ldots,\ell}$. Therefore
	\[
	\safeSepsImp{s}{t}(G,A)
	\subseteq
	\set{L}
	\cup
	\bigcup_{i=1}^{\ell}\safeSepsImp{s}{t}(G_i,A),
	\]
	as required.
\end{proof}

\begin{corollary}
	\label{corr:partTwoOfAlg}
	Let $A,X\subseteq \nodes(G){\setminus}\set{s,t}$ be disjoint. Assume that for every $T\in \minsep_{sAX,t}(G)$, it holds that $X\subseteq C_s(G\sminus T)$. If $A\subseteq C_s(G\sminus \closestMinSep{sAX,t}(G))$, then:
	\[
		\safeSepsImp{s}{t}(G,AX)\subseteq \set{\closestMinSep{sAX,t}(G)}\cup \mediumbigcup_{x\in \closestMinSep{sAX,t}(G)}\safeSepsImp{s}{t}(G_x,AX)
	\]
	where $G_x$ is te graph that results from $G$ by adding all edges between $t$ and $N_G[x]$.
\end{corollary}
\begin{proof}
	Since, by definition $\closestMinSep{sAX,t}(G)\in \minsep_{sAX,t}(G)$, then by the assumption of the lemma $X\subseteq C_s(G\sminus \closestMinSep{sAX,t}(G))$. Combined with the assumption that  $A\subseteq C_s(G\sminus \closestMinSep{sAX,t}(G))$, then $AX\subseteq C_s(G\sminus \closestMinSep{sAX,t}(G))$. Therefore, $\closestMinSep{sAX,t}(G)\in \safeSeps{s}{t}(G,AX)$. By Lemma~\ref{lem:partTwoOfAlg}, we get that
	\[
	\safeSepsImp{s}{t}(G,AX)\subseteq \set{\closestMinSep{sAX,t}(G)}\cup \mediumbigcup_{x\in \closestMinSep{sAX,t}(G)}\safeSepsImp{s}{t}(G_x,AX)
	\]
	as required.
\end{proof}

\eat{
\def\partTwoOfAlg{
	Let $A,Q\subseteq \nodes(G){\setminus}\set{s,t}$, $\P$ a set of pairwise disjoint subsets of $sA$, and $\closestMinSep{sAQ,t}(G)\in \minsep_{sAQ,t}(G)$ be the unique minimum, important $sAQ,t$-separator in $G$. If $\closestMinSep{sAQ,t}(G) \models \mc{R}_{sA,\P,Q}(G)$, then letting $\closestMinSep{sAQ,t}(G)=\set{x_1,\dots,x_\ell}$:
	\begin{align*}
		\safeSepsImp{sA}{t}(G,\mc{R})\subseteq  \set{\closestMinSep{sAQ,t}(G)} \cup \mediumbiguplus_{i=1}^\ell \safeSepsImp{sA}{t}(G_i,\mc{R})
	\end{align*}
	where $G_i$ is the graph that results from $G$ by adding all edges between $t$ and $N_G[x_i]$, and all edges between $s$ and $\set{x_1,\dots,x_{i-1}}$.
}
\begin{lemma}
	\label{lem:partTwoOfAlg}
	\partTwoOfAlg
\end{lemma}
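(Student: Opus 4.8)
The plan is to take an arbitrary $S\in\safeSepsImp{sA}{t}(G,\mc{R})$, abbreviate $L\eqdef\closestMinSep{sAQ,t}(G)=\set{x_1,\dots,x_\ell}$ with $\ell=\kappa_{sAQ,t}(G)$, and establish the dichotomy: either $S=L$, or $S$ ``reaches past'' a specific vertex $x_i$ of $L$ and therefore lies in $\safeSepsImp{sA}{t}(G_i,\mc{R})$ for exactly that $i$. First I would normalise the instance. Since $S\models\mc{R}$, each conjunct $\ReachSome{q,sA}{G\sminus S}$ is \textsc{true}, so $Q\subseteq C_{sA}(G\sminus S)$ and hence $C_{sAQ}(G\sminus S)=C_{sA}(G\sminus S)$; by Lemma~\ref{lem:safeMinsAtEqualtosAQt} this also gives $S\in\minlsep{sAQ,t}{G}$. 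Likewise $L\in\minsep_{sAQ,t}(G)$ is an inclusion-minimal $sAQ,t$-separator that models $\mc{R}$, so by Lemma~\ref{lem:safeMinsAtEqualtosAQt} it lies in $\safeSeps{sAQ}{t}(G,\mc{R})=\safeSeps{sA}{t}(G,\mc{R})$, and $Q\subseteq C_{sA}(G\sminus L)$, so $C_{sAQ}(G\sminus L)=C_{sA}(G\sminus L)$. If $C_{sAQ}(G\sminus S)=C_{sAQ}(G\sminus L)$, then because the source components of minimal separators are full (Lemma~\ref{lem:fullComponents}), Lemma~\ref{lem:simpAB} forces $S=N_G(C_{sAQ}(G\sminus S))=N_G(C_{sAQ}(G\sminus L))=L$, which is the first case.

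Next, assuming $S\neq L$ (equivalently $C_{sAQ}(G\sminus S)\neq C_{sAQ}(G\sminus L)$), I would prove the key claim: $L\not\subseteq N_G[C_{sA}(G\sminus S)]=C_{sA}(G\sminus S)\cup S$. Suppose not. For any $v\in S\cap C_{sAQ}(G\sminus L)$, $v$ has a neighbour in the full component $C_t(G\sminus S)$, from which there is a $v,t$-path; such a path must leave $C_{sAQ}(G\sminus L)$ through a vertex of $L$, but by assumption every vertex of $L$ lies in $C_{sA}(G\sminus S)\cup S$ and hence not in $C_t(G\sminus S)$, a contradiction. Therefore $S\cap C_{sAQ}(G\sminus L)=\emptyset$, so $C_{sAQ}(G\sminus L)\subseteq C_{sA}(G\sminus S)$, and this inclusion is proper, since $C_{sAQ}(G\sminus L)=C_{sAQ}(G\sminus S)$ would force $S=L$ as above. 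But then $L$ is a connectivity-preserving separator with $C_{sA}(G\sminus L)=C_{sAQ}(G\sminus L)\subsetneq C_{sA}(G\sminus S)$ and $|L|=\kappa_{sAQ,t}(G)\le|S|$, contradicting $S\in\safeSepsImp{sA}{t}(G,\mc{R})$ via Definition~\ref{def:safeImp}. I expect this to be the main obstacle — it is exactly here that the hypothesis $L\models\mc{R}$ and the extremality in Definition~\ref{def:safeImp} are jointly needed. Granting the claim, let $i$ be the least index with $N_G[x_i]\cap C_{sA}(G\sminus S)=\emptyset$; then $x_1,\dots,x_{i-1}\in C_{sA}(G\sminus S)\cup S$.

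Finally I would feed $i$ into the recursion. Since $N_G[x_i]\cap C_{sA}(G\sminus S)=\emptyset$, Lemma~\ref{corr:MinlsBtSepSafe} with $D=N_G[x_i]$ places $S$ in $\safeSepsImp{sA}{t}(H,\mc{R})$ for $H\eqdef G+\set{(t,u):u\in N_G[x_i]}$, and these edge additions do not change $C_{sA}(\cdot\sminus S)$, so $x_1,\dots,x_{i-1}$ remain outside $C_t(H\sminus S)$. Adding to $H$ the edges between $s$ and $\set{x_1,\dots,x_{i-1}}$ then does not reconnect $sA$ to $t$ (the enlarged source side still avoids $N_G[x_i]$, hence $t$) and, by monotonicity of $\mc{R}$ (Lemma~\ref{lem:monotonic}), preserves connectivity-preservation; that minimality and importance are also preserved follows by the same bookkeeping as in Lemmas~\ref{lem:partZeroOfAlg} and~\ref{lem:partOneOfAlg}, via Lemma~\ref{lem:MinlsASep}. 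Hence $S\in\safeSepsImp{sA}{t}(G_i,\mc{R})$. The union is disjoint because from any member $S$ of some $\safeSepsImp{sA}{t}(G_i,\mc{R})$ the index $i$ is recovered as the least index with $N_G[x_i]$ disjoint from $C_{sA}(G\sminus S)$. Combining the two cases gives $\safeSepsImp{sA}{t}(G,\mc{R})\subseteq\set{\closestMinSep{sAQ,t}(G)}\cup\mediumbiguplus_{i=1}^{\ell}\safeSepsImp{sA}{t}(G_i,\mc{R})$.
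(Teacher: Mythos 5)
Your proof is correct and takes essentially the same route as the paper's: the same dichotomy (either $S=\closestMinSep{sAQ,t}(G)$, or some $x_i$ lies outside $S\cup C_{sA}(G\sminus S)$, obtained by playing $|\closestMinSep{sAQ,t}(G)|=\kappa_{sAQ,t}(G)\leq|S|$ against Definition~\ref{def:safeImp}), followed by the same least-index decomposition and transfer of $S$ into $\safeSepsImp{sA}{t}(G_i,\mc{R})$ via the edge-addition lemmas. The only cosmetic differences are that you re-derive the inclusion $C_{sA}(G\sminus \closestMinSep{sAQ,t}(G))\subseteq C_{sA}(G\sminus S)$ by hand where the paper simply cites Lemma~\ref{lem:inclusionCsCt}, and you apply Lemma~\ref{corr:MinlsBtSepSafe} directly with $D=N_G[x_i]$ where the paper routes through $\safeSepsImp{sA}{tx_i}(G,\mc{R})$ and Lemma~\ref{lem:MinlsASep}; your treatment of the added $s$--$\set{x_1,\dots,x_{i-1}}$ edges (and of the disjointness of the union) is left at the same level of detail as in the paper's own proof.
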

\begin{proof}
	We prove the claim by first showing that:
	\begin{align}
		\label{eq:prop:partTwoOfAlg1}
		\safeSepsImp{sA}{t}(G,\mc{R})= \set{\closestMinSep{sAQ,t}(G)} \cup \bigcup_{i=1}^\ell \set{S\in \safeSepsImp{sA}{t}(G,\mc{R}): x_i \notin S\cup C_{sA}(G\sminus S)}
	\end{align}
	By Lemma~\ref{lem:safeMinsAtEqualtosAQt}, it holds that $\safeSepsImp{sA}{t}(G,\mc{R})=\safeSepsImp{sAQ}{t}(G,\mc{R})$. Therefore, proving~\eqref{eq:prop:partTwoOfAlg1} reduces to proving:
		\begin{align}
		\label{eq:prop:partTwoOfAlg2}
		\safeSepsImp{sAQ}{t}(G,\mc{R})= \set{\closestMinSep{sAQ,t}(G)} \cup \bigcup_{i=1}^\ell \set{S\in \safeSepsImp{sAQ}{t}(G,\mc{R}): x_i \notin S\cup C_{sA}(G\sminus S)}
	\end{align}
	Clearly, the set described by the RHS of~\eqref{eq:prop:partTwoOfAlg2} is contained in the LHS of~\eqref{eq:prop:partTwoOfAlg2}. Let $T\in \safeSepsImp{sAQ}{t}(G,\mc{R})$. If $T=\closestMinSep{sAQ,t}(G)$, then clearly, $T$ belongs to the set described by the RHS of~\eqref{eq:prop:partTwoOfAlg2}. So, assume that $T\in \safeSepsImp{sAQ}{t}(G,\mc{R}){\setminus}\set{\closestMinSep{sAQ,t}(G)}$. If $T\notin \bigcup_{i=1}^\ell \set{S\in \safeSepsImp{sAQ}{t}(G,\mc{R}): x_i \notin S\cup C_{sA}(G\sminus S)}$, then $T\in \safeSepsImp{sAQ}{t}(G,\mc{R})$, where $\cup_{i=1}^\ell x_i=\closestMinSep{sAQ,t}(G) \subseteq T\cup C_{sA}(G\sminus T)$. By Lemma~\ref{lem:inclusionCsCt}, it holds that $C_{sA}(G\sminus \closestMinSep{sAQ,t}(G))\subseteq C_{sA}(G\sminus T)$. Since $T\neq  \closestMinSep{sAQ,t}(G)$ and since, by definition, $C_{sA}(G\sminus T)=C_{sAQ}(G\sminus T)$,
	then in fact $C_{sA}(G\sminus \closestMinSep{sAQ,t}(G))\subset C_{sA}(G\sminus T)$. Since $\closestMinSep{sAQ,t}(G), T\in \safeSepsImp{sAQ}{t}(G,\mc{R})$, and $|\closestMinSep{sAQ,t}(G)|=\kappa_{sAQ,t}(G)$, then by Proposition~\ref{prop:simpleProp}, it holds that $|T|\geq \kappa_{sAQ,t}(G)=|\closestMinSep{sAQ,t}(G)|$. Overall, we get that $C_{sAQ}(G\sminus \closestMinSep{sAQ,t}(G))=C_{sA}(G\sminus \closestMinSep{sAQ,t}(G))\subset C_{sA}(G\sminus T)= C_{sAQ}(G\sminus T)$ and $|\closestMinSep{sAQ,t}(G)|\leq |T|$. But then $T\notin \safeSepsImp{sAQ}{t}(G,\mc{R})$ ; a contradiction.
	From~\eqref{eq:prop:partTwoOfAlg1}, we get that:
	\begin{align}
		\safeSepsImp{sA}{t}(G,\mc{R}){\setminus}\set{\closestMinSep{sA,t}(G)}&=\bigcup_{i=1}^\ell\set{S\in \safeSepsImp{sA}{t}(G,\mc{R}): x_i \notin S\cup C_{sA}(G\sminus S)} \nonumber\\
		&=\biguplus_{i=1}^\ell \set{S\in \safeSepsImp{sA}{t}(G,\mc{R}): x_i \notin S\cup C_{sA}(G\sminus S), x_1\cdots x_{i-1}\subseteq S\cup C_{sA}(G\sminus S)} \label{eq:prop:partTwoOfAlg3}\\
		&\subseteq \biguplus_{i=1}^\ell \set{S\in \safeSepsImp{sA}{tx_i}(G,\mc{R}):  x_1\cdots x_{i-1}\cap C_t(G\sminus S)=\emptyset} \label{eq:prop:partTwoOfAlg4}\\
		&\underbrace{\subseteq}_{\substack{\text{Lem.~\ref{lem:MinlsASep}}\\ \text{Lem.~\ref{corr:MinlsBtSepSafe}}}}\biguplus_{i=1}^\ell \safeSepsImp{sA}{t}(G_i,\mc{R}). \label{eq:prop:partTwoOfAlg5}
	\end{align}
	The transition to~\eqref{eq:prop:partTwoOfAlg3} follows from the definition of disjoint union. 
	The transition from~\eqref{eq:prop:partTwoOfAlg3} to~\eqref{eq:prop:partTwoOfAlg4} follows from the fact that
	if $S\in \safeSepsImp{sA}{t}(G,\mc{R})$ where $x_i\notin S\cup C_{sA}(G\sminus S)$, then $S\in \safeSepsImp{sA}{tx_i}(G,\mc{R})$, and if, in addition, $x_1\cdots x_{i-1}\subseteq S\cup C_{sA}(G\sminus S)$, then $x_1\cdots x_{i-1}\cap C_t(G\sminus S)=\emptyset$. The transition from~\eqref{eq:prop:partTwoOfAlg4} to~\eqref{eq:prop:partTwoOfAlg5} follows from  lemma~\ref{lem:MinlsASep} and lemma~\ref{corr:MinlsBtSepSafe}.
\end{proof}
}

\def\thirdpart{
Let $A,X\subseteq \nodes(G){\setminus}\set{s,t}$, and let $L^*\eqdef \closestMinSep{sXA,t}(G)$.
Let $Y=\set{y_1,\dots,y_\ell} \subseteq L^*$. Then:
\begin{align*}
	\safeSepsImp{s}{t}(G,AX)\subseteq \safeSepsImp{s}{t}(G,AXY)\cup \bigcup_{i=1}^\ell\set{T\cup \set{y_i}: T\in \safeSepsImp{s}{t}(G\sminus y_i, AX)} \cup \bigcup_{i=1}^\ell\safeSepsImp{s}{t}(G_i,AX)
\end{align*}
where $G_i$ is the graph that results from $G$ by adding the edges $\set{(u,t): u\in N_G[y_i]{\setminus}\set{t}}$.
}
\begin{lemma}
	\label{lem:thirdPart}
	\thirdpart
\end{lemma}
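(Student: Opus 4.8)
The plan is a ``branching''-style decomposition: classify each $S\in\safeSepsImp{sA}{t}(G,\mc{R})$ according to how it sits relative to the \emph{ordered} set $X=\set{x_1,\dots,x_\ell}$, and show that separators with a given signature land in the corresponding term on the right. Since $S\in\minlsep{sA,t}{G}$, Lemmas~\ref{lem:simpAB} and~\ref{lem:fullComponents} give that $S$ is a \emph{full} minimal $sA,t$-separator, $S=N_G(C_{sA}(G\sminus S))=N_G(C_t(G\sminus S))$; moreover each $x_j\notin sA\cup\set{t}$, so exactly one of $x_j\in C_{sA}(G\sminus S)$, $x_j\in S$, or $x_j\in\nodes(G)\setminus(S\cup C_{sA}(G\sminus S))$ holds. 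Let $i=i(S)$ be the least index with $x_i\notin C_{sA}(G\sminus S)$, and $i=\infty$ if no such index exists. This splits $\safeSepsImp{sA}{t}(G,\mc{R})$ into: \textbf{(A)} $i=\infty$; \textbf{(B)} $i$ finite and $x_i\in S$; \textbf{(C)} $i$ finite and $x_i\notin S$ — in which case a neighbour of $x_i$ inside $C_{sA}(G\sminus S)$ would, together with $x_i\notin S$, force $x_i\in C_{sA}(G\sminus S)$, so in fact $N_G[x_i]\cap C_{sA}(G\sminus S)=\emptyset$. In all three cases $X_{i-1}\subseteq C_{sA}(G\sminus S)$.

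A recurring tool I would isolate first is an \emph{absorption} step: if $Y\subseteq C_{sA}(G\sminus S)$, then $C_{sAY}(G\sminus S)=C_{sA}(G\sminus S)$, so $S$ is still a full minimal $sAY,t$-separator and $\mc{R}$ — evaluated only on $G[C_{sA}(G\sminus S)]$ — is unaffected, hence $S\in\safeSeps{sAY}{t}(G,\mc{R})\subseteq\safeSeps{sA}{t}(G,\mc{R})$; importance transfers downward because any $S''$ with $C_{sAY}(G\sminus S'')\subsetneq C_{sAY}(G\sminus S)$ also has $C_{sA}(G\sminus S'')\subsetneq C_{sA}(G\sminus S)$, so $S\in\safeSepsImp{sAY}{t}(G,\mc{R})$. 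Case~(A) is then immediate with $Y=X$. For case~(C), absorption with $Y=X_{i-1}$ gives $S\in\safeSepsImp{sAX_{i-1}}{t}(G,\mc{R})$; then, since $N_G[x_i]$ avoids $C_{sAX_{i-1}}(G\sminus S)=C_{sA}(G\sminus S)$, adding the edges $\set{(u,t):u\in N_G[x_i]}$ neither touches that component nor creates an $sAX_{i-1}$-to-$t$ path, so $S$ remains a full minimal $sAX_{i-1},t$-separator of $G_i$ still modelling $\mc{R}$, and the transfer of \emph{importance} to $(G_i,sAX_{i-1})$ is exactly Lemma~\ref{corr:MinlsBtSepSafe} applied with $D=N_G[x_i]\setminus\set{t}$ (which yields the same graph $G_i$, the $(t,t)$ loop being vacuous; here $\mc{R}_{sA,\P,Q}$ may be re-read as a constraint with first parameter $sAX_{i-1}$, since only $\P$, $B=sA$ and $Q$ enter the formula). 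For case~(B), I would put $T\eqdef S\setminus\set{x_i}$ and work in $G\sminus x_i$: because $(G\sminus x_i)\sminus T=G\sminus S$, the set $T$ is a full minimal $sAX_{i-1},t$-separator of $G\sminus x_i$ still modelling $\mc{R}$, so $T\in\safeSeps{sAX_{i-1}}{t}(G\sminus x_i,\mc{R})$ and $S=T\cup\set{x_i}$ lies in the $i$-th term of the middle family.

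The remaining piece, and the step I expect to be the main obstacle, is the transfer of \emph{importance} in case~(B) — and, packaged, inside Lemma~\ref{corr:MinlsBtSepSafe} for case~(C) — i.e. proving $T\in\safeSepsImp{sAX_{i-1}}{t}(G\sminus x_i,\mc{R})$. The difficulty is that ``importance'' quantifies over all CP $sAX_{i-1},t$-separators of size at most $|T|$ in the \emph{modified} graph $G\sminus x_i$, and one must pull a hypothetical better one back to the original instance $(G,sA)$ without increasing its size or enlarging the $sA$-component. The device is the standard tightening move: a competitor $T''$, being a minimal separator, has a \emph{full} $t$-side, $T''=N_{G\sminus x_i}(C_t((G\sminus x_i)\sminus T''))$, so $S^{\circ}\eqdef T''\cup\set{x_i}$ separates $sA$ from $t$ in $G$ with $C_{sA}(G\sminus S^{\circ})=C_{sAX_{i-1}}((G\sminus x_i)\sminus T'')\subsetneq C_{sA}(G\sminus S)$, still models $\mc{R}$, and satisfies $|S^{\circ}|\le|S|$; one then uses the full-$t$-side property together with Lemma~\ref{lem:simpAB} to replace $S^{\circ}$ by a genuine minimal $sA,t$-separator with no larger $sA$-component, contradicting $S\in\safeSepsImp{sA}{t}(G,\mc{R})$. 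Handling the deleted vertex $x_i$ carefully here — it may or may not be adjacent to the $sA$-side and the $t$-side of $G$, which affects whether $N_G(C_{sA}(G\sminus S^{\circ}))$ is already minimal — is the delicate point, and is where the argument has to be spelled out in full.

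Finally, disjointness of the union (and of the $\biguplus$'s over $i$) follows from the signatures: a family-(A) set keeps all of $X$ inside its $sA$-component; a middle-family block-$i$ set contains $x_i$ yet is disjoint from $X_{i-1}$ (it separates $sAX_{i-1}$ from $t$); a family-(C) block-$i$ set is disjoint from $X_{i-1}$ and, because $x_i$ is joined to $t$ in $G_i$, keeps $x_i$ outside $C_{sA}(G_i\sminus S')\supseteq C_{sA}(G\sminus S')$ — so, using that $G_i$ and $G_j$ leave the edges at $x_i$ untouched when $x_i\notin N_G(x_j)$ and force $x_i$ to $t$ when $x_i\in N_G(x_j)$, no set can lie in two distinct (C)-blocks. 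These conditions are pairwise incompatible, giving the claimed disjoint union. (For the cardinality bound in Theorem~\ref{thm:singleExponentialSafeImportantOfSizek} a plain union would in any case suffice.)
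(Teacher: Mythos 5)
Your proposal is correct and follows essentially the same route as the paper's proof: the same trichotomy on the first $x_i$ leaving $C_{sA}(G\sminus S)$, the same absorption step (the paper's Lemma~\ref{lem:technicalThirdPart}), deletion of $x_i$ with a pull-back-and-tighten argument when $x_i\in S$, and transfer to $G_i$ when $x_i\notin S\cup C_{sA}(G\sminus S)$ via Lemma~\ref{corr:MinlsBtSepSafe} (the paper instead uses the equivalence of Lemma~\ref{lem:containmentImportant}). Two small remarks: in case~(B) you should claim only $C_{sA}(G\sminus S^{\circ})\subseteq C_{sAX_{i-1}}((G\sminus x_i)\sminus T'')$ rather than equality (the containment suffices for the strict inclusion you need), and the ``delicate point'' you flag is resolved by observing that every vertex of $T''$ is adjacent to $C''\subseteq C_{sA}(G\sminus S)$, hence $T''\subseteq S\cup C_{sA}(G\sminus S)$, so $C_t(G\sminus S)$ survives in $G\sminus S^{\circ}$ and $x_i$ keeps a neighbour on the $t$-side, making $N_G(C_{sA}(G\sminus S^{\circ}))$ a genuine minimal CP $sA,t$-separator---a step the paper's own proof silently skips.
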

\begin{proof}
	Let $S\in \safeSepsImp{s}{t}(G,AX)$. We show that $S$ belongs to one
	of the families on the right-hand side.
	If $Y\subseteq C_s(G\sminus S)$,
	then, since $AX\subseteq C_s(G\sminus S)$, we have $AXY\subseteq C_s(G\sminus S)$.
	Thus $S\in \safeSeps{s}{t}(G,AXY)$. Moreover, $S$ is important in
	$\safeSeps{s}{t}(G,AXY)$: indeed, if some
	$S'\in \safeSeps{s}{t}(G,AXY)$ satisfied
	\[
	C_s(G\sminus S')\subsetneq C_s(G\sminus S)
	\qquad\text{and}\qquad
	|S'|\le |S|,
	\]
	then $S'\in \safeSeps{s}{t}(G,AX)$ as well, contradicting
	$S\in \safeSepsImp{s}{t}(G,AX)$. Hence $	S\in \safeSepsImp{s}{t}(G,AXY)$,
	and we are done.
	
	Assume from now on that $Y\not\subseteq C_s(G\sminus S)$.
	Let $i$ be the smallest index such that $y_i\notin C_s(G\sminus S)$.
	
	There are two cases. First suppose that $y_i\in S$.	Set $T\eqdef S\sminus\set{y_i}$.
	We prove that $T\in \safeSepsImp{s}{t}(G\sminus y_i,AX)$.
	Since $S\in \safeSeps{s}{t}(G,AX)$ and $y_i\in S$, the set $T$
	separates $s$ from $t$ in $G\sminus y_i$, and $(G\sminus y_i)\sminus T = G\sminus S$.
	Therefore
	\[
	C_s((G\sminus y_i)\sminus T)=C_s(G\sminus S). \tag{1}
	\]
	In particular, $AX\subseteq C_s((G\sminus y_i)\sminus T)$.
	
	We next verify minimality. Let $w\in T$. Since
	$S\in \minlsepst{G}$, Lemma~\ref{lem:simpAB} gives that $w$ has a neighbor in
	$C_s(G\sminus S)$ and a neighbor in $C_t(G\sminus S)$. Since
	$G\sminus S=(G\sminus y_i)\sminus T$, the same two witness components
	show, again by Lemma~\ref{lem:simpAB}, that $w$ is necessary for separating $s$
	from $t$ in $G\sminus y_i$. Hence $T\in \minlsepst{G\sminus y_i}$.
	Together with (1), this gives $T\in \safeSeps{s}{t}(G\sminus y_i,AX)$.

	It remains to prove importance. Suppose, toward a contradiction, that $	T\notin \safeSepsImp{s}{t}(G\sminus y_i,AX)$.
	Then there exists $	T'\in \safeSeps{s}{t}(G\sminus y_i,AX)$
	such that
	\[
	C_s((G\sminus y_i)\sminus T')
	\subsetneq
	C_s((G\sminus y_i)\sminus T)
	\qquad\text{and}\qquad
	|T'|\le |T|. \tag{2}
	\]
	Let $	C'\defeq C_s((G\sminus y_i)\sminus T')$.
	By (1) and (2),
	\[
	C'\subsetneq C_s(G\sminus S). \tag{3}
	\]
	In particular, since $y_i\in S$, we have $y_i\notin C_s(G\sminus S)$,
	and hence
	\[
	y_i\notin C'. \tag{4}
	\]
	Since $T'\in \safeSeps{s}{t}(G\sminus y_i,AX)$, we have
	\[
	AX\subseteq C'. \tag{5}
	\]
	Also, $T'$ is an $s,t$-separator in $G\sminus y_i$, so
	$T'\cup\set{y_i}$ is an $s,t$-separator in $G$. Define $W\defeq N_G(C')$.
	Because $C'$ is the $s$-component of $G\sminus (T'\cup\set{y_i})$; that is, $C'\eqdef C_s(G\sminus (y_i\cup T'))$,	every neighbor of $C'$ in $G$ belongs to $T'\cup\set{y_i}$. Hence $W\subseteq T'\cup\set{y_i}$, and therefore
	\[
	|W|\le |T'|+1\le |T|+1=|S|. \tag{6}
	\]
	Moreover, by definition of $W$ and because $t\notin C'$, $W$ is an $s,t$-separator in $G$, and
	\[
	C_s(G\sminus W)=C'. \tag{7}
	\]
	We claim that $W\in \minlsepst{G}$. Every vertex of $W=N_G(C')$ has a
	neighbor in $C'=C_s(G\sminus W)$. It remains to show that every vertex
	of $W$ has a neighbor in the $t$-component of $G\sminus W$.
	Let $w\in W$. If $w\in T'$, then since
	$T'\in \minlsepst{G\sminus y_i}$, Lemma~\ref{lem:simpAB} applied in $G\sminus y_i$
	gives that $w$ has a neighbor in the $t$-component of
	$(G\sminus y_i)\sminus T'$. This component is contained in
	$C_t(G\sminus W)$, because $W\subseteq T'\cup\set{y_i}$. Hence $w$
	has a neighbor in $C_t(G\sminus W)$.
	
	It remains to consider the case $w=y_i$. In this case $y_i\in W$, so
	$y_i$ has a neighbor in $C'$. Since $y_i\in S$ and
	$S\in \minlsepst{G}$, Lemma~\ref{lem:simpAB} gives that $y_i$ has a neighbor in
	$C_t(G\sminus S)$. We show that this neighbor lies in
	$C_t(G\sminus W)$. Indeed, by (3), $C'\subseteq C_s(G\sminus S)$, and
	therefore every vertex of $W=N_G(C')$ lies in $C_s(G\sminus S)\cup S$.
	Thus no vertex of $W{\setminus}\set{y_i}$ lies in $C_t(G\sminus S)$.
	Consequently the $t$-component $C_t(G\sminus S)$ is contained in
	$C_t(G\sminus W)$, and the neighbor of $y_i$ in $C_t(G\sminus S)$ is
	also a neighbor of $y_i$ in $C_t(G\sminus W)$.
	Therefore every vertex of $W$ has a neighbor in both
	$C_s(G\sminus W)$ and $C_t(G\sminus W)$. By Lemma~\ref{lem:simpAB}, $W\in \minlsepst{G}$.
	By (5) and (7), $AX\subseteq C_s(G\sminus W)$,
	so $W\in \safeSeps{s}{t}(G,AX)$.
	Finally, by (3), (6), and (7),
	\[
	C_s(G\sminus W)\subsetneq C_s(G\sminus S)
	\qquad\text{and}\qquad
	|W|\le |S|.
	\]
	This contradicts $S\in \safeSepsImp{s}{t}(G,AX)$. Hence $	T\in \safeSepsImp{s}{t}(G\sminus y_i,AX)$,	
	and therefore $S=T\cup\set{y_i}$
	belongs to the $i$-th deletion branch.
	
	It remains to consider the second case: $y_i\notin S$.
	Since $y_i\notin C_s(G\sminus S)$ by the choice of $i$, we have
	\[
	y_i\notin S\cup C_s(G\sminus S). \tag{8}
	\]
	We claim that
	\[
	N_G[y_i]\cap C_s(G\sminus S)=\emptyset. \tag{9}
	\]
	Indeed, $y_i\notin C_s(G\sminus S)$ by (8). If some neighbor
	$u\in N_G(y_i)$ belonged to $C_s(G\sminus S)$, then, since
	$y_i\notin S$, the edge $(u,y_i)$ would imply
	$y_i\in C_s(G\sminus S)$, a contradiction.
	
	Let $D_i\eqdef N_G[y_i]{\setminus}\set{t}$. Then $G_i$ is precisely the graph obtained from $G$ by adding all
	edges between $t$ and $D_i$. By (9), $	D_i\cap C_s(G\sminus S)=\emptyset$.
	Since $S\in \safeSepsImp{s}{t}(G,AX)$,
	Lemma~\ref{corr:MinlsBtSepSafe}, applied with the set $AX$ in place of $A$ and with
	$D=D_i$, gives $S\in \safeSepsImp{s}{t}(G_i,AX)$.
	Thus $S$ belongs to the $i$-th edge-insertion branch.
	
	Combining the cases, every
	$S\in \safeSepsImp{s}{t}(G,AX)$ belongs either to
	$\safeSepsImp{s}{t}(G,AXY)$, or to one of the deletion branches, or to
	one of the edge-insertion branches. Therefore
	\[
	\safeSepsImp{s}{t}(G,AX)
	\subseteq
	\safeSepsImp{s}{t}(G,AXY)
	\cup
	\bigcup_{i=1}^{\ell}
	\set{T\cup\set{y_i}:T\in \safeSepsImp{s}{t}(G\sminus y_i,AX)}
	\cup
	\bigcup_{i=1}^{\ell}
	\safeSepsImp{s}{t}(G_i,AX).
	\]
\end{proof}

\subsection{Correctness Proof}
\begin{lemma}[Invariant of Algorithm~\ref{alg2e:GenSeps}]
	\label{lem:GenSepsInvariant}
	Every recursive call $\algname{Gen{-}Seps}(G,s,t,A,X,Z,k)$
	made by Algorithm~\ref{alg2e:GenSeps} satisfies the following invariant:
	\[
	\tag{Inv}
	\text{for every } Q\subseteq \nodes(G)
	\text{ and every } T\in \minsep_{sXQ,t}(G),
	\quad
	X\subseteq C_s(G\sminus T).
	\]
\end{lemma}

\begin{proof}
	We maintain a slightly stronger bookkeeping invariant. Along any root-to-current-call recursion
	path, look only at the part of the path since the most recent call in which the parameter $X$ was
	set to $\emptyset$. During this suffix of the path, the algorithm may enlarge $X$ several times.
	Let $X_1,\ldots,X_\ell$
	be the nonempty sets added to $X$ in these extension steps (lines~\ref{line2e:increaseX} and~\ref{line:forceYIn}), in chronological order. Thus
	$\ell$ is exactly the number of times $X$ was enlarged since its most recent reset to $\emptyset$.
	If no such enlargement occurred, then $\ell=0$ and the sequence is empty.
	
	For $i\in\{0,\ldots,\ell\}$, define
	\[
	\mc{X}^0\eqdef \emptyset,
	\qquad
	\mc{X}^i\eqdef X_1\cup\cdots\cup X_i
	\quad\text{for } i\ge 1.
	\]
	We prove by induction on the recursion depth that every recursive call satisfies
	\[
	\tag{H1}
	X=\mc{X}^\ell,
	\]
	and, for every $i\in\{1,\ldots,\ell\}$,
	\[
	\tag{H2}
	X_i\subseteq  \minstVertices{s\mc{X}^{i-1},t}(G).
	\]
	Here $G$ is the graph of the current recursive call.
	We first show that (H1)--(H2) imply the desired invariant (Inv). Let
	$Q\subseteq \nodes(G)$ and let $T\in \minsep_{sXQ,t}(G)$.
	By (H1), $X=\mc{X}^\ell$, and hence $T\in \minsep_{s\mc{X}^\ell Q,t}(G)$.
	Corollary~\ref{corr:XInvariant}, applied to the sequence
	$X_1,\ldots,X_\ell$ and to the auxiliary set $Q$, gives $\mc{X}^\ell\subseteq C_s(G\sminus T)$,
	because (H2) is precisely the hypothesis required by that corollary. Using (H1) again, we obtain $X\subseteq C_s(G\sminus T)$.
	Thus (Inv) follows from (H1)--(H2).
	
	It remains to prove (H1)--(H2). In the initial call, $X=\emptyset$. We take $\ell=0$, so the
	sequence is empty. Then (H1) holds and (H2) is vacuous.
	Now consider a recursive call $\algname{Gen{-}Seps}(G,s,t,A,X,Z,k)$
	for which (H1)--(H2) hold, witnessed by the sequence
	$X_1,\ldots,X_\ell$. We prove that every recursive call also satisfies (H1)--(H2).
	
	First suppose that the recursive call resets the parameter $X$ to $\emptyset$. This happens in
	Lines~\ref{line2e:secondPart1}, \ref{line2e:secondPart1End},
	\ref{line:recursiveGy}, and~\ref{line:recursiveGy2}.
	For these calls, the sequence resets, and we take $\ell=0$ and the empty sequence. Then
	(H1)--(H2) hold immediately.
	
	We now consider the recursive calls that extend $X$.
	First consider Line~\ref{line2e:increaseX} where the recursive call is $\algname{Gen{-}Seps}(G,s,t,A,X\cup\set{x_i},Z,k)$,
	where $x_i\in L^t_{sX,t}(G)$.
	Since $L^t_{sX,t}(G)\in \minsep_{sX,t}(G)$, we have $x_i \in \minstVertices{sX,t}(G)$.
	Moreover, $x_i\notin X$, because an $sX,t$-separator is disjoint from $sX$.
	For this child, the sequence is extended by $X_{\ell+1}\eqdef \set{x_i}$.
	The new accumulated set is $\mc{X}^{\ell+1}=\mc{X}^\ell\cup X_{\ell+1}=X\cup\set{x_i}$,
	so (H1) holds for the recursive call. Condition (H2) continues to hold for $X_1\cup \cdots X_\ell$ because the graph $G$ is unchanged in this call. The only new
	condition is $X_{\ell+1}\subseteq \minstVertices{s\mc{X}^{\ell},t}(G)$.
	Since $X^\ell=X$, this is exactly $\set{x_i}\subseteq \minstVertices{sX,t}(G)$,
	which follows because $x_i\in L^t_{sX,t}(G)\subseteq \minstVertices{sX,t}(G)$. Thus (H1)--(H2) hold for the recursive call in Line~\ref{line2e:increaseX}.
	
	It remains to consider the recursive call in Line~\ref{line:recursiveCommitToSep}. Let $L^*\eqdef \closestMinSep{sAX,t}(G)$.
	The recursive call is $\algname{Gen{-}Seps}(G,s,t,A,XY,Z,k)$, where $Y\in\MHS(\varUpsilon)$.
	
	We prove that
	\[
	\tag{1}
	Y\subseteq \minstVertices{sX,t}(G).
	\]
	Since (H1)--(H2) hold for the current call, the first part of the proof already gives (Inv) holds for $X$. The algorithm reaches Line~\ref{line:recursiveCommitToSep} only if $L^t_{sX,t}(G)\in \safeSeps{s}{t}(G,A)$ (see line~\ref{line2e:prefirstIf}).
	By the invariant, applied with $Q=\emptyset$ and $T=L^t_{sX,t}(G)\in \minsep_{sX,t}(G)$, we have $X\subseteq C_s(G\sminus L^t_{sX,t}(G))$.
	Since $L^t_{sX,t}(G)$ is CP with respect to $A$ (i.e., $L^t_{sX,t}(G)\in \safeSeps{s}{t}(G,A)$), we also have $A\subseteq C_s(G\sminus L^t_{sX,t}(G))$.
	Therefore $AX\subseteq C_s(G\sminus L^t_{sX,t}(G))$.
	Thus $L^t_{sX,t}(G)\in \safeSeps{s}{t}(G,AX)$ is a CP $s,t$-separator with respect to $AX$ of size
	$\kappa_{sX,t}(G)$, and hence $f_{s,t}(G,AX)\le \kappa_{sX,t}(G)$.
	Conversely, every CP $s,t$-separator with respect to $AX$ is an $sAX,t$-separator, and every
	$sAX,t$-separator is an $sX,t$-separator. Hence $\kappa_{sX,t}(G)\le \kappa_{sAX,t}(G)\le f_{s,t}(G,AX)$.	Consequently,
	\[
	\tag{2}
	\kappa_{sX,t}(G)=\kappa_{sAX,t}(G)=f_{s,t}(G,AX).
	\]
	Since $L^*\in \minsep_{sAX,t}(G)$ is a minimum $sAX,t$-separator, (2) implies $	|L^*|=\kappa_{sAX,t}(G)=\kappa_{sX,t}(G)$.
	Every $sAX,t$-separator is also an $sX,t$-separator, so $L^*\in \minsep_{sX,t}(G)$ is a minimum $sX,t$-separator.
	
	By definition, $\varUpsilon=\set{N_G(C): C\in\D}$,
	where $\D$ is a family of connected components of $G\sminus L^*$ (lines~\ref{line2e:thirdPartStart}-\ref{line2e:thirdPartComputeMHSX}). Hence, for every
	$C\in\D$, $N_G(C)\subseteq L^*$.
	Since $Y$ is a minimal hitting set of $\varUpsilon$, every vertex of $Y$ belongs to some set
	$N_G(C)$ with $C\in\D$; otherwise that vertex could be removed while preserving the hitting
	property. Therefore $Y\subseteq \bigcup_{C\in\D}N_G(C)\subseteq L^*$.
	Since $L^*\in \minsep_{sX,t}(G)$ is a minimum $sX,t$-separator, it follows that $Y\subseteq \minstVertices{sX,t}(G)$, proving (1).

	Also, $Y\cap X=\emptyset$, because $Y\subseteq L^*$ and $L^*$ is an $sAX,t$-separator, hence
	disjoint from $AX$.
	For this recursive call $\algname{Gen{-}Seps}(G,s,t,A,XY,Z,k)$, extend the sequence by the new block $X_{\ell+1}\eqdef Y$.
	The new accumulated sequence is 
	\[
	\mc{X}^{\ell+1}=\mc{X}^\ell\cup X_{\ell+1}=X\cup Y=XY,
	\]
	so (H1) holds for the recursive call. Conditions (H2) for the old blocks are unchanged. The only new
	condition is $X_{\ell+1}\subseteq \minstVertices{s\mc{X}^\ell,t}(G)$.
	Since $\mc{X}^\ell=X$, this condition is exactly (1). Hence (H1)--(H2) hold for the recursive call in
	Line~\ref{line:recursiveCommitToSep}.
	
	We have checked every recursive call. Therefore, by induction on the recursion depth,
	(H1)--(H2) hold in every recursive call. As shown at the beginning of the proof, this implies
	(Inv) in every recursive call.
\end{proof}

\begin{lemma}[Height of the recursion tree]
	\label{claim:recursion-height}
	Consider a call $\algname{Gen{-}Seps}(G,s,t,A,X,Z,k)$
	\[
	\algname{Gen{-}Seps}(G,s,t,A,X,Z,k)
	\]
	satisfying the invariant of Algorithm~\ref{alg2e:GenSeps}. Define
	\[
	\lambda(G,A,X,k)
	\eqdef
	(k+1)\bigl(2k-\kappa_{sAX,t}(G)\bigr)
	+
	\bigl(k-\kappa_{sX,t}(G)\bigr).
	\]
	Then every root-to-leaf path in the recursion subtree rooted at this call has length at most
	$4k^2+6k+1$. In particular, the height of the recursion tree is $O(k^2)$.
\end{lemma}
\begin{proof}
	Fix a root-to-leaf path in the recursion subtree rooted at the call $\algname{Gen{-}Seps}(G,s,t,A,X,Z,k)$.
	Throughout the proof, when we compare a parent call to a child call, the symbols
	$G,A,X,k$ refer to the parent call, and primed symbols refer to the child call.
	
	We first record two elementary facts that will be used repeatedly. First, if a recursive call creates
	at least one child, then the tests in Lines~1 and~3 have failed, and hence $\kappa_{sAX,t}(G)\le k$ and $\kappa_{sX,t}(G)\le k$.
	Consequently, for every non-terminal call on the path,
	\[
	\tag{1}
	0\le \lambda(G,A,X,k)\le 2k^2+3k.
	\]
	Indeed, the lower bound follows from
	$\kappa_{sAX,t}(G)\le k$ and $\kappa_{sX,t}(G)\le k$, while the upper bound follows from
	$\kappa_{sAX,t}(G),\kappa_{sX,t}(G)\ge 0$:
	\[
	\lambda(G,A,X,k)
	\le (k+1)\cdot 2k+k
	=
	2k^2+3k.
	\]
	Second, enlarging the source side cannot decrease the corresponding minimum separator size.
	That is, if $P\subseteq P'$, then $	\kappa_{sP,t}(G)\le \kappa_{sP',t}(G)$.

	We now classify recursive edges into two types. A recursive edge is called \emph{ordinary} if it is
	not the edge created by Line~\ref{line:recursiveCommitToSep}. We show first that every ordinary
	recursive edge strictly decreases $\lambda$.
	Consider Line~\ref{line2e:increaseX}. The recursive call is $\algname{Gen{-}Seps}(G,s,t,A,X\cup\set{x_i},Z,k)$,
	where $x_i\in L^t_{sX,t}(G)$. By Lemma~\ref{lem:potentialReducedPreFirstPart}, we have $\kappa_{sXx_i,t}(G)>\kappa_{sX,t}(G)$.
	Moreover, $\kappa_{sAXx_i,t}(G)\ge \kappa_{sAX,t}(G)$. Therefore the recursive call in line~\ref{line2e:increaseX} results in a strict reduction of the potential function:	$\lambda(G,A,X\cup\set{x_i},k) < \lambda(G,A,X,k)$.

	Consider next the recursive call in Line~\ref{line2e:secondPart1}. Let $L^*\eqdef L^*_{sAX,t}(G)$
	and set $b\eqdef |L^*|$. The recursive call is
	\[
	\algname{Gen{-}Seps}
	(G\sminus L^*,s,t,AX,\emptyset,Z\cup L^*,k-b).
	\]
	Since $L^*\in \minsep_{sAX,t}(G)$ is a minimum $sAX,t$-separator, we have $	b=\kappa_{sAX,t}(G)$.
	Moreover, in the graph $G\sminus L^*$ the empty set separates $sAX$ from $t$, and hence $\kappa_{sAX,t}(G\sminus L^*)=0$.
	Thus the potential associated with the recursive call is $2(k-b)^2+3(k-b)$.
	Let
	\[
	a\eqdef \kappa_{sAX,t}(G)=b,
	\qquad
	c\eqdef \kappa_{sX,t}(G).
	\]
	Since every $sAX,t$-separator is also an $sX,t$-separator, we have $c\le a=b$. Hence
	\[
	\begin{aligned}
		&\lambda(G,A,X,k)-\lambda(G\sminus L^*,AX,\emptyset,k-b)\\
		&=
		\bigl((k+1)(2k-b)+(k-c)\bigr)
		-
		\bigl(2(k-b)^2+3(k-b)\bigr)\\
		&=
		b(3k+2-2b)-c\\
		&\ge
		b(3k+2-2b)-b\\
		&=
		b(3k+1-2b).
	\end{aligned}
	\]
	Since $1\le b\le k$, the last expression is strictly positive. Therefore the recursive call in
	Line~\ref{line2e:secondPart1} strictly decreases $\lambda$.
	
	Consider now the recursive calls in Lines~\ref{line2e:secondPart1End} and~\ref{line:recursiveGy2}.
	In both cases, the call has the form $\algname{Gen{-}Seps}(G_i,s,t,AX,\emptyset,Z,k)$,
	where $G_i$ is obtained from $G$ by adding edges from $t$ to the closed neighborhood of
	a vertex in $\closestMinSep{sAX,t}(G)$. By Lemma~\ref{lem:sAtMinlSepLargerThanMinsAtSep_s}, it holds that $\kappa_{sAX,t}(G_i)>\kappa_{sAX,t}(G)$.
	The first term of the potential is therefore decreased by at least $k+1$. The second term may increase,
	because the child has $X'=\emptyset$, but it can increase by at most $k$: the old second term is
	nonnegative, and the new second term is at most $k$. Hence the total potential decreases by at least one. Thus the recursive calls in lines~\ref{line2e:secondPart1End} and~\ref{line:recursiveGy2} strictly decrease $\lambda$.
	
	Consider finally the recursive calls in Line~\ref{line:recursiveGy}, which have the form
	\[
	\algname{Gen{-}Seps}(G\sminus y_i,s,t,AX,\emptyset,Z\cup\set{y_i},k-1).
	\]
	Let $a\eqdef \kappa_{sAX,t}(G)$.
	In the branch where Line~\ref{line:recursiveGy} is reached, it holds that $	\kappa_{sX,t}(G)=\kappa_{sAX,t}(G)=a$.
	Deleting one vertex can decrease vertex-connectivity by at most one, and therefore $\kappa_{sAX,t}(G\sminus y_i)\ge a-1$.
	Thus the first term of the potential function in the recursive call is at most
	\[
	k\bigl(2(k-1)-(a-1)\bigr)
	=
	k(2k-a-1),
	\]
	and the second term of the child potential is at most $k-1$. Hence
	\[
	\lambda(G\sminus y_i,AX,\emptyset,k-1)
	\le
	k(2k-a-1)+(k-1).
	\]
	On the other hand, the parent potential is
	\[
	\lambda(G,A,X,k)
	=
	(k+1)(2k-a)+(k-a).
	\]
	Therefore
	\[
	\begin{aligned}
		&\lambda(G,A,X,k)-\lambda(G\sminus y_i,AX,\emptyset,k-1)\\
		&\ge
		(k+1)(2k-a)+(k-a)
		-
		\bigl(k(2k-a-1)+(k-1)\bigr)\\
		&=
		3k+1-2a.
	\end{aligned}
	\]
	Since $a\le k$, this is at least $k+1>0$. Hence the recursive call in
	Line~\ref{line:recursiveGy} strictly decreases $\lambda$.

	It remains to handle the recursive call in Line~\ref{line:recursiveCommitToSep}. The call in that line is $\algname{Gen{-}Seps}(G,s,t,A,XY,Z,k)$.
	Here the graph, the set $A$, and the budget $k$ are unchanged, and only the set $X$ is enlarged.
	Therefore
	\[
	\kappa_{sAXY,t}(G)\ge \kappa_{sAX,t}(G)
	\quad\text{and}\quad
	\kappa_{sXY,t}(G)\ge \kappa_{sX,t}(G).
	\]
	Consequently,
	\[
	\tag{2}
	\lambda(G,A,XY,k)\le \lambda(G,A,X,k).
	\]
	Thus Line~\ref{line:recursiveCommitToSep} never increases the potential, but it need not decrease
	it strictly.
	
	We next show that Line~\ref{line:recursiveCommitToSep} cannot occur twice consecutively on a
	root-to-leaf path. Indeed, by the choice of $Y$ in Line~\ref{line2e:thirdPartComputeMHSX} and
	Corollary~\ref{corr:mainThmBoundedCardinalityExt}, the separator $\closestMinSep{sAXY,t}(G)$ (see line~\ref{line2e:thirdPartComputeMHSX})
	has cardinality $\kappa_{sX,t}(G)\leq f_{s,t}(G,AX)$ and is CP for $AX$. That is, 
	$\closestMinSep{sAXY,t}(G)\in \minsep_{sX,t}(G)\cap \safeSeps{s}{t}(G,AX)$. By Corollary~\ref{corr:mainThmBoundedCardinalityExt}, $\closestMinSep{sAXY,t}(G)\in \safeSepskImp{s}{t}{\min}(G,AX)$.
	In particular, it is CP with respect to $A$ and is an $sAXY,t$-separator of size
	$\kappa_{sX,t}(G)$. Since enlarging the source side from $sX$ to $sXY$ cannot decrease
	connectivity, we get $\kappa_{sXY,t}(G)=\kappa_{sX,t}(G)$.
	Therefore $\closestMinSep{sAXY,t}(G)$ is a minimum $sXY,t$-separator that is CP with respect to $A$.
	It follows that the unique closest-to-$t$ minimum separator $L^t_{sXY,t}(G)$ (see Lemma~\ref{lem:uniqueMinImportantSep}) is also CP with respect to
	$A$: the closest-to-$t$ separator leaves an inclusionwise largest $s$-side among minimum $sXY,t$-separators, by Lemma~\ref{lem:inclusionCsCt} and the definition of $L^t$.
	
	Hence, in the child created by Line~\ref{line:recursiveCommitToSep}, the test in
	Line~\ref{line2e:prefirstIf} fails. Moreover, the separator $L^*_{sAXY,t}(G)$ itself is CP with
	respect to $A$, so the algorithm enters the branch beginning at Line~\ref{line2e:elseIfBegin}. Thus,
	if the path continues after a Line~\ref{line:recursiveCommitToSep} edge, its next recursive edge is
	one of the ordinary recursive edges already treated above. In particular, two
	Line~\ref{line:recursiveCommitToSep} edges cannot be consecutive.
	
	We can now bound the length of the path. Let $M\eqdef 2k^2+3k$. By (1), the potential at the
	root of the considered subtree is at most $M$. Every ordinary recursive call strictly decreases the
	integer-valued nonnegative potential, and no recursive edge increases it. Hence there are at most
	$M$ ordinary (strictly decreasing) recursive edges on the path.
	
	The only non-ordinary recursive are those created by Line~\ref{line:recursiveCommitToSep}. As shown
	above, each such edge is immediately followed, if the path continues, by an ordinary (strictly decreasing) edge. Therefore
	the number of Line~\ref{line:recursiveCommitToSep} edges is at most the number of ordinary (decreasing) edges.
	Thus the total number of recursive edges on the path is at most $2M=4k^2+6k$.
	Consequently, the number of calls on the path is at most $2M+1=4k^2+6k+1$.
	This proves the claimed height bound.
\end{proof}

\begin{theorem}[Completeness of Algorithm~\ref{alg2e:GenSeps}]
	\label{thm:GenSepsCompleteness}
	Let $(G_0,s,t,A_0,k_0)$ be the input instance, and let
	$\mc C(G_0,A_0,k_0)$ be the family of candidate separators generated by
	Algorithm~\ref{alg2e:GenSeps}. Then
	\[
	\safeSepskImp{s}{t}{k_0}(G_0,A_0)
	\subseteq
	\mc C(G_0,A_0,k_0).
	\]
\end{theorem}
\begin{proof}
	We prove a stronger statement for every recursive call $\algname{Gen{-}Seps}(G,s,t,A,X,Z,k)$
	made by Algorithm~\ref{alg2e:GenSeps}. Let $\mc C(G,A,X,Z,k)$
	denote the family of candidates output by the recursion subtree rooted at this call. We prove that
	\[
	\tag{$\star$}
	\set{Z\cup S : S\in \safeSepskImp{s}{t}{k}(G,AX)}
	\subseteq
	\mc C(G,A,X,Z,k).
	\]
	The theorem follows immediately from $(\star)$ applied to the initial call, where
	$G=G_0$, $A=A_0$, $X=\emptyset$, $Z=\emptyset$, and $k=k_0$.
	
	We prove $(\star)$ by induction on the height of the recursion subtree rooted at the considered
	call. This height is finite by Lemma~\ref{claim:recursion-height}. Moreover, every recursive call
	satisfies the invariant of Algorithm~\ref{alg2e:GenSeps} by Lemma~\ref{lem:GenSepsInvariant}.
	
	Fix a recursive call $\algname{Gen{-}Seps}(G,s,t,A,X,Z,k)$,
	and assume that $(\star)$ holds for all strict descendants of this call. We prove $(\star)$ for the
	current call.
	
	First suppose that the algorithm returns in Line~\ref{line2e:L_sAtt>k}. Then $	\kappa_{sAX,t}(G)>k$.
	Every separator in $\safeSepskImp{s}{t}{k}(G,AX)$ is, in particular, an $sAX,t$-separator. Hence
	every such separator has size at least $\kappa_{sAX,t}(G)>k$, and therefore $\safeSepskImp{s}{t}{k}(G,AX)=\emptyset$.
	Thus $(\star)$ holds.
	
	Next suppose that the algorithm returns in Line~\ref{line2e:L_stt>k}. Then $	|L^t_{sX,t}(G)|=\kappa_{sX,t}(G)>k$.
	Every separator in $\safeSepskImp{s}{t}{k}(G,AX)$ is an $sX,t$-separator, because it preserves
	$AX$ on the $s$-side. Hence every such separator has size at least $\kappa_{sX,t}(G)>k$, and again $\safeSepskImp{s}{t}{k}(G,AX)=\emptyset$.
	Thus $(\star)$ holds.
	
	Now suppose that the algorithm reaches Line~\ref{line2e:returnZEmptyIf}, namely $L^t_{sX,t}(G)=\emptyset$.
	Since $\emptyset\in L_{sX,t}(G)$, the invariant (Inv) from Lemma~\ref{lem:GenSepsInvariant} applied with $Q=\emptyset$ and
	$T=\emptyset$ gives $X\subseteq C_s(G)$.
	If $A\not\subseteq C_s(G)$, then $\emptyset$ is not CP with respect to $AX$. Since
	$\emptyset$ already separates $sX$ from $t$, every nonempty $s,t$-separator is not minimal.
	Hence there is no minimal CP $s,t$-separator with respect to $AX$, and $	\safeSepskImp{s}{t}{k}(G,AX)=\emptyset$.
	The algorithm returns without outputting a candidate, so $(\star)$ holds.
	
	If $A\subseteq C_s(G)$, then $AX\subseteq C_s(G)$. Since $s$ and $t$ are already separated by
	the empty set, the only minimal $s,t$-separator is $\emptyset$. Therefore
	$\safeSepskImp{s}{t}{k}(G,AX)=\set{\emptyset}$.
	The algorithm outputs $Z$ in Line~\ref{line2e:returnZEmpty}, and hence
 	$Z=Z\cup\emptyset\in \mc C(G,A,X,Z,k)$. Thus $(\star)$ holds in this case as well.

	We may therefore assume from now on that the algorithm does not return in Lines~\ref{line2e:L_sAtt>k}--\ref{line2e:returnZEmpty2}.
	Consider the branch beginning at Line~\ref{line2e:prefirstIf}. In this case $L^t_{sX,t}(G)\notin \safeSeps{s}{t}(G,A)$.
	Let $L^t_{sX,t}(G)=\set{x_1,\ldots,x_\ell}$.
	By Lemma~\ref{lem:partZeroOfAlg}, applied with $AX$ as the connectivity set, we have
	\[
	\tag{1}
	\safeSepskImp{s}{t}{k}(G,AX)
	\subseteq
	\bigcup_{i=1}^{\ell}
	\safeSepskImp{s}{t}{k}(G,AX\cup\set{x_i}).
	\]
	For every $i\in\{1,\ldots,\ell\}$, the algorithm makes the recursive call $	\algname{Gen{-}Seps}(G,s,t,A,X\cup\set{x_i},Z,k)$ in line~\ref{line2e:increaseX}.
	By the induction hypothesis applied to this child,
	\[
	\set{Z\cup S :
		S\in \safeSepskImp{s}{t}{k}(G,AX\cup\set{x_i})}
	\subseteq
	\mc C(G,A,X\cup\set{x_i},Z,k).
	\]
	Since the outputs of this child are included in the outputs of the current recursion subtree, taking
	the union over all $i$ and using (1) proves $(\star)$ for the current call.
	
	It remains to consider the case where the algorithm does not enter the if-block in Line~\ref{line2e:prefirstIf}. Thus $L^t_{sX,t}(G)\in \safeSeps{s}{t}(G,A)$.
	Suppose first that the algorithm enters the if-block at Line~\ref{line2e:elseIfBegin}, namely
	$\closestMinSep{sAX,t}(G) \in \safeSeps{s}{t}(G,A)$.
	Let $L^*\eqdef \closestMinSep{sAX,t}(G) $.
	By Corollary~\ref{corr:partTwoOfAlg},
	\[
	\tag{2}
	\safeSepsImp{s}{t}(G,AX)
	\subseteq
	\{L^*\}
	\cup
	\bigcup_{x\in L^*}
	\safeSepsImp{s}{t}(G_x,AX),
	\]
	where $G_x$ is obtained from $G$ by adding all edges between $t$ and $N_G[x]$.
	Intersecting (2) with separators of size at most $k$ gives the corresponding budgeted inclusion.
	
	We first handle the separator $L^*$. Since $L^*\in \minsep_{sAX,t}(G)$, the invariant (Inv) from Lemma~\ref{lem:GenSepsInvariant} applied with
	$Q=A$ and $T=L^*$ gives $X\subseteq C_s(G\sminus L^*)$.
	The assumption of the current branch gives $A\subseteq C_s(G\sminus L^*)$.
	Hence $AX\subseteq C_s(G\sminus L^*)$.
	Consider the recursive call in Line~\ref{line2e:secondPart1}:
	\[
	\algname{Gen{-}Seps}
	(G\sminus L^*,s,t,AX,\emptyset,Z\cup L^*,k-|L^*|).
	\]
	In this child instance, the empty set is a CP important $s,t$-separator with respect to $AX$:
	indeed, $L^*$ separates $sAX$ from $t$ in $G$, and after deleting $L^*$ the set $AX$ lies in the
	$s$-component (i.e., $AX\subseteq C_s(G\sminus L^*)$). Thus $\emptyset \in \safeSepskImp{s}{t}{k-|L^*|}(G\sminus L^*,AX)$.
	Applying the induction hypothesis to the subtree rooted at the recursive call of Line~\ref{line2e:secondPart1} gives
	\[
	(Z\cup L^*)\cup\emptyset
	=
	Z\cup L^*
	\in
	\mc C(G\sminus L^*,AX,\emptyset,Z\cup L^*,k-|L^*|).
	\]
	Therefore $Z\cup L^*$ is output by the recursion subtree of the current call.
	
	Now let $x\in L^*$. The algorithm makes the recursive call $\algname{Gen{-}Seps}(G_x,s,t,AX,\emptyset,Z,k)$ in line~\ref{line2e:secondPart1End}.
	By the induction hypothesis applied to this child,
	\[
	\set{Z\cup S :
		S\in \safeSepskImp{s}{t}{k}(G_x,AX)}
	\subseteq
	\mc C(G_x,AX,\emptyset,Z,k).
	\]
	Combining these inclusions for all $x\in L^*$ with (2), and using the already proved coverage of
	the separator $L^*$ itself, proves $(\star)$ in the branches of the tree generated in the if-block in line Line~\ref{line2e:elseIfBegin}.
	
	We are left with the final branch of the algorithm. Thus
	\[
	L^t_{sX,t}(G)\in \safeSeps{s}{t}(G,A)
	\quad\text{and}\quad
	L^*_{sAX,t}(G)\notin \safeSeps{s}{t}(G,A).
	\]
	By Lemma~\ref{lem:GenSepsInvariant}, the invariant holds in the current call. Hence the hypotheses
	needed for Corollary~\ref{corr:mainThmBoundedCardinalityExt} are satisfied in this branch.
	Let
	\[
	\D=\{C\in\cc(G\sminus L^*_{sAX,t}(G)) : s\notin C,\ C\cap A\neq\emptyset\},
	\qquad
	\varUpsilon=\{N_G(C):C\in\D\}.
	\]
	The algorithm chooses $Y\in\MHS(\varUpsilon)$
	as in Line~\ref{line2e:thirdPartComputeMHSX}, and then makes the recursive calls in Lines~\ref{line:recursiveCommitToSep},~\ref{line:recursiveGy}, and~\ref{line:recursiveGy2}. Write $	Y=\set{y_1,\ldots,y_q}$.
	By Lemma~\ref{lem:thirdPart},
	\[
	\tag{3}
	\safeSepsImp{s}{t}(G,AX)
	\subseteq
	\safeSepsImp{s}{t}(G,AXY)
	\cup
	\bigcup_{i=1}^{q}
	\{T\cup\set{y_i}:T\in\safeSepsImp{s}{t}(G\sminus y_i,AX)\}
	\cup
	\bigcup_{i=1}^{q}
	\safeSepsImp{s}{t}(G_i,AX),
	\]
	where $G_i$ is obtained from $G$ by adding all edges between $t$ and
	$N_G[y_i]{\setminus}\set{t}$.
	
	Let $S\in \safeSepskImp{s}{t}{k}(G,AX)$.
	We prove that $Z\cup S$ is output by one of the children in the this branch.
	If $S\in \safeSepsImp{s}{t}(G,AXY)$, then, since $|S|\le k$, $S\in \safeSepskImp{s}{t}{k}(G,AXY)$.
	The recursive call in line~\ref{line:recursiveCommitToSep} is $	\algname{Gen{-}Seps}(G,s,t,A,XY,Z,k)$.
	By the induction hypothesis applied to this child, $Z\cup S$ is output.
	
	Otherwise, by (3), $S=T\cup \set{y_i}$ where $T\in \safeSepskImp{s}{t}{k}(G\minus y_i,AX)$ for some $y_i\in Y$, or $S\in \safeSepskImp{s}{t}{k}(G_i,AX)$.
	Suppose first that, for some $i\in\set{1,\ldots,q}$, $	S=T\cup\set{y_i}$
	where $T\in\safeSepsImp{s}{t}(G\sminus y_i,AX)$.
	Since $T$ is a separator in $G\sminus y_i$, it does not contain $y_i$. Hence $|T|=|S|-1\le k-1$.
	Therefore $T\in\safeSepskImp{s}{t}{k-1}(G\sminus y_i,AX)$.
	The corresponding recursive call in Line~\ref{line:recursiveGy} is
	\[
	\algname{Gen{-}Seps}
	(G\sminus y_i,s,t,AX,\emptyset,Z\cup\set{y_i},k-1).
	\]
	By the induction hypothesis applied to this child, $(Z\cup\set{y_i})\cup T=	Z\cup S$
	is output.
	
	Finally suppose that, for some $i\in\set{1,\ldots,q}$, $S\in \safeSepsImp{s}{t}(G_i,AX)$.
	Since $|S|\le k$, this means $S\in \safeSepskImp{s}{t}{k}(G_i,AX)$.
	The corresponding recursive call in Line~\ref{line:recursiveGy2} is $\algname{Gen{-}Seps}(G_i,s,t,AX,\emptyset,Z,k)$.
	By the induction hypothesis applied to this child, $Z\cup S$ is output.
	
	Thus every $S\in \safeSepskImp{s}{t}{k}(G,AX)$
	is covered by one of the recursive children in the final branch, and therefore $(\star)$ holds in this case.
	
	All possible execution cases have been considered. Hence, by induction on the recursion subtree
	height, $(\star)$ holds for every recursive call. Applying $(\star)$ to the initial call gives
	\[
	\safeSepskImp{s}{t}{k_0}(G_0,A_0)
	\subseteq
	\mc C(G_0,A_0,k_0),
	\]
	as required.
\end{proof}

\begin{theorem}
	\label{thm:detailedRuntimeAnalysis}
	Let $G$ be an undirected graph with $n$ vertices and $m$ edges, let
	$s,t\in \nodes(G)$, $A\subseteq \nodes(G){\setminus}\set{s,t}$, and let $k\in\mathbb N$.
	Let $T(n,m)$ denote the time required to compute a minimum-cardinality
	$s,t$-separator in an $n$-vertex, $m$-edge graph. Algorithm~\algname{Gen-Seps} (Algorithm~\ref{alg2e:GenSeps}), on input
	$(G,s,t,A,k)$, outputs a family of vertex sets $\mc C(G,A,k)\subseteq 2^{\nodes(G)}$
	such that
	\begin{align*}
		\safeSepskImp{s}{t}{k}(G,A) \subseteq 	\mc C(G,A,k), \text{ and }	|\mc C(G,A,k)| \in 2^{O(k^2\log k)}.
	\end{align*}
	Moreover, the total running time of the algorithm is $O\!\left(2^{O(k^2\log k)}\cdot n\cdot T(n,m)\right)$.
\end{theorem}
\begin{proof}
	The containment $\safeSepskImp{s}{t}{k}(G,A)\subseteq \mc C(G,A,k)$
	is exactly Theorem~\ref{thm:GenSepsCompleteness}.
	
	It remains to bound the size of the output family and the running time. By Lemma~\ref{lem:GenSepsInvariant},
	every recursive call made by Algorithm~\ref{alg2e:GenSeps} satisfies the invariant (Inv).
	Therefore Lemma~\ref{claim:recursion-height} applies to every recursive call. In particular, every
	root-to-leaf path in the recursion tree has length at most $4k^2+6k+1=O(k^2)$.
	
	We next bound the branching factor. If a call enters the if-block in line~\ref{line2e:prefirstIf}, then it generates $|L^t_{sX,t}(G)|\leq k$ children. If a call enters the if-block at Line~\ref{line2e:elseIfBegin}, then it creates one child in Line~\ref{line2e:secondPart1} and
	$|L^*_{sAX,t}(G)|$ in the loop in lines \ref{line2e:secondPart1LoopBegin}--\ref{line2e:secondPart1End}. Overall, a call entering this block will generate at most $1+	|L^*_{sAX,t}(G)|=1+\kappa_{sAX,t}(G)\leq k+1$ children.

	Finally, in the last branch, the algorithm chooses
	$Y\in\MHS(\varUpsilon)$. By the construction in Lines~\ref{line2e:thirdPartStart}--\ref{line2e:thirdPartComputeMHSX}, every set in $\varUpsilon$ is a
	subset of $L^*_{sAX,t}(G)$, and every vertex of the minimal hitting set $Y$ belongs to some set in $\varUpsilon$. Hence $Y\subseteq L^*_{sAX,t}(G)$, and therefore $|Y|\le |L^*_{sAX,t}(G)|=\kappa_{sAX,t}(G)\le k$.
	Thus the else-block in lines~\ref{line2e:thirdPartStart}--\ref{line:recursiveGy2} makes $1+2|Y|\le 2k+1$ recursive calls, generating at most $2k+1$ child nodes.
	Hence every node has $O(k)$ children.
	
	It follows from the height bound of Lemma~\ref{claim:recursion-height}, and the branching bound that the total number of nodes in the
	recursion tree is at most $	k^{O(k^2)}=2^{O(k^2\log k)}$.
	The algorithm outputs at most one candidate at a leaf, and hence $	|\mc C(G,A,k)|\le 2^{O(k^2\log k)}$.
	
	It remains to bound the work done at a single node. Each node performs only polynomially many
	standard separator computations, except in Line~\ref{line2e:thirdPartComputeMHSX} where Corollary~\ref{corr:mainThmBoundedCardinalityExt} is used to find the required
	minimal hitting set $Y$ and the corresponding separator. By Corollary~\ref{corr:mainThmBoundedCardinalityExt}, this takes time $O(2^{\kappa_{sX,t}(G)}\cdot n\cdot T(n,m))$.
	Since $\kappa_{sX,t}(G)\le k$, every node (recursive call) is processed in time $O(2^k\cdot n\cdot T(n,m))$, up to polynomial factors already dominated by this bound.
	
	Multiplying the per-node cost by the number of nodes gives total running time
	\[
	2^{O(k^2\log k)}\cdot O(2^k\cdot n\cdot T(n,m))
	=
	O\!\left(2^{O(k^2\log k)}\cdot n\cdot T(n,m)\right).
	\]
	This proves both the claimed output-size bound and the claimed running-time bound.
\end{proof}

\begin{reptheorem}{\ref{thm:singleExponentialSafeImportantOfSizek}}
	\singleExponentialSafeImportantOfSizek
\end{reptheorem}
\begin{proof}
	Apply Algorithm~\ref{alg2e:GenSeps} ($\algname{Gen{-}Seps}$) to the instance $(G,s,t,A,k)$, with initial sets $X=\emptyset$ and $Z=\emptyset$. Let $\mathcal C(G,A,k)$ be the family of
	sets output by the algorithm.
	
	By Theorem~\ref{thm:GenSepsCompleteness}, the algorithm is complete: every
	connectivity-preserving important $s,t$-separator with respect to $A$ of size at
	most $k$ belongs to $\mathcal C(G,A,k)$. In other words, $\safeSepskImp{s}{t}{k}(G,A)\subseteq \mathcal C(G,A,k)$.
	By Theorem~\ref{thm:detailedRuntimeAnalysis}, the output family satisfies $	|\mathcal C(G,A,k)| \in 2^{O(k^2\log k)}$
	and can be computed in time $O\!\left(2^{O(k^2\log k)}\cdot n\cdot T(n,m)\right)$.
	Therefore, $|\safeSepskImp{s}{t}{k}(G,A)|\in 2^{O(k^2\log k)}$.
	
	Moreover, since $\mathcal C(G,A,k)$ is explicitly enumerated within the stated
	running time and contains all separators in $\safeSepskImp{s}{t}{k}(G,A)$, this gives the
	claimed enumeration bound. If desired, after generating $\mathcal C(G,A,k)$ one
	may filter the family and keep only the sets that are CP-important
	$s,t$-separators with respect to $A$ and have size at most $k$; this does not
	increase the asymptotic running time. Hence all connectivity-preserving
	important $s,t$-separators with respect to $A$ of size at most $k$ can be
	enumerated in time $O\!\left(2^{O(k^2\log k)}\cdot n\cdot T(n,m)\right)$.
\end{proof}
\DontPrintSemicolon
\newcommand\mycommfont[1]{\footnotesize\ttfamily\textcolor{blue}{#1}}
\SetCommentSty{mycommfont}
\newcommand*{\tikzmk}[1]{\tikz[remember picture,overlay,] \node (#1) {};\ignorespaces}
\newcommand{\boxit}[1]{\tikz[remember picture,overlay]{\node[yshift=3pt,fill=#1,opacity=.25,fit={(A)($(B)+(.92\linewidth,.8\baselineskip)$)}]
		{};}\ignorespaces}
\colorlet{mypink}{red!40}
\colorlet{myblue}{cyan!50}
\colorlet{myyellow}{yellow!50}

\SetKwFor{ForNoEnd}{for}{do}{}  

\begin{algocf}[H]
	\SetAlgoLined
	\caption{{$\boldsymbol{\algname{Gen{-}Seps}}$}: Algorithm for listing $\safeSepskImp{s}{t}{k}(G,A)$ \label{alg2e:GenSeps}}
	\setcounter{AlgoLine}{0}  
	\SetKwProg{Algorithm2}{Algorithm1}{:}{}
	\KwIn{Graph $G$, $s,t\in \nodes(G)$, $A, X, Z\subseteq \nodes(G){\setminus}\set{s,t}$, and $k\in \mathbb{N}_{\geq 0}$.}	
	\KwOut{$\safeSepskImp{s}{t}{k}(G,A)$.}
	\setcounter{AlgoLine}{0} 
	\tcp*[l]{Invariant: for every $Q\subseteq \nodes(G)$, and $T\in \minsep_{sXQ,t}(G)$, it holds that $X\subseteq C_s(G\sminus T)$.}
	\lIf{$\kappa_{sAX,t}(G)> k$}{\Return  \label{line2e:L_sAtt>k}}
	Compute $L_{sX,t}^t(G)\in \minsep_{sX,t}(G)$ closest to $t$. \label{line2e:L_stt}  \tcp*[l]{By Lem.~\ref{lem:uniqueMinImportantSep}, $L_{sX,t}^t$ is unique and computed in $O(n\cdot T(n,m))$.}
	\lIf{$|L_{sX,t}^t(G)| > k$}{\Return  \label{line2e:L_stt>k}}
	\uIf{$L_{sX,t}^t(G) = \emptyset$\label{line2e:returnZEmptyIf}}{
		\lIf{$A\subseteq C_s(G)$}{Output $Z$} \label{line2e:returnZEmpty}
		\Return \label{line2e:returnZEmpty2}
	}
	Let $L_{sX,t}^t(G)=\set{x_1,\dots,x_\ell}$ \label{line2e:L_{s,t}^tNormal} \; 
	
	\uIf{$L_{sX,t}^t(G) \notin \safeSeps{s}{t}(G,A)$  \label{line2e:prefirstIf}}{
		\For{$i=1$ to $i=\ell$}{
			{$\boldsymbol{\algname{Gen{-}Seps}}$}$(G, s,t,A, X{\cup} \set{x_i}, Z,k)$  \label{line2e:increaseX} \tcp*[l]{By Lem.~\ref{lem:potentialReducedPreFirstPart}: $\kappa_{sXx_i,t}(G){\geq} \kappa_{sX,t}(G)$. \newline
				Since $x_i{\in} \minstVertices{sX,t}(G)$, by Corollary~\ref{corr:XInvariant}, the invariant continues to hold for $X\cup \set{x_i}$: for every $Q\subseteq \nodes(G)$, and $T\in \minsep_{sXx_iQ,t}(G)$, it holds that $X\cup \set{x_i}\subseteq C_s(G\sminus T)$. \newline 
				By Lemma~\ref{lem:partZeroOfAlg}, $\safeSepskImp{s}{t}{k}(G,AX)\subseteq \mediumbigcup_{i=1}^\ell \safeSepskImp{s}{t}{k}(G,AX\cup \set{x_i})$.}
		} \label{line2e:prefirstIfEnd}
	}
	\Else{
	\uIf{$\closestMinSep{sAX,t}(G) \in \safeSeps{s}{t}(G,A)$ \label{line2e:elseIfBegin}}{
			{$\boldsymbol{\algname{Gen{-}Seps}}$} $(G\sminus \closestMinSep{sXA,t}(G), s,t,AX,\emptyset,Z\cup \closestMinSep{sAX,t}(G),k\sminus |\closestMinSep{sAX,t}(G)|)$ \;  \label{line2e:secondPart1} 
			
			Let $\closestMinSep{sAX,t}(G)=\set{x_1,\dots,x_\ell}$ \; \tcp*[l]{By Lem.~\ref{lem:uniqueMinImportantSep}, $\closestMinSep{sAX,t}(G)$ is unique and computed in $O(n\cdot T(n,m))$.} 
			
			\For{$i=1$ to $i=\ell$\label{line2e:secondPart1LoopBegin}}{
				Let $G_i$ be the graph that results from $G$ by adding all edges between $t$ and $N_G[x_i]$  \; \tcp*[l]{By Corollary~\ref{corr:partTwoOfAlg}, $\safeSepskImp{s}{t}{k}(G,AX)\subseteq  \set{\closestMinSep{sAX,t}(G)} \cup \mediumbigcup_{i=1}^\ell \safeSepskImp{s}{t}{k}(G_i,AX)$.} 
				
				{$\boldsymbol{\algname{Gen{-}Seps}}$}$(G_i, s,t,AX, \emptyset, Z,k)$  \label{line2e:secondPart1End} \tcp*[l]{$\kappa_{sAX,t}(G_i){\geq} \kappa_{sAX,tx_i}(G){>} \kappa_{sAX,t}(G)$ (Lem.~\ref{lem:sAtMinlSepLargerThanMinsAtSep_s} (1)).}
			}
		}
		\uElse{
				\tcp*[l]{If here, then $\kappa_{sX,t}(G)=f_{s,t}(G,AX)$, and for every $T\in \minsep_{sXA,t}(G)$, it holds that $X\subseteq C_s(G\sminus T)$. Therefore, Corollary~\ref{corr:mainThmBoundedCardinalityExt} applies.}
			$\begin{aligned}[t]
				\D \eqdef \set{C\in \cc(G\sminus \closestMinSep{sAX,t}(G)): s\notin C, C\cap A \neq \emptyset}
			\end{aligned}$\label{line2e:thirdPartStart}\;
			
			$\varUpsilon \eqdef \set{N_{G}(C): C\in \D}$\;
			
			Let $Y\in \MHS(\varUpsilon)$ such that $\closestMinSep{sAXY,t}(G)\in \safeSepskImp{s}{t}{\min}(G,AX)$ 
		 \label{line2e:thirdPartComputeMHSX} \tcp*[f]{By Corollary~\ref{corr:mainThmBoundedCardinalityExt}, if $\kappa_{sX,t}(G)=f_{s,t}(G,AX)$ such a set $Y$ exists, and can be computed in time $O(2^{\kappa_{sX,t}(G)}\cdot n\cdot T(n,m))$.}\;
			
		
		$\boldsymbol{\algname{Gen{-}Seps}}(G,s,t,A,XY,Z,k)$ \label{line:recursiveCommitToSep}\; \tcp*[f]{By Corollary~\ref{corr:mainThmBoundedCardinalityExt}, $\closestMinSep{sAXY,t}(G)\in \safeSepskImp{s}{t}{\min}(G,AX)$, so progress will definitely be made in the next iteration. \label{line:forceYIn}}

		Let $Y=\set{y_1,\ldots,y_q}$\;
		
		\For{$i=1$ to $i=q$}{
				$\boldsymbol{\algname{Gen{-}Seps}}(G\sminus y_i,s,t,AX,\emptyset,Z\cup\set{y_i},k-1)$\; \label{line:recursiveGy} \tcp*[l]{See Lemma~\ref{lem:thirdPart}.} 
				
			 Let $G_i$ be obtained from $G$ by adding all edges between $t$ and $N_G[y_i]{\setminus}\set{t}$\;
			$\boldsymbol{\algname{Gen{-}Seps}}(G_i,s,t,AX,\emptyset,Z,k)$\; \label{line:recursiveGy2} \tcp*[l]{$\kappa_{sAX,t}(G_i){\geq} \kappa_{sAX,ty_i}(G){>} \kappa_{sAX,t}(G)$ (Lem.~\ref{lem:sAtMinlSepLargerThanMinsAtSep_s} (1)). Also, see Lemma~\ref{lem:thirdPart}} 
		}
			
		}
		
		}
		
	\end{algocf}
	
		\eat{	
		{$\boldsymbol{\algname{Gen{-}Seps}}$}$(G, s, t, AX^*Y,Z,k)$ \label{line2e:thirdPartRecurseWithX} \tcp*[f]{By Corollary~\ref{corr:mainThmBoundedCardinalityExt}, $\closestMinSep{sX^*YA,t}(G)\in \safeSepskImp{s}{t}{\min}(G,AX^*Y)$, so progress will definitely be made in the next iteration (in lines~\ref{line2e:elseIfBegin}-\ref{line2e:secondPart1End}).}\;
		
		Let $Y=\set{y_1,\dots,y_\ell}$,  \label{line2e:thirdPartRest}\; 
		
		\ForNoEnd{$i=1$ to $i=\ell$}{	
			$Y_{i-1}\eqdef \set{y_1,\dots,y_{i-1}}$ \;
			
			{$\boldsymbol{\algname{Gen{-}Seps}}$}$(G\sminus y_i, s, t, AX^*Y_{i-1} ,Z\cup \set{y_i},k-1)$ \label{line2e:thirdPartSecondRecursiveCall} \;
			
			Let $G_i$ be the graph where $\edges(G_i)=\edges(G)\cup \set{(u,t):u\in N_G[y_i]}$ \tcp*[l]{See Lemma~\ref{lem:thirdPart}} \; 
			\hspace{-7pt}{$\boldsymbol{\algname{Gen{-}Seps}}$}$(G_i, s,t, AX^*Y_{i-1},Z,k)$ \label{line2e:thirdPartEnd} \tcp*[l]{$\kappa_{sAX^*Y_{i-1},t}(G_i){\geq} \kappa_{sA,ty_i}(G){>} \kappa_{sA,t}(G)$ (Lem.~\ref{lem:sAtMinlSepLargerThanMinsAtSep_s} (1)).} 
		}
	}

	\section{Missing Proofs from Section~\ref{sec:NMWCU}} 
\label{sec:proofsFromNMWCUApplication}

\begin{replemma}{\ref{lem:dominatedByCPImportant}}
	\dominatedByCPImportant
\end{replemma}
\begin{proof}
	If $T^*\in \safeSepsImp{s}{t}(G,A)$, then the claim holds with $T=T^*$.
	Otherwise, by the definition of CP-importance, there exists
	$T_1\in \safeSeps{s}{t}(G,A)$ such that
	\[
	C_s(G\sminus T_1)\subsetneq C_s(G\sminus T^*)
	\qquad\text{and}\qquad
	|T_1|\le |T^*|.
	\]
	If $T_1$ is CP-important, we are done. Otherwise, we repeat the same
	argument. This gives a sequence $T^*=T_0,T_1,T_2,\ldots$ of separators in
	$\safeSeps{s}{t}(G,A)$ such that, for every $i$,
	\[
	C_s(G\sminus T_{i+1})\subsetneq C_s(G\sminus T_i)
	\qquad\text{and}\qquad
	|T_{i+1}|\le |T_i|.
	\]
	The sequence cannot continue indefinitely, because the sets
	$C_s(G\sminus T_i)$ form a strictly descending chain of subsets of
	$\nodes(G)$. Hence the process stops at some separator $T$ that is
	CP-important by definition, i.e., $T\in \safeSepsImp{s}{t}(G,A)$. By
	construction,
	\[
	C_s(G\sminus T)\subseteq C_s(G\sminus T^*)
	\qquad\text{and}\qquad
	|T|\le |T^*|.
	\]
\end{proof}

\begin{repcorollary}{\ref{cor:targetedCPIsolation}}
	\targetedCPIsolation
\end{repcorollary}
\begin{proof}
	Let $\mc{F}\eqdef \set{S\in \safeSepsk{s}{t}{k}(G,A) : B\cap C_s(G\sminus S)=\emptyset}$.
	These are exactly the feasible minimal $s,t$-separators of size at most $k$
	appearing in the statement of the corollary.
	
	We first show that $\mc{Z}=\emptyset$ (see (1)) if and only if $\mc{F}=\emptyset$.
	Since $\safeSepskImp{s}{t}{k}(G,A)\subseteq
	\safeSepsk{s}{t}{k}(G,A)$, we have $\mc{Z}\subseteq \mc{F}$.
	Conversely, if $\mc{F}\neq\emptyset$, let
	$T^*\in\arg\min_{|S|}\mc{F}$. Since $|T^*|\le k$, the choice of $T^*$ in
	$\mc{F}$ also makes it minimum-cardinality among all separators in
	$\safeSeps{s}{t}(G,A)$ satisfying
	$B\cap C_s(G\sminus S)=\emptyset$: any smaller such separator would also
	have size at most $k$ and hence would belong to $\mc{F}$.
	By Lemma~\ref{lem:dominatedByCPImportant}, there exists
	$T\in\safeSepsImp{s}{t}(G,A)$ such that
	$C_s(G\sminus T)\subseteq C_s(G\sminus T^*)$ and
	$|T|\le |T^*|$. Since $|T^*|\le k$, we have
	$T\in\safeSepskImp{s}{t}{k}(G,A)$. Moreover,
	$B\cap C_s(G\sminus T)=\emptyset$, because
	$B\cap C_s(G\sminus T^*)=\emptyset$ and
	$C_s(G\sminus T)\subseteq C_s(G\sminus T^*)$. Hence $T\in\mc{Z}$, so
	$\mc{Z}\neq\emptyset$.
	
	It remains to prove optimality. If $\mc{F}\neq\emptyset$, let
	$T^*\in\arg\min_{|S|}\mc{F}$. The same argument gives a separator
	$T\in\mc{Z}$ with $|T|\le |T^*|$. Since $\mc{Z}\subseteq\mc{F}$, no
	separator in $\mc{Z}$ has size smaller than $|T^*|$. Therefore the
	minimum-cardinality separator in $\mc{Z}$ has size exactly $|T^*|$, and is
	an optimum feasible separator. If $\mc{Z}=\emptyset$, then
	$\mc{F}=\emptyset$, and returning $\bot$ is correct.
	
	The running time is dominated by the enumeration of
	$\safeSepskImp{s}{t}{k}(G,A)$, which by
	Theorem~\ref{thm:singleExponentialSafeImportantOfSizek} is
	$O(2^{O(k^2\log k)}\cdot n\cdot T(n,m))$. Using the standard
	vertex-splitting reduction to maximum flow and the almost-linear exact
	maximum-flow algorithm of Chen et al.~\cite{Chen2022}, we may take
	$T(n,m)=m^{1+o(1)}$ for the unweighted instances considered here. Thus the
	running time is
	$O(2^{O(k^2\log k)}\cdot n\cdot m^{1+o(1)})$.
\end{proof}

\begin{replemma}{\ref{lem:FsaExists}}
	\FsaExists
\end{replemma}
\begin{proof}
	By induction on $|C_A(G\sminus S)|$. If $|C_A(G\sminus S)|=|A|$, then $C_A(G\sminus S)=A$. Hence, $S\subseteq N_G(C_A(G\sminus S))=N_G(A)$, and by definition, $S\in \closeSeps{A}{B}(G, A)$. Suppose the claim holds for the case where $|C_A(G\sminus S)|\leq \ell$ for some $\ell\geq |A|$, we prove for the case where $|C_A(G\sminus S)|= \ell+1$. If $S\in \closeSeps{A}{B}(G,A)$, then we are done. Otherwise, there exists a $S'\in \safeSeps{A}{B}(G,A)$ where $C_A(G\sminus S')\subset C_A(G\sminus S)$. Since $|C_A(G\sminus S')|< |C_A(G\sminus S)|=\ell+1$, then by the induction hypothesis, there exists a $T\in \closeSeps{A}{B}(G,A)$ where $C_A(G\sminus T)\subseteq C_A(G\sminus S')\subset C_A(G\sminus S)$; hence $T\in \closeSeps{A}{B}(G,A)$, where $C_A(G\sminus T) \subseteq C_A(G\sminus S)$.
	
	The proof for the case where $S\in \safeSepsk{A}{B}{k}(G,A)$ (i.e., the $k$-bounded case) is identical.
\end{proof}

\def\corrEnumerationBound{
	Let $A, B\subseteq \nodes(G)$ be disjoint and nonadjacent. There are at
	most $2^{(2k^2+3k)\log k}$ CP $A,B$-separators in $G$ of size at most $k$
	that are close to $A$. Furthermore, they can be enumerated in time
	$O(2^{O(k^2\log k)}n\,T(n,m))$.
}
\begin{corollary}[Enumeration Bound]
	\label{corr:singleExponentialSafeImportantOfSizek}
	\corrEnumerationBound
\end{corollary}
\begin{proof}
Let $t\in B$. By the symmetric form of Lemma~~\ref{lem:MinlsASep}, applied to the target side $B$, for every $S\in \minlsep{A,B}{G}$ we have
\[
S\in \safeSeps{A}{B}(G,A) \iff S\in \safeSeps{A}{t}(H,A),
\]
where $H$ is the graph that results from $G$ by adding all edges between $t$ and $N_G[B]$. Moreover, $C_A(G\sminus S)=C_A(H\sminus S)$. Consequently, $\closeSeps{A}{B}(G,A)=\closeSeps{A}{t}(H,A)$.

	By applying Theorem~\ref{thm:singleExponentialSafeImportantOfSizek}, the result follows from the claim that $\closeSepsk{A}{B}{k}(G,A)=\closeSepsk{A}{t}{k}(H,A)\subseteq \safeSepskImp{A}{t}{k}(H,A)$. Let $S\in \closeSepsk{A}{t}{k}(H,A)$. By Definition~\ref{def:safeClose},  $C_A(H\sminus S')\not\subset C_A(H\sminus S)$ for every $S'\in \safeSepsk{A}{t}{k}(H,A)$. If $S\notin \safeSepskImp{A}{t}{k}(H,A)$, then by Definition~\ref{def:safeImp}, there exists an $S'\in \safeSeps{A}{t}(H,A)$ where $C_A(H\sminus S')\subset C_A(H\sminus S)$, and $|S'|\leq |S|\leq k$. But then, $S\notin \closeSepsk{A}{t}{k}(H,A)$; a contradiction.
\end{proof}

\subsection{Detailed Algorithm for N-MWCU}
\label{sec:AppendixNMWCU}
Our strategy reduces the constrained \textsc{N-MWCU} problem to the
unconstrained \textsc{Node Multiway Cut} problem~\cite{DBLP:journals/algorithmica/ChenLL09,DBLP:books/sp/CyganFKLMPPS15}.
The algorithm proceeds in three phases:

\begin{enumerate}[itemsep=0pt,topsep=0pt,parsep=0pt,partopsep=0pt]
	\item \textbf{Enumeration:} For each equivalence class
	$i \in \{1, \dots, M\}$, compute
	\[
	\mathcal{Z}_i\eqdef
	\closeSepsk{A_i}{A{\setminus}A_i}{k}(G,A_i),
	\]
	the set of close CP $A_i,A{\setminus}A_i$ separators of size at most $k$.
	By Corollary~\ref{corr:singleExponentialSafeImportantOfSizek},
	$|\mathcal{Z}_i| \in  2^{O(k^2 \log k)}$.
	
	\item \textbf{Branching:} For every tuple
	$(S_1, \dots, S_M) \in
	\mathcal{Z}_1 \times \dots \times \mathcal{Z}_M$:
	\begin{itemize}
		\item If
		$C_{A_i}(G\sminus S_i)\cap C_{A_j}(G\sminus S_j)\neq \emptyset$ for
		some pair $i\neq j$, discard the tuple.
		\item Otherwise, construct a reduced graph $H$ by contracting each
		connected component $G[C_{A_i}(G\sminus S_i)]$ into a single terminal
		$t_i$.
	\end{itemize}
	
	\item \textbf{Solving:} On the reduced graph $H$, solve the standard
	\textsc{Node Multiway Cut} problem for terminals
	$\{t_1, \dots, t_M\}$ with budget $k$. Chen et al. showed that this can be
	done in time $O(k4^kn^3)$~\cite{DBLP:journals/algorithmica/ChenLL09}. The
	minimum solution found across all tuples is the global optimum.
\end{enumerate}

\medskip
\textbf{Correctness.}
Let $T$ be an optimal solution, and let $C_i^*$ be the connected component of
$G\sminus T$ that contains $A_i$. 
\eat{
Since $T$ separates $A_i$ from
$A{\setminus}A_i$, we have $N_G(C_i^*)\subseteq T$, and hence
$|N_G(C_i^*)|\le k$. Moreover, $N_G(C_i^*)$ is an
$A_i,A{\setminus} A_i$-separator and
$A_i\subseteq C_{A_i}(G\sminus N_G(C_i^*))$.
Let $R_i\subseteq N_G(C_i^*)$ be an inclusion-minimal
$A_i,A{\setminus} A_i$-separator. Then $|R_i|\le k$ and
$C_{A_i}(G\sminus R_i)\subseteq C_i^*$, because every path from $A_i$ to a
vertex outside $C_i^*$ must meet $N_G(C_i^*)$. Thus
$R_i\in \safeSepsk{A_i}{A{\setminus} A_i}{k}(G,A_i)$. 
}
Since $T$ separates $A_i$ from $A{\setminus} A_i$, we have
$N_G(C_i^*)\subseteq T$, and hence $|N_G(C_i^*)|\le k$.
Moreover, $N_G(C_i^*)$ is an $A_i,A{\setminus}A_i$-separator.
We claim that $N_G(C_i^*)$ is an inclusionwise minimal $A_i,A{\setminus} A_i$-separator. Let $v\in N_G(C_i^*)$. Since $v\in T$ and
$T$ is an inclusionwise minimal N-MWCU solution, $T{\setminus}\set{v}$ is not a valid solution. As $v$ is adjacent to $C_i^*$, adding $v$ back can only violate the cut-uncut constraints by connecting the component containing $A_i$ to a terminal in $A{\setminus} A_i$. Hence, in
$G\sminus (N_G(C_i^*){\setminus}\set{v})$, there is a path from $A_i$ to $A{\setminus} A_i$ using $v$. Thus every vertex of $N_G(C_i^*)$ is necessary, and $N_G(C_i^*)$ is a minimal $A_i,A{\setminus} A_i$-separator. Finally, $C_{A_i}(G\sminus N_G(C_i^*)) = C_i^*$, and therefore $N_G(C_i^*)\in  \safeSepsk{A_i}{A{\setminus}A_i}{k}(G,A_i)$.

By Lemma~\ref{lem:FsaExists}, there exists $S_i\in\closeSepsk{A_i}{A{\setminus} A_i}{k}(G,A_i)$  such that
\[
C_{A_i}(G\sminus S_i)
\subseteq C_{A_i}(G\sminus N_G(C_i^*))
=
C_i^* .
\]

Since the connected components $C_i^*,C_j^*$ corresponding to different sets
$A_i$ and $A_j$ are disjoint, the corresponding components
$C_{A_i}(G\sminus S_i)$ and $C_{A_j}(G\sminus S_j)$ containing $A_i$ and
$A_j$, respectively, are disjoint. Contracting them transforms the problem into
finding a minimum separator disconnecting the contracted connected components,
which the standard Multiway Cut algorithm solves correctly.

\medskip
\noindent\textbf{Runtime Analysis.}
The runtime is dominated by the size of the Cartesian product
$\prod |\mathcal{Z}_i|$. Since each
$|\mathcal{Z}_i| \in 2^{O(k^2 \log k)}$, the total number of branches is
bounded by $2^{O(M k^2 \log k)}$. Inside each branch, solving Multiway Cut
takes $O(k4^kn^3)$. Thus, the runtime is dominated by
$O(2^{O(M k^2 \log k)} k n^3)$.

\subsection{The Case $M=2$: Improved Runtime}

When $M=2$, the N-MWCU problem specializes to finding a minimum
$A_1,A_2$-separator that preserves the internal connectivity of each. In this
case, the optimization phase simplifies significantly, allowing for a faster
runtime that is nearly linear in the graph size.

\begin{proof}[Proof of Theorem~\ref{thm:NMWCUBoundedCardinality} ($M=2$ case)]
	The algorithm proceeds using the same enumeration strategy as the general
	case. We compute $\mathcal{Z}_1$ and $\mathcal{Z}_2$, the sets of close
	separators for $A_1$ and $A_2$, in time
	$O(2^{O(k^2 \log k)}  n T(n,m))$.
	We then iterate through every pair
	$(S_1, S_2) \in \mathcal{Z}_1 \times \mathcal{Z}_2$. For a fixed pair, we
	check if the required components are disjoint. If so, we construct the
	reduced graph $H$ by contracting the components
	$C_{A_1}(G\sminus S_1)$ and $C_{A_2}(G\sminus S_2)$ into two terminals
	$t_1$ and $t_2$.
	The problem then reduces to finding a minimum $t_1,t_2$-separator in $H$.
	Unlike the general case, which requires solving \textsc{Node Multiway Cut},
	this instance is the standard minimum $s,t$-separator problem, which can be
	solved in time $m^{1+o(1)}$~\cite{DBLP:journals/corr/abs-2309-16629}
	(i.e., $T(n,m)=m^{1+o(1)}$).
	The complexity of the solving phase is thus bounded by
	$|\mathcal{Z}_1| \cdot |\mathcal{Z}_2| \cdot m^{1+o(1)}$. Since
	$|\mathcal{Z}_i| \in 2^{O(k^2 \log k)}$, and the enumeration phase takes
	$O(2^{O(k^2 \log k)} n m^{1+o(1)})$, the total runtime is dominated by
	$O(2^{O(k^2 \log k)} \cdot nm^{1+o(1)})$.
\end{proof}
	
	\section{A Lower Bound for Connectivity-Preserving Important Separators}
\label{sec:cp-important-lower-bound}
\def\poly{\mathrm{poly}}

In this section we show that the number of connectivity-preserving important separators can be as large as
\[
\Omega\left(\frac{2^{k^2}}{\poly(k)}\right).
\]
The construction is for directed graphs. Throughout this section, reachability and separators are with respect to directed paths. For a directed graph $G$, vertices $s,t\in \nodes(G)$, and a set $S\subseteq \nodes(G)\setminus\set{s,t}$, we write $C_s(G\sminus S)$ for the set of vertices reachable from $s$ in $G\sminus S$ via directed paths. Like in the undirected case, a set $S\subseteq \nodes(G){\setminus}\set{s,t}$ is an $s,t$-separator if $t\notin C_s(G-S)$. It is a minimal $s,t$-separator if no proper subset of $S$ is an $s,t$-separator.

Recall that, for $A\subseteq \nodes(G)\setminus\{s,t\}$, a minimal $s,t$-separator $S$ is connectivity-preserving important with respect to $A$ if $A\subseteq C_s(G-S)$, and for every minimal $s,t$-separator $T$ satisfying
\[
A\subseteq C_s(G\sminus T)\subsetneq C_s(G\sminus S),
\]
we have $|T|>|S|$. We denote by $\safeSepskImp{s}{t}{k}(G,A)$ the family of all such separators of size at most $k$.

\begin{theorem}
	\label{thm:cp-important-lower-bound}
	For every integer $k\geq 1$, there exist a directed graph $G$, vertices $s,t\in \nodes(G)$, and a set $A\subseteq \nodes(G){\setminus}\set{s,t}$, such that
	\[
	\left|\safeSepskImp{s}{t}{k}(G,A)\right|
	\geq
	\frac{2^{k^2-1}}{k}.
	\]
	In particular,
	\[
	\left|\safeSepskImp{s}{t}{k}(G,A)\right|
	=
	\Omega\left(\frac{2^{k^2}}{\poly(k)}\right).
	\]
\end{theorem}

\begin{proof}
	Fix $k\geq 1$, and let $N \eqdef 2^{k+1}$.
	We construct a directed graph consisting of $k$ internally vertex-disjoint directed $s,t$-paths, called rails, together with additional sink vertices that form the set $A$.
	For every $i\in\set{1,\ldots,k}$, create vertices
	\[
	v_{i,1},v_{i,2},\ldots,v_{i,N}.
	\]
	The $i$-th rail is the directed path
	\[
	s \to v_{i,1}\to v_{i,2}\to \cdots \to v_{i,N}\to t.
	\]
	Thus, for every $i$, we add the arcs
	\[
	(s,v_{i,1}),\qquad
	(v_{i,m},v_{i,m+1}) \text{ for } 1\leq m<N,\qquad
	(v_{i,N},t).
	\]
	
	We next define the relevant layers of the grid $[N]^k$, where $[N]=\{1,\ldots,N\}$. For a vector $j=(j_1,\ldots,j_k)\in [N]^k$, define
	\[
	|j|_1 \eqdef \sum_{i=1}^k j_i.
	\]
	For $q\in\set{k,k+1,\ldots,kN}$, let
	\[
	L_q \eqdef \set{j\in [N]^k : |j|_1=q}.
	\]
	Choose $q^\star\in\{k+1,\ldots,kN\}$ satisfying
	\[
	|L_{q^\star}|=\max_{q\in\{k+1,\ldots,kN\}} |L_q|.
	\]

	Since the layers $L_{k+1},\ldots,L_{kN}$ partition $[N]^k{\setminus}\set{(1,\ldots,1)}$, and there are $kN-k$ such layers, we have
	\[
	|L_{q^\star}|
	\geq
	\frac{N^k-1}{kN-k}.
	\]
	As
	\[
	\frac{N^k-1}{N-1}
	=
	1+N+\cdots+N^{k-1}
	\geq
	N^{k-1},
	\]
	we get
	\[
	|L_{q^\star}|
	\geq
	\frac{N^{k-1}}{k}.
	\]
	With our choice $N=2^{k+1}$, this gives
	\[
	|L_{q^\star}|
	\geq
	\frac{(2^{k+1})^{k-1}}{k}
	=
	\frac{2^{k^2-1}}{k}.
	\]
	
	We now create the constraint vertices. For every vector $x=(x_1,\ldots,x_k)\in L_{q^\star-1}$, create a new vertex $a_x$. Let
	\[
	A \eqdef \set{a_x : x\in L_{q^\star-1}}.
	\]
	For every $x\in L_{q^\star-1}$ and every coordinate $i\in\{1,\ldots,k\}$, add the arc $	(v_{i,x_i},a_x)$.
	No arcs leave any vertex $a_x$. Thus every vertex of $A$ is a sink. In particular, the vertices of $A$ cannot create any additional directed $s,t$-paths.
	
	For every vector $j=(j_1,\ldots,j_k)\in [N]^k$, define
	\[
	S_j \eqdef \set{v_{1,j_1},v_{2,j_2},\ldots,v_{k,j_k}}.
	\]
	The set $S_j$ contains exactly one vertex from each rail, and hence separates $s$ from $t$. Moreover, $S_j$ is minimal: if we remove any vertex $v_{i,j_i}$ from $S_j$, then the entire $i$-th rail remains intact, yielding a directed path from $s$ to $t$.
	
	Conversely, every $s,t$-separator has size at least $k$. Indeed, the $k$ rails are internally vertex-disjoint directed $s,t$-paths, and an $s,t$-separator is not allowed to contain $s$ or $t$. Therefore, an $s,t$-separator must intersect each rail. Consequently, every minimal $s,t$-separator of size $k$ is exactly $S_j$ for some $j\in [N]^k$: it must contain exactly one vertex from each rail, and it cannot contain any additional vertex.
	
	For every $j\in [N]^k$, the vertices reachable from $s$ on the $i$-th rail in $G\sminus S_j$ are exactly
	$v_{i,1},v_{i,2},\ldots,v_{i,j_i-1}$.
	Therefore, for every $x\in L_{q^\star-1}$,
	\[
	a_x\in C_s(G\sminus S_j)
	\quad\Longleftrightarrow\quad
	\exists i\in\set{1,\ldots,k} \text{ such that } v_{i,x_i}\in C_s(G\sminus S_j).
	\]
	Using the description of the reachable prefix on each rail, this is equivalent to
	\[
	a_x\in C_s(G\sminus S_j)
	\quad\Longleftrightarrow\quad
	\exists i\in\set{1,\ldots,k} \text{ such that } x_i<j_i.
	\]
	Equivalently,
	\[
	a_x\in C_s(G\sminus S_j)
	\quad\Longleftrightarrow\quad
	\exists i\in\set{1,\ldots,k} \text{ such that } j_i>x_i.
	\]
	
	We claim that every separator $S_j$ with $j\in L_{q^\star}$ belongs to $\safeSepskImp{s}{t}{k}(G,A)$. Fix such a vector $j\in L_{q^\star}$. First, $S_j$ is a minimal $s,t$-separator of size $k$, as shown above.
	
	Second, we show that $A\subseteq C_s(G\sminus S_j)$. Let $a_x\in A$, where $x\in L_{q^\star-1}$. Then
	\[
	|j|_1=q^\star
	\qquad\text{and}\qquad
	|x|_1=q^\star-1.
	\]
	It is impossible that $j_i\leq x_i$ for every coordinate $i$, because that would imply $|j|_1\leq |x|_1$, contradicting $q^\star>q^\star-1$. Hence there exists a coordinate $i$ such that $j_i>x_i$. By the reachability characterization above, $a_x\in C_s(G\sminus S_j)$. Since $a_x$ was arbitrary, $A\subseteq C_s(G\sminus S_j)$.
	
	It remains to prove the connectivity-preserving importance condition. Let $T$ be a minimal $s,t$-separator such that $A\subseteq C_s(G\sminus T)\subsetneq C_s(G\sminus S_j)$.
	We prove that $|T|>|S_j|=k$.
	
	Suppose, toward a contradiction, that $|T|\leq k$. Since every $s,t$-separator has size at least $k$, we have $|T|=k$. Hence, by the characterization of size-$k$ minimal separators above, there exists a vector $j'=(j'_1,\ldots,j'_k)\in [N]^k$ such that $T=S_{j'}$.
	
	The inclusion $C_s(G\sminus S_{j'})\subsetneq C_s(G\sminus S_j)$ implies $j'_i\leq j_i$ for every 
 	$i\in\set{1,\ldots,k}$,	and, since the inclusion is strict, $j'\neq j$. Therefore
	\[
	|j'|_1 \leq |j|_1-1 = q^\star-1.
	\]
	
	We now construct a vector $x\in L_{q^\star-1}$ satisfying
	\[
	j'_i\leq x_i\leq j_i
	\quad\text{for every } i\in\{1,\ldots,k\}.
	\]
	Start from $x=j'$. Its coordinate sum is at most $q^\star-1$. Since $j'\leq j$ coordinatewise and $|j|_1=q^\star$, we can increase coordinates of $x$, never exceeding the corresponding coordinates of $j$, until the total sum becomes exactly $q^\star-1$. This is possible because the total available increase from $j'$ to $j$ is
	\[
	|j|_1-|j'|_1
	=
	q^\star-|j'|_1,
	\]
	whereas the desired increase is
	\[
	(q^\star-1)-|j'|_1
	\leq
	q^\star-|j'|_1.
	\]
	Thus there exists $x\in [N]^k$ with $|x|_1=q^\star-1$ and $j'\leq x\leq j$ coordinatewise. Hence $x\in L_{q^\star-1}$, and so $a_x\in A$.
	
	Because $j'_i\leq x_i$ for every coordinate $i$, no in-neighbor $v_{i,x_i}$ of $a_x$ is reachable from $s$ in $G\sminus S_{j'}$: on rail $i$, the reachable prefix in $G\sminus S_{j'}$ ends at $v_{i,j'_i-1}$, while $x_i\geq j'_i$. Therefore $a_x\notin C_s(G\sminus S_{j'})$.
	This contradicts the assumption that $A\subseteq C_s(G\sminus S_{j'})$. We conclude that no such separator $T$ of size at most $k$ exists. Therefore every minimal separator $T$ satisfying
	$A\subseteq C_s(G\sminus T)\subsetneq C_s(G\sminus S_j)$ must have size strictly larger than $k=|S_j|$. Hence $S_j$ is connectivity-preserving important with respect to $A$.
	
	We have shown that
	\[
	\{S_j:j\in L_{q^\star}\}
	\subseteq
	\safeSepskImp{s}{t}{k}(G,A).
	\]
	Therefore
	\[
	\left|\safeSepskImp{s}{t}{k}(G,A)\right|
	\geq
	|L_{q^\star}|
	\geq
	\frac{N^{k-1}}{k}
	=
	\frac{2^{k^2-1}}{k}.
	\]
	This is $\Omega(2^{k^2}/\poly(k))$, as claimed.
\end{proof}

	The construction crucially uses directed edges. The vertices of $A$ are sinks, and therefore they can certify reachability constraints without creating new directed $s,t$-paths. In an undirected graph, attaching one constraint vertex to several rails would create cross-rail bypasses and would destroy the one-vertex-per-rail separator structure used in the proof.

\end{document}